\newtheorem{theorem}{Theorem}[section]
\newtheorem{lemma}[theorem]{Lemma}
\newtheorem{proposition}[theorem]{Proposition}
\newtheorem{remark}[theorem]{Remark}
\renewcommand{\mathbf}{\boldsymbol}
\newcommand{\mb}{\mathbf}
\newcommand{\mc}{\mathcal}
\newcommand{\bb}{\mathbb}
\newcommand{\set}[1]{\left\{ #1 \right\}}
\newcommand{\reals}{\bb R}
\newcommand{\eps}{\varepsilon}
\newcommand{\R}{\reals}
\newcommand{\Cp}{\bb C}
\newcommand{\N}{\bb N}
\newcommand{\indicator}[1]{\mathbbm 1_{#1}}
\newcommand{ \Brac }[1]{\left\lbrace #1 \right\rbrace}
\newcommand{ \brac }[1]{\left[ #1 \right]}
\newcommand{ \paren }[1]{ \left( #1 \right) }
\DeclareMathOperator{\dist}{dist}
\DeclareMathOperator{\trace}{tr}
\DeclareMathOperator{\diag}{diag}
\DeclareMathOperator{\sign}{sign}
\DeclareMathOperator{\mini}{minimize}
\DeclareMathOperator{\st}{subject\; to}
\newcommand{\event}{\mc E}
\newcommand{\e}{\mathrm{e}}
\newcommand{\im}{\mathrm{i}}
\newcommand{\wh}{\widehat}
\newcommand{\wt}{\widetilde}
\newcommand{\ol}{\overline}
\newcommand{\norm}[2]{\left\| #1 \right\|_{#2}}
\newcommand{\abs}[1]{\left| #1 \right|}
\newcommand{\innerprod}[2]{\left\langle #1,  #2 \right\rangle}
\newcommand{\prob}[1]{\bb P\left[ #1 \right]}
\newcommand{\expect}[1]{\bb E\left[ #1 \right]}
\newcommand{\js}[1]{#1}
\numberwithin{equation}{section}
\def \endprf{\hfill {\vrule height6pt width6pt depth0pt}\medskip}
\newenvironment{proof}{\noindent {\bf Proof} }{\endprf\par}
\title{A Geometric Analysis of Phase Retrieval}
\author{Ju Sun, Qing Qu, and John Wright \\
\vspace{1mm}
\texttt{\{js4038, qq2105, jw2966\}@columbia.edu} \\
Department of Electrical Engineering, Columbia University, New York, USA
}
\date{February 22, 2016  \quad Revised: \today}
\begin{document}
\maketitle

\vspace{-0.3in}
\begin{abstract}
Can we recover a complex signal from its Fourier magnitudes? More generally, given a set of $m$ measurements, $y_k = \abs{\mb a_k^* \mb x}$ for $k = 1, \dots, m$, is it possible to recover $\mb x \in \Cp^n$ (i.e., length-$n$ complex vector)? This {\em generalized phase retrieval} (GPR) problem is a fundamental task in various disciplines, and has been the subject of much recent investigation. Natural nonconvex heuristics often work remarkably well for GPR in practice, but lack clear theoretical explanations. In this paper, we take a step towards bridging this gap. We prove that when the measurement vectors $\mb a_k$'s are \emph{generic} (i.i.d.\ complex Gaussian) and \emph{numerous} enough ($m \ge C  n \log^3 n$), with high probability, a natural least-squares formulation for GPR has the following benign geometric structure: (1) there are no spurious local minimizers, and all global minimizers are equal to the target signal $\mb x$, up to a global phase; and (2) the objective function has a negative directional curvature around each saddle point. This structure allows a number of iterative optimization methods to efficiently find a global minimizer, without special initialization. To corroborate the claim, we describe and analyze a second-order trust-region algorithm. 
\end{abstract}

\noindent \textbf{Keywords.} Phase retrieval, Nonconvex optimization, Function landscape, Second-order geometry, Ridable saddles, Trust-region method, Inverse problems, Mathematical imaging 





\section{Introduction}

\subsection{Generalized Phase Retrieval and a Nonconvex Formulation}

This paper concerns the problem of recovering an $n$-dimensional complex vector $\mb x$ from the magnitudes  $y_k = | \mb a_k^* \mb x|$ of its projections onto a collection of known complex vectors $\mb a_1, \dots, \mb a_m \in \bb C^n$. Obviously, one can only hope to recover $\mb x$ up to a global phase, as $\mb x \e^{\im \phi}$ for all $\phi \in [0, 2\pi)$ gives exactly the same set of measurements. The {\em generalized phase retrieval} problem asks whether it is possible to recover $\mb x$, up to this fundamental ambiguity:
\begin{quote}
\textbf{Generalized Phase Retrieval Problem}: Is it possible to \emph{efficiently} recover an unknown $\mb x$ from $y_k = \abs{\mb a_k^* \mb x}$ ($k = 1, \dots, m$), up to a global phase factor $\e^{\im \phi}$?
\end{quote}
This problem has attracted substantial recent interest, due to its connections to fields such as crystallography, optical imaging and astronomy. In these areas, one often has access only to the Fourier magnitudes of a complex signal $\mb x$, i.e., $\abs{\mc F(\mb x)}$ ~\cite{millane1990phase,robert1993phase, walther1963question, fienup1987phase}. The phase information is hard or infeasible to record due to physical constraints. The problem of recovering the signal $\mb x$ from its Fourier magnitudes $\abs{\mc F(\mb x)}$ is naturally termed (Fourier) phase retrieval (PR). It is easy to see PR as a special version of GPR, with the $\mb a_k$'s the Fourier basis vectors. GPR also sees applications in electron microscopy \cite{jianwei2002high}, diffraction and array imaging \cite{bunk2007diffractive,anwei2011array}, acoustics \cite{balan2006signal,balan2010signal}, quantum mechanics \cite{Corbett2006pauli,reichenbach1965philosophic} and quantum information \cite{heinosaari2013quantum}. We refer the reader to survey papers~\cite{shechtman2015phase,jaganathan2015phase} for accounts of recent developments in the theory, algorithms, and applications of GPR. 

For GPR, heuristic methods based on nonconvex optimization often work surprisingly well in practice (e.g., ~\cite{fienup1982phase,gerchberg1972practical}, and many more cited in~\cite{shechtman2015phase,jaganathan2015phase}). However, investigation into provable recovery methods, particularly based on nonconvex optimization, has started only relatively recently~\cite{netrapalli2013phase,candes2013phase-matrix,candes2013phaselift,candes2014solving,candes2015diffraction,waldspurger2015phase, voroninski2014strong, alexeev2014phase, candes2015phase, chen2015solving, white2015local,zhang2016provable,zhang2016reshaped,wang2016solving,kolte2016phase,gao2016gauss,bendory2016non,waldspurger2016phase}. The surprising effectiveness of nonconvex heuristics on GPR remains largely mysterious. In this paper, we take a step towards bridging this gap. 

We focus on a natural least-squares formulation\footnote{Another least-squares formulation, $\mini_{\mb z}\; \frac{1}{2m} \sum_{k=1}^m (y_k - \abs{\mb a_k^* \mb z})^2$, was first studied in the seminal works~\cite{fienup1982phase,gerchberg1972practical}. An obvious advantage of the $f(\mb z)$ studied here is that it is differentiable \js{in the sense of Wirtinger calculus introduced later}. } -- discussed systematically in~\cite{shechtman2015phase,jaganathan2015phase} and first studied theoretically in \cite{candes2015phase, white2015local},
\begin{align}\label{eqn:finite-f}
	\mini_{\mb z\in \bb C^n} f(\mb z) \doteq \frac{1}{2m} \sum_{k=1}^m \paren{ y_k^2 - \abs{ \mb a_k^* \mb z}^2 }^2.
\end{align}
We assume the $\mb a_k$'s are independent identically distributed (i.i.d.) complex Gaussian:
\begin{align} \label{eqn:complex-gaussian}
	\mb a_k = \frac{1}{\sqrt{2}}\paren{ X_k + \im Y_k}, \; \text{with} \;  X_k,Y_k\sim \mc N(\mb 0,\mb I_n) \; \text{independent}.
\end{align}
$f(\mb z)$ is a fourth-order polynomial in $\mb z$,\footnote{\js{Strictly speaking, $f(\mb z)$ is not a complex polynomial in $\mb z$ over the complex field; complex polynomials are necessarily complex differentiable. However, $f(\mb z)$ is a fourth order real polynomial in real and complex parts of $\mb z$. }} and is nonconvex. A-priori, there is little reason to believe that simple iterative methods can solve this problem without special initialization. Typical local convergence (i.e., convergence to a local minimizer) guarantees in optimization require an initialization near the target minimizer~\cite{bertsekas1999nonlinear}. Moreover, existing results on provable recovery using~\eqref{eqn:finite-f} and related formulations rely on careful initialization in the vicinity of the ground truth~\cite{netrapalli2013phase, candes2015phase, chen2015solving, white2015local,zhang2016provable,zhang2016reshaped,wang2016solving,kolte2016phase,gao2016gauss,bendory2016non,waldspurger2016phase}.

\subsection{A Curious Experiment}
\begin{figure}[t]
\centerline{\includegraphics[width=0.8\linewidth]{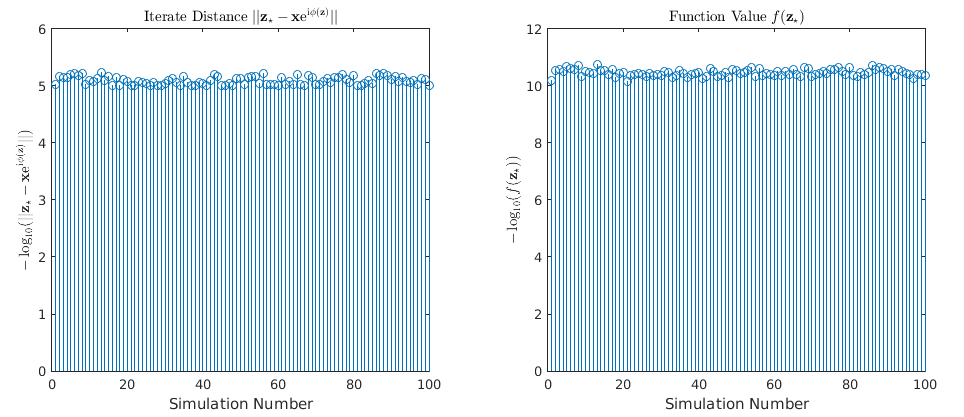}}
\caption{Gradient descent with random initialization seems to always return a global solution for~\eqref{eqn:finite-f}! Here $n = 100$, $m = 5 n\log n$, step size $\mu = 0.05$, and stopping criterion is $\norm{\nabla_{\mb z} f(\mb z) }{} \leq 10^{-5}$. We fix the set of random measurements and the ground-truth signal $\mb x$. The experiments are repeated for $100$ times with independent random initializations. $\mb z_\star$ denotes the final iterate at convergence. (Left) Final distance to the target; (Right) Final function value ($0$ if globally optimized). Both vertical axes are on $-\log_{10}(\cdot)$ scale. } 
\label{fig:grad-conv}
\end{figure}
We apply gradient descent to $f(\mb z)$, starting from a {\em random initialization} $\mb z^{(0)}$: 
\begin{align*}
	\mb z^{(r+1)} = \mb z^{(r)} - \mu \nabla_{\mb z} f(\mb z^{(r)}),
\end{align*}
where the step size $\mu$ is fixed for simplicity\footnote{Mathematically, $f(\mb z)$ is not complex differentiable; here the gradient is defined based on the Wirtinger calculus~\cite{ken2009complex}; see also~\cite{candes2015phase}. This notion of gradient is a natural choice when optimizing real-valued functions of complex variables.}. The result is quite striking (Figure~\ref{fig:grad-conv}): for a fixed problem instance (fixed set of random measurements and fixed target $\mb x$), gradient descent seems to always return a \emph{global minimizer} (i.e., the target $\mb x$ up to a global phase shift), across many independent random initializations! This contrasts with the typical ``mental picture'' of nonconvex objectives as possessing many spurious local minimizers.

\subsection{A Geometric Analysis} \label{sec:intro_geo_analysis}

The numerical surprise described above is not completely isolated.  Simple heuristic methods have been observed to work surprisingly well 
for practical PR~\cite{fienup1982phase,gerchberg1972practical, shechtman2015phase,jaganathan2015phase}. In this paper, we take a step towards explaining this phenomenon. We show that \emph{although the function \eqref{eqn:finite-f} is nonconvex, when $m$ is reasonably large, it actually has benign global geometry which allows it to be globally optimized by efficient iterative methods, regardless of the initialization}.

This geometric structure is evident for \js{real GPR (i.e., real signals with real random measurements)} in $\R^2$. Figure~\ref{fig:geo-2d} plots the function landscape of $f(\mb z)$ for this case with large $m$ (i.e., $\bb E_{\mb a}[f(\mb z)]$ approximately). 
\begin{figure}[t]
\centerline{\includegraphics[width=0.8\linewidth]{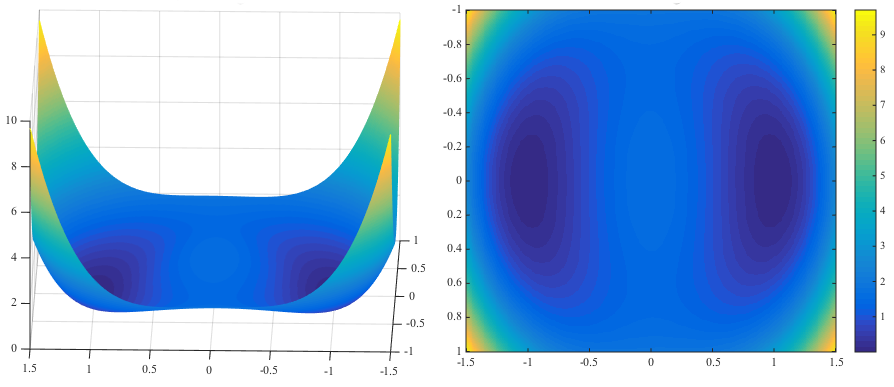}}
\caption{Function landscape of~\eqref{eqn:finite-f} for $\mb x =[1; 0]$ and $m \to \infty$. The only local and also global minimizers are $\pm \mb x$. There are two saddle points near $\pm [0; 1/\sqrt{2}]$, around each there is a negative curvature direction along $\pm \mb x$. (Left) The function graph; (Right) The same function visualized as a color image. The measurement vectors $\mb a_k$'s are taken as i.i.d. standard real Gaussian in this version. } 
\label{fig:geo-2d}
\end{figure}
Notice that (i) the only local minimizers are exactly $\pm \mb x$ -- they are also global minimizers;\footnote{Note that the global sign cannot be recovered. } (ii) there are saddle points (and a local maximizer), but around them there is a negative curvature in the $\pm \mb x$ direction. Intuitively, any algorithm that can successfully escape from this kind of saddle point (and local maximizer) can in fact find a global minimizer, i.e., recover the target signal $\mb x$.  

We prove that an analogous geometric structure exists, with high probability (w.h.p.)\footnote{The probability is with respect to drawing of $\mb a_k$'s.}, for GPR in $\Cp^n$, when $m$ is reasonably large (Theorem~\ref{thm:finite-landscape}). In particular, we show that when $m \ge C n \log^3 n$, w.h.p., (i) the only local and also global minimizers to~\eqref{eqn:finite-f} are the target $\mb x \e^{\im \phi}$ for $\phi \in [0, 2\pi)$; (ii) at any point in $\bb C^n$, either the gradient is large, or the curvature is negative in a certain direction, or it is near a minimizer. Moreover, in the vicinity of the minimizers, on the orthogonal complement of a single flat direction (which occurs because $f(\mb z e^{\im \phi}) = f(\mb z)$ for every $\mb z$, $\phi$), the objective function is strongly convex (\js{a weaker version of this local restricted strong convexity was first established in~\cite{candes2015phase}; see also~\cite{white2015local}}). 

Because of this global geometry, a wide range of efficient iterative methods can obtain a global minimizer to $f(\mb z)$, regardless of initialization. Examples include the noisy gradient and stochastic gradient methods~\cite{ge2015escaping} (see also~\cite{lee2016gradient,panageas2016gradient}), curvilinear search~\cite{goldfarb1980curvilinear} and trust-region methods \cite{conn2000trust,nesterov2006cubic,sun2015nonconvex}. The key property that the methods must possess is the ability to escape saddle points at which the Hessian has a strictly negative eigenvalue\footnote{Such saddle points are called ridable saddles~\cite{sun2015nonconvex} or strict saddles~\cite{ge2015escaping}; see~\cite{anandkumar2016efficient} for computational methods for escaping from higher-order saddles also. }. 

We corroborate this claim by developing a second-order trust-region algorithm for this problem, and prove that (Theorem~\ref{thm:TRM-conv}) (i) from any initialization, it efficiently obtains a close approximation (i.e., up to numerical precision) of the target $\mb x$ (up to a global phase) and (ii) it exhibits quadratic convergence in the vicinity of the global minimizers. 

In sum, our geometrical analysis produces the following result. \vspace{.1in}

\noindent{\bf{Informal Statement of Our Main Results; See Theorem~\ref{thm:finite-landscape} and Theorem~\ref{thm:TRM-conv} and Remark~\ref{rmk:complexity}.}}{\em{
When $m \ge Cn \log^3 n$, with probability at least $1 - cm^{-1}$, the function $f(\mb z)$ has no spurious local minimizers. The only global minimizers are the target $\mb x$ and its equivalent copies, and at all saddle points the function has directional negative curvature. Moreover, with at least the same probability, the trust-region method with properly set step size parameter find a global minimizer of $f(\mb z)$ in polynomial time, from an arbitrary initialization in the zero-centered complex ball with radius $R_0 \doteq 3(\tfrac{1}{m}\sum_{k=1}^m y_k^2)^{1/2}$. Here $C$ and $c$ are absolute positive constants. }} 
\vspace{.05in}

The choice of $R_0$ above allows us to state a result with a concise bound on the number of iterations required to converge. However, under our probability model, w.h.p., the trust- region method succeeds from any initialization. There are two caveats to this claim. First, one must choose the parameters of the method appropriately. Second, the number of iterations depends on how far away from the truth the method starts. 

Our results asserts that when the $\mb a_k$'s are \emph{numerous} and \emph{generic} enough, GPR can be solved \js{in polynomial time by optimizing the nonconvex formulation~\eqref{eqn:finite-f}.}
Similar conclusions have been obtained in~\cite{netrapalli2013phase, candes2015phase, chen2015solving, white2015local,zhang2016provable,zhang2016reshaped,wang2016solving,kolte2016phase,gao2016gauss,bendory2016non,waldspurger2016phase}, \js{also based on nonconvex optimization}. One salient feature of our result is that the optimization method is ``initialization free'' - any initialization in the prescribed ball works. \js{This follows directly from the benign global geometry of $f(\mb z)$. In contrast, all prior nonconvex methods require careful initializations that are already near the unknown target $\mb x \e^{\im \phi}$, based on characterization of only local geometry. We believe our global geometrical analysis sheds light on mechanism of the above numerical surprise.}  

\js{The second-order trust-region method, albeit polynomial-time, may not be the most practical algorithm for solving GPR. Deriving the most practical algorithms is not the focus of this paper. We mentioned above that any iterative method with saddle-escaping capability can be deployed to solve the nonconvex formulation; our geometrical analysis constitutes a solid basis for developing and analyzing much more practical algorithms for GPR. }

\subsection{Prior Arts and Connections}
The survey papers~\cite{shechtman2015phase,jaganathan2015phase} provide comprehensive accounts of recent progress on GPR. In this section, we focus on provable efficient (particularly, nonconvex) methods for GPR, and draw connections to other work on provable nonconvex heuristics for practical problems. 

\paragraph{Provable methods for GPR.} 
Although heuristic methods for GPR have been used effectively in practice~\cite{gerchberg1972practical,fienup1982phase,shechtman2015phase,jaganathan2015phase}, only recently have researchers begun to develop methods with provable performance guarantees. The first results of this nature were obtained using semidefinite programming (SDP) relaxations~\cite{candes2013phase-matrix,candes2013phaselift,candes2014solving,candes2015diffraction,waldspurger2015phase,voroninski2014strong}. While this represented a substantial advance in theory, the computational complexity of semidefinite programming limits the practicality of this approach.\footnote{Another line of research~\cite{balan2006signal, balan2009painless, alexeev2014phase} seeks to co-design the measurements and recovery algorithms based on frame- or graph-theoretic tools. While revising this work, new convex relaxations based on second-order cone programming have been proposed~\cite{goldstein2016phasemax,bahmani2016phase,hand2016elementary,hand2016compressed}. } 

Recently, several provable {\em nonconvex} methods have been proposed for GPR. \cite{netrapalli2013phase} augmented the seminal error-reduction method~\cite{gerchberg1972practical} with spectral initialization and resampling to obtain the first provable nonconvex method for GPR. \cite{candes2015phase} studied the nonconvex formulation~\eqref{eqn:finite-f} under the same hypotheses as this paper, and showed that a combination of spectral initialization and local gradient descent recovers the true signal with near-optimal sample complexity. \cite{chen2015solving} worked with a different nonconvex formulation, and refined the spectral initialization and the local gradient descent with a step-adaptive truncation. With the modifications, they reduced the sample requirement to the optimal order.\footnote{In addition, \cite{chen2015solving} shows that the measurements can be non-adaptive, in the sense that a single, randomly chosen collection of vectors $\mb a_i$ can simultaneously recover every $\mb x \in \Cp^n$. Results in \cite{netrapalli2013phase, candes2015phase} and this paper pertain only to adaptive measurements that recover any fixed signal $\mb x$ with high probability.} \js{More recent work in this line~\cite{zhang2016provable,zhang2016reshaped,wang2016solving,kolte2016phase,gao2016gauss,bendory2016non,waldspurger2016phase} concerns error stability, alternative formulations, algorithms, and measurement models.} Compared to the SDP-based methods, these methods are more scalable and closer to methods used in practice. \js{All these analyses are based on local geometry in nature, and hence depend on the spectral initializer being sufficiently close to the target set}. In contrast, we explicitly characterize the global function landscape of~\eqref{eqn:finite-f}. Its benign \js{global} geometric structure allows several algorithmic choices (see Section~\ref{sec:intro_geo_analysis}) that need \emph{no special initialization} \js{and scale much better than the convex approaches}. 

\js{Near the target set (i.e., $\mc R_3$ in Theorem~\ref{thm:finite-landscape}), \cite{candes2015phase,chen2015solving} established a local curvature property that is strictly weaker than our restricted strong convexity result. The former is sufficient for obtaining convergence results for first-order methods, while the latter is necessary for establishing convergence results for second-order method (see our detailed comments in Section~\ref{sec:geometry-key}). Besides these, \cite{soltanolkotabi2014algorithms} and \cite{white2015local} also explicitly established local strong convexity near the target set for real GPR in $\R^n$; the Hessian-form characterization presented in~\cite{white2015local} is real-version counterpart to ours here. }

\paragraph{(Global) Geometric analysis of other nonconvex problems.} 
The approach taken here is similar in spirit to our recent geometric analysis of a nonconvex formulation for complete dictionary learning~\cite{sun2015complete}. For that problem, we also identified a similar geometric structure that allows efficient global optimization without special initialization. There, by analyzing the geometry of a nonconvex formulation, we derived a provable efficient algorithm for recovering square invertible dictionaries when the coefficient matrix has a constant fraction of nonzero entries. Previous results required the dictionary matrix to have far fewer nonzero entries. \cite{sun2015nonconvex} provides a high-level overview of the common geometric structure that arises in dictionary learning, GPR and several other problems. This approach has also been applied to other problems~\cite{ge2015escaping,bandeira2016low,boumal2016non,soudry2016no,kawaguchi2016deep,bhojanapalli2016global,ge2016matrix,park2016non}. Despite these similarities, GPR raises several novel technical challenges: the objective is heavy-tailed, and minimizing the number of measurements is important\footnote{\js{The same challenge is also faced by~\cite{candes2015phase,chen2015solving}}.}. 

Our work sits amid the recent surge of work on provable nonconvex heuristics for practical problems. Besides GPR studied here, this line of work includes low-rank matrix recovery~\cite{keshavan2010matrix, jain2013low,hardt2014understanding,hardt2014fast,netrapalli2014non,jain2014fast,sun2014guaranteed, wei2015guarantees,sa2015global,zheng2015convergent,tu2015low,chen2015fast}, tensor recovery~\cite{jain2014provable,anandkumar2014analyzing,anandkumar2014guaranteed,anandkumar2015tensor,ge2015escaping}, structured element pursuit~\cite{qu2014finding,hopkins2015speeding}, dictionary learning~\cite{agarwal2013learning,arora2013new,agarwal2013exact,arora2014more,arora2015simple,sun2015complete}, mixed regression~\cite{yi2013alternating,sedghi2014provable}, blind deconvolution~\cite{lee2013near, lee2015rip, lee2015blind}, super resolution~\cite{eftekhari2015greed}, phase synchronization~\cite{boumal2016nonconvex}, numerical linear algebra~\cite{jain2015computing}, and so forth. Most of the methods adopt the strategy of initialization plus local refinement we alluded to above. In contrast, our global geometric analysis allows flexible algorithm design (i.e., separation of geometry and algorithms) and gives some clues as to the behavior of nonconvex heuristics used in practice, which often succeed without clever initialization. 

\paragraph{Recovering low-rank positive semidefinite matrices. }

The phase retrieval problem has a natural generalization to recovering low-rank positive semidefinite matrices. 
Consider the problem of recovering an unknown rank-$r$ matrix $\mb M \succeq \mb 0$ in $\R^{n \times n}$ from linear measurement of the form $z_k = \trace(\mb A_k \mb M)$ with symmetric $\mb A_k$ for $k = 1, \dots, m$. One can solve the problem by considering the ``factorized'' version: recovering $\mb X \in \R^{n \times r}$ (up to right invertible transform) from measurements $z_k = \trace(\mb X^* \mb A_k \mb X)$. This is a natural generalization of GPR, as one can write the GPR measurements as $y_k^2 = \abs{\mb a_k^* \mb x}^2 = \mb x^* (\mb a_k \mb a_k^*) \mb x$. This generalization and related problems have recently been studied in~\cite{sa2015global,zheng2015convergent,tu2015low,chen2015fast,bhojanapalli2016global}.

\subsection{Notations, Organization, and Reproducible Research}
\paragraph{Basic notations and facts.}
Throughout the paper, we define complex inner product as: $\innerprod{\mb a}{\mb b} \doteq \mb a^* \mb b$ for any $\mb a, \mb b \in \Cp^n$. We use $\bb{CS}^{n-1}$ for the complex unit sphere in $\Cp^n$. $\bb{CS}^{n-1}(\lambda)$ with $\lambda >0$ denotes the centered complex sphere with radius $\lambda$ in $\Cp^n$. Similarly, we use $\bb{CB}^n(\lambda)$ to denote the centered complex ball of radius $\lambda$. We use $\mc{CN}(k)$ for a standard complex Gaussian vector of length $k$ defined in~\eqref{eqn:complex-gaussian}. We reserve $C$ and $c$, and their indexed versions to denote absolute constants. Their value vary with the context. 

Let $\Re\paren{\mb z} \in \R^n$  and $\Im(\mb z) \in \R^n$ denote the real and imaginary part of a complex vector $\mb z \in \Cp^n$. \js{We will often use the canonical identification of $\Cp^n$ and $\R^{2n}$, which assign $\mb z \in \Cp^n$ to $[\Re\paren{\mb z}; \Im\paren{\mb z}] \in \R^{2n}$. This is so natural that we will not explicitly state the identification when no confusion is caused. We say two complex vectors are orthogonal in the geometric (real) sense if they are orthogonal after the canonical identification\footnote{Two complex vectors $\mb w, \mb v$ are orthogonal in complex sense if $\mb w^* \mb v = 0$.}. } It is easy to see that two complex vectors $\mb a$ and $\mb b$ are orthogonal in the geometric (real) sense if and only if $\Re(\mb w^* \mb z) = 0$. 

For any $\mb z$, obviously $f(\mb z) = f(\mb z \e^{\im \phi})$ for all $\phi$, and the set $\set{\mb z \e^{\im \phi}: \phi \in [0, 2\pi)}$ forms a one-dimensional (in the real sense) circle in $\Cp^n$. Throughout the paper, we reserve $\mb x$ for the unknown target signal, and define the target set as $\mc X \doteq \Brac{ \mb x \e^{\im \phi }:  \phi \in [0, 2\pi ) } $. Moreover, we define
\begin{align}\label{eqn:basic-def}
   \phi(\mb z) \doteq \mathop{\arg\min}_{\phi \in [0,2\pi) } \norm{\mb z - \mb x \e^{\im \phi} }{},\quad \mb h(\mb z) \doteq \mb z - \mb x \e^{\im  \phi\paren{\mb z}  }, \quad \dist\paren{\mb z, \mc X} \doteq \norm{ \mb h(\mb z) }{}.
\end{align}
for any $\mb z \in \Cp^n$. It is not difficult to see that 
$
\mb z^*\mb x \e^{\im \phi(\mb z)}  = \abs{\mb x^* \mb z}
$. Moreover, $\mb z_T \doteq \im \mb z/\norm{\mb z}{}$ and $-\mb z_T$ are the unit vectors tangent to the circle $\set{\mb z \e^{\im \phi}: \phi \in [0, 2\pi)}$ at point $\mb z$.

\paragraph{Wirtinger calculus.}
Consider a real-valued function $g(\mb z): \bb C^n \mapsto \bb R$. Unless $g$ is constant, it is not complex differentiable. However, if one identifies $\Cp^n$ with $\R^{2n}$ and treats $g$ as a function in the real domain, $g$ may still be differentiable in the real sense. Doing calculus for $g$ directly in the real domain tends to produce cumbersome expressions. A more elegant way is adopting the Wirtinger calculus, which can be thought of a neat way of organizing the real partial derivatives. Here we only provide a minimal exposition of Wirtinger calculus; similar exposition is also given in~\cite{candes2015phase}. A systematic development with emphasis on applications in optimization is provided in the article~\cite{ken2009complex}. 

Let $\mb z = \mb x + \im \mb y$ where $\mb x = \Re(\mb z)$ and $\mb y = \Im(\mb z)$. For a complex-valued function $g(\mb z) = u(\mb x, \mb y) + \im v(\mb x, \mb y)$, the Wirtinger derivative is well defined so long as the real-valued functions $u$ and $v$ are differentiable with respect to (w.r.t.) $\mb x$ and $\mb y$. Under these conditions, the Wirtinger derivatives can be defined {\em formally} as 
\begin{align*}
	\frac{\partial g}{\partial \mb z} 
	& \doteq \left. \frac{\partial g(\mb z, \ol{\mb z}) }{\partial \mb z}\right |_{\ol{\mb z} \text{ constant} } = 
	\left.
	\brac{
	\frac{\partial g(\mb z, \ol{\mb z}) }{\partial z_1}, \dots, \frac{\partial g(\mb z, \ol{\mb z}) }{\partial z_n}
	}
	\right|_{\ol{\mb z} \text{ constant} }\\
	\frac{\partial g}{\partial \ol{\mb z}} 
	& \doteq \left. \frac{\partial g(\mb z, \ol{\mb z}) }{\partial \ol{\mb z}}\right |_{\mb z \text{ constant} } = 
	\left.
	\brac{
	\frac{\partial g(\mb z, \ol{\mb z}) }{\partial \ol{z_1}}, \dots, \frac{\partial g(\mb z, \ol{\mb z}) }{\partial \ol{z_n}}
	}
	\right|_{\mb z \text{ constant} }. 
\end{align*}
The notation above should only be taken at a formal level. Basically it says when evaluating $\partial g/\partial \mb z$, one just treats $\ol{\mb z}$ as if it was a constant, and vise versa. To evaluate the individual partial derivatives, such as $\frac{\partial g(\mb z, \ol{\mb z}) }{\partial z_i}$, all the usual rules of calculus apply.\footnote{The precise definition is as follows: write $\mb z = \mb u + \im \mb v$. Then $\frac{\partial g}{\partial \mb z} \doteq \tfrac{1}{2} \left( \frac{\partial g}{\partial \mb u} - \im \frac{\partial g}{\partial \mb v} \right)$. Similarly, $\frac{\partial g}{\partial \bar{\mb z}} \doteq \tfrac{1}{2} \left( \frac{\partial g}{\partial \mb u} + \im \frac{\partial g}{\partial \mb v} \right)$.}

Note that above the partial derivatives $\frac{\partial g}{\partial \mb z}$ and $\frac{\partial g}{\partial \ol{\mb z}}$ are row vectors. The Wirtinger gradient and Hessian are defined as
\begin{align}
\nabla g(\mb z) = \brac{\frac{\partial g}{\partial \mb z}, \frac{\partial g}{\partial \ol{\mb z}}}^*
\quad 
\nabla^2 g(\mb z) = 
\begin{bmatrix}
\frac{\partial}{\partial \mb z} \paren{\frac{\partial g}{\partial \mb z}}^* & \frac{\partial}{\partial \ol{\mb z}} \paren{\frac{\partial g}{\partial \mb z}}^* \\
\frac{\partial}{\partial \mb z} \paren{\frac{\partial g}{\partial \ol{\mb z}}}^* & \frac{\partial}{\partial \ol{\mb z}} \paren{\frac{\partial g}{\partial \ol{\mb z}}}^*
\end{bmatrix}, \label{eqn:wirtinger}
\end{align}
where we sometimes write $\nabla_{\mb z} g \doteq \paren{\frac{\partial g}{\partial \mb z}}^*$ and naturally $\nabla_{\ol{\mb z}} g \doteq \paren{\frac{\partial g}{\partial \ol{\mb z}}}^*$. With gradient and Hessian, the second-order Taylor expansion of $g(\mb z)$ at a point $\mb z_0$ is defined as 
\begin{align*}
\wh{g}(\mb \delta; \mb z_0) = g(\mb z_0) + 
\paren{\nabla g(\mb z_0)}^*
\begin{bmatrix}
\mb \delta \\
\ol{\mb \delta} 
\end{bmatrix}
+ \frac{1}{2}
\begin{bmatrix}
\mb \delta \\
\ol{\mb \delta} 
\end{bmatrix}^* 
\nabla^2 g(\mb z_0)
\begin{bmatrix}
\mb \delta \\
\ol{\mb \delta} 
\end{bmatrix}. 
\end{align*}
For numerical optimization, we are most interested in real-valued $g$. A real-valued $g$ is stationary at a point $\mb z$ if and only if 
\begin{align*}
\nabla_{\mb z} g(\mb z) = \mb 0. 
\end{align*}
This is equivalent to the condition $\nabla_{\ol{\mb z}} g = \mb 0$, as $\nabla_{\mb z} g = \ol{\nabla_{\ol{\mb z}} g}$ when $g$ is real-valued. The curvature of $g$ at a stationary point $\mb z$ is dictated by the Wirtinger Hessian $\nabla^2 g(\mb z)$. An important technical point is that the Hessian quadratic form involves left and right multiplication with a $2n$-dimensional vector consisting of a conjugate pair $(\mb \delta, \bar{\mb \delta})$. 
 
For our particular function $f(\mb z): \bb C^n \mapsto \bb R$ defined in \eqref{eqn:finite-f}, direct calculation gives 
\begin{align}
\nabla f(\mb z) & = \frac{1}{m}\sum_{k=1}^m
\begin{bmatrix}
 \paren{\abs{\mb a_k^* \mb z}^2 - y_k^2} \paren{\mb a_k \mb a_k^*} \mb z \\ \paren{\abs{\mb a_k^* \mb z}^2 - y_k^2} \paren{\mb a_k \mb a_k^*}^\top \ol{\mb z}
\end{bmatrix}, \label{eqn:grad-finite}  \\
\nabla^2 f(\mb z) & = \frac{1}{m}\sum_{k=1}^m
\begin{bmatrix}
\paren{2\abs{\mb a_k^* \mb z}^2 - y_k^2} \mb a_k \mb a_k^* & \paren{\mb a_k^* \mb z}^2 \mb a_k \mb a_k^\top \\
\paren{\mb z^* \mb a_k}^2 \ol{\mb a_k} \mb a_k^* & \paren{2\abs{\mb a_k^* \mb z}^2 - y_k^2} \ol{\mb a_k} \mb a_k^\top 
\end{bmatrix}. \label{eqn:hessian-finite}
\end{align}
Following the above notation, we write $\nabla_{\mb z} f(\mb z)$ and $\nabla_{\ol{\mb z}} f(\mb z)$ for denoting the first and second half of $\nabla f(\mb z)$, respectively.

\paragraph{Organization.}
The remainder of this paper is organized as follows. In Section~\ref{sec:func-geometry}, we provide a quantitative characterization of the global geometry for GPR and highlight main technical challenges in establishing the results. Based on this characterization, in Section~\ref{sec:algorithm} we present a modified trust-region method for solving GPR from an arbitrary initialization, which leads to our main computational guarantee. In Section~\ref{sec:experiment} we study the empirical performance of our method for GPR. Section \ref{sec:discussion} concludes the main body with a discussion of open problems. Section~\ref{app:finite} and Section~\ref{app:algorithm} collect the detailed proofs to technical results for the geometric analysis and algorithmic analysis, respectively. 

\paragraph{Reproducible research.}
The code to reproduce all the figures and the experimental results can be found online: 
\begin{quote}
\centering
\url{https://github.com/sunju/pr_plain} . 
\end{quote}


\section{The Geometry of the Objective Function}\label{sec:func-geometry}
The low-dimensional example described in the introduction (Figure~\ref{fig:geo-2d}) provides some clues about the high-dimensional geometry of the objective function $f(\mb z)$. Its properties can be seen most clearly through the population objective function $\bb E_{\mb a}[ f(\mb z)]$, which can be thought of as a ``large sample'' version in which $m \to \infty$. We characterize this large-sample geometry in Section \ref{sec:geometry-asymptotic}. In Section \ref{sec:geometry-finite}, we show that the most important characteristics of this large-sample geometry are present even when the number of observations $m$ is close to the number of degrees of freedom $n$ in the target $\mb x$. Section \ref{sec:geometry-key} describes several technical problems that arise in the finite sample analysis, and states a number of key intermediate results, which are proved in Section~\ref{app:finite}. 

\subsection{A Glimpse of the Asymptotic Function Landscape}\label{sec:geometry-asymptotic}

To characterize the geometry of $\bb E_{\mb a}[f(\mb z)]$ (written as $\expect{f}$ henceforth), we simply calculate the expectation of the first and second derivatives of $f$ at each point $\mb z \in \bb C^n$. We characterize the location of the critical points, and use second derivative information to characterize their signatures. An important conclusion is that every local minimum of $\expect{f}$ is of the form $\mb x e^{\im \phi}$, and that all other critical points have a direction of strict negative curvature:

\begin{theorem}
	When $\mb x \not = \mb 0$, the only critical points of $\expect{f}$ are $\mb 0$, $\mc X$ and $\mc S \doteq\Brac{\mb z: \mb x^*\mb z = 0,\; \norm{\mb z}{} = \norm{\mb x}{}/\sqrt{2} }$, which are the local maximizer, the set of local/global minimizers, and the set of saddle points, respectively.	Moreover, the saddle points and local maximizer have negative curvature in the $\mb x \e^{\im \phi(\mb z)}$ direction. 
\end{theorem}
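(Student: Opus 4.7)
The plan is to reduce everything to explicit algebra on the closed-form expression $g(\mb z) \doteq \bb E_{\mb a}[f(\mb z)]$. Using the standard complex Gaussian moment identity $\bb E[a_i \ol a_j a_k \ol a_l] = \delta_{ij}\delta_{kl} + \delta_{il}\delta_{jk}$ (equivalently, Wick/Isserlis), one obtains $\bb E[|\mb a^*\mb x|^4] = 2\|\mb x\|{}^4$ and $\bb E[|\mb a^*\mb x|^2 |\mb a^*\mb z|^2] = \|\mb x\|{}^2 \|\mb z\|{}^2 + |\mb x^*\mb z|^2$, so that
\begin{align*}
g(\mb z) = \|\mb x\|{}^4 + \|\mb z\|{}^4 - \|\mb x\|{}^2 \|\mb z\|{}^2 - |\mb x^*\mb z|^2.
\end{align*}
This is a real polynomial in $(\mb z, \ol{\mb z})$, so all Wirtinger derivatives can be written down symbolically without any further probabilistic content.

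Next I would locate the critical points by solving
\begin{align*}
\nabla_{\mb z} g(\mb z) = (2\|\mb z\|{}^2 - \|\mb x\|{}^2)\mb z - (\mb x^*\mb z)\mb x = \mb 0.
\end{align*}
Splitting on whether $\mb x^*\mb z$ vanishes: if $\mb x^*\mb z = 0$, then either $\mb z = \mb 0$ or $\|\mb z\|{}^2 = \|\mb x\|{}^2/2$, recovering $\{\mb 0\}$ and $\mc S$; if $\mb x^*\mb z \neq 0$, the equation forces $\mb z = \alpha \mb x$ with $\alpha \in \Cp$, and substituting back yields $|\alpha| = 1$, i.e., $\mb z \in \mc X$. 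This exhausts the critical set.

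To classify each point, I would compute the Wirtinger Hessian, which works out to
\begin{align*}
\nabla^2 g(\mb z) = \begin{bmatrix} 2\mb z\mb z^* - \mb x\mb x^* + (2\|\mb z\|{}^2 - \|\mb x\|{}^2)\mb I & 2\mb z\mb z^\top \\ 2\ol{\mb z}\mb z^* & 2\ol{\mb z}\mb z^\top - \ol{\mb x}\mb x^\top + (2\|\mb z\|{}^2 - \|\mb x\|{}^2)\mb I \end{bmatrix},
\end{align*}
and evaluate the Hessian quadratic form $Q(\mb \delta) \doteq [\mb \delta;\ol{\mb \delta}]^* \nabla^2 g(\mb z) [\mb \delta;\ol{\mb \delta}]$ along the direction $\mb \delta = \mb x \e^{\im \phi(\mb z)}$. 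At $\mb z = \mb 0$, the off-diagonal blocks vanish and the diagonal block $-\mb x\mb x^* - \|\mb x\|{}^2 \mb I$ is strictly negative definite, yielding $Q(\mb x) = -4\|\mb x\|{}^4$ and identifying $\mb 0$ as a local maximizer with negative curvature along $\mb x$. At $\mb z \in \mc S$, the identity terms vanish and, using $\mb x^*\mb z = 0$, direct substitution gives $Q(\mb x) = -2\|\mb x\|{}^4 < 0$. At $\mb z \in \mc X$, $\mb z \mb z^* = \mb x \mb x^*$ and the Hessian reduces to a positive semidefinite form whose unique null direction is the tangent $\im \mb z$; since $g \ge 0$ globally with equality exactly on $\mc X$, these points are simultaneously local and global minimizers.

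The only subtlety worth flagging is that at $\mb z \in \mc S$ the phase $\phi(\mb z)$ is not uniquely defined (because $\mb x^*\mb z = 0$ makes $\|\mb z - \mb x \e^{\im \phi}\|$ independent of $\phi$), but the negative-curvature value $Q(\mb x \e^{\im \phi}) = -2\|\mb x\|{}^4$ comes out independent of the chosen phase, so the statement is unambiguous. No step is genuinely hard here; the main source of arithmetic slip-ups is the factor-of-two bookkeeping in the Hessian quadratic form -- in particular the off-diagonal contribution $2\Re(\mb \delta^*(2\mb z\mb z^\top)\ol{\mb \delta})$ -- together with verifying, in the $\mc X$ case, that $\im\mb z$ is the only null direction of $\nabla^2 g$.
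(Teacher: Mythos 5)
Your proposal is correct, and the computations check out (the expectation formula, the gradient, the Hessian, $Q(\mb x)=-4\norm{\mb x}{}^4$ at $\mb 0$, and $Q(\mb x\e^{\im\phi})=-2\norm{\mb x}{}^4$ on $\mc S$ all agree with the paper's Lemma~\ref{lem:expect-func} and its consequences). Where you differ from the paper is in how the critical set is located: the paper partitions $\Cp^n$ into annular regions according to $\norm{\mb z}{}$, shows via the sign of $[\mb z;\ol{\mb z}]^*\nabla\expect{f}$ that no critical points live in the open annuli, and then analyzes the two critical spheres; you instead solve the stationarity equation $(2\norm{\mb z}{}^2-\norm{\mb x}{}^2)\mb z=(\mb x^*\mb z)\mb x$ directly by splitting on whether $\mb x^*\mb z$ vanishes. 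Your route is shorter and arguably more transparent, while the paper's region-by-region covering argument deliberately mirrors the structure of the finite-sample analysis (Theorem~\ref{thm:finite-landscape}), which is why the authors chose it. One small omission: to conclude that points of $\mc S$ are \emph{saddle points} rather than local maximizers, you need a direction of positive curvature in addition to the negative one; the paper supplies this by checking the $\mb z$ direction (a one-line computation gives $Q(\mb z)=8\norm{\mb z}{}^4>0$ on $\mc S$, since the $\mb I$ and $\mb x\mb x^*$ terms drop out). Adding that line completes the classification. Your use of $g\ge 0$ with equality exactly on $\mc X$ to classify the minimizers is the same as the paper's, and your remark about the phase ambiguity of $\phi(\mb z)$ on $\mc S$ is a fair point that the paper glosses over.
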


\begin{proof}
We show the statement by partitioning the space $\bb C^n$ into several regions and analyzing each region individually using the expected gradient and Hessian. These are calculated in Lemma \ref{lem:expect-func}, and reproduced below:	
\begin{align}
		\expect{f} &= \norm{\mb x}{}^4 + \norm{\mb z}{}^4 - \norm{\mb x}{}^2 \norm{\mb z}{}^2 - \abs{\mb x^* \mb z}^2,  \\
		\nabla \expect{f} &= \begin{bmatrix}
			\nabla_{\mb z} \expect{f} \\ \nabla_{\ol{\mb z}} \expect{f}
		\end{bmatrix} = \begin{bmatrix}
			\paren{2\norm{\mb z}{}^2 \mb I - \norm{\mb x}{}^2 \mb I - \mb x \mb x^*} \mb z \\
			\paren{2\norm{\mb z}{}^2\mb I  - \norm{\mb x}{}^2 \mb I - \mb x \mb x^*} \ol{\mb z} 
		\end{bmatrix}, \\
		\nabla^2 \expect{f} &= 
\begin{bmatrix}
2 \mb z \mb z^* - \mb x \mb x^* + \paren{2\norm{\mb z}{}^2 - \norm{\mb x}{}^2} \mb I & 2\mb z \mb z^\top  \\
2 \ol{\mb z} \mb z^* & 2 \ol{\mb z} \mb z^\top - \ol{\mb x} \mb x^\top + \paren{2\norm{\mb z}{}^2 - \norm{\mb x}{}^2} \mb I
\end{bmatrix}.
	\end{align}
 Based on this, we observe:
\begin{itemize}
\item $\mb z = \mb 0$ is a critical point, and the Hessian 
\begin{align*}
\nabla^2 \bb E\brac{ f(\mb 0)} = 
\diag\paren{
-\mb x \mb x^* - \norm{\mb x}{}^2 \mb I,  -\ol{\mb x} \mb x^\top - \norm{\mb x}{}^2 \mb I} \prec \mb 0. 
\end{align*}
Hence, $\mb z = \mb 0$ is a local maximizer. 
\item In the region $\set{\mb z: 0 < \norm{\mb z}{}^2 < \tfrac{1}{2} \norm{\mb x}{}^2 }$, we have
\begin{align*}
	\begin{bmatrix}
		\mb z \\ \ol{\mb z}
	\end{bmatrix}^* \nabla \bb E\brac{f} = 2 \paren{ 2\norm{\mb z}{}^2 - \norm{\mb x}{}^2  } \norm{\mb z}{}^2 - 2\abs{\mb x^*\mb z}^2<0. 
\end{align*}
So there is no critical point in this region. 
\item When $\norm{\mb z}{}^2 = \tfrac{1}{2} \norm{\mb x}{}^2$, the gradient is $ \nabla_{\mb z} \bb E\brac{f} =  -\mb x \mb x^* \mb z$. The gradient vanishes whenever $\mb z \in \mathrm{null}\paren{\mb x \mb x^*}$, which is true if and only if $\mb x^* \mb z = 0$. Thus, we can see that any $\mb z \in \mc S$ is a critical point. Moreover, for any $\mb z\in \mc S$, 
\begin{align*}
\begin{bmatrix}
\mb x \e^{\im \phi(\mb z)} \\
\ol{\mb x \e^{\im \phi(\mb z)} }  
\end{bmatrix}^* 
\nabla^2 \expect{f}
\begin{bmatrix}
\mb x \e^{\im \phi(\mb z)} \\ 
\ol{\mb x \e^{\im \phi(\mb z)} } 
\end{bmatrix}
= -2\norm{\mb x}{}^4. 
\end{align*}
Similarly, one can show that in $\mb z$ direction there is positive curvature. Hence, every $\mb z \in \mc S$ is a saddle point.
\item In the region $\set{\mb z: \tfrac{1}{2} \norm{\mb x}{}^2 < \norm{\mb z}{}^2 < \norm{\mb x}{}^2}$, any potential critical point must satisfy 
\begin{align*}
\left ( 2\norm{\mb z}{}^2 - \norm{\mb x}{}^2 \right) \mb z = \mb x \mb x^* \mb z.
\end{align*}
In other words, $2\norm{\mb z}{}^2 - \norm{\mb x}{}^2$ is the positive eigenvalue of the rank-one PSD Hermitian matrix $\mb x \mb x^*$. Hence $2\norm{\mb z}{}^2 - \norm{\mb x}{}^2 = \norm{\mb x}{}^2$. This would imply that $\norm{\mb z}{} = \norm{\mb x}{}$, which does not occur in this region. 
\item When $\norm{\mb z}{}^2 = \norm{\mb x}{}^2$, critical points must satisfy 
\begin{align*}
 \paren{\norm{\mb x}{}^2 \mb I - \mb x\mb x^* }\mb z = \mb 0, 
\end{align*}
and so $\mb z \not \in \mathrm{null}\paren{\mb x\mb x^* }$. Given that $\norm{\mb z}{} =\norm{\mb x}{}$, we must have $\mb z = \mb x\e^{\im \theta}$ for some $\theta\in [0,2\pi)$. Since $f$ is a nonnegative function, and $f(\mb z) = 0$ for any $\mb z \in \mc X$, $\mc X$ is indeed also the global optimal set.
\item  For $\norm{\mb z}{} > \norm{\mb x}{}$, since the gradient $\begin{bmatrix}
		\mb z \\ \ol{\mb z}
	\end{bmatrix}^* \nabla \bb E\brac{ f(\mb z) } > 0$, there is no critical point present.
\end{itemize}
Summarizing the above observations completes the proof.	
\end{proof}

This result suggests that the same qualitative properties that we observed for $f(\mb z)$ with $\mb z \in \reals^2$ also hold for higher-dimensional, complex $\mb z$. The high-dimensional analysis is facilitated by the unitary invariance of the complex normal distribution -- the properties of $\expect{f}$ at a given point $\mb z$ depend only the norm of $\mb z$ and its inner product with the target vector $\mb x$, i.e., $\mb x^* \mb z$. In the next section, we will show that the important qualitative aspects of this structure are preserved even when $m$ is as small as $Cn \log^3 n$.


\subsection{The Finite-Sample Landscape} \label{sec:geometry-finite}

The following theorem characterizes the geometry of the objective function $f(\mb z)$, when the number of samples $m$ is roughly on the order of $n$ -- degrees of freedom of $\mb x$. The main conclusion is that the space $\bb C^n$ can be divided into three regions, in which the objective either exhibits negative curvature, strong gradient, or restricted strong convexity. 

The result is not surprising in view of the above characterization of the ``large-sample'' landscape. The intuition is as follows: since the objective function is a sum of independent random variables, when $m$ is sufficiently large, the function values, gradients and Hessians should be uniformly close to their expectations. Some care is required in making this intuition precise, however. Because the objective function contains fourth powers of Gaussian random variables, it is heavy tailed. Ensuring that $f$ and its derivatives are uniformly close to their expectations requires $m \ge Cn^2$. This would be quite wasteful, since $\mb x$ has only $n$ degrees of freedom. 

Fortunately, when $m \ge C n \; \mathrm{polylog}( n )$, w.h.p.\ $f$ {\em still} has benign global geometry, even though its gradient is not uniformly close to its expectation. Perhaps surprisingly, the heavy tailed behavior of $f$ only helps to {\em prevent} spurious local minimizers -- away from the global minimizers and saddle points, the gradient can be sporadically large, but it cannot be sporadically small. This behavior will follow by expressing the decrease of the function along a certain carefully chosen descent direction as a sum of random variables which are heavy tailed, but are also {\em nonnegative}. Because they are nonnegative, their deviation below their expectation is bounded, and their lower tail is well-behaved. More discussion on this can be found in the next section and the proofs in Section~\ref{app:finite}. 


Our main geometric result is as follows:
\js{
\begin{theorem}[Main Geometric Results]\label{thm:finite-landscape}
	There exist positive absolute constants $C, c$, such that when $m \geq Cn \log^3 n$, it holds with probability at least $1 - cm^{-1}$ that $f(\mb z)$ has no spurious local minimizers and the only local/global minimizers are exactly the target set $\mc X$. More precisely, with the same probability, 
	\begin{align*}
		 \frac{1}{\norm{\mb x}{}^2}
\begin{bmatrix}
\mb x \e^{\im \phi(\mb z)} \\
\ol{\mb x} \e^{-\im \phi(\mb z)}
\end{bmatrix}^*   \nabla^2 f(\mb z) 
\begin{bmatrix}
\mb x \e^{\im \phi(\mb z)}\\
\ol{\mb x} \e^{-\im \phi(\mb z)}
\end{bmatrix} 
\;& \le \;  
- \frac{1}{100} \norm{\mb x}{}^2, &\forall\; &  \mb z \in \mc R_1, &\quad \text{(Negative Curvature)} \\
\norm{\nabla_{\mb z} f(\mb z)}{} \;&\ge\; \frac{1}{1000} \norm{\mb x}{}^2 \norm{\mb z}{},  &\forall\; & \mb z \in \mc R_2, &\quad \text{(Large Gradient)}\\
\begin{bmatrix}
\mb g(\mb z) \\
\ol{\mb g(\mb z)}
\end{bmatrix}^* 
\nabla^2 f(\mb z) 
\begin{bmatrix}
\mb g(\mb z) \\
\ol{\mb g(\mb z)}
\end{bmatrix} 
\;&\ge\; \frac{1}{4} \norm{\mb x}{}^2, &\forall\; &  \mb z \in \mc R_3, &\quad \text{(Restricted Strong Convexity)}
\end{align*}
where, assuming $\mb h(\mb z)$ as defined in \eqref{eqn:basic-def},  
\begin{align*}
	\mb g(\mb z) \doteq 
\begin{cases}
\mb h(\mb z)/\norm{\mb h(\mb z)}{}  & \text{if}\; \mathrm{dist}(\mb z, \mc X) \neq 0, \\
\mb h \in \mc S \doteq \set{\mb h: \Im(\mb h^* \mb z) = 0, \norm{\mb h}{} = 1}   & \text{if}\; \mb z \in \mc X. 
\end{cases}
\end{align*}
Here the regions $\mc R_1,\;\mc R_2^{\mb z},\;\mc R_2^{\mb h}$ are defined as 
\begin{align}
	\mc R_1 \;&\doteq\; \set{\mb z: \begin{bmatrix}
\mb x \e^{\im \phi(\mb z)} \\
\ol{\mb x} \e^{-\im \phi(\mb z)}
\end{bmatrix}^*  \expect{ \nabla^2 f(\mb z)} 
\begin{bmatrix}
\mb x \e^{\im \phi(\mb z)}\\
\ol{\mb x} \e^{-\im \phi(\mb z)}
\end{bmatrix} \le -\frac{1}{100} \norm{\mb x}{}^2 \norm{\mb z}{}^2 - \frac{1}{50} \norm{\mb x}{}^4},\label{eqn:region-1} \\
	\mc R_3 \;&\doteq\; \set{\mb z: \mathrm{dist}(\mb z, \mc X) \le \frac{1}{\sqrt{7}}\norm{\mb x}{}}, \label{eqn:region-3} \\
	\mc R_2 \;&\doteq\; \paren{\mc R_1 \cup \mc R_3}^c. 
\end{align}
\end{theorem}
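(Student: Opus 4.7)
The plan is to use the population analysis in Theorem~2.1 as a template. The three regions $\mc R_1, \mc R_2, \mc R_3$ correspond directly to the three qualitative regimes already identified for $\expect{f}$: negative curvature along $\mb x \e^{\im \phi(\mb z)}$ near the origin and saddle set, large gradient in an intermediate annulus, and restricted positive definiteness near the target set $\mc X$. By the unitary invariance of the complex Gaussian distribution, I would first reduce without loss of generality to $\mb x = \norm{\mb x}{} \mb e_1$. Then, region by region, I would show that the relevant directional second derivative or gradient concentrates tightly enough around its expectation to preserve the sign/magnitude established at the population level, while tracking constants carefully so that the three regions actually cover all of $\Cp^n$.

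The central obstacle is that $f$ is a degree-four polynomial in the Gaussians $\mb a_k$, so the Hessian and gradient are heavy-tailed. A uniform operator-norm bound on $\nabla^2 f - \expect{\nabla^2 f}$ (or a uniform $\ell^2$ bound on $\nabla f - \expect{\nabla f}$) would require $m = \Omega(n^2)$, which is far more than the target $m \ge C n \log^3 n$. The key observation is that no such uniform operator-norm control is needed: in every regime only a scalar quantity along one carefully chosen direction matters. For $\mc R_1$, that direction is $[\mb x \e^{\im \phi(\mb z)}; \overline{\mb x} \e^{-\im \phi(\mb z)}]$, which depends on $\mb z$ only through the scalar phase $\phi(\mb z)\in[0,2\pi)$. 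I would therefore discretize this circle with a constant-size $\epsilon$-net, apply a truncation plus Bernstein-type tail bound to the resulting scalar fourth-order Gaussian sum for each net point, and extend back to the continuous region by a Lipschitz argument on $\phi$.

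The hardest step is the large-gradient estimate on $\mc R_2$, and I expect this to be the main obstacle. Here I would deploy the ``heavy tail is one-sided'' trick alluded to in the paragraph preceding the theorem. Rather than lower bounding $\norm{\nabla_{\mb z} f(\mb z)}{}$ directly, I would project $\nabla_{\mb z} f(\mb z)$ onto a carefully chosen $\mb z$-dependent witness direction (candidates include $\mb z$ itself, $\mb x \e^{\im \phi(\mb z)}$, and $\mb h(\mb z)$), and after algebraic rearrangement write the inner product as a \emph{nonnegative} sum of random variables plus a mean-zero remainder. Nonnegative sums admit one-sided Chernoff bounds that only require first-moment information, so their \emph{lower} tails concentrate sharply even though their upper tails are heavy. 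Covering $\mc R_2$ by an $\epsilon$-net at scale determined by the smoothness of the gradient, union-bounding with the nonnegative-sum concentration, and then extending to all of $\mc R_2$ by a Lipschitz/perturbation argument should give the bound at the target sample complexity.

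For $\mc R_3$, the restricted strong convexity is the least novel piece, and a weaker version was already established in~\cite{candes2015phase}. The first-order optimality condition for $\phi(\mb z) = \arg\min_\phi \norm{\mb z - \mb x \e^{\im \phi}}{}$ forces $\Im(\mb h(\mb z)^* \mb z) = 0$, so $\mb h(\mb z)$ (and hence $\mb g(\mb z)$) lies in the real-orthogonal complement of the flat direction $\im \mb z$; the Hessian quadratic form is expected to be strictly positive precisely on this subspace. Since $\mc R_3$ sits inside a ball of radius $\norm{\mb x}{}/\sqrt{7}$ around $\mc X$, a covering argument at this bounded scale, combined with standard concentration of $\tfrac{1}{m}\sum_k \mb a_k \mb a_k^*$ and related bilinear Gaussian forms, should upgrade the population restricted positive definiteness to the claimed finite-sample bound. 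The final bookkeeping task is to verify that after all deviation terms are absorbed, the regions defined in~\eqref{eqn:region-1}--\eqref{eqn:region-3} indeed tile $\Cp^n$, so that every point is certified by at least one of the three mechanisms.
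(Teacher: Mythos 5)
There is a genuine gap in your treatment of $\mc R_1$. The quantity to be bounded there is the Hessian \emph{at $\mb z$} tested in the direction $\mb x\e^{\im\phi(\mb z)}$; while the test direction indeed depends on $\mb z$ only through the phase, the Hessian $\nabla^2 f(\mb z)$ depends on $\mb z$ in full, through the weights $\abs{\mb a_k^*\mb z}^2$ and $(\mb a_k^*\mb z)^2$. A constant-size net over the phase circle therefore does not give uniformity over the $n$-dimensional set $\mc R_1$. Worse, the fallback you implicitly rely on -- truncation plus Bernstein plus an $\eps$-net over the whole region -- cannot work for the terms that need \emph{upper} bounds: $\sup_{\mb z}\frac{1}{m}\sum_k\abs{\mb a_k^*\mb z}^4/\norm{\mb z}{}^4\ge\max_k\norm{\mb a_k}{}^4/m\sim n^2/m$, so uniform two-sided control of such fourth-order sums is impossible when $m\ll n^2$, and truncation biases an upper bound in the wrong direction. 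The paper's resolution is structural: every term requiring an upper bound carries the \emph{fixed} vector $\mb x$ as its weight, so it is a quadratic form in $\mb z$ with matrix $\frac{1}{m}\sum_k\abs{\mb a_k^*\mb x}^2\mb a_k\mb a_k^*$ (resp.\ $\frac{1}{m}\sum_k(\mb a_k^*\mb x)^2\mb a_k\mb a_k^\top$), which concentrates in operator norm already at $m\gtrsim n\log n$ (Lemma~\ref{lem:op_norm_1}); the cross term is handled by rewriting it as the Hessian at the fixed anchor $\mb x\e^{\im\phi(\mb z)}$ tested against $\mb z$. Only the term $\frac{1}{m}\sum_k\abs{\mb a_k^*\mb x}^4$, which needs a \emph{lower} bound, uses one-sided concentration. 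Your one-sided trick is aimed at $\mc R_2$, where it is indeed the right tool (and where the paper additionally needs two overlapping witness directions $\mb z$ and $\mb h(\mb z)$, a truncated proxy sum, and a separate covering proposition), but it does not rescue the $\mc R_1$ upper bounds.

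A second, smaller omission: you never derive the qualitative conclusion -- that $\mc X$ is exactly the set of local/global minimizers -- from the three quantitative bounds. In $\mc R_1\cup\mc R_2$ this is immediate, but in $\mc R_3$ the restricted strong convexity controls the curvature only along the single radial direction $\mb g(\mb z)$ at each point, so it does not by itself exclude critical points in $\mc R_3\setminus\mc X$. The paper closes this by applying the integral form of Taylor's theorem twice along the segment from $\mb x\e^{\im\phi(\mb z)}$ to $\mb z$ (in both directions) and summing, which yields $\norm{\nabla f(\mb z)}{}\ge\frac{1}{4\sqrt2}\norm{\mb x}{}^2\,\dist(\mb z,\mc X)$ on $\mc R_3$ and hence rules out spurious critical points there. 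You should add both the fixed-anchor operator-norm argument for $\mc R_1$ (and for the analogous upper-bounded terms in $\mc R_3$) and this final Taylor-integral step.
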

}
\begin{proof}
The quantitative statements are proved sequentially in Proposition \ref{prop:nega-curv}, Proposition \ref{prop:grad-region-z}, Proposition \ref{prop:grad-region-zx}, Proposition \ref{prop:str_cvx} and Proposition \ref{prop:region-cover} in the next section. We next show $\mc X$ are the only local/global minimizers. Obviously local minimizers will not occur in \js{$\mc R_1 \cup \mc R_2$}, as at each such point either the gradient is nonzero, or there is a negative curvature direction. So local/global minimizers can occur only in $\mc R_3$. From~\eqref{eqn:grad-finite}, it is easy to check that $\nabla_{\mb z} f(\mb x \e^{\im \phi}) = \mb 0$ and $f(\mb x \e^{\im \phi}) = 0$ for any $\phi \in [0, 2 \pi)$. Since $f(\mb z) \ge 0$ for all $\mb z \in \Cp^n$, all elements of $\mc X$ are local/global minimizers. To see there is no other critical point in $\mc R_3$, note that any point $\mb z \in \mc R_3\setminus \mc X$ can be written as 
\begin{align*}
\mb z = \mb x \e^{\im \phi(\mb z)} + t \mb g, \quad \mb g \doteq \mb h(\mb z)/\norm{\mb h(\mb z)}{}, \; t \doteq \mathrm{dist}(\mb z, \mc X). 
\end{align*}
By the restricted strong convexity we have established for $\mc R_3$, and the integral form of Taylor's theorem in Lemma~\ref{lem:Taylor-integral-form}, 
\begin{align*}
f(\mb z) 
= f(\mb x \e^{\im \phi(\mb z)}) + t
\begin{bmatrix}
\mb g \\
\ol{\mb g}
\end{bmatrix}^*
\nabla f(\mb x \e^{\im \phi(\mb z)}) 
+ 
t^2 \int_0^1 (1-s) 
\begin{bmatrix}
\mb g \\
\ol{\mb g}
\end{bmatrix}^*
\nabla^2 f(\mb x \e^{\im \phi(\mb z)} + st\mb g) 
\begin{bmatrix}
\mb g \\
\ol{\mb g}
\end{bmatrix} 
\; ds 
\ge \frac{1}{8} \norm{\mb x}{}^2 t^2. 
\end{align*}
similarly, we obtain 
\begin{align*}
f(\mb x \e^{\im \phi(\mb z)}) = 0 
& \ge f(\mb z) -  t
\begin{bmatrix}
\mb g \\
\ol{\mb g}
\end{bmatrix}^*
\nabla f(\mb z) 
+ t^2 \int_0^1 (1-s) 
\begin{bmatrix}
\mb g \\
\ol{\mb g}
\end{bmatrix}^*
\nabla^2 f(\mb z - st\mb g) 
\begin{bmatrix}
\mb g \\
\ol{\mb g}
\end{bmatrix} 
\; ds \\
& \ge f(\mb z) - 
\begin{bmatrix}
\mb g \\
\ol{\mb g}
\end{bmatrix}^*
\nabla f(\mb z) + \frac{1}{8} \norm{\mb x}{}^2 t^2. 
\end{align*}
Summing up the above two inequalities, we obtain 
\begin{align*}
t
\begin{bmatrix}
\mb g \\
\ol{\mb g}
\end{bmatrix}^*
\nabla f(\mb z) \ge \frac{1}{4} \norm{\mb x}{}^2 t^2
\Longrightarrow
\norm{\nabla f(\mb z)}{} \ge \frac{1}{4\sqrt{2}} \norm{\mb x}{}^2 t, 
\end{align*}
as desired.  
\end{proof}

\begin{figure}[t]
\centerline{\includegraphics[width=0.8\linewidth]{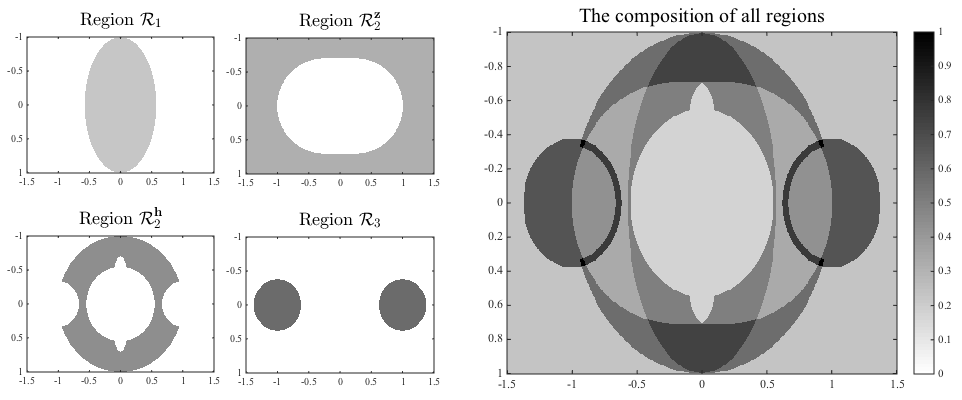}}
\caption{Schematic illustration of partitioning regions for Theorem~\ref{thm:finite-landscape}. This plot corresponds to Figure~\ref{fig:geo-2d}, i.e., the target signal is $\mb x = [1; 0]$ and measurements are real Gaussians, such that the function is defined in $\R^2$. \js{Here $\mc R_2^{\mb z} \cup \mc R_2^{\mb h}$ is $\mc R_2$; we will need the further sub-division of $\mc R_2$ in the proof}. } 
\label{fig:region}
\end{figure}

Figure~\ref{fig:region} visualizes the different regions described in Theorem \ref{thm:finite-landscape}, and gives an idea of how they cover the space. For $f(\mb z)$, a point $\mb z \in \Cp^n$ is either near a critical point such that the gradient $\nabla_{\mb z} f(\mb z)$ is small (in magnitude), or far from a critical point such that the gradient is large. Any point in \js{$\mc R_2$} is far from a critical point. The rest of the space consists of points near critical points, and is covered by $\mc R_1 \cup \mc R_3$. For any $\mb z$ in $\mc R_1$, the quantity 
\begin{align*}
\frac{1}{\norm{\mb x}{}^2}
\begin{bmatrix}
\mb x \e^{\im \phi(\mb z)} \\
\ol{\mb x} \e^{-\im \phi(\mb z)}
\end{bmatrix}^*   \nabla^2 f(\mb z) 
\begin{bmatrix}
\mb x \e^{\im \phi(\mb z)}\\
\ol{\mb x} \e^{-\im \phi(\mb z)}
\end{bmatrix}
\end{align*}
measures the local curvature of $f(\mb z)$ in the $\mb x \e^{\im \phi(\mb z)}$ direction. Strict negativity of this quantity implies that the neighboring critical point is either a local maximizer, or a saddle point. Moreover, $\mb x \e^{\im \phi(\mb z)}$ is a local descent direction, even if $\nabla_{\mb z} f(\mb z) = \mb 0$. For any $\mb z \in \mc R_3$, $\mb g(\mb z)$ is the unit vector that points to $\mb x \e^{\im \phi(\mb z)}$, and is also geometrically orthogonal to the $\im \mb x \e^{\im \phi(\mb z)}$ which is tangent the circle $\mc X$ at $\mb x \e^{\im \phi(\mb z)}$. The strict positivity of the quantity
\begin{align*}
\begin{bmatrix}
\mb g(\mb z) \\
\ol{\mb g(\mb z)}
\end{bmatrix}^* 
\nabla^2 f(\mb z) 
\begin{bmatrix}
\mb g(\mb z) \\
\ol{\mb g(\mb z)}
\end{bmatrix} 
\end{align*}
implies that locally $f(\mb z)$ is strongly convex in $\mb g(\mb z)$ direction, although it is flat on the complex circle $\set{\mb z\e^{\im \phi}: \phi \in [0, 2\pi) }$. In particular, the result applied to $\mb z \in \mc X$ implies that on $\mc X$, $f(\mb z)$ is strongly convex in any direction orthogonal to $\mc X$ (i.e., any ``radial'' direction w.r.t. $\mc X$). This observation, together with the fact that the Hessian is Lipschitz, implies that there is a neighborhood $N(\mc X)$ of $\mc X$, such that for all $\mb z \in N(\mc X)$, $\mb v^* \nabla^2 f(\mb z) \mb v > 0$ for {\em every} $\mb v$ that is orthogonal to the trivial direction $\im \mb z$, not just the particular direction $\mb g(\mb z)$. This stronger property (captured by Lemma~\ref{lem:hessian-lower-upper}) can be used to study the asymptotic convergence rate of algorithms; in particular, we will use it to obtain quadratic convergence for a certain variant of the trust-region method. 

In the asymptotic version, we characterized only the critical points. In this finite-sample version, we characterize the whole space and particularly provide quantitative control for regions near critical points (i.e., $\mc R_1 \cup \mc R_3$). These concrete quantities are important for algorithm design and analysis (see Section~\ref{sec:algorithm}). 

In sum, our objective $f(\mb z)$ has the benign geometry that each $\mb z \in \Cp^n$ has either large gradient or negative directional curvature, or lies in the vicinity of local minimizers around which the function is locally restrictedly strongly convex. Functions with this property lie in the ridable-saddle function class~\cite{ge2015escaping,sun2015nonconvex}. Functions in this class admit simple iterative methods (including the noisy gradient method, curvilinear search, and trust-region methods), which avoid being trapped near saddle points, and obtain a local minimizer \js{asymptotically}. Theorem \ref{thm:finite-landscape} shows that for our problem, every local minimizer is global, and so for our problem, these algorithms obtain a global minimizer \js{asymptotically}. \js{Moreover, with appropriate quantitative assumptions on the geometric structure as we obtained (i.e., either gradient is \emph{sufficiently} large, or the direction curvature is \emph{sufficiently} negative, or local directional convexity is \emph{sufficiently} strong), these candidate methods actually find a global minimizer in polynomial time}. 

\subsection{Key Steps in the Geometric Analysis} \label{sec:geometry-key} Our proof strategy is fairly simple: we work out uniform bounds on the quantities for each of the three regions, and finally show the regions together cover the space. Since~\eqref{eqn:finite-f} and associated derivatives take the form of summation of $m$ independent random variables, the proof involves concentration and covering arguments \cite{vershynin2012introduction}. The main challenge in our argument will be the heavy-tailedness nature of $f$ and its gradient. 

\begin{proposition} [Negative Curvature] \label{prop:nega-curv}
When $m \ge Cn\log n$, it holds with probability at least $1 - c m^{-1}$ that 
\begin{align*}
\frac{1}{\norm{\mb x}{}^2}
\begin{bmatrix}
\mb x \e^{\im \phi(\mb z)} \\
\ol{\mb x} \e^{-\im \phi(\mb z)}
\end{bmatrix}^*   \nabla^2 f(\mb z) 
\begin{bmatrix}
\mb x \e^{\im \phi(\mb z)}\\
\ol{\mb x} \e^{-\im \phi(\mb z)}
\end{bmatrix} 
\le 
- \frac{1}{100} \norm{\mb x}{}^2
\end{align*}
for all $\mb z \in \mc R_1$ defined in \eqref{eqn:region-1}. Here $C, c$ are positive absolute constants. 
\end{proposition}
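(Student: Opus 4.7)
The plan is to show that the empirical Hessian quadratic form
$$H(\mb z) \doteq \begin{bmatrix}\mb x\e^{\im\phi(\mb z)} \\ \ol{\mb x}\e^{-\im\phi(\mb z)}\end{bmatrix}^* \nabla^2 f(\mb z) \begin{bmatrix}\mb x\e^{\im\phi(\mb z)} \\ \ol{\mb x}\e^{-\im\phi(\mb z)}\end{bmatrix}$$
concentrates uniformly about its expectation $\expect{H(\mb z)}$ on $\mc R_1$ with resolution $\tfrac{1}{100}\|\mb x\|^4$, so that the quantitative bound built into the definition \eqref{eqn:region-1} of $\mc R_1$ transfers directly to the empirical quantity. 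First I would use Lemma~\ref{lem:expect-func} to write $\expect{H(\mb z)} = 8|\mb x^*\mb z|^2 + 4\|\mb x\|^2\|\mb z\|^2 - 4\|\mb x\|^4$; combined with \eqref{eqn:region-1} and $|\mb x^*\mb z|^2 \ge 0$, this forces $\|\mb z\|^2 < \|\mb x\|^2$ throughout $\mc R_1$, so it suffices to establish uniform concentration on the bounded ball $\bb{CB}^n(\|\mb x\|)$.

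Pointwise, $H(\mb z) = \tfrac{1}{m}\sum_{k=1}^m h_k(\mb z)$, where each summand $h_k$ is a real quartic combination of the complex Gaussian inner products $\mb a_k^*\mb z$ and $\mb a_k^*\mb x$, with heavy-tailed pieces like $|\mb a_k^*\mb z|^2|\mb a_k^*\mb x|^2$ and $|\mb a_k^*\mb x|^4$ (sub-Weibull with parameter $1/2$). A vanilla Bernstein bound would thus require $m \gtrsim n^2$. To recover the $n\log n$ rate I would truncate: fix $L \asymp \sqrt{\log m}\,\|\mb x\|$, set $\mc E_k \doteq \{|\mb a_k^*\mb x|\vee|\mb a_k^*\mb z| \le L\}$, and split $h_k = h_k\indicator{\mc E_k} + h_k\indicator{\mc E_k^c}$. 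Gaussian tails give $\prob{\mc E_k^c} \le m^{-C'}$, so choosing $C'$ large makes the ``bad'' summands vanish on the event $\bigcap_k \mc E_k$, while a direct moment computation bounds the truncation bias $\expect{h_k\indicator{\mc E_k^c}}$ well below $\tfrac{1}{100}\|\mb x\|^4$. On $\mc E_k$ we have $|h_k(\mb z)| \lesssim \log^2 m\cdot \|\mb x\|^4$, so Bernstein applied to the truncated sum yields concentration at rate $\|\mb x\|^4\sqrt{\mathrm{polylog}(m)/m}$ with probability $1-O(m^{-c})$ at any fixed $\mb z$.

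To pass from pointwise to uniform control over $\mb z\in\bb{CB}^n(\|\mb x\|)$, I would build an $\eps$-net $\mc N$ of cardinality $\exp(O(n\log(\|\mb x\|/\eps)))$, apply the pointwise estimate by a union bound on $\mc N$, and control the oscillation of $\mb z\mapsto H(\mb z)$ between net points by a Lipschitz bound whose constant is polynomial in $\tfrac{1}{m}\sum_k \|\mb a_k\|^4 = O(n^2)$ w.h.p. Choosing $\eps = \mathrm{poly}(n)^{-1}$ costs $O(n\log n)$ in the exponent, and balancing against the pointwise tail gives a uniform additive error of at most $\tfrac{1}{100}\|\mb x\|^4$ whenever $m \ge Cn\log n$. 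Plugging this into \eqref{eqn:region-1} gives $H(\mb z) \le -\tfrac{1}{50}\|\mb x\|^4 + \tfrac{1}{100}\|\mb x\|^4 \le -\tfrac{1}{100}\|\mb x\|^4$ on $\mc R_1$, which is the claimed bound after dividing by $\|\mb x\|^2$.

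The chief obstacle is the truncation step: the threshold $L$ must be small enough that the truncated summands behave subexponentially (activating Bernstein at the $n\log n$ scale rather than $n^2$), yet large enough that the bias $\expect{h_k\indicator{\mc E_k^c}}$ stays strictly below the desired resolution, all while the error scales correctly in both $\|\mb x\|$ and $\|\mb z\|$. This balancing act is precisely where the heavy-tailed character of $f$ enters nontrivially, and it is the one place where one cannot afford to be lossy. The $\eps$-net and Lipschitz portions are more routine, but they too must be executed carefully so as not to introduce extra $\log$ factors that would inflate the sample size beyond $n\log n$.
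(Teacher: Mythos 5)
Your setup is sound: the formula $\expect{H(\mb z)} = 8\abs{\mb x^*\mb z}^2 + 4\norm{\mb x}{}^2\norm{\mb z}{}^2 - 4\norm{\mb x}{}^4$ is correct, the observation that \eqref{eqn:region-1} forces $\norm{\mb z}{} < \norm{\mb x}{}$ on $\mc R_1$ is correct, and the overall goal --- uniform concentration of $H(\mb z)$ at resolution $\tfrac{1}{100}\norm{\mb x}{}^2\norm{\mb z}{}^2 + \tfrac{1}{100}\norm{\mb x}{}^4$ so that the slack built into the definition of $\mc R_1$ absorbs the error --- is exactly what the paper's proof accomplishes. The gap is in the concentration mechanism. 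Your truncation event $\mc E_k = \{\abs{\mb a_k^*\mb x}\vee\abs{\mb a_k^*\mb z}\le L\}$ with $L\asymp\sqrt{\log m}\,\norm{\mb x}{}$ depends on $\mb z$, and $\bigcap_k\mc E_k$ cannot hold uniformly over the ball: for $\mb z$ proportional to some $\mb a_k$ one has $\abs{\mb a_k^*\mb z} = \norm{\mb a_k}{}\norm{\mb z}{}\asymp\sqrt{n}\,\norm{\mb x}{}\gg L$, so any sufficiently fine net contains points at which the ``bad'' summands do not vanish. Quantitatively, the per-point exceptional probability is $\approx m\,\e^{-cL^2/\norm{\mb x}{}^2}$, and to survive a union bound over a net of size $\exp(\Theta(n\log n))$ you would need $L\gtrsim\sqrt{n}\,\norm{\mb x}{}$; but then the truncated summands have range $L^4\asymp n^2\norm{\mb x}{}^4$ and the Bernstein range term forces $m\gtrsim n^2$ up to logs --- exactly the loss you were trying to avoid. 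This is the obstruction the paper flags explicitly in Section~\ref{sec:geometry-key}: one cannot run ``truncate, pointwise Bernstein, union bound'' on quantities like $\tfrac1m\sum_k\abs{\mb a_k^*\mb z}^2\abs{\mb a_k^*\mb x}^2$ uniformly in $\mb z$ with only $O(n\,\mathrm{polylog}(n))$ samples.

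The repair is to never truncate in the $\mb z$-slot. Every $\mb z$-dependent term of $H(\mb z)$ is a quadratic form in $(\mb z,\ol{\mb z})$ whose random matrix involves only the \emph{fixed} vector $\mb x$: the paper regroups $H(\mb z)$ as $\tfrac{1}{m}\sum_k\paren{2\abs{\mb a_k^*\mb z}^2\abs{\mb a_k^*\mb x}^2 - 2\abs{\mb a_k^*\mb x}^4}$ plus the quadratic form $[\mb z;\ol{\mb z}]^*\nabla^2 f(\mb x\e^{\im\phi(\mb z)})[\mb z;\ol{\mb z}]$, controls the matrices $\tfrac1m\sum_k\abs{\mb a_k^*\mb x}^2\mb a_k\mb a_k^*$ and $\tfrac1m\sum_k(\mb a_k^*\mb x)^2\mb a_k\mb a_k^\top$ in operator norm via Lemma~\ref{lem:op_norm_1} (which needs only a constant-resolution net and conditioning on the scalars $\abs{a_k(1)}$, hence $m\gtrsim n\log n$), and bounds the single fixed sum $\tfrac1m\sum_k\abs{\mb a_k^*\mb x}^4$ from below by the one-sided Lemma~\ref{lem:subgauss_nonneg}. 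Uniformity in $\mb z$ then comes for free from the operator-norm statements. As written your argument does not close, and even a corrected scalar-Bernstein-plus-fine-net version would cost logarithmic factors beyond the claimed $m\ge Cn\log n$.
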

\begin{proof}
See Section~\ref{pf:prop:nega-curv} on Page~\pageref{pf:prop:nega-curv}. 
\end{proof}

Next, we show that \js{near $\mc X$, the objective $f$ is strongly convex in any geometrically normal direction to the target set $\mc X$ (which is a one-dimensional circle)}. Combined with the smoothness property, this allows us to achieve a quadratic asymptotic rate of convergence with the modified trust-region algorithm we propose later. 
\begin{proposition}[Restricted Strong Convexity near $\mc X$] \label{prop:str_cvx}
When $m \ge Cn\log n$ for a sufficiently large constant $C$, it holds with probability at least $1 - cm^{-1}$ that 
\begin{align*}
\begin{bmatrix}
\mb g(\mb z) \\
\ol{\mb g(\mb z)}
\end{bmatrix}^* 
\nabla^2 f(\mb z) 
\begin{bmatrix}
\mb g(\mb z) \\
\ol{\mb g(\mb z)}
\end{bmatrix} 
\ge \frac{1}{4} \norm{\mb x}{}^2 
\end{align*}
for all $\mb z \in \mc R_3$ defined in \eqref{eqn:region-3} and for all 
\begin{align*}
\mb g(\mb z) \doteq 
\begin{cases}
\paren{\mb z - \mb x \e^{\im \phi(\mb z)}}/\norm{\mb z - \mb x \e^{\im \phi(\mb z)}}{}  & \text{if}\; \mathrm{dist}(\mb z, \mc X) \neq 0, \\
\mb h \in \mc S \doteq \set{\mb h: \Im(\mb h^* \mb z) = 0, \norm{\mb h}{} = 1}   & \text{if}\; \mb z \in \mc X. 
\end{cases}
\end{align*}
Here $c$ is a positive absolute constant.  
\end{proposition}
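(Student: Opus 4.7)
The plan follows the standard ``expectation plus uniform concentration'' template, specialized to the Wirtinger Hessian of $f$.

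\textbf{Step 1 (Population lower bound).} Using $\nabla^2\expect{f}$ from the asymptotic theorem, the quadratic form in the direction $[\mb g;\ol{\mb g}]$ with $\norm{\mb g}{}=1$ reduces to
\[
Q_\infty(\mb z,\mb g)\;\doteq\;\begin{bmatrix}\mb g\\\ol{\mb g}\end{bmatrix}^{\!*}\nabla^2\expect{f}(\mb z)\begin{bmatrix}\mb g\\\ol{\mb g}\end{bmatrix}
\;=\;2\paren{2\norm{\mb z}{}^2-\norm{\mb x}{}^2}+4\abs{\mb z^*\mb g}^2-2\abs{\mb x^*\mb g}^2+4\Re\paren{(\mb z^*\mb g)^2}.
\]
For $\mb z\in\mc R_3\setminus\mc X$ I would substitute $\mb z=\mb x\e^{\im\phi(\mb z)}+t\mb g$ with $t=\dist(\mb z,\mc X)\le\norm{\mb x}{}/\sqrt 7$; the minimality condition defining $\phi(\mb z)$ forces $r\doteq\e^{-\im\phi(\mb z)}\mb x^*\mb g$ to be real, so $\mb z^*\mb g=r+t\in\R$, and a direct computation yields
\[
Q_\infty(\mb z,\mb g)\;=\;2\norm{\mb x}{}^2+6(r+2t)^2-12t^2\;\ge\;\tfrac{2}{7}\norm{\mb x}{}^2.
\]
The boundary case $\mb z\in\mc X$ is handled analogously: the constraint $\Im(\mb h^*\mb z)=0$ plays exactly the same role and gives $Q_\infty\ge 2\norm{\mb x}{}^2$. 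In either case $Q_\infty$ exceeds the target $\tfrac14\norm{\mb x}{}^2$ by at least $\tfrac{1}{28}\norm{\mb x}{}^2$, the margin available for the concentration step.

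\textbf{Step 2 (Uniform operator-norm concentration).} Since the stacked vector $[\mb g;\ol{\mb g}]$ has squared norm $2$, the claim follows once I show that, with probability at least $1-cm^{-1}$,
\[
\sup_{\mb z\in\mc R_3}\;\bigl\|\nabla^2 f(\mb z)-\nabla^2\expect{f}(\mb z)\bigr\|_{\mathrm{op}}\;\le\;\tfrac{1}{56}\norm{\mb x}{}^2.
\]
My approach is the standard recipe: (i) restrict to the event $\set{\max_k\norm{\mb a_k}{}\le C\sqrt n}$, which holds with probability $\ge 1-cm^{-1}$ and truncates the heavy tails; (ii) establish pointwise concentration at a fixed $\mb z$ via Gaussian-chaos or matrix Bernstein inequalities applied to the centered truncated summands; (iii) union-bound over a $(1/\mathrm{poly}(n))$-net of $\mc R_3$, whose cardinality is $\exp(O(n\log n))$; and (iv) extend from the net to all of $\mc R_3$ using the Lipschitz continuity of $\nabla^2 f$ in $\mb z$, whose Lipschitz constant is polynomial in $n$ on the truncation event and therefore contributes only a negligible additive error for a suitable choice of net resolution.

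\textbf{Main obstacle.} The sharpest point is step~(ii): each Hessian summand is quartic in the complex Gaussian $\mb a_k$, hence only sub-Weibull of order $1/2$ after centering. A naive matrix Bernstein would incur a wasteful polynomial-in-$n$ dependence through the deterministic operator-norm bound on the truncated summands. The standard remedy, also invoked in~\cite{candes2015phase}, is to combine the $O(\sqrt n)$ truncation of $\mb a_k$ with sharp Gaussian-chaos concentration that delivers sub-exponential tails in the moderate-deviation regime $t=\Theta(\norm{\mb x}{}^2)$; this gives single-point failure probability $\exp(-cm)$, which the covering exponent $\exp(O(n\log n))$ absorbs precisely when $m\ge Cn\log n$.
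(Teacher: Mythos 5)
Your Step~1 is sound: the population computation agrees with what the paper ultimately obtains (writing $\mb z = \mb x\e^{\im\phi}+t\mb g$, using that $\e^{-\im\phi}\mb x^*\mb g$ is real, and minimizing over the two scalar parameters), and your margin bookkeeping is internally consistent. The gap is in Step~2: the uniform two-sided bound $\sup_{\mb z\in\mc R_3}\|\nabla^2 f(\mb z)-\nabla^2\expect{f(\mb z)}\|\le\tfrac{1}{56}\norm{\mb x}{}^2$ is not merely hard to prove with $m\asymp n\log n$ --- it is false. The region $\mc R_3$ contains $\mb z=\mb x+t\,\mb a_1/\norm{\mb a_1}{}$ with $t=\norm{\mb x}{}/\sqrt 7$, and testing $\nabla^2 f(\mb z)$ against the direction $\mb u=\mb a_1/\norm{\mb a_1}{}$ picks up the single summand $k=1$, which contributes on the order of $\tfrac{1}{m}\,t^2\norm{\mb a_1}{}^2\cdot\norm{\mb a_1}{}^2\asymp n^2\norm{\mb x}{}^2/m$; this dwarfs $\norm{\mb x}{}^2$ unless $m\gtrsim n^2$, while $\|\nabla^2\expect{f(\mb z)}\|=O(\norm{\mb x}{}^2)$. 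Your truncation to $\norm{\mb a_k}{}\le C\sqrt n$ does not remove this spike, it merely caps it at exactly that order, and no refinement of the single-point chaos tail can rescue a union bound whose target statement fails. This is precisely the obstruction the paper highlights in Section~\ref{sec:geometry-key}: with $O(n\,\mathrm{polylog}(n))$ samples one cannot control $\tfrac1m\sum_k\abs{\mb a_k^*\mb z}^2\mb a_k\mb a_k^*$ uniformly in $\mb z$ in operator norm.

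The repair --- and the route the paper's proof actually takes --- is to never request an upper bound on the heavy-tailed pieces. Expanding the quadratic form at $\mb z=\mb x+t\mb g$ summand by summand and using the identity $2\Re(c^2)=4[\Re(c)]^2-2\abs{c}^2$ on the cross term, the quadratic form becomes a sum of \emph{nonnegative} heavy-tailed terms ($\tfrac2m\sum_k\abs{\mb a_k^*(\mb x+t\mb g)}^2\abs{\mb a_k^*\mb g}^2$ and $\tfrac4m\sum_k[\Re((t\mb a_k^*\mb g+\mb a_k^*\mb x)(\mb g^*\mb a_k))]^2$), which require only \emph{lower-tail} control --- sub-Gaussian regardless of the heavy upper tail, by Lemma~\ref{lem:subgauss_nonneg} and Lemma~\ref{lem:min_fourth} --- minus the single term $\tfrac2m\sum_k\abs{\mb a_k^*\mb x}^2\abs{\mb a_k^*\mb g}^2$, whose needed upper bound involves only the \emph{fixed} vector $\mb x$ and hence follows from the operator-norm concentration of $\tfrac1m\sum_k\abs{\mb a_k^*\mb x}^2\mb a_k\mb a_k^*$ in Lemma~\ref{lem:op_norm_1}. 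Your population calculation can then be reused essentially verbatim to finish, but the sign-aware decomposition is the essential idea your proposal is missing.
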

\begin{proof}
See Section~\ref{pf:prop:str_cvx} on Page~\pageref{pf:prop:str_cvx}. 
\end{proof}
\js{This restricted strong convexity result is qualitatively stronger than the local curvature property (i.e., Condition 7.11) established in~\cite{candes2015phase}. Specifically, our result is equivalent to the following: for any line segment $\mc L$ that is normal to the circle $\mc X$ and contained in $\mc S_3$, it holds that 
\begin{align*}
\innerprod{\nabla_{\mb z}f(\mb z_1) - \nabla_{\mb z}(\mb z_2)}{\mb z_1 - \mb z_2} \ge C \norm{\mb x}{}^2\norm{\mb z_1 - \mb z_2}{}^2, \quad \forall\; \mb z_1, \mb z_2 \in \mc L.  
\end{align*} 
In contrast, the local curvature property~\cite{candes2015phase} only states that in such line segment,  
\begin{align*}
\innerprod{\nabla_{\mb z} f(\mb w) - \nabla_{\mb z} f(\mb x \e^{\im \phi})}{\mb w - \mb x \e^{\im \phi}} \ge C' \norm{\mb x}{}^2 \norm{\mb z - \mb x \e^{\im \phi}}{}^2, \quad \forall\; \mb w \in \mc L. 
\end{align*}
While the local curvature property is sufficient to establish local convergence result of first-order method, the stronger restricted strong convex that provides uniform second-order curvature controls in the above $\mc L$'s are necessary to showing quadratic convergence result of second-order method, as we will do in the next section. 
}

\js{Next we show that the gradients in $\mc R_2$ are bounded away from zero. This is the most tricky part in the proof. Directly working the gradient entails arguing concentration of heavy-tailed random vectors. With only $O(n \mathrm{polylog}(n))$ samples, such concentration is not guaranteed. We get around the problem by arguing directional derivatives in well-chosen directions are concentrated and bounded away from zero, indicating non-vanishing gradients. A natural choice of the direction is the expected gradient, $\expect{\nabla_{\mb z}f}$, which is a linear combination of $\mb z$ and $\mb x$. It turns out directly working with the resulting directional derivatives still faces obstruction due to the heavy-tailed nature of the random variables. Thus, we carefully} divide $\mc R_2$ into two overlapped regions, $\mc R_2^{\mb z}$ and $\mc R_2^{\mb h}$, roughly matching the case $\Re\paren{\mb z^* \expect{\nabla_{\mb z} f(\mb z)}} > 0$ and the case $\Re\paren{\paren{\mb z - \mb x \e^{\im \phi(\mb z)}}^* \expect{\nabla_{\mb z} f(\mb z)}} > 0$, respectively. \js{The two sub-regions are defined as:
\begin{align}
	\mc R_2^{\mb z} \;&\doteq\; \set{\mb z: \Re\paren{\innerprod{\mb z}{\nabla_{\mb z} \expect{f}} } \ge \frac{1}{100} \norm{\mb z}{}^4 + \frac{1}{500} \norm{\mb x}{}^2 \norm{\mb z}{}^2 }, \label{eqn:region-2-z} \\
	\mc R_2^{\mb h} \;&\doteq\; \left\{\mb z: \Re\paren{\innerprod{\mb h(\mb z) }{\nabla_{\mb z}\expect{f}}} \geq \frac{1}{250}\norm{\mb x}{}^2 \norm{\mb z}{} \norm{\mb h(\mb z)}{}, \right. \nonumber\\
	 & \qquad \qquad \qquad \left. \frac{11}{20} \norm{\mb x}{} \le \norm{\mb z}{} \le \norm{\mb x}{}, \mathrm{dist}(\mb z, \mc X) \ge \frac{\norm{\mb x}{}}{3} \right\}. \label{eqn:region-2-h} 
\end{align}
Figure~\ref{fig:region} provides a schematic visualization of the division in $\R^2$. } 

\js{In Proposition~\ref{prop:region-cover} below, we will show that $\mc R_2^{\mb z} \cup \mc R_2^{\mb h}$ indeed cover $\mc R_2$. We first show that the gradients in either region are bounded away from zero.}

\begin{proposition}\label{prop:grad-region-z}
When $m \ge Cn\log n$, it holds with probability at least $1 - cm^{-1}$ that 
\begin{align*}
\frac{\mb z^* \nabla_{\mb z} f(\mb z)}{\norm{\mb z}{}} \geq \frac{1}{1000} \norm{\mb x}{}^2 \norm{\mb z}{} 
\end{align*}
for all $\mb z \in \mc R_2^{\mb z}$ defined in \eqref{eqn:region-2-z}. Here $C,c$ are positive absolute constants. 
\end{proposition}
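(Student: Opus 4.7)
The plan is to use \eqref{eqn:grad-finite} to write
\begin{align*}
\mb z^* \nabla_{\mb z} f(\mb z) \;=\; \underbrace{\frac{1}{m}\sum_{k=1}^m \abs{\mb a_k^* \mb z}^4}_{\doteq\, A(\mb z)} \;-\; \underbrace{\frac{1}{m}\sum_{k=1}^m \abs{\mb a_k^* \mb z}^2 \abs{\mb a_k^* \mb x}^2}_{\doteq\, B(\mb z)},
\end{align*}
so that $A$ and $B$ are each sums of i.i.d.\ \emph{nonnegative} random variables with means $2\norm{\mb z}{}^4$ and $\norm{\mb z}{}^2\norm{\mb x}{}^2 + \abs{\mb x^*\mb z}^2$ respectively. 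Consequently $\expect{A - B} = \Re\paren{\innerprod{\mb z}{\nabla_{\mb z}\expect{f}}}$, which by the definition of $\mc R_2^{\mb z}$ is bounded below by $\tfrac{1}{100}\norm{\mb z}{}^4 + \tfrac{1}{500}\norm{\mb x}{}^2\norm{\mb z}{}^2$. It therefore suffices to establish, uniformly in $\mb z \neq \mb 0$, the two one-sided deviations $A(\mb z) \ge (2 - \tfrac{1}{100})\norm{\mb z}{}^4$ and $B(\mb z) \le \norm{\mb z}{}^2\norm{\mb x}{}^2 + \abs{\mb x^*\mb z}^2 + \tfrac{1}{1000}\norm{\mb x}{}^2\norm{\mb z}{}^2$; combined with the $\mc R_2^{\mb z}$ lower bound on $\expect{A-B}$ these give $A - B \ge \tfrac{1}{1000}\norm{\mb x}{}^2\norm{\mb z}{}^2$, which is the claim after dividing by $\norm{\mb z}{}$.

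Because $\mc R_2^{\mb z}$ is unbounded in $\norm{\mb z}{}$, I would first homogenize: writing $\mb z = r\mb w$ with $\mb w \in \bb{CS}^{n-1}$ gives $A(r\mb w) = r^4 \tilde A(\mb w)$ and $B(r\mb w) = r^2 \tilde B(\mb w)$, reducing the task to the compact set $\bb{CS}^{n-1}$: show $\tilde A(\mb w) \ge 2 - \tfrac{1}{100}$ and $\tilde B(\mb w) \le \norm{\mb x}{}^2 + \abs{\mb x^*\mb w}^2 + \tfrac{1}{1000}\norm{\mb x}{}^2$ uniformly in $\mb w$. For the lower bound on $\tilde A(\mb w)$, each summand $\abs{\mb a_k^* \mb w}^4$ is nonnegative with mean $2$ and constant variance, so a one-sided Bernstein/Bennett inequality (which needs no upper-tail control on the summands, precisely because they are nonnegative and only the truncated variance $\expect{Y_k^2}$ enters) yields $\prob{\tilde A(\mb w) \le 2 - \tfrac{1}{100}} \le \exp(-cm)$ pointwise. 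For the upper bound on $\tilde B(\mb w)$, the products $\abs{\mb a_k^* \mb w}^2\abs{\mb a_k^* \mb x}^2$ are only sub-Weibull of order $\tfrac{1}{2}$, so I would truncate at level $\tau \asymp \norm{\mb x}{}^2 \log m$ and apply classical Bernstein to the bounded truncated sum, verifying via the sub-exponential tails of each factor that the truncation bias and the probability that any one term exceeds $\tau$ are both $O(m^{-c})$; this yields $\prob{\tilde B(\mb w) \ge \expect{\tilde B(\mb w)} + \tfrac{1}{1000}\norm{\mb x}{}^2} \le \exp(-cm/\log^2 m)$ pointwise.

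Uniformization proceeds through an $\varepsilon$-net on $\bb{CS}^{n-1}$ of cardinality $(C/\varepsilon)^{2n}$, with off-net oscillation of $\tilde A,\tilde B$ controlled by the crude Lipschitz estimates $\abs{\tilde A(\mb w) - \tilde A(\mb w')} \lesssim M_{\mb a}^4 \norm{\mb w - \mb w'}{}$ and $\abs{\tilde B(\mb w) - \tilde B(\mb w')} \lesssim \norm{\mb x}{}^2 M_{\mb a}^2 \norm{\mb w - \mb w'}{}$, where $M_{\mb a} \doteq \max_k \norm{\mb a_k}{} \lesssim \sqrt{n + \log m}$ with overwhelming probability. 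Choosing $\varepsilon = 1/\mathrm{poly}(n)$ and taking a union bound forces $m \gtrsim n\log n$ in order to absorb the entropy factor $2n\log(1/\varepsilon)$ while leaving total failure probability $O(m^{-1})$. The main obstacle is the heavy tail of $B$: whereas $A$'s heavy upper tail is actually \emph{helpful} here (only its lower tail matters, which nonnegativity controls for free), $B$ genuinely requires a truncation whose $\log m$ overhead is precisely what drives the $n\log n$ (rather than the information-theoretically optimal $n$) sample complexity in this result.
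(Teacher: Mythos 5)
Your proposal is correct and follows essentially the same route as the paper: the identical decomposition $\mb z^*\nabla_{\mb z}f(\mb z)=\frac1m\sum_k\abs{\mb a_k^*\mb z}^4-\frac1m\sum_k\abs{\mb a_k^*\mb x}^2\abs{\mb a_k^*\mb z}^2$, a one-sided lower-tail bound on the nonnegative quartic sum (the paper's Lemma~\ref{lem:min_fourth}, built on Lemma~\ref{lem:subgauss_nonneg}), an upper bound on the cross term with slack $\tfrac{1}{1000}\norm{\mb x}{}^2\norm{\mb z}{}^2$, and the same final arithmetic against the defining inequality of $\mc R_2^{\mb z}$. The only cosmetic difference is that you bound the cross term directly by truncation plus Bernstein on the scalar sum, whereas the paper invokes the operator-norm concentration of $\frac1m\sum_k\abs{\mb a_k^*\mb x}^2\mb a_k\mb a_k^*$ (Lemma~\ref{lem:op_norm_1}), whose proof internally performs the same truncation-at-$\log m$ argument.
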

\begin{proof}
See Section~\ref{pf:prop:grad-region-z} on Page~\pageref{pf:prop:grad-region-z}. 
\end{proof}

\js{It follows immediately that $\norm{\nabla_{\mb z} f(\mb z)}{} \ge \norm{\mb x}{}^2/1000$ for all $\mb z \in \mc R_2^{\mb z}$. }

\begin{proposition}\label{prop:grad-region-zx}
When $m \ge C n \log^3n$, it holds with probability at least $1 - c m^{-1}$ that 
\begin{align*}
\Re\paren{\mb h(\mb z)^* \nabla_{\mb z} f(\mb z)} \geq  \frac{1}{1000} \norm{\mb x}{}^2 \norm{\mb z}{}\norm{\mb h(\mb z)}{}
\end{align*}
for all $\mb z \in \mc R_2^{\mb h}$ defined in \eqref{eqn:region-2-h}. Here $C, c$ are positive absolute constants. 
\end{proposition}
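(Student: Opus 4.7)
The plan is to decompose $\Re(\mb h(\mb z)^* \nabla_{\mb z} f(\mb z))$ as its expectation (which, by the defining inequality of $\mc R_2^{\mb h}$, is bounded below by $\tfrac{1}{250}\norm{\mb x}{}^2 \norm{\mb z}{} \norm{\mb h(\mb z)}{}$) plus a stochastic deviation, and then show that the deviation is smaller than $\tfrac{3}{2000}\norm{\mb x}{}^2 \norm{\mb z}{}\norm{\mb h(\mb z)}{}$ uniformly over $\mb z \in \mc R_2^{\mb h}$ with probability at least $1-cm^{-1}$, provided $m \ge Cn\log^3 n$. Subtracting yields the stated $\tfrac{1}{1000}$ lower bound. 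This parallels the strategy of Proposition~\ref{prop:grad-region-z}, but the sample complexity is worse because the deviation is harder to control.

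The principal technical obstacle is that the summands $(|\mb a_k^* \mb z|^2 - |\mb a_k^* \mb x|^2)\,\Re((\mb a_k^* \mb z)\overline{(\mb a_k^* \mb h)})$ are degree-four polynomials in the complex Gaussian $\mb a_k$ and hence heavy-tailed: a direct sub-exponential concentration of the deviation would demand $m \gtrsim n^2$. I circumvent this via a nonnegativity-revealing algebraic identity. Define $u_k \doteq \mb a_k^* \mb z$, $w_k \doteq \mb a_k^* \mb x \e^{\im \phi(\mb z)}$, $\delta_k \doteq u_k - w_k = \mb a_k^* \mb h(\mb z)$, $p_k \doteq \Re(\bar u_k \delta_k)$, and $q_k \doteq |\delta_k|^2 \ge 0$. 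A direct expansion using $u_k = w_k + \delta_k$ yields $|u_k|^2 - |w_k|^2 = 2p_k - q_k$ and $\Re(\bar\delta_k u_k) = p_k$, hence
\begin{align*}
\Re(\mb h(\mb z)^* \nabla_{\mb z} f(\mb z))
\;=\; \frac{1}{m}\sum_{k=1}^m (2p_k - q_k)\,p_k
\;=\; \frac{2}{m}\sum_{k=1}^m p_k^2 \;-\; \frac{1}{m}\sum_{k=1}^m p_k q_k.
\end{align*}

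The first sum is \emph{nonnegative}. Although each $p_k^2$ is a degree-four polynomial in a Gaussian and thus heavy-tailed, nonnegativity tames its lower tail: a Paley--Zygmund inequality (or, equivalently, a truncation-plus-Bernstein argument on $p_k^2 \wedge T$) shows that $\tfrac{2}{m}\sum_k p_k^2 \ge (1-\tau)\cdot 2\expect{p_k^2}$ with high probability, and combining with a union bound over an $\epsilon$-net gives the required one-sided uniform lower concentration at the $n\log^3 n$ rate. The correction $\tfrac{1}{m}\sum_k p_k q_k$ is sign-indefinite; here I adopt a level-set truncation at $|u_k|, |\delta_k| \lesssim \sqrt{\log m}\,(\norm{\mb z}{}\vee\norm{\mb h(\mb z)}{})$, analogous to what is done in Proposition~\ref{prop:grad-region-z}. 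On this truncation event the summands are bounded and a standard Bernstein bound yields two-sided concentration, while the untruncated tail is negligible by Gaussian tail bounds and the fact that only $O(\mathrm{polylog}(m))$ indices are excluded with high probability. Since $2\expect{p_k^2} - \expect{p_k q_k} = \Re(\mb h^* \expect{\nabla_{\mb z} f(\mb z)})$, the combined pointwise deviation is at most a small fraction of the expected directional derivative.

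To upgrade from pointwise to uniform control over $\mb z \in \mc R_2^{\mb h}$, I take an $\epsilon$-net of $\mc R_2^{\mb h} \subset \bb{CB}^n(\norm{\mb x}{})$ with $\epsilon = 1/\mathrm{poly}(n)$, apply a union bound, and bridge across net points using a Lipschitz bound on $\mb z \mapsto \Re(\mb h(\mb z)^* \nabla_{\mb z} f(\mb z))$ (Lipschitz constant of order $\max_k \norm{\mb a_k}{}^4 \cdot \norm{\mb x}{}^3$, controllable on the high-probability event $\max_k \norm{\mb a_k}{} \lesssim \sqrt{n\log m}$). I anticipate the most delicate step to be the uniform control of the sign-indefinite correction $\tfrac{1}{m}\sum_k p_k q_k$: the need to simultaneously absorb logarithmic losses from the truncation threshold, the Bernstein variance proxy, and the $\epsilon$-net cardinality is precisely what forces the $\log^3 n$ factor here, in contrast with the single $\log n$ that suffices in Propositions~\ref{prop:nega-curv}, \ref{prop:grad-region-z}, and \ref{prop:str_cvx}.
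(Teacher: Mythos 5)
Your algebraic identity is correct, and your treatment of the first piece $\tfrac{2}{m}\sum_k p_k^2$ is sound: it is a sum of nonnegative heavy-tailed terms, so a one-sided lower-tail bound (Lemma~\ref{lem:subgauss_nonneg}, as in Lemma~\ref{lem:min_fourth}) plus a net and a Lipschitz bridge gives uniform control. The gap is in the second piece. To lower-bound $\Re(\mb h^*\nabla_{\mb z}f)$ you need a uniform \emph{upper} bound on $\tfrac{1}{m}\sum_k p_kq_k$, i.e.\ upper-tail concentration of a sign-indefinite fourth-order Gaussian chaos simultaneously over all $\mb z\in\mc R_2^{\mb h}$ --- and this is exactly the obstruction Section~\ref{sec:geometry-key} flags as impossible with $m=O(n\,\mathrm{polylog}(n))$ samples. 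Your truncation does not repair it. For a \emph{fixed} $\mb z$ the event that no index exceeds the threshold holds with probability $1-O(m^{-4})$, but that event cannot survive a union bound over an $\eps$-net of cardinality $\exp(\Theta(n\log(1/\eps)))$; and the truncated-away mass is not "negligible": for any realization of the $\mb a_k$'s, choosing $\mb z$ so that $\mb h(\mb z)$ aligns with some $\mb a_k$ (with $\norm{\mb a_k}{}\sim\sqrt n$) makes the single term $p_kq_k/m$ of order $n^{3/2}\norm{\mb x}{}^4/m\gg\norm{\mb x}{}^4$ at $m\sim n\log^3 n$, which swamps the target $\tfrac{1}{1000}\norm{\mb x}{}^2\norm{\mb z}{}\norm{\mb h}{}$. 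Also, your stated precedent is inaccurate: Proposition~\ref{prop:grad-region-z} uses no truncation at all, only the one-sided Lemma~\ref{lem:min_fourth} and the fixed-vector operator bound Lemma~\ref{lem:op_norm_1}.

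The paper's proof avoids this by a structurally different split, $\tfrac1m\sum_k\abs{\mb a_k^*\mb z}^2\Re(\bar\delta_ku_k)-\tfrac1m\sum_k\abs{\mb a_k^*\mb x}^2\Re(\bar\delta_ku_k)$: the subtracted piece carries the weight $\abs{\mb a_k^*\mb x}^2$ built from the \emph{fixed} vector $\mb x$, so it is controlled uniformly by the operator-norm concentration of $\tfrac1m\sum_k\abs{\mb a_k^*\mb x}^2\mb a_k\mb a_k^*$ (Lemma~\ref{lem:op_norm_1}). For the remaining piece $S(\mb z)$, the proof builds a proxy $S_2$ truncated on both $\abs{\mb a_k^*\mb x}\le\tau$ and $\abs{\mb a_k^*\mb z}\le\tau$ and exploits the deterministic inequality $S_1\ge S_2$ --- the discarded summands are nonnegative precisely because $\abs{\mb a_k^*\mb z}>\tau\ge\abs{\mb a_k^*\mb x}$ forces $\abs{\mb a_k^*\mb z}^2-\abs{\mb a_k^*\mb x}\abs{\mb a_k^*\mb z}\ge 0$. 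Hence only the \emph{lower} tail of the bounded proxy is ever needed, and the only global event is $\max_k\abs{\mb a_k^*\mb x}\le\tau$, which is free of $\mb z$ and needs no union bound. This one-sided, $\mb x$-anchored proxy construction is the essential idea missing from your argument; without it, or some substitute that never requires an upper tail of a $\mb z$-dependent fourth-order sum, the proposed route does not close.
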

\begin{proof}
See Section~\ref{pf:prop:grad-region-zx} on Page~\pageref{pf:prop:grad-region-zx}. 
\end{proof}

\js{This clearly implies that $\norm{\nabla_{\mb z} f(\mb z)}{} \ge \norm{\mb x}{}^2/1000$ for all $\mb z \in \mc R_2^{\mb h}$. }

\js{The quantity we want to control in the above proposition, $\Re\paren{\mb h(\mb z)^* \nabla_{\mb z} f(\mb z)}$, is the same quantity to be controlled in the local curvature condition (e.g., Condition 7.11) in~\cite{candes2015phase}. There only points near $\mc X$ (i.e., roughly our $\mc R_3$ below) are considered, and the target is proving the local curvature is in a certain sense positive. Here the points of interest are not close to $\mc X$, and the target is only showing that at these points, the directional derivative in $\mb h(\mb z)$ direction is bounded away from zero. }

Finally, we show that the two sub-regions, $\mc R_2^{\mb z}$ and $\mc R_2^{\mb h}$, together cover $\mc R_2$.  Formally,  
\js{
\begin{proposition}\label{prop:region-cover}
We have $\mc R_2 \subset \mc R_2^{\mb z} \cup \mc R_2^{\mb h}$.
\end{proposition}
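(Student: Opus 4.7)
Since the expected gradient $\nabla_{\mb z}\expect{f} = (2\|\mb z\|^2 - \|\mb x\|^2)\mb z - (\mb x^*\mb z)\mb x$ and the expected Hessian computed earlier in the asymptotic landscape theorem are all equivariant under the unitary stabilizer of $\mb x$, every region-defining quantity depends on $\mb z$ only through the two real scalars $t \doteq \|\mb z\|/\|\mb x\|$ and $s \doteq |\mb x^*\mb z|/\|\mb x\|^2$, with $0 \le s \le t$ by Cauchy--Schwarz. Since all the defining inequalities are homogeneous in $\|\mb x\|$, after rescaling to $\|\mb x\| = 1$ the proposition reduces to a statement about subsets of the triangular region $T \doteq \{(t,s) : 0 \le s \le t\}$ in the plane.

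Using the identity $\mb x^*\mb z \cdot e^{-\im \phi(\mb z)} = |\mb x^*\mb z|$, which follows from the definition of $\phi(\mb z)$, a short computation yields
\begin{align*}
\Re\innerprod{\mb z}{\nabla_{\mb z}\expect{f}} \;&=\; 2t^4 - t^2 - s^2, \\
\Re\innerprod{\mb h(\mb z)}{\nabla_{\mb z}\expect{f}} \;&=\; 2t^4 - t^2 - s^2 + 2s(1 - t^2), \\
\begin{bmatrix}\mb u\\ \ol{\mb u}\end{bmatrix}^* \expect{\nabla^2 f}\begin{bmatrix}\mb u\\ \ol{\mb u}\end{bmatrix} \;&=\; 8s^2 + 4t^2 - 4, \qquad \mb u \doteq \mb x e^{\im \phi(\mb z)},
\end{align*}
together with $\mathrm{dist}(\mb z, \mc X)^2 = t^2 + 1 - 2s$. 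Consequently each of $\mc R_1, \mc R_2^{\mb z}, \mc R_2^{\mb h}, \mc R_3$ pulls back to an elementary polynomial subset of $T$: $\mc R_1$ is a mild perturbation of the ellipse $\{2s^2 + t^2 \le 1\}$; $\mc R_2^{\mb z}$ is essentially $\{s^2 \le 2t^4 - t^2\}$ with small slack; $\mc R_3$ is the half-plane $\{s \ge t^2/2 + 3/7\}$; and $\mc R_2^{\mb h}$ is confined to the slab $\{11/20 \le t \le 1\}$ intersected with $\{t^2 + 1 - 2s \ge 1/9\}$.

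With these descriptions in hand, I would prove the set inclusion by case analysis on $t$. For $t \ge 1$, the inequalities $s \le t$ and $2t^4 - t^2 \ge t^2 \ge s^2$ place $\mb z$ into $\mc R_2^{\mb z}$ once the slack factors are tracked; by the same token, the boundary with $\mc R_3$ near $t = 1$ is tight. For $t \le 1/\sqrt{3}$ (roughly), the estimate $2s^2 + t^2 \le 3t^2 \le 1$ together with the perturbative slack places $\mb z$ automatically into $\mc R_1$. The nontrivial range is $t \in (1/\sqrt{3}, 1)$, where the complement of $\mc R_1 \cup \mc R_3$ inside $T$ is a thin crescent. On this crescent the term $2s(1-t^2)$ in $\Re\innerprod{\mb h(\mb z)}{\nabla_{\mb z}\expect{f}}$ is strictly positive and can be shown to dominate $t\sqrt{t^2 + 1 - 2s}/250$; together with the accompanying norm/distance constraints this places $\mb z$ in $\mc R_2^{\mb h}$. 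When $s$ is additionally close to zero in this band, the simpler bound $2t^4 - t^2 - s^2 \ge t^4/100 + t^2/500$ takes over and gives $\mb z \in \mc R_2^{\mb z}$ directly.

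The main obstacle is neither conceptual nor analytical but numerical: the particular small constants $1/100$, $1/500$, $1/250$, $1/\sqrt{7}$, $11/20$, $1/3$ appearing in the four region definitions were engineered precisely so that no pocket of $T$ is left uncovered, and making sure of this requires checking the boundary between $\mc R_1$ and $\mc R_3$ (near the saddle ring $t = 1/\sqrt{2}$, $s = 0$) and the boundary between $\mc R_2^{\mb z}$ and $\mc R_2^{\mb h}$ (near $t \approx 1$, $s$ slightly below $1$). I would perform the verification by parametrizing the boundaries of the regions by $t$ alone and comparing coefficients in the resulting one-variable polynomial inequalities, thereby reducing the entire argument to a finite table of elementary algebraic checks.
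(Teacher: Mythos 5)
Your reduction is sound and is essentially the route the paper takes: everything in the definitions of $\mc R_1$, $\mc R_2^{\mb z}$, $\mc R_2^{\mb h}$, $\mc R_3$ depends on $\mb z$ only through $\norm{\mb z}{}$ and $\abs{\mb x^* \mb z}$, so the inclusion becomes a two-variable polynomial covering problem (the paper parametrizes by $\lambda = \norm{\mb z}{}/\norm{\mb x}{}$ and $\eta = \abs{\mb x^*\mb z}/(\norm{\mb x}{}\norm{\mb z}{}) = s/t$ rather than your $(t,s)$, and splits cases on $\eta$ versus the thresholds $1/2$ and $99/100$). Your formulas for $\Re\innerprod{\mb z}{\nabla_{\mb z}\expect{f}}$, $\Re\innerprod{\mb h(\mb z)}{\nabla_{\mb z}\expect{f}}$, the Hessian quadratic form in the $\mb x\e^{\im\phi(\mb z)}$ direction, and $\dist^2(\mb z,\mc X)$ all check out.

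However, there is a genuine gap: the entire content of this proposition \emph{is} the quantitative verification you defer to ``a finite table of elementary algebraic checks,'' and the qualitative sketch you give in its place is not correct as stated. For $t \ge 1$ you claim membership in $\mc R_2^{\mb z}$, but at $t=1$, $s=1$ (i.e., $\mb z \in \mc X$) the left-hand side $2t^4 - t^2 - s^2$ equals $0$ while the requirement is $\ge \tfrac{1}{100}t^4 + \tfrac{1}{500}t^2 > 0$; more generally, for $1 \le t^2 \lesssim 1.006$ with $s$ close to $t$ the point fails the $\mc R_2^{\mb z}$ test and must be absorbed by $\mc R_3$ instead. This is exactly why the paper's $\mc R_d$ case (high correlation, $\eta \ge 99/100$) is split three ways, with $\mc R_2^{\mb z}$ only taking over for $\norm{\mb z}{}^2 \ge \tfrac{1001}{995}\norm{\mb x}{}^2$ and $\mc R_3$ covering the band $\tfrac{23}{25}\norm{\mb x}{} \le \norm{\mb z}{} \le \sqrt{\tfrac{1005}{995}}\norm{\mb x}{}$. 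Similarly, on the ``crescent'' you cannot argue that $2s(1-t^2)$ alone dominates $\tfrac{t}{250}\sqrt{t^2+1-2s}$, because the companion term $2t^4 - t^2 - s^2$ can be negative there; one must lower-bound the full expression $p(\lambda,\eta) = 2(1-\lambda^2)\eta + 2\lambda^3 - \lambda - \eta^2\lambda$, which the paper does by showing its Hessian is indefinite (so extrema lie on the boundary of the $(\lambda,\eta)$ rectangle) and then checking the boundary values to get $p \ge 0.019 > 49/10000$. Until those boundary checks are actually carried out with the stated constants, the covering claim is not established.
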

}
\begin{proof}
See Section~\ref{pf:prop:region-cover} on Page~\pageref{pf:prop:region-cover}. 
\end{proof}

The main challenge is that the function~\eqref{eqn:finite-f} is a fourth-order polynomial, and most quantities arising in the above propositions involve heavy-tailed random variables. For example, we need to control
\begin{align} \label{eq:quan-1}
\frac{1}{m} \sum_{k=1}^m \abs{\mb a_k^* \mb z}^4 \quad \text{for all} \; \mb z \in \mc R_2^{\mb z}
\end{align}  
in proving Propositions~\ref{prop:nega-curv} and~\ref{prop:grad-region-z},
\begin{align} \label{eq:quan-3}
\frac{1}{m} \sum_{k=1}^m \abs{\mb a_k^* \mb w}^2 \abs{\mb a_k^* \mb z}^2 \quad \text{for all}\; \mb w, \mb z
\end{align}
in proving Proposition~\ref{prop:str_cvx}, and   
\begin{align} \label{eq:quan-2}
\frac{1}{m}\sum_{k=1}^m \abs{\mb a_k^* \mb z}^2 \Re\paren{(\mb z - \mb x \e^{\im \phi})^* \mb a_k \mb a_k^* \mb z} \quad \text{for all} \; \mb z \in \mc R_2^{\mb h}
\end{align}
in proving Proposition~\ref{prop:grad-region-zx}. With only $Cn \log^3 n$ samples, these quantities do not concentrate uniformly about their expectations. Fortunately, this heavy-tailed behavior does not prevent the objective function from being globally well-structured for optimization. Our bounds on the gradient and Hessian depend only on the \emph{lower tails} of the above quantities. For~\eqref{eq:quan-1} and~\eqref{eq:quan-3} that are sum of independent nonnegative random variables, the lower tails concentrate uniformly as these lower-bounded variables are sub-Gaussian viewed from the lower tails (see Lemma~\ref{lem:subgauss_nonneg} and Lemma~\ref{lem:min_fourth}); \js{such one-sided concentration was also exploited in prior work~\cite{candes2015phase, chen2015solving} to control similar quantities. The actual concentration inequalities in use are slightly different.} For~\eqref{eq:quan-2}, \js{it is not possible to have two-sided control of the operator 
$1/m \cdot \sum_{k=1}^m \abs{\mb a_k^* \mb z}^2 \mb a_k \mb a_k^*$, or the vector $1/m \cdot \sum_{k=1}^m \abs{\mb a_k^* \mb z}^2 \mb a_k \mb a_k^*\mb z$ uniformly for all $\mb z \in \Cp^n$ with only $O(n \mathrm{polylog}(n))$ samples. The sample constraint also precludes carrying out a typical ``concentration plus union bound'' argument on the original quantity. To get around the difficulty, we carefully construct a proxy quantity that is summation of bounded random variables and tightly bounds~\eqref{eq:quan-2} from below. This proxy quantity is well behaved in lower tail and amenable to a typical concentration argument. 
}


\section{Optimization by Trust-Region Method (TRM)}\label{sec:algorithm}

Based on the geometric characterization in Section~\ref{sec:geometry-finite}, we describe a second-order trust-region algorithm that produces a close approximation (i.e., up to numerical precision) to a global minimizer of~\eqref{eqn:finite-f} in polynomial number of steps. One interesting aspect of $f$ in the complex space is that each point has a ``circle'' of equivalent points that have the same function value. Thus, we constrain each step to move ``orthogonal'' to the trivial direction. This simple modification helps the algorithm to converge faster in practice, and proves important to the quadratic asymptotic convergence rate in theory. 

\subsection{A Modified Trust-Region Algorithm}\label{sec:trust-region-algorithm}

The basic idea of the trust-region method is simple: we generate a sequence of iterates $\mb z^{(0)}, \mb z^{(1)}, \dots$, by repeatedly constructing quadratic approximations $\wh{f}(\mb \delta; \mb z^{(r)}) \approx f(\mb z^{(r)} + \mb \delta )$, minimizing $\wh{f}$ to obtain a step $\mb \delta$, and setting $\mb z^{(r+1)} = \mb z^{(r)} + \mb \delta$. 
More precisely, we approximate $f(\mb z)$ around $\mb z^{(r)}$ using the second-order Taylor expansion, 
\begin{align*}
	\wh{f}(\mb \delta; \mb z^{(r)}) =  f(\mb z^{(r)}) + \begin{bmatrix}
		\mb \delta \\ \ol{\mb \delta}\end{bmatrix}^* \nabla f(\mb z^{(r)}) + \frac{1}{2} \begin{bmatrix}
		\mb \delta \\ \ol{\mb \delta}	\end{bmatrix}^* \nabla^2 f(\mb z^{(r)})  \begin{bmatrix}
		\mb \delta \\ \ol{\mb \delta}	\end{bmatrix},
\end{align*}
and solve 
\begin{align}\label{eqn:trm-subproblem}
	\mini_{\mb \delta \in \Cp^n}\; \wh{f}(\mb \delta; \mb z^{(r)} ),\quad \st \quad \Im\paren{ \mb \delta^* \mb z^{(r)}  } = 0,\quad  \norm{\mb \delta }{} \leq \Delta,
\end{align}
to obtain the step $\mb \delta$. In \eqref{eqn:trm-subproblem}, $\Delta$ controls the trust-region size. The first linear constraint further forces the movement $\mb \delta$ to be geometrically orthogonal to the $\im \mb z$ direction, along which the possibility for reducing the function value is limited. Enforcing this linear constraint is a strategic modification to the classical trust-region subproblem. 

\paragraph{Reduction to the standard trust-region subproblem.} The modified trust-region subproblem is easily seen to be equivalent to the classical trust-region subproblem (with no constraint) over $2n-1$ real variables. Notice that $\set{ \mb w \in \bb C^n : \Im( \mb w^*\mb z^{(r)}) =0}$ forms a subspace of dimension $2n-1$ over $\R^{2n}$ (the canonical identification of $\Cp^n$ and $\R^{2n}$ applies whenever needed! ). Take any matrix $\mb U(\mb z^{(r)}) \in \Cp^{n \times (2n-1)}$ whose columns form an orthonormal basis for the subspace, i.e., $\Re(\mb U_i^* \mb U_j) = \delta_{ij}$ for any columns $\mb U_i$ and $\mb U_j$. The subproblem can then be reformulated as ($\mb U$ short for $\mb U(\mb z^{(r)})$)
\begin{align}\label{eqn:TRM-subproblem-0}
	\mini_{\mb \xi \in \bb R^{2n-1} }  \wh{f}(\mb U \mb \xi;\mb z^{(r)} ),\quad \st \quad \norm{\mb \xi}{} \leq \Delta. 
\end{align}
Let us define 
\begin{align}\label{eqn:def-g-H}
	\mb g(\mb z^{(r)}) \doteq \begin{bmatrix}
		\mb U \\ \ol{\mb U}	\end{bmatrix}^* \nabla f(\mb z^{(r)}), \quad \mb H(\mb z^{(r)}) \doteq \begin{bmatrix}
		\mb U \\ \ol{\mb U}	\end{bmatrix}^* \nabla^2 f(\mb z^{(r)}) \begin{bmatrix}
		\mb U \\ \ol{\mb U}	\end{bmatrix}. 
\end{align}
Then, the quadratic approximation of $f(\mb z)$ around $\mb z^{(r)}$ can be rewritten as
\begin{align}\label{eqn:quadratic-simplified}
	\wh{f}(\mb \xi; \mb z^{(r)}) = f(\mb z^{(r)}) + \mb \xi^\top \mb g(\mb z^{(r)}) + \frac{1}{2} \mb \xi^\top \mb H(\mb z^{(r)}) \mb \xi.
\end{align}
By structure of the Wirtinger gradient $\nabla f(\mb z^{(r)})$ and Wirtinger Hessian $\nabla^2 f(\mb z^{(r)})$, $\mb g(\mb z^{(r)})$ and $\mb H(\mb z^{(r)})$ contain only real entries. \js{Thus, the problem~\eqref{eqn:TRM-subproblem-0} is in fact an instance of the classical trust-region subproblem w.r.t. real variable $\mb \xi$. A minimizer to~\eqref{eqn:trm-subproblem} can be obtained from a minimizer of~\eqref{eqn:TRM-subproblem-0} $\mb \xi_\star$ as $\mb \delta_\star = \mb U \mb \xi_\star$. } 

So, any method which can solve the classical trust-region subproblem can be directly applied to the modified problem \eqref{eqn:trm-subproblem}. Although the resulting problem can be nonconvex (\js{as $\mb H(\mb z^{(r)})$ in~\eqref{eqn:quadratic-simplified} can be indefinite}), it can be solved in polynomial time, by root-finding~\cite{more1983computing,conn2000trust} or SDP relaxation~\cite{rendl1997semidefinite,fortin2004trust}. \js{Our convergence guarantees assume an exact solution of this problem; we outline below how to obtain an $\eps$-near solution (for arbitrary $\eps > 0$) via bisection search in polynomial time (i.e., polynomial in $\eps^{-1}$), due to~\cite{ye1992affine,vavasis1990proving}. At the end of Section~\ref{sec:conv_trm}, we will discuss robustness of our convergence guarantee to the numerical imperfection, and provide an estimate of the total time complexity}. In practice, though, even very inexact solutions of the trust-region subproblem suffice.\footnote{This can also be proved, in a relatively straightforward way, using the geometry of the objective $f$. In the interest of brevity, we do not pursue this here.} Inexact iterative solvers for the trust-region subproblem can be engineered to avoid the need to densely represent the Hessian; these methods have the attractive property that they attempt to optimize the amount of Hessian information that is used at each iteration, in order to balance rate of convergence and computation. 

\js{Now we describe briefly how to apply bisection search to find an approximate solution to the classical trust-region subproblem, i.e., 
\begin{align} \label{eq:trm_sp}
\mini_{\mb w \in \R^d} Q\paren{\mb w} \doteq \frac{1}{2} \mb w^\top \mb A \mb w + \mb b^\top \mb w, \; \st \; \norm{\mb w}{} \le r, 
\end{align}  
where $\mb A \in \R^{d \times d}$ is symmetric and $\mb b \in \R^d$, and $r > 0$. The results are distilled from~\cite{ye1992affine,vavasis1990proving,nocedal2006numerical}. Optimizers to~\eqref{eq:trm_sp} are characterized as follows: a feasible $\mb w_\star$ is a global minimizer to~\eqref{eq:trm_sp} if and only if there exists a $\lambda_\star \ge 0$ that satisfies: 
\begin{align*}
\paren{\mb A + \lambda_\star \mb I}\mb w_\star = -\mb b, \quad \lambda_\star \paren{\norm{\mb w_\star}{} - r} = 0, \quad \mb A + \lambda_\star \mb I \succeq \mb 0. 
\end{align*}

We first isolate the case when there is a unique interior global minimizer. In this case $\lambda_\star = 0$ and $\mb A \succeq \mb 0$. When $\mb A$ has a zero eigenvalue with an associated eigenvector $\mb v$ and $\mb w_\star$ is an interior minimizer, all feasible $\mb w_\star + t \mb v$ are also minimizers. So uniqueness here requires $\mb A \succ \mb 0$. Thus, one can try to sequentially (1) test if $\mb A$ is positive definite; (2) solve for $\mb w_\star$ from $\mb A\mb w_\star = -\mb b$; and (3) test if $\mb w_\star$ is feasible. If the procedure goes through, a minimizer has been found. It is obvious the arithmetic complexity is $O(d^3)$.

When $\mb A$ is not positive definite, there must be minimizers on the boundary. Then, we need to find a $\lambda_\star \ge 0$ such that 
\begin{align*}
\paren{\mb A + \lambda_\star \mb I}\mb w_\star = -\mb b, \quad \mb A + \lambda_\star \mb I \succeq \mb 0. 
\end{align*}
One remarkable fact is that although the minimizers may not be unique, the multiplier $\lambda_\star$ is unique. The unique $\lambda_\star$ can be efficiently approximated via a bisection search algorithm (Algorithm~\ref{alg:bisection_search}). 
\begin{algorithm}[!htbp]
\caption{Bisection Search for Finding $\lambda_\star$ and $\mb w_\star$~\cite{vavasis1990proving}}
\label{alg:bisection_search}
\begin{algorithmic}[1]
\renewcommand{\algorithmicrequire}{\textbf{Input:}}
\renewcommand{\algorithmicensure}{\textbf{Output:}}
\Require{Data: $\mb A \in \R^{d \times d}$, $\mb b \in \R^d$, $r \in \R$, target accuracy $\eps > 0$}
\Ensure{$\wh{\lambda}$}
\State Initialize $\lambda_L = 0$, $\lambda_U = \norm{\mb b}{}/r + d \norm{\mb A}{\infty}$
\While{$\lambda_U - \lambda_L \ge \eps$}
\State Set $\lambda_M = \paren{\lambda_L + \lambda_U}/2$
\State Compute small eigenvalue $\lambda_{\min}$ of $\mb A + \lambda_M \mb I$
\If{$\lambda_{\min} < 0$}
\State Set $\lambda_L = \lambda_M$
\Else 
\If{$\lambda_{\min} < 0$}
\State Set $\lambda_M = \lambda_M + \eps/10$
\EndIf
\State Solve for $\mb w$ from $\paren{\mb A + \lambda_M \mb I} \mb w = - \mb b$
\If{$\norm{\mb w}{} \ge r$} 
\State Set $\lambda_L = \lambda_M$
\Else 
\State Set $\lambda_U = \lambda_M$
\EndIf
\EndIf
\EndWhile
\State Set $\wh{\lambda} = \lambda_U$
\end{algorithmic}
\end{algorithm}
The above algorithm finds a $\wh{\lambda}$ with $|\wh{\lambda} - \lambda_\star| \le \eps$ with arithmetic complexity $O(d^3 \log\paren{1/\eps})$. Moreover, this can be translated into a convergence result in function value. Define $\wh{\mb w} = -(\mb A + \wh{\lambda} \mb I)^{-1} \mb b$ if $\mb A + \wh{\lambda} \mb I \succ \mb 0$, and $\wh{\mb w} = -(\mb A + \wh{\lambda} \mb I)^{\dagger} \mb b + t \mb v$ if $\mb A + \wh{\lambda} \mb I$ has zero eigenvalue with an associated eigenvector $\mb v$ and $t$ makes $\norm{\wh{\mb w}}{} = r$. Then, 
\begin{align}
Q(\wh{\mb w}) - Q\paren{\mb w_\star} \le \eps. 
\end{align}
}


\subsection{Convergence Analysis} \label{sec:conv_overview}


Our convergence proof proceeds as follows. Let $\mb \delta^\star$ denote the optimizer of the trust-region subproblem at a point $\mb z$. If $\norm{\nabla f(\mb z)}{}$ is bounded away from zero, or $\lambda_{\mathrm{min}}(\nabla^2 f(\mb z))$ is bounded below zero, we can guarantee that $\widehat{f}(\mb \delta^\star, \mb z ) -f(\mb z) < - \eps$, for some $\eps$ which depends on our bounds on these quantities. Because $f(\mb z + \mb \delta^\star) \approx \widehat{f}(\mb \delta^\star, \mb z ) < f (\mb z ) - \eps$, we can guarantee (roughly) an $\eps$ decrease in the objective function at each iteration. Because this $\eps$ is uniformly bounded away from zero over the gradient and negative curvature regions, the algorithm can take at most finitely many steps in these regions. Once it enters the strong convexity region around the global minimizers, the algorithm behaves much like a typical Newton-style algorithm; in particular, it exhibits asymptotic quadratic convergence. Below, we prove quantitative versions of these statements. We begin by stating several basic facts that are useful for the convergence proof. 
\paragraph{Norm of the target vector and initialization.} In our problem formulation, $\norm{\mb x}{}$ is not known ahead of time. However, it can be well estimated. When $\mb a \sim \mc{CN}(n)$, $\bb E\abs{\mb a^* \mb x}^2 = \norm{\mb x}{}^2$. By Bernstein's inequality, $\tfrac{1}{m}\sum_{k=1}^m \abs{\mb a_k^* \mb x}^2 \ge \tfrac{1}{9} \norm{\mb x}{}^2$ with probability at least $1 - \exp(-cm)$. Thus, with the same probability, the quantity 
$
R_0 \doteq 3(\tfrac{1}{m}\sum_{k=1}^m \abs{\mb a_k^* \mb x}^2)^{1/2}
$
is an upper bound for $\norm{\mb x}{}$. For the sake of analysis, we will assume the initialization $\mb z^{(0)}$ is an \emph{arbitrary point} over $\bb{CB}^n(R_0)$. Now consider a fixed $R_1 > R_0$. By Lemma~\ref{lem:op_norm_1}, Lemma~\ref{lem:min_fourth}, and the fact that $\max_{k \in [m]} \norm{\mb a_k}{}^4 \le 10 n^2 \log^2 m$ with probability at least $1 - c_am^{-n}$, we have that the following estimate 
\begin{align*}
\MoveEqLeft {\inf_{\mb z, \mb z':\; \norm{\mb z}{} \le R_0, \; \norm{\mb z'}{} \ge R_1} f(\mb z') - f(\mb z)} \\
=\; & \inf_{\mb z, \mb z':\; \norm{\mb z}{} \le R_0, \; \norm{\mb z'}{} \ge R_1} \frac{1}{m}\sum_{k=1}^m \brac{\abs{\mb a_k^* \mb z'}^4 - \abs{\mb a_k^* \mb z}^4 - 2 \abs{\mb a_k^* \mb z'}^2 \abs{\mb a_k^* \mb x}{}^2 + 2 \abs{\mb a_k^* \mb z}^2 \abs{\mb a_k^* \mb x}{}^2} \\
\ge\; & \inf_{\mb z, \mb z':\; \norm{\mb z}{} \le R_0, \; \norm{\mb z'}{} \ge R_1} \frac{199}{200} \norm{\mb z'}{}^4 - 10 n^2 \log^2 m \norm{\mb z}{}^4 - \frac{201}{200} \paren{\norm{\mb z'}{}^2 \norm{\mb x}{}^2 + \abs{\mb x^* \mb z'}^2}\\
\ge\; & \inf_{\mb z': \norm{\mb z'}{} \ge R_1} \frac{199}{200} \norm{\mb z'}{}^4 - 10 n^2 \log^2 m R_0^4 - \frac{201}{100} \norm{\mb z'}{}^2 R_0^2 
\end{align*}
holds with probability at least $1 - c_bm^{-1}$, provided $m \ge Cn\log n$ for a sufficiently large $C$. It can be checked that when 
\begin{align}
R_1 = 3 \sqrt{n \log m} R_0,
\end{align}
we have 
\begin{align*}
\inf_{\mb z': \norm{\mb z'}{} \ge R_1} \frac{199}{200} \norm{\mb z'}{}^4 - 10 n^2 \log^2 m R_0^4 - \frac{201}{100} \norm{\mb z'}{}^2 R_0^2  \ge 40 n^2 \log^2 m R_0^4. 
\end{align*}
Thus, we conclude that when $m \ge Cn\log n$, w.h.p., the sublevel set $\set{\mb z: f(\mb z) \le f(\mb z^{(0)})}$ is contained in the set 
\begin{align} \label{eq:def_gamma}
\Gamma \doteq \bb{CB}^n(R_1). 
\end{align}

\paragraph{Lipschitz Properties} We write $\mb A \doteq [\mb a_1,\cdots,\mb a_m]$ so that $ \norm{ \mb A}{\ell^1 \rightarrow \ell^2} = \max_{k \in [m]} \norm{\mb a_k }{} $. We next provide estimates of Lipschitz constants of $f$ and its derivatives, restricted to a slightly larger region than $\Gamma$: 
\begin{lemma}[Local Lipschitz Properties]\label{lem:lipschitz}
The Lipschitz constants for $f(\mb z)$, $\nabla f(\mb z)$, and $\nabla^2 f(\mb z)$ over the set $\Gamma' \doteq \bb{CB}^n(2R_1)$, denoted as $L_f$, $L_g$, and $L_h$ respectively, can be taken as 
	\begin{gather*}
		L_f  \doteq 7 \times10^6 \cdot (n\log m)^{\frac{3}{2}} \norm{\mb A}{\ell^1 \to \ell^2}^2 \norm{\mb x}{}^3 ,\quad L_g \doteq 19000\sqrt{2} n \log m\norm{\mb A}{\ell^1 \to \ell^2}^2 \norm{\mb x}{}^2, \\ L_h \doteq 480 \cdot (n\log m)^{\frac{1}{2}}\norm{\mb A}{\ell^1 \to \ell^2}^2 \norm{\mb x}{}
	\end{gather*}
	with probability at least $1 - c_a \exp(-c_b m)$, provided $m \ge Cn$ for a sufficiently large absolute constant $C$. Here $c_a$ through $c_e$ are positive absolute constants. 
\end{lemma}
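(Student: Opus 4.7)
The plan is to reduce each Lipschitz constant to a supremum over $\Gamma' = \bb{CB}^n(2R_1)$ of an appropriate derivative of $f$: $L_f$ to $\sup_{\Gamma'}\norm{\nabla f}{}$, $L_g$ to $\sup_{\Gamma'}\norm{\nabla^2 f}{}$, and $L_h$ to the supremum of the third-order directional derivative of $f$, via the fundamental theorem of calculus. Because $f$ is a quartic polynomial in the real and imaginary parts of $\mb z$, its third derivative is linear in $\mb z$ and its first and second derivatives are cubic and quadratic respectively, so the quantities to bound are polynomial in $\mb z$ and straightforward to estimate on a ball; the subtlety is to extract the right dependence on $\norm{\mb A}{\ell^1\to\ell^2}$, $\norm{\mb x}{}$, and $\sqrt{n\log m}$.

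For $L_g$, expand the bilinear form $\begin{bmatrix}\mb w \\ \ol{\mb w}\end{bmatrix}^* \nabla^2 f(\mb z)\begin{bmatrix}\mb v \\ \ol{\mb v}\end{bmatrix}$ using \eqref{eqn:hessian-finite}, which yields sums of the form $\frac{1}{m}\sum_k \abs{\mb a_k^*\mb z}^2 \abs{\mb a_k^*\mb w}\abs{\mb a_k^*\mb v}$ and $\frac{1}{m}\sum_k y_k^2 \abs{\mb a_k^*\mb w}\abs{\mb a_k^*\mb v}$. The key trick is to apply Cauchy--Schwarz so as to pull one $\abs{\mb a_k^*\mb z}$ (or one $y_k$) into an $\ell^\infty$ factor while leaving an $\ell^2$ factor, then invoke (i) the deterministic bound $\max_k \abs{\mb a_k^*\mb u}^2 \le \norm{\mb A}{\ell^1\to\ell^2}^2 \norm{\mb u}{}^2$, (ii) $\frac{1}{m}\sum_k \abs{\mb a_k^*\mb u}^2 \le 2 \norm{\mb u}{}^2$ uniformly in $\mb u$ from operator-norm concentration of $\mb A^*\mb A/m$ (exponential probability when $m \ge Cn$; cf.\ Lemma~\ref{lem:op_norm_1}), and (iii) $\frac{1}{m}\sum_k y_k^2 \le 2\norm{\mb x}{}^2$ by Bernstein. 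Together these yield $\norm{\nabla^2 f(\mb z)}{} \lesssim \norm{\mb A}{\ell^1\to\ell^2}^2 (R_1^2 + \norm{\mb x}{}^2)$, and inserting $R_1 \asymp \sqrt{n\log m}\,\norm{\mb x}{}$ recovers $L_g$. An identical recipe applied to~\eqref{eqn:grad-finite}, with a double Cauchy--Schwarz in place of single, yields $\norm{\nabla f(\mb z)}{} \lesssim \norm{\mb A}{\ell^1\to\ell^2}^2 R_1^3$, giving $L_f$.

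For $L_h$, I would use $\nabla^2 f(\mb z) - \nabla^2 f(\mb z') = \int_0^1 D[\nabla^2 f](\mb z' + t(\mb z-\mb z'))\cdot (\mb z-\mb z')\,dt$. Differentiating~\eqref{eqn:hessian-finite} once, the only $\mb z$-dependent scalars in each summand are $\abs{\mb a_k^*\mb z}^2$ and $(\mb a_k^*\mb z)^2$, whose derivatives in a direction $\mb v$ are $2\Re(\ol{\mb a_k^*\mb z}\,\mb a_k^*\mb v)$ and $2(\mb a_k^*\mb z)(\mb a_k^*\mb v)$. This reduces $\norm{D[\nabla^2 f](\mb z)\cdot\mb v}{}$ to quadrilinear sums $\frac{1}{m}\sum_k \abs{\mb a_k^*\mb z}\abs{\mb a_k^*\mb v}\abs{\mb a_k^*\mb w_1}\abs{\mb a_k^*\mb w_2}$ with $\norm{\mb w_1}{}=\norm{\mb w_2}{}=1$, and two Cauchy--Schwarz applications combined with (i) and (ii) produce the bound $O(R_1 \norm{\mb A}{\ell^1\to\ell^2}^2)\norm{\mb v}{}$, matching $L_h \asymp \sqrt{n\log m}\,\norm{\mb A}{\ell^1\to\ell^2}^2 \norm{\mb x}{}$.

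The principal ``obstacle'' is not mathematical depth but constant-tracking: the $2 \times 2$ Wirtinger block structure contributes a factor of at most $4$, each Cauchy--Schwarz application a factor of $2$, and the explicit definitions of $R_0$ and $R_1$ contribute further small constants, so matching the numerical constants ($7\times 10^6$, $19000\sqrt{2}$, and $480$) in the statement requires careful accounting. Importantly, none of the heavy-tailed or one-sided-concentration subtleties emphasized in Section~\ref{sec:geometry-key} are required here, because we are bounding Lipschitz constants (upper-tail controls on polynomial functionals of Gaussians), which are determined entirely by the ``good event'' on which $\mb A^*\mb A/m$ is operator-norm-bounded and $\frac{1}{m}\sum_k y_k^2 = O(\norm{\mb x}{}^2)$. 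Both events hold with probability $1 - c_a\exp(-c_b m)$ for $m \ge Cn$, giving the probability stated in the lemma.
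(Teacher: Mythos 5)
Your proposal is correct, but it takes a genuinely different route from the paper's. The paper bounds the finite differences $\abs{f(\mb z)-f(\mb z')}$, $\norm{\nabla f(\mb z)-\nabla f(\mb z')}{}$ and $\norm{\nabla^2 f(\mb z)-\nabla^2 f(\mb z')}{}$ directly, term by term; the one nontrivial probabilistic ingredient there is Lemma~\ref{lem:gauss_concen_diff}, i.e.\ $\frac{1}{m}\sum_k\abs{\abs{\mb a_k^*\mb z}^2-\abs{\mb a_k^*\mb w}^2}\le\frac{3}{2}\norm{\mb z-\mb w}{}\paren{\norm{\mb z}{}+\norm{\mb w}{}}$, which is imported from the nuclear-norm RIP analysis of PhaseLift. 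You instead apply the mean-value inequality on the convex set $\Gamma'$ and bound the supremum of the next-order derivative, so everything reduces to the deterministic bound $\max_k\abs{\mb a_k^*\mb u}\le\norm{\mb A}{\ell^1\to\ell^2}\norm{\mb u}{}$ and the operator-norm event $\norm{\frac{1}{m}\sum_k\mb a_k\mb a_k^*}{}\le 2$ (for the latter the right citation is Lemma~\ref{lem:spec_cn}, not Lemma~\ref{lem:op_norm_1}, but the fact is standard and holds with probability $1-\exp(-cm)$ for $m\ge Cn$). Both routes produce the same orders in $n\log m$, $\norm{\mb A}{\ell^1\to\ell^2}$ and $\norm{\mb x}{}$, and both rest only on exponential-probability events plus $R_0\le 10\norm{\mb x}{}$ to convert $R_1$ into $30\sqrt{n\log m}\,\norm{\mb x}{}$, so your probability claim is right. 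What your route buys is self-containedness: your Cauchy--Schwarz step $\abs{\abs{u}^2-\abs{v}^2}\le\abs{u-v}\abs{u+v}$ combined with the operator-norm bound reproves a constant-$2$ version of Lemma~\ref{lem:gauss_concen_diff}, so the external RIP result is never needed; the cost is having to differentiate the Wirtinger Hessian once more, which the paper's difference-based argument avoids. The only caveat is the one you flag yourself: to land on the stated numerical constants you must also track the $\sqrt{2}$ coming from the conjugate-pair vectors $[\mb\delta;\ol{\mb\delta}]$ and the fact that the supremum is taken over $\Gamma'=\bb{CB}^n(2R_1)$ rather than $\Gamma$; none of this affects correctness.
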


\begin{proof}
	See Section~\ref{pf:lem:lipschitz} on Page~\pageref{pf:lem:lipschitz}.
\end{proof}

\paragraph{Property of Hessians near the Target Set $\mc X$. } Define a region 
\begin{align} \label{eq:r3p_def}
	\mc R_3' \doteq \Brac{\mb z:  \norm{\mb h(\mb z)}{} \leq \frac{1}{10 L_h} \norm{\mb x}{}^2}. 
\end{align}
We will provide spectral upper and lower bounds for the (restricted) Hessian matrices $\mb H(\mb z)$, where $\mb H(\mb z)$ is as defined in~\eqref{eqn:def-g-H}.  These bounds follow by bounding $\mb H(\mb z)$ on $\mc X$, and then using the Lipschitz property of the Hessian to extend the bounds to a slightly larger region around $\mc X$.

\begin{lemma}[Lower and Upper Bounds of Restricted Hessian in $\mc R_3'$]\label{lem:hessian-lower-upper}
When $m \geq Cn\log n$, it holds with probability at least $1 - cm^{-1}$ that 
\begin{align*}
	m_H \mb I \preceq \mb H(\mb z) \preceq M_H \mb I 
\end{align*}
for all $\mb z \in \mc R_3'$ with $m_H = 22/25\norm{\mb x}{}^2$ and $M_H = 9/2 \norm{\mb x}{}^2$. Here $C,c$ are positive absolute constants. 
\end{lemma}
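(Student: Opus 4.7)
The plan is to reduce to a single-point analysis at $\mb z = \mb x$ via the phase symmetry $f(\mb z \e^{\im \phi}) = f(\mb z)$, then extend to all of $\mc R_3'$ by the Lipschitz estimate in Lemma~\ref{lem:lipschitz}. A short check shows that $\nabla^2 f(\mb x \e^{\im \phi}) = \mb D_\phi \nabla^2 f(\mb x) \mb D_\phi^*$ for $\mb D_\phi \doteq \diag(\e^{\im \phi} \mb I, \e^{-\im \phi} \mb I)$, and that one may choose the tangent basis $\mb U(\mb x \e^{\im \phi}) = \e^{\im \phi} \mb U(\mb x)$; the two rotations cancel in the definition of $\mb H$, so $\mb H(\cdot)$ has identical spectrum at every point of $\mc X$ and it suffices to bound $\mb H(\mb x)$.

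Next I would analyze $\mb H(\mb x)$ at the population level. Substituting $\mb z = \mb x$ into the expected-Hessian formula from the proof of Theorem~2.1 gives
\begin{align*}
\expect{\nabla^2 f(\mb x)} \;=\; \begin{bmatrix} \mb x \mb x^* + \norm{\mb x}{}^2 \mb I & 2\mb x \mb x^\top \\ 2\ol{\mb x} \mb x^* & \ol{\mb x} \mb x^\top + \norm{\mb x}{}^2 \mb I \end{bmatrix},
\end{align*}
whose restricted quadratic form on $\{\mb \delta : \Im(\mb x^* \mb \delta) = 0\}$ evaluates to $2\norm{\mb x}{}^2 \norm{\mb \delta}{}^2 + 6\abs{\mb x^* \mb \delta}^2$; consequently the extremal eigenvalues of $\expect{\mb H(\mb x)}$ are absolute multiples of $\norm{\mb x}{}^2$. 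For the empirical $\mb H(\mb x)$ I would evaluate the form on an $\eps$-net of the real unit sphere of the $(2n-1)$-dimensional tangent subspace, since only single-point concentration is needed. At each direction $\mb \delta$, the form is an average of $m$ independent sub-exponential random variables of the shape $\abs{\mb a_k^* \mb x}^2 \abs{\mb a_k^* \mb \delta}^2$ (plus related cross terms); Bernstein's inequality applied pointwise, combined with a union bound over $(c/\eps)^{2n-1}$ net elements and the Lipschitz dependence of the form in $\mb \delta$, gives $\mb H(\mb x) = \expect{\mb H(\mb x)} + o(\norm{\mb x}{}^2)$ with high probability whenever $m \geq C n \log n$.

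The extension from $\mc X$ to $\mc R_3'$ is then essentially routine. For $\mb z \in \mc R_3'$, set $\mb z_0 \doteq \mb x \e^{\im \phi(\mb z)} \in \mc X$. Lemma~\ref{lem:lipschitz} and the defining inequality of $\mc R_3'$ give
\begin{align*}
\norm{\nabla^2 f(\mb z) - \nabla^2 f(\mb z_0)}{} \;\leq\; L_h \norm{\mb h(\mb z)}{} \;\leq\; \tfrac{1}{10}\norm{\mb x}{}^2,
\end{align*}
accounting for most of the gap between the population bounds and the stated $m_H$, $M_H$. The only extra nuisance is that the tangent subspaces at $\mb z$ and $\mb z_0$ differ by a principal angle $O(\norm{\mb z - \mb z_0}{}/\norm{\mb z_0}{})$; writing $\mb U(\mb z) = \e^{\im \phi(\mb z)} \mb U(\mb z_0) + O(\norm{\mb h(\mb z)}{}/\norm{\mb x}{})$ in operator norm shows this introduces an $O(\norm{\mb h(\mb z)}{}/\norm{\mb x}{} \cdot \norm{\nabla^2 f(\mb z_0)}{})$ additive perturbation on $\mb H$, which is again absorbed into the slack.

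The main obstacle I expect is the concentration step at $\mb x$. Even though only single-point (not uniform) concentration is required, the summands $\abs{\mb a_k^* \mb x}^2 \mb a_k \mb a_k^*$ are heavy-tailed, and a black-box matrix Bernstein bound on the operator norm would cost $m \gtrsim n^2$. The fix is to work with real directional quadratic forms; each is a sum of nonnegative sub-exponential scalars, for which two-sided Bernstein applies at the desired $m \gtrsim n \log n$ rate. This is precisely the strategy driving Propositions~\ref{prop:nega-curv}--\ref{prop:grad-region-zx}, applied here at a single point and hence substantially easier to execute. Once the single-point concentration is in hand, the symmetry reduction and the Lipschitz extension are both routine bookkeeping.
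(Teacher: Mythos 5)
Your overall architecture matches the paper's: concentrate the Hessian on the target circle $\mc X$, then propagate to $\mc R_3'$ via the Lipschitz constant $L_h$ and the defining radius $\norm{\mb h(\mb z)}{}\le \norm{\mb x}{}^2/(10L_h)$. Two sub-steps are executed differently, though. First, for the concentration at $\mb x$, the paper does not redo a directional net argument: it invokes operator-norm concentration of $\tfrac1m\sum_k\abs{\mb a_k^*\mb x}^2\mb a_k\mb a_k^*$ and $\tfrac1m\sum_k(\mb a_k^*\mb x)^2\mb a_k\mb a_k^\top$ at the fixed vector $\mb x$ (Lemma~\ref{lem:op_norm_1}, packaged as Lemma~\ref{lem:hessian-conc-x}), which already holds at $m\ge Cn\log n$ because only a single $\mb x$ is involved; your worry that operator-norm control costs $m\gtrsim n^2$ applies to uniformity over all $\mb z$, not to this single-point statement. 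A caveat on your version: the scalar summands $\abs{\mb a_k^*\mb x}^2\abs{\mb a_k^*\mb\delta}^2$ are products of two sub-exponentials, hence \emph{not} sub-exponential, so ``two-sided Bernstein applies'' is not literally true; you would need the conditioning/truncation device the paper uses inside Lemma~\ref{lem:op_norm_1} (or a sub-Weibull tail bound), which costs only logarithmic factors but is not free. Second, for the lower bound the paper sidesteps your tangent-subspace perturbation entirely: it keeps the constraint $\Im(\mb w^*\mb z)=0$ at the \emph{current} point $\mb z$, and handles the resulting mismatch with the expected Hessian at $\mb x\e^{\im\phi(\mb z)}$ by the algebraic identity $\Re\paren{(\mb w^*\mb x\e^{\im\phi})^2}=\Re\paren{(\mb w^*\mb z-\mb w^*\mb h)^2}\ge \abs{\mb w^*\mb z}^2-\norm{\mb h}{}^2-2\norm{\mb h}{}\norm{\mb z}{}$, which is $\ge -\tfrac{1}{100}\norm{\mb x}{}^2$ on $\mc R_3'$. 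Your principal-angle route also works (the excluded near-null direction $\im\mb z$ moves by only $O(\norm{\mb h}{}/\norm{\mb x}{})$, cf.\ Lemma~\ref{lem:sp_angle_norm}), but it requires choosing bases realizing the minimal distance and tracking an extra perturbation term; the paper's expansion buys a shorter, basis-free computation, while yours is arguably more transparent about \emph{why} the restriction to the tangent space rescues positivity.
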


\begin{proof}
	See Section~\ref{pf:lem:hessian-lower-upper} on Page~\pageref{pf:lem:hessian-lower-upper}.
\end{proof}

\subsection{Proof of TRM Convergence} \label{sec:conv_trm}
We are now ready to prove the convergence of the TRM algorithm. Throughout, we will assume $m \ge C n \log^3 n$ for a sufficiently large constant $C$, so that all the events of interest hold w.h.p..  

Our initialization is an arbitrary point $\mb z^{(0)} \in \bb{CB}^n(R_0) \subseteq \Gamma$. We will analyze effect of a trust-region step from any iterate $\mb z^{(r)} \in \Gamma$. Based on these arguments, we will show that whenever $\mb z^{(r)} \in \Gamma$, $ \mb z^{(r+1)} \in \Gamma$, and so the entire iterate sequence remains in $\Gamma$. The analysis will use the fact that $f$ and its derivatives are Lipschitz over the trust-region $\mb z + \bb C \bb B^n(\Delta)$. This follows from Proposition \ref{lem:lipschitz}, provided 
\begin{align}
\Delta \le R_1.
\end{align}


The next auxiliary lemma makes precise the intuition that whenever there exists a descent direction,  the step size parameter $\Delta$ is sufficiently small, a trust-region step will decrease the objective.

\begin{lemma}\label{lem:func-decent}
For any $\mb z \in \Gamma$, suppose there exists a vector $\mb \delta$ with $\norm{\mb \delta}{} \leq \Delta$ such that
\begin{align*} 
	\Im(\mb \delta^* \mb z) = 0 \quad \text{and} \quad f(\mb z+ \mb \delta) \leq f(\mb z) - d,
\end{align*}
for a certain $d > 0$. Then the trust-region subproblem \eqref{eqn:trm-subproblem} returns a point $\mb \delta_\star$ with $\norm{\mb \delta_\star }{} \leq \Delta$ and 
\begin{align*}
	f(\mb z+\mb \delta_\star) \leq f(\mb z) - d + \frac{2}{3}L_h \Delta^3.
\end{align*}
\end{lemma}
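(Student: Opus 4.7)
The plan is a standard trust-region argument: compare the true objective $f$ against its quadratic model $\widehat{f}$ via the Hessian-Lipschitz error, exploit optimality of $\mb\delta_\star$ for the subproblem, and chain the resulting inequalities. The hypothesis gives a direction $\mb\delta$ which is already feasible for \eqref{eqn:trm-subproblem} ($\Im(\mb\delta^*\mb z)=0$ and $\|\mb\delta\|\le\Delta$), so $\mb\delta_\star$ must do at least as well on the quadratic model: $\widehat{f}(\mb\delta_\star;\mb z)\le \widehat{f}(\mb\delta;\mb z)$. Since $\mb z\in\Gamma$ and $\Delta\le R_1$, the whole trust region $\mb z+\bb{CB}^n(\Delta)$ lies in $\Gamma'$, where by Lemma \ref{lem:lipschitz} the Wirtinger Hessian is $L_h$-Lipschitz.

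First I would state the key Taylor remainder bound: for $\mb z\in\Gamma$ and any $\mb u$ with $\mb z+\mb u\in\Gamma'$,
\begin{align*}
\abs{f(\mb z+\mb u)-\widehat{f}(\mb u;\mb z)}\;\le\;\tfrac{L_h}{3}\norm{\mb u}{}^{3}.
\end{align*}
This follows from the integral form of Taylor's theorem (Lemma~\ref{lem:Taylor-integral-form}) applied to the real-valued $f$ in the $\R^{2n}$ picture, combined with the Lipschitz estimate of Lemma~\ref{lem:hessian-lower-upper}/Lemma~\ref{lem:lipschitz}; the constant $1/3$ (rather than $1/6$) accounts for the fact that the Hessian quadratic form in \eqref{eqn:wirtinger} uses the conjugate-stacked vector $[\mb\delta;\ol{\mb\delta}]$ of squared norm $2\norm{\mb\delta}{}^{2}$.

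With this bound in hand, the rest is a one-line chain. Applying it to $\mb u=\mb\delta$ yields $\widehat{f}(\mb\delta;\mb z)\le f(\mb z+\mb\delta)+\tfrac{L_h}{3}\Delta^{3}\le f(\mb z)-d+\tfrac{L_h}{3}\Delta^{3}$ by the assumed descent. Using optimality, $\widehat{f}(\mb\delta_\star;\mb z)\le\widehat{f}(\mb\delta;\mb z)\le f(\mb z)-d+\tfrac{L_h}{3}\Delta^{3}$. Applying the remainder bound again to $\mb u=\mb\delta_\star$ gives
\begin{align*}
f(\mb z+\mb\delta_\star)\;\le\;\widehat{f}(\mb\delta_\star;\mb z)+\tfrac{L_h}{3}\norm{\mb\delta_\star}{}^{3}\;\le\;f(\mb z)-d+\tfrac{2L_h}{3}\Delta^{3},
\end{align*}
which is exactly the claimed bound. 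Feasibility $\norm{\mb\delta_\star}{}\le\Delta$ is immediate from the subproblem constraint.

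There is no serious obstacle; the only point requiring care is the Taylor-remainder constant in the Wirtinger setting. I would state it as a small self-contained sub-lemma (or just invoke the analogue proved elsewhere in the appendix, e.g., alongside Lemma~\ref{lem:Taylor-integral-form}) so that the constant $2/3$ in the conclusion is transparent, and nothing else in the proof depends on delicate geometry of the objective beyond what Lemma~\ref{lem:lipschitz} already provides.
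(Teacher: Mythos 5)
Your proposal is correct and follows essentially the same route as the paper: the paper's proof is exactly the two-sided application of the quadratic-approximation error bound $\abs{f(\mb z+\mb u)-\wh f(\mb u;\mb z)}\le \tfrac{L_h}{3}\norm{\mb u}{}^3$ (stated as Lemma~\ref{lem:Taylor-approx} in the appendix, with the same constant $1/3$ you identify), chained with the optimality of $\mb\delta_\star$ against the feasible competitor $\mb\delta$.
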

\begin{proof}
See Section~\ref{pf:lem:func-decent} on Page~\pageref{pf:lem:func-decent}. 
\end{proof}

The next proposition says when $\Delta$ is chosen properly, a trust-region step from a point with negative local curvature decreases the function value by a concrete amount. 
\begin{proposition}[Function Value Decrease in Negative Curvature Region $\mc R_1$]\label{prop:negative-curvature}
Suppose the current iterate $\mb z^{(r)} \in \mc R_1 \cap \Gamma$, and our trust-region size satisfies
	\begin{align}
		\Delta \leq \frac{1}{400L_h} \norm{\mb x}{}^2. 
	\end{align} 
	Then an optimizer $\mb \delta_\star$ to~\eqref{eqn:trm-subproblem} leads to $\mb z^{(r+1)} = \mb z^{(r)} + \mb \delta_\star$ that obeys 
	\begin{align} 
		f(\mb z^{(r+1)}) - f(\mb z^{(r)}) \leq - d_1 \doteq - \frac{1}{400} \Delta^2 \norm{\mb x}{}^2.
	\end{align}
\end{proposition}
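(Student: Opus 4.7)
The plan is to explicitly construct a feasible candidate step that exploits the negative curvature in $\mc R_1$, use the integral-form Taylor expansion together with the Lipschitz-Hessian property to bound its function-value decrease, and then invoke Lemma~\ref{lem:func-decent} to transfer that decrease to the true TRM iterate. The balance between the quadratic gain from negative curvature and the cubic Taylor remainder will fix the trust-region radius constraint.

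First, I would take as candidate the direction of negative curvature, $\mb v \doteq \mb x \e^{\im \phi(\mb z^{(r)})}/\norm{\mb x}{}$, scaled to the boundary of the trust region. Specifically, set $\mb \delta \doteq s \Delta \mb v$ with $s \in \{-1,+1\}$ chosen so that $\mathrm{Re}\paren{\mb \delta^* \nabla_{\mb z} f(\mb z^{(r)})} \le 0$, making the linear term in the Taylor expansion non-positive; this costs nothing since both $\pm \mb v$ are negative-curvature directions of the same magnitude. Two feasibility checks are immediate: $\norm{\mb \delta}{} = \Delta$, and $\mb \delta^* \mb z^{(r)} = s\Delta \mb x^* \e^{-\im \phi(\mb z^{(r)})} \mb z^{(r)}/\norm{\mb x}{} = s\Delta \abs{\mb x^* \mb z^{(r)}}/\norm{\mb x}{}$, which is real, so $\Im(\mb \delta^* \mb z^{(r)}) = 0$.

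Next, by construction $[\mb \delta;\ol{\mb \delta}] = (s\Delta/\norm{\mb x}{})[\mb x \e^{\im \phi(\mb z^{(r)})};\ol{\mb x}\e^{-\im \phi(\mb z^{(r)})}]$, so the Hessian quadratic form inherits the negative curvature bound of Proposition~\ref{prop:nega-curv}:
\begin{align*}
\begin{bmatrix}\mb \delta \\ \ol{\mb \delta}\end{bmatrix}^* \nabla^2 f(\mb z^{(r)}) \begin{bmatrix}\mb \delta \\ \ol{\mb \delta}\end{bmatrix} \le -\frac{\Delta^2}{100}\norm{\mb x}{}^2.
\end{align*}
Applying the integral form of Taylor's theorem (Lemma~\ref{lem:Taylor-integral-form}) together with the Hessian Lipschitz property from Lemma~\ref{lem:lipschitz} (valid since $\mb z^{(r)}, \mb z^{(r)}+\mb \delta \in \Gamma'$ whenever $\Delta \le R_1$, which holds under our hypothesis), and discarding the non-positive linear term, I obtain
\begin{align*}
f(\mb z^{(r)}+\mb \delta) \le f(\mb z^{(r)}) - \frac{1}{200}\Delta^2 \norm{\mb x}{}^2 + \frac{C_0}{6} L_h \Delta^3
\end{align*}
for an absolute constant $C_0$ coming from the Wirtinger-to-real conversion of the cubic remainder.

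Finally, I apply Lemma~\ref{lem:func-decent} with $d = \frac{1}{200}\Delta^2\norm{\mb x}{}^2 - \frac{C_0}{6} L_h \Delta^3$ to conclude that the TRM subproblem produces $\mb \delta_\star$ with
\begin{align*}
f(\mb z^{(r+1)}) - f(\mb z^{(r)}) \le -\frac{1}{200}\Delta^2 \norm{\mb x}{}^2 + \paren{\frac{C_0}{6}+\frac{2}{3}} L_h \Delta^3.
\end{align*}
The assumption $\Delta \le \norm{\mb x}{}^2/(400 L_h)$ makes the cubic term no larger than $\frac{1}{400}\Delta^2\norm{\mb x}{}^2$, which yields the stated bound $d_1 = \frac{1}{400}\Delta^2\norm{\mb x}{}^2$. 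The main bookkeeping obstacle is keeping the cubic-remainder constant explicit and checking that the $1/400$ budget in the hypothesis on $\Delta$ is sufficient to absorb both the Taylor remainder from the candidate step and the additional $\frac{2}{3}L_h\Delta^3$ slack coming from Lemma~\ref{lem:func-decent}; there is no conceptual difficulty beyond this arithmetic, since the sign-flip trick already removes the gradient as a potential obstruction.
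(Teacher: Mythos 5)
Your proposal is correct and follows essentially the same route as the paper's proof: the same candidate step $\pm\Delta\,\mb x\e^{\im\phi(\mb z^{(r)})}/\norm{\mb x}{}$ with the sign chosen to kill the linear term, the negative-curvature bound from Proposition~\ref{prop:nega-curv}, the integral Taylor expansion with the $\tfrac{L_h}{3}\Delta^3$ remainder (your unspecified $C_0$ equals $2$, by Lemma~\ref{lem:Taylor-approx}), and Lemma~\ref{lem:func-decent} to pass to $\mb\delta_\star$, after which the arithmetic with $\Delta\le\norm{\mb x}{}^2/(400L_h)$ is identical. Your explicit feasibility check $\Im(\mb\delta^*\mb z^{(r)})=0$ is a nice touch the paper leaves implicit.
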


\begin{proof}
	See Section~\ref{pf:prop:negative-curvature} on Page~\pageref{pf:prop:negative-curvature}.
\end{proof}

The next proposition shows that when $\Delta$ is chosen properly, a trust-region step from a point with strong gradient decreases the objective by a concrete amount. 

\begin{proposition}[Function Value Decrease in Large Gradient Region $\mc R_2$] \label{prop:large-gradient}
	Suppose our current iterate $\mb z^{(r)} \in \mc R_2 \cap \mc R_1^c \cap \Gamma$, and our trust-region size satisfies
	\begin{align}
		\Delta \le \min \set{\frac{\norm{\mb x}{}^3}{8000L_g}, \sqrt{\frac{3\norm{\mb x}{}^3}{16000L_h}}}. 
	\end{align}
Then an optimizer $\mb \delta_\star$ to~\eqref{eqn:trm-subproblem} leads to $\mb z^{(r+1)} = \mb z^{(r)} + \mb \delta_\star$ that obeys 
	\begin{align}
		f(\mb z^{(r+1)}) - f(\mb z^{(r)}) \le - d_2 \doteq -\frac{1}{4000} \Delta \norm{\mb x}{}^3. 
	\end{align}
\end{proposition}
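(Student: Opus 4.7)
By Proposition~\ref{prop:region-cover}, any $\mb z^{(r)} \in \mc R_2$ lies in either $\mc R_2^{\mb z}$ or $\mc R_2^{\mb h}$, so I would split into these two cases. In each case the strategy is to exhibit an explicit feasible step $\mb \delta$ of norm $\Delta$ for the trust-region subproblem~\eqref{eqn:trm-subproblem} along which $f$ decreases by an amount $d \asymp \Delta \norm{\mb x}{}^3$, and then apply Lemma~\ref{lem:func-decent} to transfer that decrease to the actual TRM update $\mb z^{(r+1)}$, at the cost of the cubic correction $\tfrac{2}{3} L_h \Delta^3$.

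\textbf{Descent direction in each sub-region.} In $\mc R_2^{\mb z}$ I would take $\mb \delta = -\Delta\, \mb z/\norm{\mb z}{}$; since $\mb \delta^* \mb z = -\Delta \norm{\mb z}{}$ is real, the linear constraint $\Im(\mb \delta^* \mb z)=0$ is automatic. Proposition~\ref{prop:grad-region-z} yields $\Re(\mb \delta^* \nabla_{\mb z} f) \le -\tfrac{\Delta}{1000}\norm{\mb x}{}^2\norm{\mb z}{}$. In $\mc R_2^{\mb h}$ I would take $\mb \delta = -\Delta\, \mb h(\mb z)/\norm{\mb h(\mb z)}{}$; feasibility holds because, by the defining property of $\phi(\mb z)$, $\mb x \e^{\im \phi(\mb z)}{}^* \mb z = \abs{\mb x^* \mb z}$ is real, so $\mb h(\mb z)^* \mb z = \norm{\mb z}{}^2 - \abs{\mb x^* \mb z}$ is real. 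Proposition~\ref{prop:grad-region-zx} then gives $\Re(\mb \delta^* \nabla_{\mb z} f) \le -\tfrac{\Delta}{1000} \norm{\mb x}{}^2 \norm{\mb z}{}$ as well. Upgrading these bounds from $\norm{\mb x}{}^2 \norm{\mb z}{}$ to $\norm{\mb x}{}^3$ requires a lower bound $\norm{\mb z}{} \gtrsim \norm{\mb x}{}$ in both regions. For $\mc R_2^{\mb h}$ this is built into the definition~\eqref{eqn:region-2-h} via $\norm{\mb z}{} \ge \tfrac{11}{20} \norm{\mb x}{}$. For $\mc R_2^{\mb z}$ it is not stated explicitly, but it follows by dropping the nonnegative term $\abs{\mb x^* \mb z}^2$ in the defining inequality of~\eqref{eqn:region-2-z}: the remaining polynomial inequality in $\norm{\mb z}{}$ gives $\norm{\mb z}{}^2 \ge \tfrac{501}{995}\norm{\mb x}{}^2$, and in particular $\norm{\mb z}{} \ge \tfrac{1}{2}\norm{\mb x}{}$.

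\textbf{Bounding $f(\mb z + \mb \delta)$ and assembling.} A second-order Taylor expansion in Wirtinger form, together with the Lipschitz gradient constant $L_g$ from Lemma~\ref{lem:lipschitz}, gives
\begin{align*}
f(\mb z + \mb \delta) - f(\mb z) \;\le\; 2\Re(\mb \delta^* \nabla_{\mb z} f(\mb z)) + L_g \norm{\mb \delta}{}^2
\;\le\; -c_0\, \Delta \norm{\mb x}{}^3 + L_g \Delta^2,
\end{align*}
where $c_0$ is an explicit constant coming from the previous step. The hypothesis $\Delta \le \norm{\mb x}{}^3/(8000 L_g)$ makes the quadratic correction a small fraction of $\Delta\norm{\mb x}{}^3$, producing $d \asymp \Delta\norm{\mb x}{}^3$. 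Invoking Lemma~\ref{lem:func-decent} introduces the error $\tfrac{2}{3} L_h \Delta^3$; the hypothesis $\Delta^2 \le 3\norm{\mb x}{}^3/(16000 L_h)$ bounds it by another small fraction of $\Delta\norm{\mb x}{}^3$. Tracking the constants carefully so that the surviving coefficient is at least $\tfrac{1}{4000}$ yields the stated conclusion.

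\textbf{Main obstacle.} Nothing here is conceptually hard once the descent directions are identified; the delicate parts are (i) verifying the feasibility $\Im(\mb \delta^* \mb z) = 0$ in the $\mc R_2^{\mb h}$ case, which relies on the subtle real-valuedness of $\mb x\e^{\im \phi(\mb z)}{}^* \mb z$, and (ii) extracting the implicit lower bound $\norm{\mb z}{} \gtrsim \norm{\mb x}{}$ from the definition of $\mc R_2^{\mb z}$ so that Proposition~\ref{prop:grad-region-z} produces a bound of order $\norm{\mb x}{}^3$ rather than $\norm{\mb x}{}^2 \norm{\mb z}{}$. Beyond these, the proof is primarily careful bookkeeping of constants to land at $\tfrac{1}{4000}$.
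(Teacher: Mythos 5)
Your proposal is correct and follows essentially the same route as the paper: the same case split via Proposition~\ref{prop:region-cover}, the same descent directions $-\mb z/\norm{\mb z}{}$ and $-\mb h(\mb z)/\norm{\mb h(\mb z)}{}$, the same first-order Taylor bound with $L_g$, and the same transfer via Lemma~\ref{lem:func-decent}. The only (harmless) divergence is in obtaining $\norm{\mb z^{(r)}}{}\gtrsim\norm{\mb x}{}$ on $\mc R_2^{\mb z}$: you extract $\norm{\mb z}{}^2\ge\tfrac{501}{995}\norm{\mb x}{}^2$ directly from the defining inequality~\eqref{eqn:region-2-z}, whereas the paper invokes the hypothesis $\mb z^{(r)}\in\mc R_1^c$ together with the inclusion $\set{\mb z:\norm{\mb z}{}\le\norm{\mb x}{}/2}\subset\mc R_1$; both yield the needed $\norm{\mb z^{(r)}}{}\ge\norm{\mb x}{}/2$.
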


\begin{proof}
	See Section~\ref{pf:prop:large-gradient} on Page~\pageref{pf:prop:large-gradient}.
\end{proof}

Now, we argue about $\mc R_3$, in which the behavior of the algorithm is more complicated. For the region $\mc R_3\setminus \mc R_3'$, the restricted strong convexity in radial directions around $\mc X$ as established in Proposition~\ref{prop:str_cvx} implies that the gradient at any point in $\mc R_3\setminus \mc R_3'$ is nonzero. Thus, one can  treat this as another strong gradient region, and carry out essentially the same argument as in Proposition~\ref{prop:large-gradient}. 
\begin{proposition}[Function Value Decrease in $\mc R_3\setminus \mc R_3'$] \label{prop:func-value-decrease-R-3}
Suppose our current iterate $\mb z^{(r)}\in \mc R_3 \setminus \mc R_3'$, and our trust-region size satisfies
	\begin{align}
		\Delta \le \min \set{\frac{\norm{\mb x}{}^4}{160L_h L_g}, \sqrt{\frac{3}{320}} \frac{\norm{\mb x}{}^2}{L_h}}.  
   \end{align}
Then an optimizer $\mb \delta_\star$ to~\eqref{eqn:trm-subproblem} leads to $\mb z^{(r+1)} = \mb z^{(r)} + \mb \delta_\star$ that obeys 
	\begin{align}
		f(\mb z^{(r+1)}) - f(\mb z^{(r)}) \le - d_3 \doteq -\frac{1}{80L_h} \Delta \norm{\mb x}{}^4. 
	\end{align}
\end{proposition}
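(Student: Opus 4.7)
}

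The plan is to convert the restricted strong convexity of $f$ on $\mc R_3$ into a lower bound on the directional derivative of $f$ along the radial direction $\mb g(\mb z) = \mb h(\mb z)/\norm{\mb h(\mb z)}{}$, and then use this bound in exactly the same descent-then-error-bound scheme as in Proposition~\ref{prop:large-gradient}. The gain from moving away from the narrow neighborhood $\mc R_3'$ is that $\norm{\mb h(\mb z)}{}$ is no longer tiny, so the directional derivative is quantitatively bounded away from zero.

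First, I would reuse the two-sided Taylor argument that already appears in the proof of Theorem~\ref{thm:finite-landscape}. Writing $t \doteq \norm{\mb h(\mb z)}{}$, expanding $f$ from $\mb z$ to $\mb x\e^{\im \phi(\mb z)}$ and back along the straight segment, and invoking Proposition~\ref{prop:str_cvx} gives
\begin{align*}
\begin{bmatrix} \mb g \\ \ol{\mb g}\end{bmatrix}^* \nabla f(\mb z) \;\ge\; \tfrac{1}{4}\norm{\mb x}{}^2\, t,
\qquad \text{equivalently}\qquad 2\Re\paren{\mb g^* \nabla_{\mb z} f(\mb z)} \;\ge\; \tfrac{1}{4}\norm{\mb x}{}^2\, t.
\end{align*}
In $\mc R_3\setminus \mc R_3'$ we additionally have $t > \frac{1}{10 L_h}\norm{\mb x}{}^2$, so the directional derivative of $f$ along $-\mb g$ is at least $\tfrac{1}{40 L_h}\norm{\mb x}{}^4$ in magnitude.

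Next, I would propose the candidate step $\mb \delta \doteq -\Delta\, \mb g(\mb z)$, whose norm is $\Delta$. The feasibility $\Im(\mb \delta^* \mb z) = 0$ follows from the identity $\mb h(\mb z)^* \mb z = \norm{\mb z}{}^2 - \abs{\mb x^* \mb z}$, which is real by the definition of $\phi(\mb z)$. Applying the descent lemma via Lipschitz gradient (Lemma~\ref{lem:lipschitz}) gives
\begin{align*}
f(\mb z + \mb \delta) \;\le\; f(\mb z) + 2\Re(\mb \delta^* \nabla_{\mb z} f(\mb z)) + \tfrac{L_g}{2}\norm{\mb \delta}{}^2 \;\le\; f(\mb z) - \tfrac{\Delta\norm{\mb x}{}^4}{40 L_h} + \tfrac{L_g}{2}\Delta^2 .
\end{align*}
The first assumed bound $\Delta \le \tfrac{\norm{\mb x}{}^4}{160\, L_h L_g}$ makes the quadratic Lipschitz term at most a quarter of the linear decrease, yielding a guaranteed decrease of order $\tfrac{\Delta\norm{\mb x}{}^4}{L_h}$ at the candidate step.

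Finally, I would feed this candidate decrease into Lemma~\ref{lem:func-decent}, which converts it into a comparable decrease for the actual trust-region solution $\mb \delta_\star$, modulo the cubic residual $\tfrac{2}{3}L_h \Delta^3$. The second assumed bound $\Delta \le \sqrt{3/320}\,\norm{\mb x}{}^2/L_h$ controls this cubic residual by a further fraction of the linear decrease; combining the two gives $f(\mb z^{(r+1)}) - f(\mb z^{(r)}) \le -d_3 = -\tfrac{1}{80 L_h}\Delta\norm{\mb x}{}^4$ after a concrete regrouping of constants. I expect the only nontrivial step to be the first one — cleanly extracting the uniform gradient lower bound from Proposition~\ref{prop:str_cvx} along the non-trivial direction $\mb g(\mb z)$, in particular justifying that $\mb h(\mb z)$ is geometrically orthogonal to $\im \mb z$ and thus a legal trust-region direction; everything after that is the same bookkeeping used in Proposition~\ref{prop:large-gradient}.
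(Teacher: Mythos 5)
Your proposal is correct and follows essentially the same route as the paper: the two-sided Taylor/strong-convexity argument yielding $\bigl[\mb g;\ol{\mb g}\bigr]^*\nabla f(\mb z)\ge \tfrac14\norm{\mb x}{}^2\norm{\mb h(\mb z)}{}\ge \tfrac{1}{40L_h}\norm{\mb x}{}^4$ on $\mc R_3\setminus\mc R_3'$, the feasible radial step $-\Delta\,\mb h(\mb z)/\norm{\mb h(\mb z)}{}$, the first-order Taylor bound with the $L_g\Delta^2$ remainder, and Lemma~\ref{lem:func-decent} to absorb the $\tfrac23 L_h\Delta^3$ term, with the two conditions on $\Delta$ each capping an error term at $\tfrac{1}{160L_h}\Delta\norm{\mb x}{}^4$. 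The only cosmetic difference is your slightly optimistic constant $L_g/2$ in the descent remainder (the paper uses $L_g$, which is safe under its normalization of the doubled Wirtinger gradient), and this does not affect the conclusion.
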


\begin{proof}
	See Section~\ref{pf:prop:func-value-decrease-R-3} on Page~\pageref{pf:prop:func-value-decrease-R-3}.
\end{proof}

Our next several propositions show that when the iterate sequence finally moves into $\mc R_3'$, \js{in general it makes a finite number of consecutive constrained steps in which the trust-region constraints are always active, followed by ultimate consecutive unconstrained steps in which the the trust-region constraints are always inactive until convergence. Depending on the initialization and optimization parameters, either constrained or unconstrained steps can be void}. The next proposition shows that when $\Delta$ is chosen properly, a constrained step in $\mc R_3'$ decreases the objective by a concrete amount. 
\begin{proposition}\label{prop:fix-decrease-R-3'}
Suppose our current iterate $\mb z^{(r)} \in \mc R_3'$, and the trust-region subproblem takes a constrained step, i.e., the optimizer to~\eqref{eqn:trm-subproblem} satisfies $\norm{\mb \delta_\star}{} = \Delta$. We have the $\mb \delta_\star$ leads to 
	\begin{align}
		f(\mb z^{(r+1)}) - f(\mb z^{(r)}) \leq -d_4 \doteq -\frac{m_H^2 \Delta^2}{4M_H}. 
	\end{align}
provided that 
\begin{align}
\Delta \le m_H^2/(4M_H L_h). 
\end{align}
Here $m_H$ and $M_H$ are as defined in Lemma~\ref{lem:hessian-lower-upper}. 
\end{proposition}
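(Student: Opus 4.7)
The plan is to mimic the template used in Propositions~\ref{prop:negative-curvature}--\ref{prop:func-value-decrease-R-3}: bound the decrease of the quadratic model, then absorb the third-order Taylor remainder using the Lipschitz bound $L_h$ on the Hessian. The novelty here is that we cannot read off a descent direction from negative curvature or a large gradient; instead we exploit strong convexity of the restricted Hessian together with the fact that the constraint $\norm{\mb \delta_\star}{} = \Delta$ is active, which forces a Lagrange multiplier to be present.

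First, pass to the equivalent real subproblem \eqref{eqn:TRM-subproblem-0} over $\mb \xi \in \R^{2n-1}$ with $\norm{\mb \xi}{} \leq \Delta$, where $\mb \delta_\star = \mb U \mb \xi_\star$ and $\norm{\mb \xi_\star}{} = \norm{\mb \delta_\star}{} = \Delta$ because the columns of $\mb U(\mb z^{(r)})$ form an orthonormal (in the real sense) basis. On $\mc R_3'$ Lemma~\ref{lem:hessian-lower-upper} yields $m_H \mb I \preceq \mb H(\mb z^{(r)}) \preceq M_H \mb I$, so the classical trust-region subproblem is a strongly convex quadratic program over the ball. Since the step is on the boundary, the KKT conditions furnish a multiplier $\lambda_\star \geq 0$ with
\begin{align*}
(\mb H(\mb z^{(r)}) + \lambda_\star \mb I)\,\mb \xi_\star \;=\; -\mb g(\mb z^{(r)}), \qquad \norm{\mb \xi_\star}{} = \Delta.
\end{align*}
Substituting this identity into the quadratic model gives
\begin{align*}
\wh{f}(\mb \delta_\star;\mb z^{(r)}) - f(\mb z^{(r)}) \;=\; \mb \xi_\star^\top \mb g(\mb z^{(r)}) + \tfrac{1}{2} \mb \xi_\star^\top \mb H(\mb z^{(r)}) \mb \xi_\star \;=\; -\tfrac{1}{2} \mb \xi_\star^\top \mb H(\mb z^{(r)}) \mb \xi_\star - \lambda_\star \Delta^2 \;\leq\; -\tfrac{m_H}{2}\Delta^2,
\end{align*}
using $\mb H \succeq m_H \mb I$ and $\lambda_\star \geq 0$ at the last step.

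Next, the Lipschitz Hessian bound (Lemma~\ref{lem:lipschitz}) applied along $\mb \delta_\star$ inside $\Gamma'$ gives the third-order Taylor remainder
\begin{align*}
f(\mb z^{(r)} + \mb \delta_\star) - \wh{f}(\mb \delta_\star;\mb z^{(r)}) \;\leq\; \tfrac{L_h}{6}\norm{\mb \delta_\star}{}^3 \;=\; \tfrac{L_h}{6}\Delta^3.
\end{align*}
Combining the two estimates,
\begin{align*}
f(\mb z^{(r+1)}) - f(\mb z^{(r)}) \;\leq\; -\tfrac{m_H}{2}\Delta^2 + \tfrac{L_h}{6}\Delta^3.
\end{align*}

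Finally, I would verify that the step-size hypothesis $\Delta \leq m_H^2/(4 M_H L_h)$ makes the remainder a small fraction of the quadratic gain. Indeed, it yields $\tfrac{L_h}{6}\Delta \leq \tfrac{m_H^2}{24 M_H}$, so the right-hand side is at most $\paren{-\tfrac{m_H}{2}+\tfrac{m_H^2}{24 M_H}}\Delta^2$, and since $M_H \geq m_H$ this is bounded by $-\tfrac{m_H^2}{4 M_H}\Delta^2$, which is exactly $-d_4$. The main obstacle I expect is only a bookkeeping one: making sure the Wirtinger-calculus third-order Taylor expansion carries the same constant as in the real case (which it does under the canonical identification used throughout), and confirming that the Lipschitz constant $L_h$ from Lemma~\ref{lem:lipschitz} applies to the segment $\mb z^{(r)} + t\mb \delta_\star$ for $t \in [0,1]$; this is automatic since $\mb z^{(r)} \in \mc R_3' \subset \Gamma$ and $\Delta \leq R_1$, so the segment lies in $\Gamma'$.
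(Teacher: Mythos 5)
Your proof is correct, but it takes a genuinely different route from the paper's. The paper does not invoke the optimality conditions of the trust-region subproblem at the quadratic-model level; instead it constructs an explicit feasible descent direction $\mb \delta = -t\,\mc P_{(\im \mb z^{(r)})^\perp}\nabla_{\mb z}f(\mb z^{(r)})$, uses the activeness of the constraint only to deduce $\norm{\mb g(\mb z^{(r)})}{} \ge m_H\Delta$ (via $\Delta \le \norm{\mb H^{-1}\mb g}{} \le \norm{\mb g}{}/m_H$), and then transfers the resulting model decrease $-m_H^2\Delta^2/(2M_H)$ to the true optimizer $\mb \delta_\star$ through Lemma~\ref{lem:func-decent}, paying $L_h\Delta^3$ in total Taylor error. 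You instead apply the Mor\'e--Sorensen/KKT characterization $(\mb H + \lambda_\star \mb I)\mb \xi_\star = -\mb g$, $\lambda_\star \ge 0$ directly to $\mb \delta_\star$, which yields the cleaner and in fact stronger model decrease $-\tfrac{1}{2}\mb \xi_\star^\top \mb H \mb \xi_\star - \lambda_\star\Delta^2 \le -\tfrac{m_H}{2}\Delta^2$ and requires only a single application of the quadratic-approximation error bound. One small correction: Lemma~\ref{lem:Taylor-approx} in this paper gives $\abs{f(\mb z + \mb \delta) - \wh f(\mb \delta;\mb z)} \le \tfrac{1}{3}L_h\norm{\mb \delta}{}^3$, not $\tfrac{1}{6}L_h\norm{\mb \delta}{}^3$ (the extra factor of $2$ comes from the conjugate-pair convention in the Wirtinger Hessian quadratic form). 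With the constant $L_h/3$ your final step still closes: the hypothesis $\Delta \le m_H^2/(4M_HL_h)$ gives $\tfrac{L_h}{3}\Delta \le \tfrac{m_H^2}{12 M_H}$, and $\tfrac{m_H}{2} - \tfrac{m_H^2}{12M_H} \ge \tfrac{m_H^2}{4M_H}$ since $M_H \ge m_H$, so the claimed decrease $-d_4$ follows.
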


\begin{proof}
	See Section~\ref{pf:prop:fix-decrease-R-3'} on Page~\pageref{pf:prop:fix-decrease-R-3'}.
\end{proof}

The next proposition shows that when $\Delta$ is properly tuned, an unconstrained step in $\mc R_3'$ dramatically reduces the norm of the gradient. 

\begin{proposition}[Quadratic Convergence of the Norm of the Gradient] \label{prop:grad-quad-conv}
Suppose our current iterate $\mb z^{(r)} \in \mc R_3'$, and the trust-region subproblem takes an unconstrained step, i.e., the unique optimizer to~\eqref{eqn:trm-subproblem} satisfies $\norm{\mb \delta_\star}{} < \Delta$. We have the $\mb \delta_\star$ leads to $\mb z^{(r+1)} = \mb z^{(r)} + \mb \delta_\star$ that obeys  
\begin{align}
	\|\nabla f(\mb z^{(r+1)})\| \le \frac{1}{m_H^2} \paren{L_h + \frac{32}{\norm{\mb x}{}} M_H} \| \nabla f(\mb z^{(r)})\|^2, 
\end{align}
provided 
\begin{align}
\Delta \le \norm{\mb x}{}/10. 
\end{align}
Here $M_H$ and $m_H$ are as defined in Lemma~\ref{lem:hessian-lower-upper}.  
\end{proposition}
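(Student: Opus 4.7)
The plan is to leverage that inside $\mc R_3'$ the restricted Hessian satisfies $\mb H(\mb z^{(r)}) \succeq m_H \mb I \succ \mb 0$ by Lemma~\ref{lem:hessian-lower-upper}, so the reduced subproblem~\eqref{eqn:TRM-subproblem-0} is strongly convex, and therefore whenever its minimizer is strictly interior it must coincide with the Newton step $\mb \xi_\star = -\mb H(\mb z^{(r)})^{-1} \mb g(\mb z^{(r)})$. Since $\norm{\mb U \mb \xi_\star}{} = \norm{\mb \xi_\star}{}$ (using that $\Im(\mb U^* \mb U)$ is antisymmetric) and $\norm{\mb g(\mb z^{(r)})}{} = \sqrt{2}\, \norm{\nabla f(\mb z^{(r)})}{}$ (from $\mb P^* \mb P = 2\mb I$ combined with the fact that $\nabla f$ lies in $\mathrm{col}(\mb P)$, a consequence of $f(\mb z \e^{\im \phi}) = f(\mb z)$), I obtain the base estimate $\norm{\mb \delta_\star}{} \le \sqrt{2}\, m_H^{-1} \norm{\nabla f(\mb z^{(r)})}{}$.

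I then apply the integral form of Taylor's theorem (Lemma~\ref{lem:Taylor-integral-form}) and the Hessian-Lipschitz bound from Lemma~\ref{lem:lipschitz} to write
\begin{align*}
\nabla f(\mb z^{(r+1)}) = \nabla f(\mb z^{(r)}) + \nabla^2 f(\mb z^{(r)}) \begin{bmatrix} \mb \delta_\star \\ \ol{\mb \delta_\star} \end{bmatrix} + \mb e, \qquad \norm{\mb e}{} \le C L_h \norm{\mb \delta_\star}{}^2.
\end{align*}
Because $f$ is real-valued, $\nabla f(\mb z^{(r+1)})$ lies in the conjugate-pair subspace, which decomposes orthogonally (under the Hermitian inner product of $\Cp^{2n}$) as $\mathrm{col}(\mb P(\mb z^{(r)})) \oplus \mathrm{span}(\mb p(\mb z^{(r)}))$, where $\mb p(\mb z) \doteq [\im \mb z; -\im \ol{\mb z}]/\norm{\mb z}{}$; orthogonality $\mb P^* \mb p = \mb 0$ follows from $\Im(\mb U^* \mb z^{(r)}) = \mb 0$. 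I bound the two components of $\nabla f(\mb z^{(r+1)})$ separately.

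For the $\mathrm{col}(\mb P)$ piece, applying $\mb P^*$ to the Taylor identity and invoking the Newton first-order condition $\mb g + \mb H \mb \xi_\star = \mb 0$ annihilates the leading terms, leaving $\mb P^* \nabla f(\mb z^{(r+1)}) = \mb P^* \mb e$ and hence a contribution of order $L_h m_H^{-2} \norm{\nabla f(\mb z^{(r)})}{}^2$. For the $\mathrm{span}(\mb p)$ piece, letting $c$ denote its scalar coefficient (so $2c = \mb p^* \nabla f(\mb z^{(r+1)})$), applying $\mb p^*$ to the Taylor identity kills the first term by tangent-orthogonality $\Im((\mb z^{(r)})^* \nabla_{\mb z} f(\mb z^{(r)})) = 0$, and I control the Hessian-vector term via the key identity
\begin{align*}
\nabla^2 f(\mb z)\, \mb p(\mb z) = \frac{1}{\norm{\mb z}{}} \begin{bmatrix} \im \nabla_{\mb z} f(\mb z) \\ -\im \nabla_{\ol{\mb z}} f(\mb z) \end{bmatrix},
\end{align*}
obtained by differentiating the relation $\nabla_{\mb z} f(\mb z \e^{\im \phi}) = \e^{\im \phi} \nabla_{\mb z} f(\mb z)$ at $\phi = 0$. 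This yields $\norm{\nabla^2 f(\mb z^{(r)}) \mb p}{} = \norm{\nabla f(\mb z^{(r)})}{}/\norm{\mb z^{(r)}}{}$, so the tangent Hessian-vector term is at most $\sqrt{2}\, \norm{\nabla f(\mb z^{(r)})}{} \norm{\mb \delta_\star}{}/\norm{\mb z^{(r)}}{}$; using the base estimate for $\norm{\mb \delta_\star}{}$ together with the fact that $\norm{\mb z^{(r)}}{}$ is comparable to $\norm{\mb x}{}$ in $\mc R_3'$ (which the hypothesis $\Delta \le \norm{\mb x}{}/10$ preserves), this contributes at most $O\bigl(\norm{\nabla f(\mb z^{(r)})}{}^2 / (m_H \norm{\mb x}{})\bigr)$.

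Summing the two pieces produces a bound of the form $\norm{\nabla f(\mb z^{(r+1)})}{} \le (C_1 L_h m_H^{-2} + C_2 (m_H \norm{\mb x}{})^{-1}) \norm{\nabla f(\mb z^{(r)})}{}^2$; rewriting $(m_H \norm{\mb x}{})^{-1} = m_H \cdot m_H^{-2} \norm{\mb x}{}^{-1}$ and bounding $m_H \le M_H$ recasts this into the form stated in the proposition, modulo careful constant tracking to recover the factor $32$. The principal conceptual obstacle is the tangent component: a crude bound on $\nabla^2 f\, \mb p$ via the operator norm of $\nabla^2 f$ would scale with the much larger gradient-Lipschitz constant $L_g \asymp n \log m \cdot \norm{\mb x}{}^2$ and so would not deliver quadratic convergence; the $S^1$-invariance-induced identity above is precisely what replaces this crude bound by $\norm{\nabla f(\mb z^{(r)})}{}/\norm{\mb z^{(r)}}{}$, a quantity that itself vanishes as the iterate approaches the target set.
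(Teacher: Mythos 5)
Your proposal is correct, and it shares the paper's skeleton up to a point: interior optimality of the reduced subproblem gives the Newton condition $\mb H^{(r)}\mb \xi_\star + \mb g^{(r)} = \mb 0$, hence $\norm{\mb \delta_\star}{} = \norm{\mb \xi_\star}{} \le \sqrt{2}\,m_H^{-1}\norm{\nabla f(\mb z^{(r)})}{}$, and the Taylor remainder of the gradient is $O(L_h \norm{\mb \delta_\star}{}^2)$, exactly as in the paper. Where you genuinely diverge is the step that produces the $M_H/\norm{\mb x}{}$ term. The paper writes $\nabla f(\mb z^{(r+1)})$ through the projector onto the constraint subspace at the \emph{new} iterate, inserts the vanishing Newton residual at $\mb z^{(r)}$, and then must control the operator-norm distance between the projectors at $\mb z^{(r)}$ and $\mb z^{(r+1)}$ via principal angles and a law-of-cosines computation; the perturbed vector is bounded crudely by $2\sqrt{2}\,M_H\norm{\mb \xi_\star}{}$, and this is precisely where $\Delta \le \norm{\mb x}{}/10$ is needed (to lower-bound $\norm{\mb z^{(r+1)}}{}$). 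You instead never leave the frame at $\mb z^{(r)}$: you split $\nabla f(\mb z^{(r+1)})$ against the fixed orthogonal decomposition $\mathrm{col}(\mb P^{(r)}) \oplus \mathrm{span}(\mb p^{(r)})$, annihilate the leading terms of the normal component with the Newton condition, and control the tangential component through the $S^1$-invariance identity $\nabla^2 f(\mb z)\,[\im \mb z; -\im \ol{\mb z}] = [\im \nabla_{\mb z} f; -\im \nabla_{\ol{\mb z}} f]$, which is correct and readily verified from \eqref{eqn:grad-finite}--\eqref{eqn:hessian-finite} (or from differentiating the phase equivariance of the gradient, as you do). This buys two things: the tangential Hessian--vector product is bounded by $\norm{\nabla f(\mb z^{(r)})}{}\norm{\mb \delta_\star}{}/\norm{\mb z^{(r)}}{}$ rather than by $M_H \norm{\mb \xi_\star}{}$ times a subspace-perturbation factor, so your constants come out no worse than the paper's $32$ after the (slightly artificial) substitution $m_H \le M_H$; and the hypothesis $\Delta \le \norm{\mb x}{}/10$ becomes essentially superfluous in your route, since you only need $\norm{\mb z^{(r)}}{} \gtrsim \norm{\mb x}{}$, which already follows from $\mb z^{(r)} \in \mc R_3'$. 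You correctly flag that exact constants are not tracked; everything else checks out, including the Hermitian orthogonality $\mb P^*\mb p = \mb 0$ and the normalizations $\mb P^*\mb P = 2\mb I$, $\norm{\mb g}{} = \sqrt{2}\norm{\nabla f}{}$ that the Pythagorean split relies on.
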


\begin{proof}
	See Section~\ref{pf:prop:grad-quad-conv} on Page~\pageref{pf:prop:grad-quad-conv}.
\end{proof}

The next proposition shows that when $\Delta$ is properly tuned, as soon as an unconstrained $\mc R_3'$ step is taken, all future iterations take unconstrained $\mc R_3'$ steps. Moreover, the sequence converges quadratically to the target set $\mc X$. 
\begin{proposition}[Quadratic Convergence of the Iterates in $\mc R_3'$] \label{prop:unconstraint-steps}
Suppose the trust-region algorithm starts to take an unconstrained step in $\mc R_3'$ at $\mb z^{(r)}$ for a certain $r \in \N$. Then all future steps will be unconstrained steps in $\mc R_3'$, and 
\begin{align}
\norm{\mb h(\mb z^{(r + r')})}{} \le \frac{4\sqrt{2} m_H^2}{\norm{\mb x}{}^2} \left(L_h + \frac{32}{\norm{\mb x}{}} M_H \right)^{-1} 2^{-2^{r'}}
\end{align}
for all integers $r' \ge 1$, provided that 
\begin{align}
	\Delta \le \min\set{\frac{\norm{\mb x}{}}{10}, \frac{m_H\norm{\mb x}{}^2}{M_H \sqrt{40\sqrt{2} L_h(L_h + 32M_H/\norm{\mb x}{})}}, \frac{m_H^3}{\sqrt{2}M_H^2 (L_h + 32M_H/\norm{\mb x}{})}}. 
\end{align}	
\end{proposition}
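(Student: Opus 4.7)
The plan is to proceed by induction on $r'$, maintaining at each step three invariants: (i) $\mb z^{(r+r')} \in \mc R_3'$, (ii) the trust-region subproblem at $\mb z^{(r+r')}$ takes an unconstrained step, and (iii) the Wirtinger gradient $\norm{\nabla f(\mb z^{(r+r')})}{}$ obeys the quadratic recursion of Proposition~\ref{prop:grad-quad-conv}. The hypothesis supplies the base case. The inductive step amounts to checking that a single unconstrained trust-region step in $\mc R_3'$ preserves these three invariants; iterating the recursion and converting the resulting gradient bound to a distance bound then yields the stated $2^{-2^{r'}}$ decay.

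Two basic relations on $\mc R_3'$ power the argument. First, by construction $\mb h(\mb z)$ is geometrically orthogonal to $\im \mb z$, so Lemma~\ref{lem:hessian-lower-upper} together with the integral form of Taylor's theorem along the segment from $\mb x \e^{\im \phi(\mb z)}$ to $\mb z$ yields
\begin{align*}
\Re \innerprod{\mb h(\mb z)}{\nabla_{\mb z} f(\mb z)} \;\geq\; \tfrac{m_H}{2} \norm{\mb h(\mb z)}{}^2,
\end{align*}
and Cauchy--Schwarz gives $\norm{\mb h(\mb z)}{} \lesssim \norm{\nabla f(\mb z)}{}/m_H$. Second, an unconstrained optimizer $\mb \xi_\star$ of the reformulated subproblem~\eqref{eqn:TRM-subproblem-0} satisfies $\mb H(\mb z) \mb \xi_\star = -\mb g(\mb z)$ (since $\mb H(\mb z) \succeq m_H \mb I$ in $\mc R_3'$ by Lemma~\ref{lem:hessian-lower-upper}), hence $\norm{\mb \delta_\star}{} = \norm{\mb \xi_\star}{} \lesssim \norm{\nabla f(\mb z)}{}/m_H$. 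These two inequalities together say that small gradient on $\mc R_3'$ forces both small distance to $\mc X$ and small Newton step.

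With these in hand the induction is mechanical. Let $K \doteq (L_h + 32 M_H/\norm{\mb x}{})/m_H^2$ and $u_{r'} \doteq K \norm{\nabla f(\mb z^{(r+r')})}{}$. Proposition~\ref{prop:grad-quad-conv} reads $u_{r'+1} \leq u_{r'}^2$, which integrates to $u_{r'} \leq u_0^{2^{r'}}$ so long as it can be reapplied. At $\mb z^{(r)}$ the step is already unconstrained, so the Newton relation gives $\norm{\nabla f(\mb z^{(r)})}{} \lesssim M_H \Delta$; the third term of the $\Delta$ hypothesis (containing $M_H^2$ in the denominator) is tuned exactly so that $u_0 \leq 1/2$, whence $u_{r'} \leq 2^{-2^{r'}}$. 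The second term in the $\Delta$ hypothesis is tuned so that pushing this gradient bound through the first basic relation yields $\norm{\mb h(\mb z^{(r+r'+1)})}{} \leq \norm{\mb x}{}^2/(10 L_h)$, keeping all future iterates in $\mc R_3'$ and so preserving invariant (i) and the applicability of Lemma~\ref{lem:hessian-lower-upper} and of Proposition~\ref{prop:grad-quad-conv}. The first term $\norm{\mb x}{}/10$ is the direct hypothesis of Proposition~\ref{prop:grad-quad-conv}. The step-norm estimate $\norm{\mb \delta_\star}{} \lesssim \norm{\nabla f(\mb z)}{}/m_H$ then certifies that each subsequent step remains unconstrained, closing invariant (ii). Finally, the distance bound $\norm{\mb h(\mb z^{(r+r')})}{} \lesssim \norm{\nabla f(\mb z^{(r+r')})}{}/m_H \lesssim K^{-1} m_H^{-1} 2^{-2^{r'}}$, after substituting the definition of $K$, matches the stated bound up to absolute constants.

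The conceptual content is standard quadratic convergence of a Newton-type method once the basin of strong convexity has been entered; the main obstacle is the triple bookkeeping: the three conditions on $\Delta$ in the proposition correspond, in order, to the hypothesis of Proposition~\ref{prop:grad-quad-conv}, to staying inside $\mc R_3'$ after the first step, and to forcing $u_0 \leq 1/2$ so that the quadratic recursion actually contracts. Correctly propagating the $\sqrt{2}$ factors between $\nabla f$, $\nabla_{\mb z} f$, and the projected gradient $\mb g$ that live in three different spaces is the only technical nuisance.
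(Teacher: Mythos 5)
Your proposal is correct and follows essentially the same route as the paper: use the Newton optimality condition of an unconstrained step to bound $\norm{\nabla f(\mb z^{(r)})}{}$ by $M_H\Delta$, iterate the quadratic recursion of Proposition~\ref{prop:grad-quad-conv}, inductively verify that iterates stay in $\mc R_3'$ and steps stay unconstrained (the paper allocates the third $\Delta$-condition primarily to unconstrainedness, with the contraction factor $\le 1/2$ following from it since $m_H \le M_H$), and convert gradient decay to distance decay via restricted strong convexity along the radial direction. Your derivation of $\norm{\mb h(\mb z)}{} \lesssim \norm{\nabla f(\mb z)}{}/m_H$ from Lemma~\ref{lem:hessian-lower-upper} is a cosmetic variant of the paper's appeal to the bound $\norm{\nabla f(\mb z)}{} \ge \tfrac{1}{4\sqrt{2}}\norm{\mb x}{}^2\norm{\mb h(\mb z)}{}$ from the proof of Proposition~\ref{prop:func-value-decrease-R-3}, and yields a constant at least as good, so the stated inequality still follows.
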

\begin{proof}
	See Section~\ref{pf:prop:unconstraint-steps} on Page~\pageref{pf:prop:unconstraint-steps}.
\end{proof}

Now we are ready to piece together the above technical propositions to prove our main algorithmic theorem.
\begin{theorem}[TRM Convergence] \label{thm:TRM-conv}
Suppose $m \ge Cn\log^3 n$ for a sufficiently large constant $C$. Then with probability at least $1 - c_am^{-1}$, the trust-region algorithm with an \emph{arbitrary initialization} $\mb z^{(0)} \in \bb{CB}^n(R_0)$, where $R_0 = 3(\frac{1}{m}\sum_{k=1}^m y_k^2)^{1/2}$, will return a solution that is $\eps$-close to the target set $\mc X$ in 
\begin{align}
\frac{c_b}{\Delta^2 \norm{\mb x}{}^2} f(\mb z^{(0)}) + \log \log \paren{\frac{c_c \norm{\mb x}{}}{\eps}}
\end{align}
steps, provided that 
\begin{align}
\Delta \le c_d (n^{7/2} \log^{7/2} m)^{-1} \norm{\mb x}{}. 
\end{align}
Here $c_a$ through $c_d$ are positive absolute constants. 
\end{theorem}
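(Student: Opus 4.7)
The plan is to assemble the regional propositions established above into a single iteration count. First I would verify invariance of $\Gamma = \bb{CB}^n(R_1)$ along the trajectory: once $\Delta$ is chosen so that every trust-region step strictly decreases $f$, the sequence stays in the sublevel set $\set{\mb z : f(\mb z) \le f(\mb z^{(0)})}$, which the estimate preceding Lemma~\ref{lem:lipschitz} guarantees to be contained in $\Gamma$. This confinement ensures the Lipschitz bounds of Lemma~\ref{lem:lipschitz} apply at every iterate, and that Lemma~\ref{lem:hessian-lower-upper} governs the restricted Hessian throughout $\mc R_3'$.

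Next I would set $\Delta$ to simultaneously satisfy the six upper-bound requirements of Propositions~\ref{prop:negative-curvature}--\ref{prop:unconstraint-steps}. Substituting $\norm{\mb A}{\ell^1 \to \ell^2}^2 \lesssim n\log m$ (a standard maximum-of-Gaussians bound, w.h.p.) into the Lipschitz estimates of Lemma~\ref{lem:lipschitz} yields $L_g \lesssim (n\log m)^2 \norm{\mb x}{}^2$ and $L_h \lesssim (n\log m)^{3/2}\norm{\mb x}{}$, and combined with $m_H \asymp M_H \asymp \norm{\mb x}{}^2$ the tightest binding constraint (coming from Proposition~\ref{prop:unconstraint-steps}) consolidates to $\Delta \le c_d \norm{\mb x}{}/(n^{7/2}\log^{7/2} m)$. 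Under this choice of $\Delta$, the four per-step decrements satisfy $d_{\min} \doteq \min(d_1,d_2,d_3,d_4) \gtrsim \Delta^2 \norm{\mb x}{}^2$, because $d_2$ and $d_3$ scale linearly in $\Delta$ and hence are dominated by the $\Delta^2$-scaling $d_1, d_4$ for the small $\Delta$ prescribed.

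With $\Delta$ fixed I would classify each iteration into one of five cases according to the region containing $\mb z^{(r)}$ (invoking Theorem~\ref{thm:finite-landscape} together with Proposition~\ref{prop:region-cover} to cover $\Gamma$ by $\mc R_1 \cup \mc R_2 \cup \mc R_3$) and whether the trust-region constraint is active. In cases (i)--(iv), corresponding to $\mc R_1$, $\mc R_2\cap \mc R_1^c$, $\mc R_3\setminus \mc R_3'$, and constrained $\mc R_3'$ steps, Propositions~\ref{prop:negative-curvature}--\ref{prop:fix-decrease-R-3'} respectively give a decrease of at least $d_{\min}$. Since $f\ge 0$, the total number of such iterations is at most $f(\mb z^{(0)})/d_{\min} \lesssim f(\mb z^{(0)})/(\Delta^2 \norm{\mb x}{}^2)$, which matches the first term in the claimed bound. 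As soon as case (v)---an unconstrained $\mc R_3'$ step---occurs, Proposition~\ref{prop:unconstraint-steps} takes over and certifies that all further steps are unconstrained, with $\norm{\mb h(\mb z^{(r+r')})}{} \lesssim \norm{\mb x}{}\cdot 2^{-2^{r'}}$; hence $r' = \log\log(c_c \norm{\mb x}{}/\eps)$ additional steps suffice to drive $\dist(\mb z, \mc X)$ below $\eps$, accounting for the second term.

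The main subtlety is showing that an unconstrained $\mc R_3'$ step must eventually be triggered: a priori the algorithm could take constrained $\mc R_3'$ steps indefinitely. This is resolved by Proposition~\ref{prop:fix-decrease-R-3'}, which provides the matching decrease $d_4 \gtrsim \Delta^2 \norm{\mb x}{}^2$ per constrained $\mc R_3'$ step, so cases (i)--(iv) contribute a combined finite count before non-negativity of $f$ forces a transition to the unconstrained regime. A secondary but purely mechanical task is the polynomial bookkeeping needed to consolidate the six $\Delta$ constraints, via the Lipschitz estimates, into the single bound $\Delta \le c_d \norm{\mb x}{}/(n^{7/2}\log^{7/2} m)$; the factor $n^{7/2}\log^{7/2}m$ emerges from the ratio $m_H^3/(M_H^2 L_h)$ in Proposition~\ref{prop:unconstraint-steps} after the union-bound event that controls $\norm{\mb A}{\ell^1\to\ell^2}$ is absorbed.
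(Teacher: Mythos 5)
Your proposal is correct and follows essentially the same route as the paper: confine the iterates to $\Gamma$ via the sublevel-set argument, charge each $\mc R_1$, $\mc R_2$, $\mc R_3\setminus\mc R_3'$, and constrained-$\mc R_3'$ step a uniform decrease $d_{\min}\gtrsim \Delta^2\norm{\mb x}{}^2$ so that nonnegativity of $f$ bounds that phase by $f(\mb z^{(0)})/d_{\min}$, and then invoke the unconstrained-$\mc R_3'$ quadratic convergence for the $\log\log(1/\eps)$ tail. One small bookkeeping slip: the binding constraint that produces the $n^{7/2}\log^{7/2}m$ factor is $\Delta \le \norm{\mb x}{}^4/(160 L_h L_g)$ from Proposition~\ref{prop:func-value-decrease-R-3} (since $L_hL_g \sim (n\log m)^{7/2}\norm{\mb x}{}^3$), not the $m_H^3/(M_H^2 L_h)$ term of Proposition~\ref{prop:unconstraint-steps}, which only scales like $(n\log m)^{3/2}$; this does not affect the conclusion.
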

\begin{proof}
When $m \ge C_1 n \log^3 n$ for a sufficiently large constant $C_1$, the assumption of Theorem \ref{thm:finite-landscape} is satisfied. Moreover, with probability at least $1-c_2m^{-1}$, the following estimates hold: 
\begin{gather*}
L_f = C_3 n^{5/2} \log^{5/2} m \norm{\mb x}{}^3, \quad L_g = C_3 n^2 \log^2 m \norm{\mb x}{}^2, \quad L_h = C_3 n^{3/2} \log^{3/2} m \norm{\mb x}{}, \\
m_H = 22/25 \norm{\mb x}{}^2, \quad M_H  = 9/2 \norm{\mb x}{}^2  
\end{gather*}
for a certain positive absolute constant $C_3$. From the technical lemmas and propositions in Section~\ref{sec:conv_trm}, it can be verified that when 
\begin{align*}
\Delta \le c_4 (n^{7/2} \log^{7/2} m)^{-1}\norm{\mb x}{}, 
\end{align*}
for a positive absolute constant $c_4$, all requirements on $\Delta$ are satisfied. 

Write $\mc R_A \doteq \Gamma \setminus \mc R_3'$, where $\Gamma \doteq \bb{CB}^n(R_1)$ with $R_1 = 3 \sqrt{n \log m} R_0$. Then a step in $\Gamma$ is either a $\mc R_A$/constrained $\mc R_3'$ step that reduces the objective value by a concrete amount, or an unconstrained $\mc R_3'$ step with all subsequent steps being unconstrained $\mc R_3'$. From discussion in Section~\ref{sec:conv_overview}, for an arbitrary initialization $\mb z^{(0)} \in \Gamma$, our choice of $R_1$ ensures that w.h.p.\ the sublevel set $\Pi \doteq \set{\mb z: f(\mb z) \le f(\mb z^{(0)})}$ is contained in $\Gamma$. Since $\mc R_A$ and constrained $\mc R_3'$ steps reduce the objective function, their following iterates stay in $\Pi$, and hence also stay in $\Gamma$. Moreover, $\mc R_3' \subset \Gamma$ and unconstrained $\mc R_3'$ steps stay within $\mc R_3'$. Thus, the iterate sequence as a whole stays in $\Gamma$. 

In fact, the previous argument implies a generic iterate sequence consists of two phases: the first phase that takes consecutive $\mc R_A$ or constrained $\mc R_3'$ steps, and thereafter the second phase that takes consecutive unconstrained $\mc R_3'$ steps till convergence. Either of the two can be absent depending on the initialization and parameter setting for the TRM algorithm.  
 
Since $f \ge 0$, by Proposition~\ref{prop:negative-curvature}, ~\ref{prop:large-gradient}, ~\ref{prop:func-value-decrease-R-3}, and~\ref{prop:fix-decrease-R-3'}, from $\mb z^{(0)}$ it takes at most 
\begin{align*}
f(\mb z^{(0)}) / \min(d_1, d_2, d_3, d_4)
\end{align*} 
steps for the iterate sequence to enter $\mc R_3'$.\footnote{\js{It is possible to refine the argument a bit by proving that the sequence does not exit $\mc R_3'$ once entering it, in which case the bound can be tightened as $f(\mb z^{(0)}) /\min(d_1, d_2, d_3)$. We prefer to state this crude bound to save the additional technicality. }}
Let $r_0$ denote the index of the first iteration for which $\mb z^{(r_0)} \in \mc R_3'$. Once the sequence enters $\mc R_3'$, there are three possibilities: 
\begin{itemize}
	\item The sequence always takes constrained steps in $\mc R_3'$ and since the function $f(\mb z)$ is lower bounded ($\ge 0$), it reaches the target set $\mc X$ in finitely many steps.
	\item The sequence takes constrained steps until reaching certain point $\mb z' \in \mc R_3'$ such that $f(\mb z') \leq f(\mb x) + d_4 = d_4$, where $d_4$ is defined in Proposition \ref{prop:fix-decrease-R-3'}. Since a constrained step in $\mc R_3'$ must decrease the function value by at least $d_4$, all future steps must be unconstrained. Proposition \ref{prop:unconstraint-steps} suggests that the sequence will converge quadratically to the target set $\mc X$.
	\item The sequence starts to take unconstrained steps at a certain point $\mb z'' \in \mc R_3'$ such that $f(\mb z'')\geq f(\mb x) + d_4$. Again Proposition \ref{prop:unconstraint-steps} implies that the sequence will converge quadratically to the target set $\mc X$.
\end{itemize}
In sum, by Proposition~\ref{prop:negative-curvature}, Proposition~\ref{prop:large-gradient}, Proposition~\ref{prop:func-value-decrease-R-3}, Proposition~\ref{prop:fix-decrease-R-3'}, and Proposition~\ref{prop:unconstraint-steps}, the number of iterations to obtain an $\eps$-close solution to the target set $\mc X$ can be grossly bounded by
\begin{align*}
\#\text{Iter}\;&\leq\; \frac{f(\mb z^{(0)}) }{\min\Brac{d_1,d_2,d_3,d_4 } }+ \log\log \paren{ \frac{4\sqrt{2}m_H^2 }{  (L_h + 32M_H/\norm{\mb x}{}) \norm{\mb x}{}^2\eps } }. 
\end{align*}
Using our previous estimates of $m_H$, $M_H$, and $L_H$, and taking $\min\{d_1, d_2, d_3, d_4\} = c_5 \Delta^2 \norm{\mb x}{}^2$, we arrive at the claimed result. 
\end{proof}
\js{
\begin{remark}[On Stability of Our Analysis and Total Complexity] \label{rmk:complexity}
Our above results are conditioned on exact arithmetic for solving the trust-region subproblem. We first discuss the effect of inexact arithmetics. Fix an $\eps$ to be determined later, suppose our trust-region subproblem solver always returns a $\wh{\mb w}$ such that $Q(\wh{\mb w}) - Q(\mb w_\star) \le \eps$ -- based on our discussion in Section~\ref{sec:trust-region-algorithm}, such $\wh{\mb w}$ can always be found in $O(n^3 \log(1/\eps))$ time. Then, in Lemma~\ref{lem:func-decent}, the trust-region subproblem returns a feasible point $\norm{\mb \delta_\star}{}$ such that 
\begin{align*}
f\paren{\mb z + \mb \delta_\star} \le f\paren{\mb z} -d + \frac{2}{3} L_h \Delta^3 + \eps. 
\end{align*}
Accordingly, the $d_i$ for $i \in [4]$ in Proposition~\ref{prop:negative-curvature} through Proposition~\ref{prop:fix-decrease-R-3'} are changed to $d_i - \eps$, with other conditions unaltered. Thus, combining the above with arguments in Theorem~\ref{thm:TRM-conv}, we can take 
\begin{align*}
\eps = \frac{1}{2} \min\paren{d_1, d_2, d_3, d_4} = c \Delta^2 \norm{\mb x}{}^2,  
\end{align*} 
such that from an initialization $\mb z^{(0)}$, the iterate sequence takes at most 
\begin{align*}
\frac{2f\paren{\mb z^{(0)}}}{\min\paren{d_1, d_2, d_3, d_4}} = \frac{2cf\paren{\mb z^{(0)}}}{\Delta^2 \norm{\mb x}{}^2}
\end{align*}
steps to stay in region $\mc R_3'$ and possibly to start consecutive unconstrained steps. For the unconstrained steps in $\mc R_3'$, since exact arithmetic is possible as we discussed in Section~\ref{sec:trust-region-algorithm}, the results in Proposition~\ref{prop:grad-quad-conv} and Proposition~\ref{prop:unconstraint-steps} are intact. The step estimate for this part in Theorem~\ref{thm:TRM-conv} remains valid. Overall, by our above choice, inexact arithmetics at most double the number of iterations to attain an $\eps$-near solution to $\mc X$. 

If we set $\Delta = c n^{-7/2} \log^{-7/2} m \norm{\mb x}{}$, then each trust-region iteration costs $O(n^3 \log\paren{n \norm{\mb x}{} \log m})$. Moreover, it takes 
\begin{align*}
O\paren{\frac{1}{\norm{\mb x}{}^4} n^7 \log^7 m + \log\log\paren{\frac{\norm{\mb x}{}}{\eps}}}
\end{align*}
iterations to arrive at an $\eps$-near solution to $\mc X$.

\end{remark}
}


\section{Numerical Simulations}
Our convergence analysis for the TRM is based on two idealizations: (i) the trust-region subproblem is solved exactly; and (ii) the step-size is fixed to be sufficiently small. These simplifications ease the analysis, but also render the TRM algorithm impractical. In practice, the trust-region subproblem is never exactly solved, and the trust-region step size is adjusted to the local geometry, by backtracking. It is relatively straightforward to modify our analysis to account for inexact subproblem solvers; for sake of brevity, we do not pursue this here\footnote{\js{The proof ideas are contained in Chap 6 of~\cite{conn2000trust}; see also~\cite{absil2009}. Intuitively, such result is possible because reasonably good approximate solutions to the TRM subproblem make qualitatively similar progress as the exact solution. Recent work~\cite{cartis2012complexity,boumal2016global} has established worst-case polynomial iteration complexity (under reasonable assumptions on the geometric parameters of the functions, of course) of TRM to converge to point verifying the second-order optimality conditions. Their results allow inexact trust-region subproblem solvers, as well as adaptive step sizes. Based on our geometric result, we could have directly called their results, producing slightly worse iteration complexity bounds. It is not hard to adapt their proof taking advantage of the stronger geometric property we established and produce tighter results.} }. 


In this section, we investigate experimentally the number of measurements $m$ required to ensure that $f(\mb z)$ is well-structured, in the sense of our theorems. This entails solving large instances of $f(\mb z)$. To this end, we deploy the Manopt toolbox~\cite{boumal2014manopt}\footnote{Available online: \url{http://www.manopt.org}. }. Manopt is a user-friendly Matlab toolbox that implements several sophisticated solvers for tackling optimization problems on Riemannian manifolds. The most developed solver is based on the TRM. This solver uses the truncated conjugate gradient (tCG; see, e.g., Section 7.5.4 of~\cite{conn2000trust}) method to (approximately) solve the trust-region subproblem (vs.\ the exact solver in our analysis). It also dynamically adjusts the step size. However, the original implementation (Manopt 2.0) is not adequate for our purposes. Their tCG solver uses the gradient as the initial search direction, which does not ensure that the TRM solver can escape from saddle points~\cite{absil2007trust,absil2009}. We modify the tCG solver, such that when the current gradient is small and there is a negative curvature direction (i.e., the current point is near a saddle point or a local maximizer for $f(\mb z)$), the tCG solver explicitly uses the negative curvature direction\footnote{...adjusted in sign to ensure positive correlation with the gradient -- if it does not vanish.} as the initial search direction. This modification\footnote{\js{Similar modification is also adopted in the TRM algorithmic framework in the recent work~\cite{boumal2016global} (Algorithm 3).} } ensures the TRM solver always escapes saddle points/local maximizers with directional negative curvature. Hence, the modified TRM algorithm based on Manopt is expected to have the same qualitative behavior as the idealized version we analyzed. 

We fix $n = 1,000$ and vary the ratio $m/n$ from $4$ to $10$. 
\begin{figure}[!htbp] 
		\centering
		\begin{subfigure}{0.5\textwidth}
		\includegraphics[width=\linewidth]{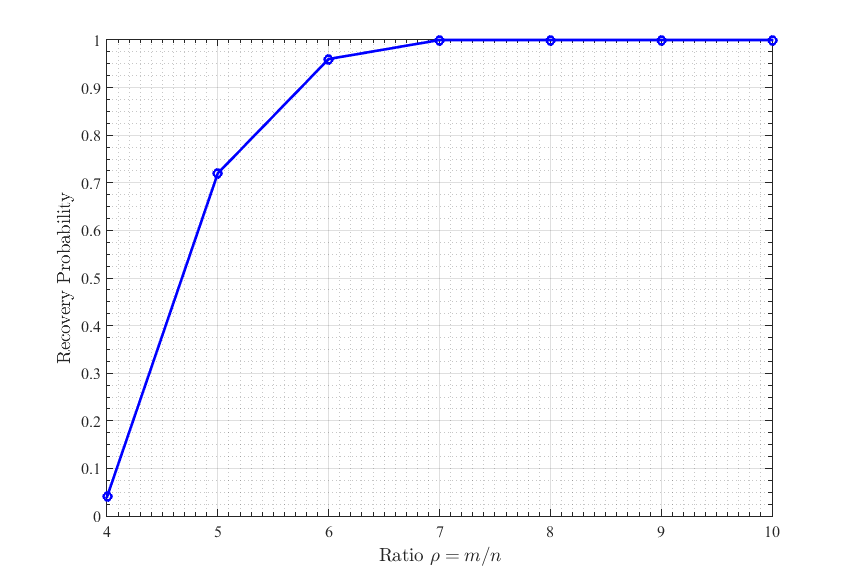} 
		\end{subfigure}
		\begin{subfigure}{0.4\textwidth}
		\includegraphics[width=\linewidth]{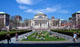}
		\end{subfigure}
    \caption{(Left) Recovery performance for GPR when optimizing~\eqref{eqn:finite-f} with the TRM. With $n = 1000$ and $m$ varying,  we consider a fixed problem instance for each $m$, and run the TRM algorithm $25$ times from independently random initializations. The empirical recovery probability is a test of whether the benign geometric structure holds. (Right) A small ``artistic'' Columbia University campus image we use for comparing TRM and gradient descent. }
    \label{fig:m_n_scaling}
\end{figure}
For each $m$, we generate a fixed instance: a fixed signal $\mb x$, and a fixed set of complex Gaussian vectors. We run the TRM algorithm $25$ times for each problem instance, with independent random initializations. Successfully recovery is declared if at termination the optimization variable $\mb z_{\infty}$ satisfies 
\begin{align*}
\eps_{\mathrm{Rel}} \doteq \|\mb z_{\infty} - \mb x \e^{\im \phi(\mb z_{\infty})}\| /\norm{\mb x}{} \le 10^{-3}. 
\end{align*}
The recovery probability is empirically estimated from the $25$ repetitions for each $m$. Intuitively, when the recovery probability is below one, there are spurious local minimizers. In this case, the number of samples $m$ is not large enough to ensure the finite-sample function landscape $f(\mb z)$ to be qualitatively the same as the asymptotic version $\bb E_{\mb a}[f(\mb z)]$. Figure~\ref{fig:m_n_scaling} shows the recovery performance. It seems that $m = 7n$ samples may be sufficient to ensure the geometric property holds.\footnote{This prescription should be taken with a grain of salt, as here we have only tested a single fixed $n$.} On the other hand, $m = 6n$ is not sufficient, whereas in theory it is known $4n$ samples are enough to guarantee measurement injectivity for complex signals~\cite{balan2006signal}.\footnote{Numerics in~\cite{chen2015solving} suggest that under the same measurement model, $m = 5n$ is sufficient for efficient recovery. Our requirement on control of the whole function landscape and hence ``initialization-free" algorithm may need the additional complexity.} 

We now briefly compare TRM and gradient descent in terms of running time. We take a small ($n = 80 \times 47$) image of Columbia University campus (Figure~\ref{fig:m_n_scaling} (Right)), and make $m = 5n \log n$ complex Gaussian measurements. The TRM solver is the same as above, and the gradient descent solver is one with backtracking line search. We repeat the experiment $10$ times, with independently generated random measurements and initializations each time. On average, the TRM solver returns a solution with $\eps_{\mathrm{Rel}} \le 10^{-4}$ in about 2600 seconds, while the gradient descent solver produces a solution with $\eps_{\mathrm{Rel}} \sim 10^{-2}$ in about 6400 seconds. The point here is not to exhaustively benchmark the two -- they both involve many implementation details and tuning parameters and they have very different memory requirements. It is just to suggest that second-order methods can be implemented in a practical manner for large-scale GPR problems.\footnote{The main limitation in this experiment was not the TRM solver, but the need to store the vectors $\mb a_1, \dots \mb a_m$. For other measurement models, such as the coded diffraction model~\cite{candes2015diffraction}, ``matrix-free'' calculation is possible, and storage is no longer a bottleneck.}
\label{sec:experiment}


\section{Discussion}
In this work, we provide a complete geometric characterization of the nonconvex formulation~\eqref{eqn:finite-f} for the GPR problem. The benign geometric structure allows us to design a second-order trust-region algorithm that efficiently finds a global minimizer of~\eqref{eqn:finite-f}, without special initializations. We close this paper by discussing possible extensions and relevant open problems. 

\paragraph{Sample complexity and measurement schemes.} Our result (Theorem~\ref{thm:finite-landscape} and Theorem~\ref{thm:TRM-conv}) indicates that $m \ge C_1 n\log^3(n)$ samples are sufficient to guarantee the favorable geometric property and efficient recovery, while our simulations suggested that $C_2n\log(n)$ or even $C_3n$ is enough. For efficient recovery only, $m \ge C_4n$ are known to be sufficient~\cite{chen2015solving} (and for all signals; see also~\cite{candes2015phase,wang2016solving,zhang2016reshaped}). It is interesting to see if the gaps can be closed. Our current analysis pertains to Gaussian measurements only which are not practical, it is important to extend the geometric analysis to more practical measurement schemes, such as t-designs~\cite{david2013partial} and masked Fourier transform measurements~\cite{candes2015diffraction}. A preliminary study of the low-dimensional function landscape for the latter scheme (\js{for reduced real version}) produces very positive result; see Figure~\ref{fig:geo_diffraction}.  
\begin{figure}[!htbp]
    \centering
    \includegraphics[width=0.8\linewidth]{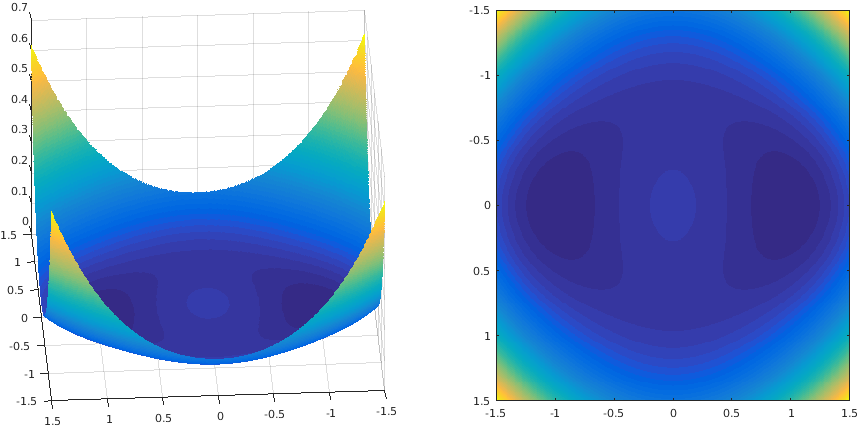}
    \caption{Function landscape of~\eqref{eqn:finite-f} for $\mb x =[1; 0]$ and $m \to \infty$ for \js{the real-value-masked discrete cosine transform measurements (i.e., real-valued version of the coded diffraction model~\cite{candes2015diffraction}). The mask takes i.i.d. values from $\set{1, 0, -1}$; each entry takes $1$ or $-1$ with probability $1/4$ respectively, and takes $0$ with probability $1/2$.} The landscape is qualitatively similar to that for the Gaussian model (Figure~\ref{fig:geo-2d}).}
    \label{fig:geo_diffraction}
\end{figure}

\paragraph{Sparse phase retrieval.} A special case of GPR is when the underlying signal $\mb x$ is known to be sparse, which can be considered as a quadratic compressed sensing problem \cite{ohlsson2013basis,ohlsson2013quadratic,ohlsson2012CPRL,li2013sparse,jaganathan2013sparse,shechtman2014GESPAR}.  Since $\mb x$ is sparse, the lifted matrix $\mb X = \mb x\mb x^*$ is sparse and has rank one. Thus, existing convex relaxation methods~\cite{ohlsson2013basis,ohlsson2013quadratic,li2013sparse,jaganathan2013sparse} formulated it as a simultaneously low-rank and sparse recovery problem. For the latter problem, however, known convex relaxations are suboptimal~\cite{oymak2012simultaneously,MuHWG14}. Let $k$ be the number of nonzeros in the target signal. \cite{li2013sparse,jaganathan2013sparse} showed that natural convex relaxations require $C_5k^2\log n$ samples for correct recovery, instead of the optimal order $O(k\log(n/k)$. A similar gap is also observed with certain nonconvex methods~\cite{cai2015optimal}. It is tempting to ask whether novel nonconvex formulations and analogous geometric analysis as taken here could shed light on this problem. 

\paragraph{Other structured nonconvex problems.} We have mentioned recent surge of works on provable nonconvex heuristics
\cite{jain2013low,hardt2014understanding,hardt2014fast,netrapalli2014non,jain2014fast,sun2014guaranteed, jain2014provable, wei2015guarantees,sa2015global,zheng2015convergent,tu2015low,chen2015fast, anandkumar2014analyzing,anandkumar2014guaranteed,anandkumar2015tensor,ge2015escaping,qu2014finding,hopkins2015speeding,  agarwal2013learning,arora2013new,agarwal2013exact,arora2014more,arora2015simple,sun2015complete, yi2013alternating,sedghi2014provable, lee2013near, lee2015rip, lee2015blind, eftekhari2015greed, boumal2016nonconvex, jain2015computing}. While the initialization plus local refinement analyses generally produce interesting theoretical results, they do not explain certain empirical successes that do not rely on special initializations. The geometric structure and analysis we work with in our recent work~\cite{sun2015complete,sun2015nonconvex} (see also~\cite{ge2015escaping, anandkumar2016efficient}, and~\cite{kawaguchi2016deep,soudry2016no,bhojanapalli2016global,ge2016matrix,park2016non,bandeira2016low,boumal2016non}) seem promising in this regard. It is interesting to consider whether analogous geometric structure exists for other practical problems. 
\label{sec:discussion}


\section{Proofs of Technical Results for Function Landscape}  \label{app:finite}

\subsection{Auxiliary Lemmas}\label{app:aux-landscape}

\js{Lemma~\ref{lem:expect-func} to~\ref{lem:op_norm_1} first appeared in~\cite{candes2015phase}; we include the full proofs here for completeness. }

\begin{lemma}\label{lem:expect-func}
	For the function $f(\mb z): \bb C^n \mapsto \bb R$ defined in \eqref{eqn:finite-f}, we have
	\begin{align}
		\expect{f(\mb z)} &= \norm{\mb x}{}^4 + \norm{\mb z}{}^4 - \norm{\mb x}{}^2 \norm{\mb z}{}^2 - \abs{\mb x^* \mb z}^2,  \label{eqn:expect-f} \\
		\nabla \expect{f(\mb z)} &= \begin{bmatrix}
			\nabla_{\mb z} \expect{f(\mb z)} \\ \nabla_{\ol{\mb z}} \expect{f(\mb z)}
		\end{bmatrix} = \begin{bmatrix}
			\paren{2\norm{\mb z}{}^2 \mb I - \norm{\mb x}{}^2 \mb I - \mb x \mb x^*} \mb z \\
			\paren{2\norm{\mb z}{}^2\mb I  - \norm{\mb x}{}^2 \mb I - \mb x \mb x^*} \ol{\mb z} 
		\end{bmatrix}, \label{eqn:expect-grad} \\
		\nabla^2 \expect{f(\mb z)} &= 
\begin{bmatrix}
2 \mb z \mb z^* - \mb x \mb x^* + \paren{2\norm{\mb z}{}^2 - \norm{\mb x}{}^2} \mb I & 2\mb z \mb z^\top  \\
2 \ol{\mb z} \mb z^* & 2 \ol{\mb z} \mb z^\top - \ol{\mb x} \mb x^\top + \paren{2\norm{\mb z}{}^2 - \norm{\mb x}{}^2} \mb I
\end{bmatrix}. \label{eqn:expect-hess}
	\end{align}
\end{lemma}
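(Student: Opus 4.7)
The plan is to compute the three displayed formulas by direct evaluation of moments of complex Gaussian quadratic forms, followed by a routine application of Wirtinger calculus.

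First, I would expand a single summand:
\begin{align*}
(y_k^2 - \abs{\mb a_k^* \mb z}^2)^2 = \abs{\mb a_k^* \mb x}^4 - 2\abs{\mb a_k^* \mb x}^2 \abs{\mb a_k^* \mb z}^2 + \abs{\mb a_k^* \mb z}^4,
\end{align*}
and then evaluate each of the three expectations. For $\mb a \sim \mc{CN}(n)$ and any fixed $\mb u \in \Cp^n$, the scalar $\mb a^* \mb u$ is $\mc{CN}(0, \norm{\mb u}{}^2)$, so $\abs{\mb a^* \mb u}^2$ is exponential with mean $\norm{\mb u}{}^2$, giving $\expect{\abs{\mb a^* \mb u}^4} = 2\norm{\mb u}{}^4$. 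For the cross term I would write $\abs{\mb a^* \mb x}^2 \abs{\mb a^* \mb z}^2 = (\mb x^* \mb a)(\mb a^* \mb x)(\mb z^* \mb a)(\mb a^* \mb z)$ and apply Isserlis/Wick's formula for centered complex Gaussians (only the two pairings that match $\mb a$ with $\mb a^*$ survive), yielding $\norm{\mb x}{}^2 \norm{\mb z}{}^2 + \abs{\mb x^* \mb z}^2$. Combining and dividing by $2$ produces~\eqref{eqn:expect-f}.

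Next, with $\expect{f}$ now an explicit polynomial in $\mb z$ and $\bar{\mb z}$, I would read off the Wirtinger gradient by treating $\bar{\mb z}$ as a formal constant and differentiating each term. The relevant elementary identities are
\begin{align*}
\tfrac{\partial}{\partial \mb z}(\mb z^* \mb z) = \mb z^*, \qquad \tfrac{\partial}{\partial \mb z}(\mb z^* \mb z)^2 = 2\norm{\mb z}{}^2 \mb z^*, \qquad \tfrac{\partial}{\partial \mb z}(\mb z^* \mb x \mb x^* \mb z) = \mb z^* \mb x \mb x^*,
\end{align*}
which assemble into $\partial \expect{f}/\partial \mb z = \mb z^*(2\norm{\mb z}{}^2 \mb I - \norm{\mb x}{}^2 \mb I - \mb x \mb x^*)$. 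Taking the conjugate transpose gives $\nabla_{\mb z} \expect{f}$, and $\nabla_{\ol{\mb z}} \expect{f}$ is just its conjugate (since $\expect{f}$ is real-valued), producing~\eqref{eqn:expect-grad}. For the Hessian I would differentiate $(\partial \expect{f}/\partial \mb z)^*$ once more with respect to $\mb z$ and $\bar{\mb z}$, using the same rules, to obtain the four block entries in~\eqref{eqn:expect-hess}; the $(\partial/\partial \bar{\mb z})(\partial/\partial \mb z)^*$ block vanishes on the $\norm{\mb z}{}^4$ and $\abs{\mb x^* \mb z}^2$ cross-derivatives except where both $\mb z$ and $\bar{\mb z}$ appear, yielding the $2\mb z \mb z^*$ and $-\mb x \mb x^*$ pieces, while the off-diagonal block picks up $2\mb z \mb z^\top$ from the sole $\bar{\mb z}^\top \bar{\mb z}$-in-$\mb z$ mixing in $(\mb z^* \mb z)^2$.

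The only place with any real content is the Wick calculation of the cross moment; everything else is bookkeeping. The main obstacle, if any, is just to be careful with the Wirtinger convention (row vs.\ column, which variable is held formally constant) so that each of the four Hessian blocks lands in the correct slot of~\eqref{eqn:expect-hess}. Since $\expect{f}$ is a rotationally symmetric quartic in $(\mb z, \mb x)$, the resulting expressions can be sanity-checked by verifying unitary invariance and by specializing to $\mb z = \mb x$ and $\mb z = \mb 0$.
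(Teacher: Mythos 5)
Your proposal is correct and follows essentially the same route as the paper: compute $\expect{f}$ via the second and fourth moments of complex Gaussian quadratic forms, then obtain \eqref{eqn:expect-grad} and \eqref{eqn:expect-hess} by Wirtinger differentiation of the resulting explicit polynomial. The only (immaterial) difference is that you evaluate the cross moment $\expect{\abs{\mb a^*\mb x}^2\abs{\mb a^*\mb z}^2}$ by Wick's formula, whereas the paper uses unitary invariance to reduce to a two-coordinate computation; both yield $\norm{\mb x}{}^2\norm{\mb z}{}^2 + \abs{\mb x^*\mb z}^2$.
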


\begin{proof}
	By definition \eqref{eqn:finite-f}, notice that 
\begin{align*}
\expect{f(\mb z)} 
& = \frac{1}{2} \bb E_{\mb a \sim \mc {CN}} \brac{\paren{\abs{\innerprod{\mb a}{\mb x}}^2 - \abs{\innerprod{\mb a}{\mb z}}^2}^2} \\
& = \frac{1}{2} \bb E_{\mb a \sim \mc {CN}} \brac{\abs{\innerprod{\mb a}{\mb x}}^4} + \frac{1}{2} \bb E_{\mb a \sim \mc {CN}} \brac{\abs{\innerprod{\mb a}{\mb z}}^4} - \bb E_{\mb a \sim \mc {CN}} \brac{\abs{\innerprod{\mb a}{\mb x}}^2\abs{\innerprod{\mb a}{\mb z}}^2}. 
\end{align*}
We now evaluate the three terms separately. Note that the law $\mc {CN}$ is invariant to unitary transform. Thus, 
\begin{align*}
\bb E_{\mb a \sim \mc {CN}} \brac{\abs{\innerprod{\mb a}{\mb x}}^4} = \bb E_{\mb a \sim \mc {CN}} \brac{\abs{\innerprod{\mb a}{\mb e_1}}^4} \norm{\mb x}{}^4 = \bb E_{a \sim \mc N(0, 1/2) + \im \; \mc N(0, 1/2)} \brac{\abs{a}^4} \norm{\mb x}{}^4 = 2\norm{\mb x}{}^4. 
\end{align*}
Similarly, we also obtain $\bb E_{\mb a \sim \mc {CN}} \brac{\abs{\innerprod{\mb a}{\mb z}}^4} = 2\norm{\mb z}{}^4$. Now for the cross term, 
\begin{align*}
\MoveEqLeft \bb E_{\mb a \sim \mc {CN}} \brac{\abs{\innerprod{\mb a}{\mb x}}^2\abs{\innerprod{\mb a}{\mb z}}^2} \\
=\; & \bb E_{\mb a \sim \mc {CN}} \brac{\abs{\innerprod{\mb a}{\mb e_1}}^2\abs{\innerprod{\mb a}{s_1 \e^{\im \phi_1} \mb e_1 + s_2 \e^{\im \phi_2} \mb e_2}}^2} \norm{\mb x}{}^2 \norm{\mb z}{}^2 \quad [\text{where $s_1^2 + s_2^2 = 1$}] \\
=\; & \bb E_{\mb a \sim \mc {CN}} \brac{\abs{a_1}^2\abs{s_1 \ol{a_1} \e^{\im \phi_1} + s_2 \ol{a_2} \e^{\im \phi_2}}^2} \norm{\mb x}{}^2 \norm{\mb z}{}^2\\
=\; & \bb E_{\mb a \sim \mc {CN}} \brac{\abs{a_1}^2\paren{s_1^2 \abs{a_1}^2 + s_2^2 \abs{a_2}^2}} \norm{\mb x}{}^2 \norm{\mb z}{}^2 \\
=\; & \paren{1+ s_1^2} \norm{\mb x}{}^2 \norm{\mb z}{}^2 = \norm{\mb x}{}^2 \norm{\mb z}{}^2  + \abs{\mb x^* \mb z}^2. 
\end{align*}
Gathering the above results, we obtain \eqref{eqn:expect-f}. By taking Wirtinger derivative \eqref{eqn:wirtinger} with respect to \eqref{eqn:expect-f}, we obtain the Wirtinger gradient and Hessian in \eqref{eqn:expect-grad}, \eqref{eqn:expect-hess} as desired.	
\end{proof}

\begin{lemma} \label{lem:fourth_exp}
For $\mb a \sim \mc{CN}(n)$ and any fixed vector $\mb v \in \Cp^n$, it holds that 
\begin{align*}
\expect{\abs{\mb a^* \mb v}^2 \mb a \mb a^*} = \mb v \mb v^* + \norm{\mb v}{}^2 \mb I,  \quad \text{and} \quad \expect{\paren{\mb a^* \mb v}^2 \mb a \mb a^\top} = 2 \mb v \mb v^\top. 
\end{align*}
\end{lemma}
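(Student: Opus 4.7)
The plan is to reduce both identities to scalar calculations via unitary invariance of $\mc{CN}(n)$, analogously to what is done in Lemma~\ref{lem:expect-func}. The case $\mb v = \mb 0$ is trivial on both sides, so assume $\mb v \neq \mb 0$ and pick any unitary $\mb U$ with $\mb U \mb v = \norm{\mb v}{} \mb e_1$. Since $\mb U \mb a \sim \mc{CN}(n)$, substituting $\mb a \mapsto \mb U^* \mb a$ inside the expectations yields
\begin{align*}
\expect{\abs{\mb a^* \mb v}^2 \mb a \mb a^*} &= \mb U^* \expect{\abs{\mb a^* \mb U \mb v}^2 \mb a \mb a^*} \mb U = \norm{\mb v}{}^2\, \mb U^* \expect{\abs{a_1}^2 \mb a \mb a^*} \mb U, \\
\expect{\paren{\mb a^* \mb v}^2 \mb a \mb a^\top} &= \mb U^* \expect{\paren{\mb a^* \mb U \mb v}^2 \mb a \mb a^\top} \ol{\mb U} = \norm{\mb v}{}^2\, \mb U^* \expect{\ol{a_1}^2 \mb a \mb a^\top} \ol{\mb U}.
\end{align*}
It then suffices to evaluate the two ``canonical'' expectations on the right, after which I will push $\mb U$ back through using $\mb U^* \mb e_1 = \mb v/\norm{\mb v}{}$ and $\mb e_1^\top \ol{\mb U} = \mb v^\top / \norm{\mb v}{}$.

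For the first canonical expectation, I would compute the $(i,j)$ entry of $\expect{\abs{a_1}^2 \mb a \mb a^*}$, namely $\expect{\abs{a_1}^2 a_i \ol{a_j}}$. Independence of the coordinates of $\mb a$ forces the entry to vanish unless $i = j$. The diagonal entries are $\expect{\abs{a_1}^4} = 2$ when $i = 1$, and $\expect{\abs{a_1}^2}\expect{\abs{a_i}^2} = 1$ otherwise, using the standard moments of a standard complex Gaussian scalar. Hence $\expect{\abs{a_1}^2 \mb a \mb a^*} = \mb e_1 \mb e_1^* + \mb I$, and conjugating by $\mb U$ produces $\mb v \mb v^* + \norm{\mb v}{}^2 \mb I$ as claimed.

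For the second canonical expectation I would compute the $(i,j)$ entry of $\expect{\ol{a_1}^2 \mb a \mb a^\top}$, i.e., $\expect{\ol{a_1}^2 a_i a_j}$. Here the key facts are $\expect{a_k^2} = 0$ and $\expect{a_k} = 0$ for a standard complex Gaussian, which together with independence kill every term except $(i,j) = (1,1)$. The surviving entry is $\expect{\abs{a_1}^4} = 2$, giving $\expect{\ol{a_1}^2 \mb a \mb a^\top} = 2 \mb e_1 \mb e_1^\top$. Conjugating by $\mb U$ and $\ol{\mb U}$ then gives $2 \mb v \mb v^\top$.

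There is no real obstacle here; the only subtlety to be careful about is tracking conjugates when reducing $\mb e_1^\top \ol{\mb U}$, making sure I use $\mb U \mb v = \norm{\mb v}{} \mb e_1 \Longleftrightarrow \mb e_1^\top \mb U = \mb v^*/\norm{\mb v}{}$ rather than the conjugate-transpose version by mistake. The scalar moments $\expect{\abs{a_1}^4} = 2$ and $\expect{a_1^2} = 0$ follow directly from writing $a_1 = (X_1 + \im Y_1)/\sqrt{2}$ with independent $X_1, Y_1 \sim \mc N(0,1)$, exactly as in the proof of Lemma~\ref{lem:expect-func}.
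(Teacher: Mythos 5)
Your proposal is correct, and it is essentially the paper's argument: both proofs come down to an entrywise moment computation using independence of the coordinates together with $\expect{\abs{a}^2}=1$, $\expect{\abs{a}^4}=2$, and $\expect{a^2}=0$ for a standard complex Gaussian scalar. The only cosmetic difference is that you first reduce to $\mb v = \norm{\mb v}{}\mb e_1$ by unitary invariance and then undo the rotation (your conjugate bookkeeping $\mb e_1^\top \ol{\mb U} = \mb v^\top/\norm{\mb v}{}$ is handled correctly), whereas the paper evaluates the entries $\sum_{q,\ell}\expect{\ol{a(q)}a(\ell)v(q)\ol{v(\ell)}a(i)\ol{a(j)}}$ directly in general coordinates.
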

\begin{proof}
Observe that for $i \ne j$, 
\begin{align*}
\mb e_i^* \expect{\abs{\mb a^* \mb v}^2 \mb a \mb a^*} \mb e_j = \sum_{q, \ell} \expect{\ol{a(q)} a(\ell) v(q) \ol{v(\ell)} a(i) \ol{a(j)}} = \expect{\abs{a(i)}^2 \abs{a(j)}^2} v(i) \ol{v(j)} =  v(i) \ol{v(j)}. 
\end{align*}
Similarly, 
\begin{align*}
\mb e_i^* \expect{\abs{\mb a^* \mb v}^2 \mb a \mb a^*} \mb e_i 
& =   \sum_{q, \ell} \expect{\ol{a(q)} a(\ell) v(q) \ol{v(\ell)} \abs{a(i)}^2} \\
& = \expect{\abs{a(i)}^4 \abs{v(i)}^2} + \sum_{q \ne i}\expect{\abs{a(q)}^2 \abs{v(q)}^2 \abs{a(i)}^2} = \abs{v(i)}^2 + \norm{\mb v}{}^2. 
\end{align*}
Similar calculation yields the second expectation. 
\end{proof}

\begin{lemma}  \label{lem:op_norm_1}
Let $\mb a_1, \dots, \mb a_m$ be i.i.d. copies of $\mb a \sim \mc{CN}(n)$. For any $\delta \in (0, 1)$ and any $\mb v \in \Cp^n$, when $m \ge C(\delta) n\log n$, we have that with probability at least $1 - c_a \delta^{-2} m^{-1} - c_b\exp\paren{-c_c \delta^2 m/\log m}$ 
\begin{align*}
\norm{\frac{1}{m}\sum_{k=1}^m \abs{\mb a_k^* \mb v}^2 \mb a_k \mb a_k^* - \paren{\mb v \mb v^* + \norm{\mb v}{}^2 \mb I}}{} & \le \delta \norm{\mb v}{}^2, \\
\norm{\frac{1}{m}\sum_{k=1}^m \paren{\mb a_k^* \mb v}^2 \mb a_k \mb a_k^\top - 2 \mb v \mb v^\top}{} & \le \delta \norm{\mb v}{}^2. 
\end{align*}
Here $C(\delta)$ is a constant depending on $\delta$ and $c_a$, $c_b$ and $c_c$ are positive absolute constants. 
\end{lemma}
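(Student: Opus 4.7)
The plan is a truncate-plus-matrix-Bernstein argument tailored to the heavy-tailed fourth-order summands, with the two tail terms in the probability bound arising from two distinct mechanisms. By homogeneity I may assume $\norm{\mb v}{} = 1$. For the first bound set $\mb Z_k \doteq \abs{\mb a_k^* \mb v}^2 \mb a_k \mb a_k^* - (\mb v \mb v^* + \mb I)$; by Lemma~\ref{lem:fourth_exp} these are i.i.d.\ centered Hermitian random matrices. The obstacle is that $\norm{\mb Z_k}{}$ is as large as $\abs{\mb a_k^* \mb v}^2\norm{\mb a_k}{}^2$, the product of a sub-exponential variable and a $\chi^2_{2n}$ variable, so $\sup_k \norm{\mb Z_k}{}$ is not bounded by a constant and a direct application of matrix Bernstein is hopeless.

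The first step is to split $\mb Z_k = \mb Z_k^{\mathrm{s}} + \mb Z_k^{\mathrm{b}}$, where $\mb Z_k^{\mathrm{s}} = \mb Z_k\,\indicator{\abs{\mb a_k^* \mb v}^2 \le \alpha,\; \norm{\mb a_k}{}^2 \le \beta}$ with $\alpha \doteq C_1\log m$ and $\beta \doteq C_2 n\log m$. Standard sub-exponential and $\chi^2$ tail bounds make the bias $\norm{\expect{\mb Z_k^{\mathrm{b}}}}{}$ negligible (say, $O(m^{-10})$) for appropriate constants, and by construction $\norm{\mb Z_k^{\mathrm{s}}}{} \le 2\alpha\beta \lesssim n\log^2 m$.

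The second step is matrix Bernstein on $\sum_k (\mb Z_k^{\mathrm{s}} - \expect{\mb Z_k^{\mathrm{s}}})$. A direct Gaussian (Wick) moment calculation, using unitary invariance of $\mb a_k$ to reduce to the case $\mb v = \mb e_1$, yields $\expect{\mb Z_k^2} \preceq C n \,\mb I$, so the variance proxy satisfies $\sigma^2 = \norm{\sum_k \expect{(\mb Z_k^{\mathrm{s}})^2}}{} \lesssim mn$. With $R \lesssim n\log^2 m$ and $\sigma^2 \lesssim mn$, matrix Bernstein gives
\begin{align*}
\prob{\norm{\frac{1}{m}\sum_k \paren{\mb Z_k^{\mathrm{s}} - \expect{\mb Z_k^{\mathrm{s}}}}}{} \ge \delta/2} \le 2n \exp\paren{-c\,\delta^2 m / \log m},
\end{align*}
which, together with $m \ge C(\delta) n\log n$, gives the $c_b \exp(-c_c \delta^2 m/\log m)$ term. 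The third step handles the tail sum $\sum_k \mb Z_k^{\mathrm{b}}$ by a Chebyshev/second-moment argument: independence and the fast sub-exponential decay above level $\alpha$ give $\expect{\norm{m^{-1}\sum_k \mb Z_k^{\mathrm{b}}}{F}^2} \lesssim m^{-1}$, so Markov's inequality applied to the squared operator norm yields a failure probability of at most $c_a \delta^{-2} m^{-1}$, producing the remaining term.

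The second inequality, concerning the complex-symmetric matrix $(\mb a_k^* \mb v)^2 \mb a_k \mb a_k^\top$, follows the same three-step recipe: expectation is given by Lemma~\ref{lem:fourth_exp}, the same truncation thresholds apply, and rectangular matrix Bernstein (or Hermitian dilation) replaces the symmetric version used above. The main technical hurdle, as I see it, will be step 2: a crude bound on $\expect{\mb Z_k^2}$ would include a factor of $n^2$ and force $m \gtrsim n^2$; obtaining the sharp $n\,\mb I$ upper bound requires careful use of unitary invariance together with exact fourth/sixth Gaussian moment identities, so that one factor of $\norm{\mb a_k}{}^2 \sim n$ is effectively replaced by its expectation. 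A secondary delicate point is calibrating $\alpha, \beta$ so that the truncation bias and the Bernstein deviation both sit below $\delta/2$ while keeping the bad-event probability at $cm^{-1}$.
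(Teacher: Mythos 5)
There is a genuine gap in Step 2, and it is exactly the difficulty this lemma is designed to circumvent. Take your own parameters: $\sigma^2 \lesssim mn$ and $R \lesssim n\log^2 m$ (or even the tightest admissible $R \asymp n\log m$, since $\norm{\mb a_k}{}^2$ concentrates at $n$ and cannot be truncated below that). Matrix Bernstein at deviation $t = \delta m/2$ then gives an exponent
\[
\frac{t^2/2}{\sigma^2 + Rt/3} \;\asymp\; \min\paren{\frac{\delta^2 m}{n},\; \frac{\delta m}{n\log^2 m}},
\]
not the $\delta^2 m/\log m$ you wrote — you have dropped the factor of $n$ coming from your own variance proxy. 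At the lemma's sample size $m \asymp C(\delta)\, n\log n$ the Poisson-regime term is $\delta m/(n\log^2 m) = O(\delta/\log n)$, so the resulting bound $2n\exp(-O(\delta/\log n))$ is vacuous. To make the $\exp(-t/R)$ term nontrivial you need $\delta m \gtrsim n\log^2 m\cdot\log(nm)$, i.e.\ $m \gtrsim \delta^{-1} n\log^3 n$ with your thresholds (and still $\gtrsim \delta^{-1}n\log^2 n$ with $\beta = 2n$). So truncation plus matrix Bernstein on the full matrix $\abs{\mb a_k^*\mb v}^2\mb a_k\mb a_k^*$ provably cannot reach the stated $m \ge C(\delta)\,n\log n$; the obstruction is that the uniform bound $R$ necessarily carries the factor $\norm{\mb a_k}{}^2 \asymp n$.

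The paper's proof avoids this by using that the statement is for a \emph{fixed} $\mb v$: rotate so $\mb v = \mb e_1$, write $\mb a_k = [a_k(1); \wt{\mb a}_k]$, and split the matrix into the scalar corner, the first row/column, and the bulk block $\tfrac1m\sum_k\abs{a_k(1)}^2(\wt{\mb a}_k\wt{\mb a}_k^* - \mb I)$. The weight $\abs{a_k(1)}^2$ is \emph{independent} of $\wt{\mb a}_k$, so conditioning on $\set{a_k(1)}$ and testing against a $1/2$-net, each $\abs{\wt{\mb a}_k^*\mb w}^2 - 1$ is an $O(1)$ sub-exponential scalar; the effective truncation level entering Bernstein is then $\max_k\abs{a_k(1)}^2 \le 10\log m$ — with no factor of $n$ — which is what produces the $\exp(-c\delta^2 m/\log m)$ term and survives the $e^{12n}$ net cardinality at $m \gtrsim n\log n$. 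The Chebyshev terms on $\sum_k\abs{a_k(1)}^4$, $\sum_k\abs{a_k(1)}^6$ supply the $\delta^{-2}m^{-1}$ term. If you want to keep a matrix-concentration flavor, you must exploit this decoupling of the weight from the outer product; applying matrix Bernstein to the undecomposed summand cannot work at this sample size.
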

\begin{proof}
We work out the results on $\frac{1}{m}\sum_{k=1}^m \abs{\mb a_k^* \mb v}^2 \mb a_k \mb a_k^*$ first. By the unitary invariance of the Gaussian measure and rescaling, it is enough to consider $\mb v = \mb e_1$. We partition each vector $\mb a_k$ as $\mb a_k = [a_k(1); \wt{\mb a}_k]$ and upper bound the target quantity as: 
\begin{align*}
\MoveEqLeft \norm{\frac{1}{m}\sum_{k=1}^m \abs{a_k(1)}^2 
\begin{bmatrix}
\abs{a_k(1)}^2  & a_k(1) \wt{\mb a}_k^* \\
\ol{a_k(1)} \wt{\mb a}_k & \wt{\mb a}_k \wt{\mb a}_k^*
\end{bmatrix}
- \paren{\mb e_1 \mb e_1^* + \mb I}}{}\\
\le\; & \abs{\frac{1}{m} \sum_{k=1}^m \paren{\abs{a_k(1)}^4 - 2}}{} + \norm{\frac{1}{m}\sum_{k=1}^m \abs{a_k(1)}^2 
\begin{bmatrix}
0 & a_k(1) \wt{\mb a}_k^* \\
\ol{a_k(1)} \wt{\mb a}_k & \mb 0
\end{bmatrix}
}{} \\
& \quad + \norm{\frac{1}{m}\sum_{k=1}^m \abs{a_k(1)}^2\paren{\wt{\mb a}_i \wt{\mb a}_k^* - \mb I_{n-1}}}{} + \abs{\frac{1}{m} \sum_{k=1}^m \paren{\abs{a_k(1)}^2 - 1}}{}. 
\end{align*}
By Chebyshev's inequality, we have with probability at least $1-c_1\delta^{-2}m^{-1}$, 
\begin{align*}
\abs{\frac{1}{m} \sum_{k=1}^m \paren{\abs{a_k(1)}^4 - 2}}{} \le \frac{\delta}{4} \quad \text{and} \quad \abs{\frac{1}{m} \sum_{k=1}^m \paren{\abs{a_k(1)}^2 - 1}}{} \le \frac{\delta}{4}. 
\end{align*}
To bound the second term, we note that 
\begin{align*}
\norm{\frac{1}{m}\sum_{k=1}^m \abs{a_k(1)}^2 
\begin{bmatrix}
0 & a_k(1) \wt{\mb a}_k^* \\
\ol{a_k(1)} \wt{\mb a}_k & \mb 0
\end{bmatrix}
}{} 
& =
\norm{\frac{1}{m}\sum_{k=1}^m \abs{a_k(1)}^2 a_k(1) \wt{\mb a}_k^*}{} \\
& = \sup_{\mb w \in \Cp^{n-1}: \norm{\mb w}{} = 1} \frac{1}{m}\sum_{k=1}^m \abs{a_k(1)}^2 a_k(1) \wt{\mb a}_k^*\mb w. 
\end{align*}
For all $\mb w$ and all $k\in [m]$ , $\wt{\mb a}_k^* \mb w$ is distributed as $\mc{CN}(1)$ that is independent of the $\{a_k(1)\}$ sequence. So for one realization of $\{a_k(1)\}$, the Hoeffding-type inequality of Lemma~\ref{lem:hoeffding_type} implies 
\begin{align*}
\prob{ \frac{1}{m}\sum_{k=1}^m \abs{a_k(1)}^2 a_k(1) \wt{\mb a}_k^*\mb w > t} \le \e \exp\paren{-\frac{c_2m^2 t^2}{\sum_{k=1}^m \abs{a_k(1)}^6}}, 
\end{align*}
for any $\mb w$ with $\norm{\mb w}{} = 1$ and any $t > 0$. Taking $t = \delta/8$, together with a union bound on a $1/2$-net on the sphere, we obtain 
\begin{align*} 
\prob{\norm{\frac{1}{m}\sum_{k=1}^m \abs{a_k(1)}^2 a_k(1) \wt{\mb a}_k^*}{} > \delta/4} \le \e \exp\paren{-\frac{c_2m^2\delta^2}{64\sum_{k=1}^m \abs{a_k(1)}^6} +12 (n-1) }. 
\end{align*}
Now an application of Chebyshev's inequality gives that 
$
\sum_{k=1}^m \abs{a_k(1)}^6 \le 20m
$
with probability at least $1 - c_3m^{-1}$. Substituting this into the above, we conclude that whenever $m \ge C_4\delta^{-2}n$ for some sufficiently large $C_4$, 
\begin{align*}
\norm{\frac{1}{m}\sum_{k=1}^m \abs{a_k(1)}^2 a_k(1) \wt{\mb a}_k^*}{} \le \delta/4
\end{align*}
with probability at least $1- c_3m^{-1} - \exp\paren{-c_5\delta^2 m}$. 

To bound the third term, we note that 
\begin{align*}
\norm{\frac{1}{m}\sum_{k=1}^m \abs{a_k(1)}^2 \paren{\wt{\mb a}_k \wt{\mb a}_k^* - \mb I_{n-1}}}{} = \sup_{\mb w \in \Cp^{n-1}: \norm{\mb w}{} = 1} \frac{1}{m}\sum_{k=1}^m \abs{a_k(1)}^2 \paren{\abs{\wt{\mb a}_k^* \mb w}^2 - 1}. 
\end{align*}
For all fixed $\mb w$ and all $k \in [m]$, $\wt{\mb a}_k^* \mb w \sim \mc{CN}(1)$. Thus, $\abs{\wt{\mb a}_k^* \mb w}^2 - 1$ is centered sub-exponential. So for one realization of $\{a_k(1)\}$, Bernstein's inequality (Lemma~\ref{lem:bernstein_type}) implies 
\begin{align*}
\prob{\frac{1}{m}\sum_{k=1}^m \abs{a_k(1)}^2 \paren{\abs{\wt{\mb a}_k^* \mb w}^2 - 1} > t} \le 2\exp\paren{-c_6\min\paren{\frac{t^2}{c_7^2 \sum_{k=1}^m \abs{a_k(1)}^4}, \frac{t}{c_7 \max_{i \in [m]} \abs{a_k(1)}^2}}}
\end{align*}
for any fixed $\mb w$ with $\norm{\mb w}{} = 1$ and any $t > 0$. Taking $t = \delta/8$, together with a union bound on a $1/2$-net on the sphere, we obtain 
\begin{multline*}
\prob{\norm{\frac{1}{m}\sum_{k=1}^m \abs{a_k(1)}^2 \paren{\wt{\mb a}_k \wt{\mb a_k}^* - \mb I_{n-1}}}{} > \frac{\delta}{4}} \\
\le 2\exp\paren{-c_6\min\paren{\frac{m^2\delta^2/64}{c_7^2 \sum_{k=1}^m \abs{a_k(1)}^4}, \frac{m\delta/8}{c_7 \max_{i \in [m]} \abs{a_k(1)}^2}} + 12(n-1)}. 
\end{multline*}
Chebyshev's inequality and the union bound give that 
\begin{align*}
\sum_{k=1}^m \abs{a_k(1)}^4 \le 10m , \quad \text{and} \quad \max_{i \in [m]} \abs{a_k(1)}^2 \le 10 \log m
\end{align*}
hold with probability at least $1-c_8 m^{-1} - m^{-4}$. To conclude, when $m \ge C_9(\delta) \delta^{-2} n \log n$ for some sufficiently large constant $C_9(\delta)$, 
\begin{align*}
\norm{\frac{1}{m}\sum_{k=1}^m \abs{a_k(1)}^2 \paren{\wt{\mb a}_k \wt{\mb a_k}^* - \mb I_{n-1}}}{} \le \frac{\delta}{4}
\end{align*}
with probability at least $1 - c_8 m^{-1} - m^{-4} - 2\exp\paren{-c_{10}\delta^2 m/\log m}$. 

Collecting the above bounds and probabilities yields the claimed results. Similar arguments prove the claim on $\frac{1}{m}\sum_{k=1}^m \paren{\mb a_k^* \mb v} \mb a_k \mb a_k^\top$ also, completing the proof. 
\end{proof}

\begin{lemma} \label{lem:min_fourth}
Let $\mb a_1, \dots, \mb a_m$ be i.i.d. copies of $\mb a \sim \mc{CN}(n)$. For any $\delta \in (0, 1)$, when $m \ge C(\delta)n \log n$, it holds with probability at least $1-c'\exp\paren{-c(\delta)m} - c''m^{-n}$ that 
\begin{align*}
\frac{1}{m}\sum_{k=1}^m \abs{\mb a_k^* \mb z}^2 \abs{\mb a_k^* \mb w}^2 
& \ge \paren{1-\delta} \paren{\norm{\mb w}{}^2 \norm{\mb z}{}^2 + \abs{\mb w^* \mb z}^2} \quad \text{for all} \;\mb z, \mb w \in \Cp^n, \\
\frac{1}{m}\sum_{k=1}^m \brac{\Re (\mb a_k^* \mb z)( \mb w^* \mb a_k) }^2 
& \ge  (1-\delta) \paren{\frac{1}{2} \norm{\mb z}{}^2 \norm{\mb w}{}^2 + \frac{3}{2}[\Re \mb z^* \mb w]^2 - \frac{1}{2} [\Im \mb z^* \mb w]^2}\quad \text{for all} \;\mb z, \mb w \in \Cp^n. 
\end{align*}
Here $C(\delta)$ and $c(\delta)$ are constants depending on $\delta$ and $c'$ and $c''$ are positive absolute constants. 
\end{lemma}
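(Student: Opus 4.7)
The plan is as follows. By the $2$-homogeneity of both sides in each of $\mb z$ and $\mb w$ separately, it suffices to treat $\mb z, \mb w \in \bb{CS}^{n-1}$. The expectations are then computed by direct Gaussian moment calculations: for the first inequality, Lemma~\ref{lem:fourth_exp} gives $\expect{\abs{\mb a^* \mb z}^2 \abs{\mb a^* \mb w}^2} = \mb w^* (\mb z \mb z^* + \mb I) \mb w = 1 + \abs{\mb w^* \mb z}^2$; for the second, I would decompose $\Re((\mb a^* \mb z)(\mb w^* \mb a)) = \tfrac{1}{2}[(\mb a^* \mb z)(\mb w^* \mb a) + (\mb z^* \mb a)(\mb a^* \mb w)]$ (using that the complex conjugate of $(\mb a^* \mb z)(\mb w^* \mb a)$ equals $(\mb z^* \mb a)(\mb a^* \mb w)$), square, and apply Lemma~\ref{lem:fourth_exp} to each resulting cross-term to assemble the prescribed quadratic form in $\Re(\mb z^* \mb w)$ and $\Im(\mb z^* \mb w)$.

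For the concentration, the strategy is the one flagged in Section~\ref{sec:geometry-key}: exploit nonnegativity of the summands to obtain sub-Gaussian lower-tail control despite the summands themselves being heavy-tailed. Fix a pair $(\mb z, \mb w)$ on the sphere and set $X_k \doteq \abs{\mb a_k^* \mb z}^2 \abs{\mb a_k^* \mb w}^2$. The $X_k$ are i.i.d., nonnegative, and satisfy $\expect{X_k^2} = O(1)$ uniformly in $\mb z, \mb w$ on the sphere (a bounded $8$th moment of unit-norm complex Gaussian combinations). Lemma~\ref{lem:subgauss_nonneg}, which asserts that nonnegative random variables are sub-Gaussian when viewed from below with variance proxy controlled by their second moment, yields for each fixed pair
\begin{align*}
\prob{\frac{1}{m}\sum_{k=1}^m X_k < (1 - \tfrac{\delta}{2})(1 + \abs{\mb w^* \mb z}^2)} \leq \exp(-c \delta^2 m).
\end{align*}

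To upgrade to a uniform statement, I would combine this pointwise bound with an $\varepsilon$-net argument on the good event $E_0 \doteq \set{\max_{k \in [m]} \norm{\mb a_k}{}^2 \leq C n \log m}$; a sub-exponential tail bound on $\norm{\mb a}{}^2$ followed by a union bound over $k$ shows $\prob{E_0} \geq 1 - c'' m^{-n}$ for a suitable constant $C$. On $E_0$, each summand $(\mb z, \mb w) \mapsto \abs{\mb a_k^* \mb z}^2 \abs{\mb a_k^* \mb w}^2$ is Lipschitz over the product of unit balls with constant $O(\norm{\mb a_k}{}^4) = O(n^2 \log^2 m)$. Choosing an $\varepsilon$-net of $\bb{CS}^{n-1} \times \bb{CS}^{n-1}$ with $\varepsilon \sim \delta/(n^2 \log^2 m)$, whose cardinality is at most $\exp(C_1 n \log(n \log m))$, the union bound over the net times the pointwise probability gives a total failure probability at most
\begin{align*}
\exp\paren{C_1 n \log(n \log m) - c \delta^2 m} + c'' m^{-n},
\end{align*}
which collapses to the claimed $c' \exp(-c(\delta) m) + c'' m^{-n}$ as soon as $m \geq C(\delta) n \log n$.

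The second inequality follows by the identical template: its summand $[\Re((\mb a_k^* \mb z)(\mb w^* \mb a_k))]^2$ is nonnegative and pointwise dominated by $\abs{\mb a_k^* \mb z}^2 \abs{\mb a_k^* \mb w}^2$, so the same second-moment control gives the pointwise sub-Gaussian lower tail, and the same Lipschitz-plus-net argument on $E_0$ delivers uniformity. The main obstacle I anticipate is bookkeeping -- the Gaussian moment computation for the second quantity (tracking Hermitian conjugates in the $\Re/\Im$ decomposition) and the Lipschitz estimates on $E_0$. Conceptually, the only non-routine ingredient is the nonnegativity-based lower-tail sub-Gaussianity (Lemma~\ref{lem:subgauss_nonneg}), which removes the heavy-tailed obstruction on the lower side and explains why $m \sim n \log n$ suffices despite the summands possessing only sub-exponential, not sub-Gaussian, two-sided concentration.
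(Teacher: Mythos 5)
Your proposal is correct and follows essentially the same route as the paper: reduce to the sphere by homogeneity, compute the expectations via Lemma~\ref{lem:fourth_exp}, get the pointwise lower tail from the nonnegativity-based sub-Gaussian bound of Lemma~\ref{lem:subgauss_nonneg} with $\expect{X_k^2}=O(1)$, and pass to a uniform statement by an $\eps$-net plus Lipschitz continuity on the event $\max_k\norm{\mb a_k}{}^2 \lesssim n\log m$, handling the second inequality by the identical template. The only (immaterial) difference is that you take the per-summand Lipschitz constant $O(\norm{\mb a_k}{}^4)$ and hence $\eps \sim \delta/(n^2\log^2 m)$, whereas the paper bounds the gradient of the sum via $\norm{\sum_k \norm{\mb a_k}{}^2\mb a_k\mb a_k^*}{}$ and uses $\eps \sim \delta/(n\log m)$; both net cardinalities are absorbed by $m \ge C(\delta) n\log n$.
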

\begin{proof}
By Lemma~\ref{lem:fourth_exp}, $\expect{\abs{\mb a^* \mb w}^2 \abs{\mb a^* \mb z}^2} = \norm{\mb w}{}^2 \norm{\mb z}{}^2 + \abs{\mb w^* \mb z}^2$. By homogeneity, it is enough to prove the result for all $\mb w, \mb z \in \bb{CS}^{n-1}$. For a pair of fixed $\mb w, \mb z \in \bb{CS}^{n-1}$, Lemma~\ref{lem:subgauss_nonneg} implies that for any $\delta \in (0, 1)$, 
\begin{align*}
\sum_{k=1}^m \abs{\mb a_k^* \mb w}^2 \abs{\mb a_k^* \mb z}^2 \ge \paren{1 - \frac{\delta}{2}} m  \paren{1 + \abs{\mb w^* \mb z}^2}
\end{align*}
with probability at least $1 - \exp(-c_1 \delta^2 m)$. For a certain $\eps \in (0, 1)$ to be fixed later and an $\eps$-net $N_\eps^1 \times N_\eps^2$ that covers $\bb{CS}^{n-1} \times \bb{CS}^{n-1}$,  we have that the event 
\begin{align*}
\event_0 \doteq \Brac{\sum_{k=1}^m \abs{\mb a_k^* \mb w}^2\abs{\mb a_k^* \mb z}^2 \ge \paren{1 - \frac{\delta}{2}} m \paren{1 + \abs{\mb w^* \mb z}^2}\quad \forall \; \mb w, \mb z \in N_\eps^1 \times N_\eps^2} 
\end{align*}
holds with probability at least $1 - \exp\paren{-c_1\delta^2m + 4n\log(3/\eps)}$ by a simple union bound. Now conditioned on $\event_0$, we have for every $\mb z \in \bb{CS}^{n-1}$ can be written as $\mb z = \mb z_0 + \mb e$ for certain $\mb z_0 \in N_\eps^1$ and $\mb e$ with $\norm{\mb e}{} \le \eps$; similarly $\mb w = \mb w_0 + \mb \zeta$ for $\mb w_0 \in N_\eps^2$ and $\norm{\mb \zeta}{} \le \eps$. For the function $g(\mb w, \mb z) \doteq \sum_{k=1}^m \abs{\mb a_k^* \mb z}^2 \abs{\mb a_k^* \mb w}^2$, w.h.p., 
\begin{align*}
\norm{\frac{\partial g}{\partial \mb w}}{} & = \norm{\sum_{k=1}^m \abs{\mb a_k^* \mb z}^2 \mb w^* \mb a_k \ol{\mb a}_k}{} \le \norm{\mb z}{}^2 \norm{\mb w}{} \norm{\sum_{k=1}^m \norm{\mb a_k}{}^2 \mb a_k \mb a_k^*}{} \le 10mn\sqrt{\log m} , \\
\norm{\frac{\partial g}{\partial \mb z}}{} & = \norm{\sum_{k=1}^m \abs{\mb a_k^* \mb w}^2 \mb z^* \mb a_k \ol{\mb a}_k}{} \le \norm{\mb w}{}^2 \norm{\mb z}{}  \norm{\sum_{k=1}^m \norm{\mb a_k}{}^2 \mb a_k \mb a_k^*}{} \le 10mn\sqrt{\log m} , 
\end{align*}
as $\max_{k \in [m]} \norm{\mb a_k}{}^2 \le 5 n \log m$ with probability at least $1 - c_2 m^{-n}$, and $\norm{\sum_{k=1}^m \mb a_k \mb a_k^*}{} \le 2m$ with probability at least $1 - \exp(-c_3 m)$. Thus, 
\begin{align*}
\sum_{k=1}^m \abs{\mb a_k^* \mb z}^2 \abs{\mb a_k^* \mb w}^2
& \ge \paren{1 - \frac{\delta}{3}} m  - 40\eps mn \log m + \paren{1 - \frac{\delta}{3}} m\paren{\abs{\mb w_0^* \mb z_0}^2 - 4\eps}. 
\end{align*}
Taking $\eps = c_4(\delta)/(n\log m)$ for a sufficiently small $c_4(\delta) > 0$, we obtain that with probability at least $1 - \exp\paren{-c_1\delta^2m + 4n\log(3n\log m/c_4(\delta))} - c_5 m^{-n}$, 
\begin{align*}
\sum_{k=1}^m \abs{\mb a_k^* \mb z}^2 \abs{\mb a_k^* \mb w}^2 \ge \paren{1-\frac{2}{3}\delta} m\paren{1 + \abs{\mb w_0^* \mb z_0}^2}.
\end{align*}
which, together with continuity of the function $(\mb w, \mb z) \mapsto \abs{\mb w^* \mb z}^2$, implies 
\begin{align*}
\sum_{k=1}^m \abs{\mb a_k^* \mb z}^2 \abs{\mb a_k^* \mb w}^2 \ge \paren{1-\delta} m\paren{1 + \abs{\mb w^* \mb z}^2}.
\end{align*}
It is enough to take $m \ge C_6 \delta^{-2} n \log n$ to ensure the desired event happens w.h.p.. 

To show the second inequality, first notice that $\bb E \brac{\Re (\mb a_k^* \mb z)( \mb w^* \mb a_k) }^2 = \frac{1}{2} \norm{\mb z}{}^2 \norm{\mb w}{}^2 + \frac{3}{2}[\Re \mb z^* \mb w]^2 - \frac{1}{2} [\Im \mb z^* \mb w]^2$. The argument then proceeds to apply the discretization trick as above. 
\end{proof}

\subsection{Proof of Proposition~\ref{prop:nega-curv}} \label{pf:prop:nega-curv}
\begin{proof}
Direct calculation shows that
\begin{align*}
& \begin{bmatrix}
\mb x \e^{\im \phi(\mb z)}\\
\ol{\mb x} \e^{-\im \phi(\mb z)}
\end{bmatrix}^*   \nabla^2 f(\mb z) 
\begin{bmatrix}
\mb x \e^{\im \phi(\mb z)}\\
\ol{\mb x} \e^{-\im \phi(\mb z)}
\end{bmatrix}  \\
=\; & \frac{1}{m}\sum_{k=1}^m \paren{4\abs{\mb a_k^* \mb z}^2 \abs{\mb a_k^* \mb x}^2 - 2 \abs{\mb a_k^* \mb x}^4 + 2 \Re\brac{\paren{\mb a_k^* \mb z}^2 \paren{\mb x^* \mb a_k}^2 \e^{-2\im \phi(\mb z)}}} \\
=\; & \frac{1}{m}\sum_{k=1}^m \paren{2\abs{\mb a_k^* \mb z}^2 \abs{\mb a_k^* \mb x}^2 - 2 \abs{\mb a_k^* \mb x}^4} \\
& \qquad + \frac{1}{m}\sum_{k=1}^m \paren{2\abs{\mb a_k^* \mb z}^2 \abs{\mb a_k^* \mb x}^2 + 2 \Re\brac{\paren{\mb a_k^* \mb z}^2 \paren{\mb x^* \mb a_k}^2 \e^{-2\im \phi(\mb z)}}}. 
\end{align*}
Lemma~\ref{lem:op_norm_1} implies that when $m \ge C_1 n \log n$, w.h.p., 
\begin{align*}
\frac{2}{m} \sum_{k=1}^m \abs{\mb a_k^* \mb x}^2 \abs{\mb a_k^* \mb z}^2 \le 
\expect{\frac{2}{m} \sum_{k=1}^m \abs{\mb a_k^* \mb x}^2 \abs{\mb a_k^* \mb z}^2}+\frac{1}{200}\norm{\mb x}{}^2 \norm{\mb z}{}^2. 
\end{align*}
On the other hand, by Lemma~\ref{lem:subgauss_nonneg}, we have that 
\begin{align*}
\frac{2}{m}\sum_{k=1}^m \abs{\mb a_k^* \mb x}^4 \ge \expect{\frac{2}{m}\sum_{k=1}^m \abs{\mb a_k^* \mb x}^4} - \frac{1}{100}\norm{\mb x}{}^4
\end{align*}
holds with probability at least $1 - \exp(-c_2 m)$. For the second summation, we have 
\begin{align*}
& \frac{1}{m}\sum_{k=1}^m \paren{2\abs{\mb a_k^* \mb z}^2 \abs{\mb a_k^* \mb x}^2 + 2 \Re\brac{\paren{\mb a_k^* \mb z}^2 \paren{\mb x^* \mb a_k}^2 \e^{-2\im \phi(\mb z)}}} \\
=\; & 
\begin{bmatrix}
\mb z  \\ \ol{\mb z} 
\end{bmatrix}^* 
\nabla^2 f(\mb x \e^{\im \phi(\mb z)})
\begin{bmatrix}
\mb z \\ \ol{\mb z}
\end{bmatrix} \\
\le\; & 
\begin{bmatrix}
\mb z  \\ \ol{\mb z} 
\end{bmatrix}^*  
\nabla^2 \expect{f(\mb x\e^{\im \phi(\mb z)})} 
\begin{bmatrix}
\mb z \\ \ol{\mb z}
\end{bmatrix}
+ \frac{1}{200} \norm{\mb x}{}^2\norm{\mb z}{}^2, 
\end{align*}
w.h.p., provided $m \ge C_3 n \log n$, according to Lemma~\ref{lem:op_norm_1}. 

Collecting the above estimates, we have that when $m \ge C_4 n \log n$ for a sufficiently large constant $C_4$, w.h.p., 
\begin{align*}
\begin{bmatrix}
\mb x \e^{\im \phi(\mb z)}\\
\ol{\mb x} \e^{-\im \phi(\mb z)}
\end{bmatrix}^*   \nabla^2 f(\mb z) 
\begin{bmatrix}
\mb x \e^{\im \phi(\mb z)}\\
\ol{\mb x} \e^{-\im \phi(\mb z)}
\end{bmatrix}  
&\le 
\expect{\begin{bmatrix}
\mb x \e^{\im \phi(\mb z)}\\
\ol{\mb x} \e^{-\im \phi(\mb z)}
\end{bmatrix}^*   \nabla^2 f(\mb z) 
\begin{bmatrix}
\mb x \e^{\im \phi(\mb z)}\\
\ol{\mb x} \e^{-\im \phi(\mb z)}
\end{bmatrix} } + \frac{1}{100}\norm{\mb x}{}^2 \norm{\mb z}{}^2 + \frac{1}{100} \norm{\mb x}{}^4 \\
& \le -\frac{1}{100} \norm{\mb x}{}^4
\end{align*}
for all $\mb z \in \mc R_1$. Dividing both sides of the above by $\norm{\mb x}{}^2$ gives the claimed results. 
\end{proof}

\subsection{Proof of Proposition~\ref{prop:str_cvx}}  \label{pf:prop:str_cvx}
\begin{proof}
It is enough to prove that for all unit vectors $\mb g$ that are geometrically orthogonal to $\im \mb x$, i.e., $\mb g \in \mc T \doteq \set{\mb z: \Im\paren{\mb z^* \mb x} = 0, \norm{\mb z}{} =1}$ and all $t \in [0, \norm{\mb x}{}/\sqrt{7}]$, the following holds: 
\begin{align*}
\begin{bmatrix}
\mb g \\
\ol{\mb g}
\end{bmatrix}^* 
\nabla^2 f(\mb x + t\mb g) 
\begin{bmatrix}
\mb g \\
\ol{\mb g}
\end{bmatrix} 
\ge \frac{1}{4} \norm{\mb x}{}^2. 
\end{align*}
Direct calculation shows 
\begin{align*}
& \begin{bmatrix}
\mb g \\
\ol{\mb g}
\end{bmatrix}^* 
\nabla^2 f(\mb x + t\mb g) 
\begin{bmatrix}
\mb g \\
\ol{\mb g}
\end{bmatrix}  \\
=\; &  
\frac{1}{m}\sum_{k=1}^m 4\abs{\mb a_k^* (\mb x + t\mb g)}^2 \abs{\mb a_k^* \mb g}^2 - 2\abs{\mb a_k^* \mb x}^2 \abs{\mb a_k^* \mb g}^2 + 2\Re\brac{(t\mb a_k^* \mb g + \mb a_k^* \mb x)^2 (\mb g^* \mb a_k)^2} \\
\ge\; & \frac{1}{m}\sum_{k=1}^m 4\abs{\mb a_k^* (\mb x + t\mb g)}^2 \abs{\mb a_k^* \mb g}^2 - 2\abs{\mb a_k^* \mb x}^2 \abs{\mb a_k^* \mb g}^2 + 4\brac{\Re(t\mb a_k^* \mb g + \mb a_k^* \mb x)(\mb g^* \mb a_k)}^2 - 2 \abs{(t\mb a_k^* \mb g + \mb a_k^* \mb x)(\mb g^* \mb a_k)}^2 \\
\ge\; & \frac{1}{m}\sum_{k=1}^m 2\abs{\mb a_k^* (\mb x + t\mb g)}^2 \abs{\mb a_k^* \mb g}^2 - 2\abs{\mb a_k^* \mb x}^2 \abs{\mb a_k^* \mb g}^2 + 4\brac{\Re(t\mb a_k^* \mb g + \mb a_k^* \mb x)(\mb g^* \mb a_k)}^2. 
\end{align*}
Lemma~\ref{lem:min_fourth} implies when $m \ge C_1 n\log n$ for sufficiently large constant $C_1$, w.h.p., 
\begin{align}
\frac{1}{m}\sum_{k=1}^m 2\abs{\mb a_k^* (\mb x + t\mb g)}^2 \abs{\mb a_k^* \mb g}^2 \ge \frac{199}{100} \abs{(\mb x + t\mb g)^* \mb g}^2 + \frac{199}{100} \norm{\mb x+ t\mb g}{}^2 \norm{\mb g}{}^2
\end{align}
for all $\mb g \in \Cp^n$ and all $t \in [0, \norm{\mb x}{}/\sqrt{7}]$. Lemma~\ref{lem:op_norm_1} implies that when $m \ge C_2 n \log n$ for sufficiently large constant $C_2$, w.h.p., 
\begin{align}
\frac{1}{m}\sum_{k=1}^m 2\abs{\mb a_k^* \mb x}^2 \abs{\mb a_k^* \mb g}^2 \le \frac{201}{100} \abs{\mb x^* \mb g}^2 + \frac{201}{100} \norm{\mb x}{}^2 \norm{\mb g}{}^2
\end{align}
for all $\mb g \in \Cp^n$. Moreover, Lemma~\ref{lem:min_fourth} implies when $m \ge C_3 n \log n$ for sufficiently large constant $C_3$, w.h.p., 
\begin{align*}
\frac{4}{m}\sum_{k=1}^m \brac{\Re(t\mb a_k^* \mb g + \mb a_k^* \mb x)(\mb g^* \mb a_k)}^2 \ge 2\norm{\mb x + t\mb g}{}^2 \norm{\mb g}{}^2 + 6 \abs{(\mb x + \mb g)^* \mb g}^2 - \frac{1}{400} \norm{\mb x}{}^2 \norm{\mb g}{}^2
\end{align*}
for all $\mb g \in \mc T$, where we have used that $\Im(\mb g^* \mb x) = 0 \Longrightarrow \Im (\mb x+ \mb g)^* \mb g = 0$ to simplify the results. 

Collecting the above estimates, we obtain that when $m \ge C_4 n \log n$, w.h.p., 
\begin{align*}
& \begin{bmatrix}
\mb g \\
\ol{\mb g}
\end{bmatrix}^* 
\nabla^2 f(\mb x + t\mb g) 
\begin{bmatrix}
\mb g \\
\ol{\mb g}
\end{bmatrix}  \\
\ge\; & 
\paren{\frac{399}{100} \norm{\mb x + t\mb g}{}^2 - \frac{161}{80} \norm{\mb x}{}^2} + \paren{\frac{799}{100} \abs{(\mb x + t\mb g)^* \mb g}^2 - \frac{201}{100} \abs{\mb x^* \mb g}^2 }  \\
=\; & \frac{791}{400} \norm{\mb x}{}^2 + \frac{598}{100} \abs{\mb x^* \mb g}^2 +  \frac{1198}{100}t^2 + \frac{2396}{100} t \Re(\mb x^* \mb g). 
\end{align*}
To provide a lower bound for the above, we let $\Re(\mb x^* \mb g) = \mb x^* \mb g = \lambda \norm{\mb x}{}$ with $\lambda \in [-1, 1]$ and $t = \eta \norm{\mb x}{}$ with $\eta \in [0, 1/\sqrt{7}]$. Then 
\begin{align*}
\frac{598}{100} \abs{\mb x^* \mb g}^2 +  \frac{1198}{100}t^2 + \frac{2396}{100} t \Re(\mb x^* \mb g) = \norm{\mb x}{}^2 \paren{\frac{598}{100} \lambda^2 + \frac{1198}{100} \eta^2 + \frac{2396}{100} \lambda  \eta} \doteq \norm{\mb x}{}^2 \phi(\lambda, \eta). 
\end{align*}
For any fixed $\eta$, it is easy to see that minimizer occurs when $\lambda = -\frac{599}{299} \eta$. Plugging this into $\phi(\lambda, \eta)$, one obtains 
$
\phi(\lambda, \eta) \ge -\frac{241}{20} \eta^2 \ge -\frac{241}{140}
$. 
Thus, 
\begin{align*}
\begin{bmatrix}
\mb g \\
\ol{\mb g}
\end{bmatrix}^* 
\nabla^2 f(\mb x + t\mb g) 
\begin{bmatrix}
\mb g \\
\ol{\mb g}
\end{bmatrix} 
\ge \paren{\frac{791}{400} - \frac{241}{140} } \norm{\mb x}{}^2 \ge \frac{1}{4} \norm{\mb x}{}^2, 
\end{align*}
as claimed. 
\end{proof}

\subsection{Proof of Proposition~\ref{prop:grad-region-z}} \label{pf:prop:grad-region-z}
\begin{proof}
Note that 
\begin{align*}
\mb z^* \nabla_{\mb z} f(\mb z)
& = \frac{1}{m}\sum_{k=1}^m \abs{\mb a_k^* \mb z}^4- \frac{1}{m}\sum_{k=1}^m\abs{\mb a_k^* \mb x}^2 \abs{\mb a_k^* \mb z}^2. 
\end{align*}
By Lemma~\ref{lem:min_fourth}, when $m \ge C_1 n\log n$ for some sufficiently large $C_1$, w.h.p., 
\begin{align*}
\frac{1}{m}\sum_{k=1}^m \abs{\mb a_k^* \mb z}^4 \ge \expect{\frac{1}{m}\sum_{k=1}^m \abs{\mb a_k^* \mb z}^4} - \frac{1}{100} \norm{\mb z}{}^4
\end{align*}
for all $\mb z \in \Cp^n$. On the other hand, Lemma~\ref{lem:op_norm_1} implies that when $m \ge C_2 n \log n$ for some sufficiently large $C_2$, w.h.p., 
\begin{align*}
\frac{1}{m} \sum_{k=1}^m \abs{\mb a_k^* \mb x}^2 \abs{\mb a_k^* \mb z}^2 \le 
\expect{\frac{1}{m} \sum_{k=1}^m \abs{\mb a_k^* \mb x}^2 \abs{\mb a_k^* \mb z}^2} + \frac{1}{1000}\norm{\mb x}{}^2 \norm{\mb z}{}^2. 
\end{align*}
for all $\mb z \in \Cp^n$. Combining the above estimates, we have that when $m \ge \max(C_1, C_2) n \log n$, w.h.p., 
\begin{align*}
\mb z^* \nabla_{\mb z} f(\mb z) \ge \expect{\mb z^* \nabla_{\mb z} f(\mb z)} - \frac{1}{100} \norm{\mb z}{}^4 - \frac{1}{1000}\norm{\mb x}{}^2 \norm{\mb z}{}^2 \ge \frac{1}{1000}\norm{\mb x}{}^2 \norm{\mb z}{}^2
\end{align*}
for all $\mb z \in \mc R_2^{\mb z}$, as desired. 
\end{proof}

\subsection{Proof of Proposition~\ref{prop:grad-region-zx}} \label{pf:prop:grad-region-zx}
\begin{proof}
We abbreviate $\phi(\mb z)$ as $\phi$ below. Note that 
\begin{align*}
(\mb z - \mb x \e^{\im \phi} )^* \nabla_{\mb z} f(\mb z)
& = \frac{1}{m}\sum_{k=1}^m \abs{\mb a_k^* \mb z}^2 (\mb z - \mb x \e^{\im \phi} )^* \mb a_k \mb a_k^* \mb z - \frac{1}{m}\sum_{k=1}^m \abs{\mb a_k^* \mb x}^2 (\mb z - \mb x \e^{\im \phi} )^* \mb a_k \mb a_k^* \mb z. 
\end{align*}
We first bound the second term. By Lemma~\ref{lem:op_norm_1}, when $m \ge C_1 n \log n$ for a sufficiently large constant $C_1$, w.h.p., for all $\mb z \in \Cp^n$, 
\begin{align*}
& \Re\paren{\frac{1}{m}\sum_{k=1}^m \abs{\mb a_k^* \mb x}^2 (\mb z - \mb x \e^{\im \phi})^* \mb a_k \mb a_k^* \mb z }  \\
= \; & \Re\paren{(\mb z - \mb x\e^{\im \phi} )^* \expect{\frac{1}{m}\sum_{k=1}^m \abs{\mb a_k^* \mb x}^2 \mb a_k \mb a_k^*}  \mb z} + \Re\paren{(\mb z - \mb x \e^{\im \phi} )^* \mb \Delta \mb z} \quad (\text{where $\norm{\mb \Delta}{} \le \norm{\mb x}{}^2/1000)$}\\
\le\; & \Re\paren{(\mb z - \mb x\e^{\im \phi} )^* \expect{\frac{1}{m}\sum_{k=1}^m \abs{\mb a_k^* \mb x}^2 \mb a_k \mb a_k^*}  \mb z} + \frac{1}{1000} \norm{\mb x}{}^2 \norm{\mb z - \mb x\e^{\im \phi} }{} \norm{\mb z}{}.
\end{align*}
To bound the first term, for a fixed $\tau$ to be determined later, define: 
\begin{align*}
S(\mb z) & \doteq \frac{1}{m}\sum_{k=1}^m \abs{\mb a_k^* \mb z}^2 \Re\paren{(\mb z - \mb x \e^{\im \phi})^* \mb a_k \mb a_k^* \mb z}, \\
S_1(\mb z) & \doteq \frac{1}{m}\sum_{k=1}^m \abs{\mb a_k^* \mb z}^2 \Re\paren{(\mb z - \mb x \e^{\im \phi})^* \mb a_k \mb a_k^* \mb z} \indicator{\abs{\mb a_k^* \mb x} \le \tau} \\
S_2(\mb z) & \doteq \frac{1}{m}\sum_{k=1}^m \abs{\mb a_k^* \mb z}^2 \Re\paren{(\mb z - \mb x \e^{\im \phi})^* \mb a_k \mb a_k^* \mb z} \indicator{\abs{\mb a_k^* \mb x} \le \tau} \indicator{\abs{\mb a_k^* \mb z} \le \tau}. 
\end{align*}
Obviously $S_1(\mb z) \ge S_2(\mb z)$ for all $\mb z$ as 
\begin{align*}
S_1(\mb z) - S_2(\mb z)
& = \frac{1}{m}\sum_{k=1}^m \abs{\mb a_k^* \mb z}^2 \Re\paren{(\mb z - \mb xe^{\im \theta})^* \mb a_k \mb a_k^* \mb z} \indicator{\abs{\mb a_k^* \mb x} \le \tau} \indicator{\abs{\mb a_k^* \mb z} > \tau} \\
& \ge \frac{1}{m}\sum_{k=1}^m \abs{\mb a_k^* \mb z}^2 \paren{\abs{\mb a_k^* \mb z}^2 - \abs{\mb a_k^* \mb x } \abs{\mb a_k^* \mb z}} \indicator{\abs{\mb a_k^* \mb x} \le \tau} \indicator{\abs{\mb a_k^* \mb z} > \tau} \ge 0. 
\end{align*}
Now for an $\eps \in (0, \norm{\mb x}{})$ to be fixed later, consider an $\eps$-net $N_\eps$ for the ball $\bb{CB}^n(\norm{\mb x}{})$, with $\abs{N_\eps} \le (3\norm{\mb x}{}/\eps)^{2n}$. On the complement of the event $\set{\max_{k \in [m]} \abs{\mb a_k^* \mb x} > \tau}$, we have for any $t > 0$ that 
\begin{align*}
\MoveEqLeft \prob{S(\mb z) - \expect{S(\mb z)} < -t,\; \forall\; \mb z \in N_\eps} \\
\le\; & \abs{N_\eps} \prob{S(\mb z) - \expect{S(\mb z)} < -t} \\
\le\; & \abs{N_\eps} \prob{\; S_1(\mb z) - \expect{S_1(\mb z)} < -t + \abs{\expect{S_1(\mb z)} - \expect{S(\mb z)}}\; }. 
\end{align*}
Because $S_1(\mb z) \ge S_2(\mb z)$ as shown above, 
\begin{align*}
\MoveEqLeft \prob{\; S_1(\mb z) - \expect{S_1(\mb z)} < -t + \abs{\expect{S_1(\mb z)} - \expect{S(\mb z)}}\; } \\
\le\; & \prob{\; S_2(\mb z) - \expect{S_2(\mb z)} < -t + \abs{\expect{S_1(\mb z)} - \expect{S(\mb z)}} + \abs{\expect{S_1(\mb z)} - \expect{S_2(\mb z)}}\; }. 
\end{align*}
Thus, the unconditional probability can be bounded as 
\begin{align*}
\MoveEqLeft \prob{S(\mb z) - \expect{S(\mb z)} < -t,\; \forall\; \mb z \in N_\eps} \\
\le\; & \abs{N_\eps} \prob{\; S_2(\mb z) - \expect{S_2(\mb z)} < -t + \abs{\expect{S_1(\mb z)} - \expect{S(\mb z)}} + \abs{\expect{S_1(\mb z)} - \expect{S_2(\mb z)}}\; } \\ 
& \quad + \prob{\max_{k \in [m]} \abs{\mb a_k^* \mb x} > \tau}. 
\end{align*}
Taking $\tau = \sqrt{10\log m} \norm{\mb x}{}$, we obtain 
\begin{align*}
\prob{\max_{k \in [m]} \abs{\mb a_k^* \mb x} > \tau} & \le m \exp\paren{-\frac{10\log m}{2}} \le m^{-4},  \\
\abs{\expect{S_1(\mb z)} - \expect{S(\mb z)}} 
& \le \sqrt{\expect{\abs{\mb a^* \mb z}^6 \abs{\mb a^*\paren{\mb z - \mb x \e^{\im \phi}}}^2}} \sqrt{\prob{\abs{\mb a^* \mb x} > \tau}} \\
& \js{\le \sqrt{\expect{\norm{\mb a}{}^8}}\sqrt{\bb P_{Z \sim \mc {CN}}\brac{\abs{Z} > \sqrt{10 \log m}}}  \norm{\mb z}{}^3 \norm{\mb z - \mb x \e^{\im \phi} }{}} \\ 
& \le 4\sqrt{3} m^{-5/2}\norm{\mb z}{}^3 \norm{\mb z - \mb x \e^{\im \phi} }{}, \\
\abs{\expect{S_1(\mb z)} - \expect{S_2(\mb z)}} 
& \le \sqrt{\expect{\abs{\mb a^* \mb z}^6 \abs{\mb a^*\paren{\mb z - \mb x \e^{\im \phi}}}^2 \indicator{\abs{\mb a^* \mb x} \le \tau}}} \sqrt{\prob{\abs{\mb a^* \mb z} > \tau}} \\
& \js{\le \sqrt{\expect{\norm{\mb a}{}^8}} \sqrt{\bb P_{Z \sim \mc {CN}}\brac{\abs{Z} > \sqrt{10 \log m}}}  \norm{\mb z}{}^3 \norm{\mb z - \mb x \e^{\im \phi} }{}} \\ 
& \le 4\sqrt{3} m^{-5/2}\norm{\mb z}{}^3 \norm{\mb z - \mb x \e^{\im \phi} }{},
\end{align*}
where we have used $\norm{\mb z}{} \le \norm{\mb x}{}$ to simplify the last inequality. Now we use the moment-control Bernstein's inequality (Lemma~\ref{lem:mc_bernstein_scalar}) to get a bound for probability on deviation of $S_2(\mb z)$. To this end, we have
\begin{align*}
\expect{\abs{\mb a^* \mb z}^{6} \abs{\mb a^* (\mb z - \mb x \e^{\im \phi})}^{2} \indicator{\abs{\mb a^* \mb x} \le \tau} \indicator{\abs{\mb a^* \mb z} \le \tau}} 
& \le \tau^{2} \expect{\abs{\mb a^* \mb z}^{4} \abs{\mb a^* (\mb z - \mb x \e^{\im \phi} )}^{2}} \\
& \le 240 \log m \norm{\mb x}{}^2 \norm{\mb z}{}^4 \norm{\mb z - \mb x \e^{\im \phi} }{}^2\\
\expect{\abs{\mb a^* \mb z}^{3p} \abs{\mb a^* (\mb z - \mb x \e^{\im \phi})}^{p} \indicator{\abs{\mb a^* \mb x} \le \tau} \indicator{\abs{\mb a^* \mb z} \le \tau}} 
& \le \tau^{2p} \expect{\abs{\mb a^* \mb z}^{p} \abs{\mb a^* (\mb z - \mb x \e^{\im \phi} )}^{p}} \\
& \le \paren{10\log m \norm{\mb x}{}^2}^p p! \norm{\mb z}{}^p \norm{\mb z - \mb x \e^{\im \phi} }{}^p, 
\end{align*}
where the second inequality holds for any integer $p \ge 3$. Hence one can take 
\begin{align*}
\sigma^2 & =  240 \log^2 m \norm{\mb x}{}^4 \norm{\mb z}{}^2 \norm{\mb z - \mb x \e^{\im \phi} }{}^2, \\
R & = 10 \log m \norm{\mb x}{}^2 \norm{\mb z}{} \norm{\mb z - \mb x \e^{\im \phi} }{}
\end{align*}
in Lemma~\ref{lem:mc_bernstein_scalar}, and 
\begin{align*}
t = \frac{1}{1000} \norm{\mb x}{}^2 \norm{\mb z}{} \norm{\mb z - \mb x \e^{\im \phi} }{} 
\end{align*}
in the deviation inequality of $S_2(\mb z)$ to obtain 
\begin{align*}
\prob{S_2(\mb z)- \expect{S_2(\mb z)} < -\frac{1}{200}\norm{\mb x}{}^2 \norm{\mb z}{} \norm{\mb z - \mb x \e^{\im \phi} }{} } \le \exp\paren{-\frac{c_2m}{\log^2 m}} , 
\end{align*}
where we have used the fact $\norm{\mb z}{} \le \norm{\mb x}{}$ and assumed $4\sqrt{3} m^{-5/2} \le 1/200$ to simplify the probability. Thus, with probability at least $1 - m^{-4} - \exp\paren{-c_2 m/\log^2 m + 2n \log(3\norm{\mb x}{}/\eps)}$, it holds that 
\begin{align}
S(\mb z) \ge \expect{S(\mb z)}-\frac{1}{1000} \norm{\mb x}{}^2 \norm{\mb z}{} \norm{\mb z - \mb x \e^{\im \phi}}{} \quad \forall \; \mb z \in N_\eps.
\end{align}
Moreover, for any $\mb z, \mb z' \in \mc R_2^{\mb h}$, we have 
\begin{align*}
\MoveEqLeft \abs{S(\mb z) - S(\mb z')} \\
\le\; & \frac{1}{m} \sum_{k=1}^m \abs{\abs{\mb a_k^* \mb z}^2 - \abs{\mb a_k^* \mb z'}^2} \abs{\mb h^*(\mb z) \mb a_k \mb a_k^* \mb z}  + \frac{1}{m} \sum_{k=1}^m \abs{\mb a_k^* \mb z'}^2 \abs{\mb h^*(\mb z) \mb a_k \mb a_k^* \mb z - \mb h^*(\mb z') \mb a_k \mb a_k^* \mb z'} \\
\le\; & 4\max_{k \in [m]} \norm{\mb a_k}{}^4 \norm{\mb x}{}^3 \norm{\mb z - \mb z'}{}  + 5\max_{k \in [m]} \norm{\mb a_k}{}^4 \norm{\mb x}{}^3 \norm{\mb z - \mb z'}{} \\
\le\; & 90 n^2 \log^2 m \norm{\mb x}{}^3 \norm{\mb z - \mb z'}{}, 
\end{align*}
as $\max_{k \in [m]} \norm{\mb a_k}{}^4 \le 10n^2 \log^2 m$ with probability at least $1 - c_3m^{-n}$, and $11\norm{\mb x}{}/20 \le \norm{\mb z}{} \le \norm{\mb x}{}$, and also $\|\mb x \e^{\im \phi(\mb z)} - \mb x \e^{\im \phi(\mb z')}\| \le 2 \norm{\mb z - \mb z'}{}$ for $\mb z, \mb z' \in \mc R_2^{\mb h}$. Every $\mb z \in \mc R_2^{\mb h}$ can be  written as $\mb z= \mb z' + \mb e$, with $\mb z' \in N_\eps$ and $\norm{\mb e}{} \leq \eps$. Thus, 
\begin{align*}
S(\mb z) 
& \ge S(\mb z') - 90 n^2 \log^2 m \norm{\mb x}{}^3 \eps \\
& \ge 2\norm{\mb z'}{}^4 - 2\norm{\mb z'}{}^2 \abs{\mb x^* \mb z'}-\frac{1}{1000} \norm{\mb x}{}^2 \norm{\mb z'}{} \norm{\mb z' - \mb x \e^{\im \phi} }{} - 90 n^2 \log^2 m \norm{\mb x}{}^3 \eps \\
& \ge \expect{S(\mb z)}-\frac{1}{1000} \norm{\mb x}{}^2 \norm{\mb z}{} \norm{\mb z- \mb x \e^{\im \phi} }{} - 11\eps \norm{\mb x}{}^3 -  90 n^2 \log^2 m \norm{\mb x}{}^3 \eps, 
\end{align*}
where the additional $11\eps \norm{\mb x}{}^3$ term in the third line is to account for the change from $\mb z'$ to $\mb z$, which has been simplified by assumptions that $11/20 \cdot \norm{\mb x}{} \le \norm{\mb z}{} \le \norm{\mb x}{}$ and that $\eps \le \norm{\mb x}{}$. Choosing $\eps = \norm{\mb x}{}/(c_5n^2\log^2 m)$ for a sufficiently large $c_5 > 0$ and additionally using $\mathrm{dist}(\mb z, \mc X) \ge \norm{\mb x}{}/3$, we obtain that 
\begin{align}
	S(\mb z) \geq \expect{S(\mb z)} - \frac{1}{500} \norm{\mb x}{}^2 \norm{\mb z}{} \norm{\mb z- \mb x \e^{\im \phi} }{}
\end{align}
for all $\mb z \in  \mc R_2^{\mb h}$, with probability at least $1 - c_6m^{-1} - c_7\exp\paren{-c_2m/\log^2 m + c_9n \log (C_8 n \log m )}$. 

Combining the above estimates, when $m \ge C_{10} n \log^3 n$ for sufficiently large constant $C_{10}$, w.h.p., 
\begin{align*}
	\Re \paren{(\mb z - \mb x \e^{\im \phi} )^* \nabla_{\mb z} f(\mb z) } \geq \frac{1}{1000} \norm{\mb x}{}^2 \norm{\mb z}{}\norm{\mb z- \mb x \e^{\im \phi} }{}
\end{align*}
for all $\mb z \in \mc R_2^{\mb h}$, as claimed. 
\end{proof}

\subsection{Proof of Proposition~\ref{prop:region-cover}} \label{pf:prop:region-cover}
\begin{proof}
\js{$\mc R_2^{\mb z}$ and $\mc R_2^{\mb h}$ are complicated regions defined by algebraic inequalities, making a direct proof daunting. We will present an indirect proof by showing that $\mc R_1 \cup \mc R_3 \cup \mc R_2^{\mb z} \cup \mc R_2^{\mb h} = \Cp^n$: since by definition, $\mc R_1 \cup \mc R_3$ and $\mc R_2$ are disjoint and $\paren{\mc R_1 \cup \mc R_3} \cup \mc R_2 = \Cp^n$, the stated set equality implies $\mc R_2 \subset \mc R_2^{\mb z} \cup \mc R_2^{\mb h}$.} 

For convenience, we will define a relaxed $\mc R_2^{\mb h}$ region 
\begin{align*}
\mc R_2^{\mb h'} \doteq \set{\mb z: \Re\paren{\innerprod{\mb h(\mb z) }{\nabla_{\mb z}\expect{f}}} \geq \frac{1}{250}\norm{\mb x}{}^2 \norm{\mb z}{} \norm{\mb h(\mb z)}{}, \norm{\mb z}{} \le \norm{\mb x}{}} \supset \mc R_2^{\mb h}
\end{align*}
and try to show that $\mc R_1 \cup \mc R_2^{\mb z} \cup \mc R_2^{\mb h'} \cup \mc R_3 = \Cp^n$. In the end, we will discuss how this implies the desired set equality. 

We will first divide $\Cp^n$ into several (overlapping) regions, and show that each such region is a subset of $\mc R_1 \cup \mc R_2^{\mb z} \cup \mc R_2^{\mb h'} \cup \mc R_3$. \js{The reason for introducing this parallel division is again that the original sets are very irregular: although each of them has nice geometric properties as we have established, they cannot be described by simple algebraic inequalities. This irregularity makes an analytic argument of the desired coverage hard. The new division scheme will partition $\Cp^n$ into only circular cones and cone segments. This makes keeping track of the covering process much easier. } 

\paragraph{Cover $\mc R_a \doteq \Brac{\mb z: \abs{\mb x^*\mb z}\leq  \frac{1}{2} \norm{\mb x}{} \norm{\mb z}{}} $: } In this case, when $\norm{\mb z}{}^2 \le \tfrac{398}{601}\norm{\mb x}{}^2$, 
\begin{align*}
	8 \abs{\mb x^*\mb z}^2 + \frac{401}{100} \norm{\mb x}{}^2 \norm{\mb z}{}^2 \leq \frac{601}{100} \norm{\mb x}{}^2 \norm{\mb z}{}^2 \le \frac{398}{100} \norm{\mb x}{}^4. 
\end{align*}
On the other hand, when $\norm{\mb z}{}^2 \ge \tfrac{626}{995} \norm{\mb x}{}^2$, 
\begin{align*}
\frac{501}{500}\norm{\mb x}{}^2 \norm{\mb z}{}^2 + \abs{\mb x^*\mb z}^2 \leq  \frac{313}{250} \norm{\mb x}{}^2 \norm{\mb z}{}^2 \le \frac{199}{100} \norm{\mb z}{}^4. 
\end{align*}
Since $\tfrac{398}{601} > \tfrac{626}{995}$, we conclude that $\mc R_a \subset \mc R_1\cup \mc R_2^{\mb z}$.

\paragraph{Cover $\mc R_b \doteq \Brac{\mb z:\; \abs{\mb x^*\mb z} \geq \frac{1}{2} \norm{\mb x}{} \norm{\mb z}{},\; \norm{\mb z}{} \leq \tfrac{57}{100}\norm{\mb x}{} } $: } In this case, 
\begin{align*}
	  8 \abs{\mb x^*\mb z}^2 + \frac{401}{100}\norm{\mb x}{}^2 \norm{\mb z}{}^2 \leq \frac{1201}{100} \norm{\mb x}{}^2 \norm{\mb z}{}^2 \leq \frac{398}{100} \norm{\mb x}{}^4. 
\end{align*}
So $\mc R_b$ is covered by $\mc R_1$. 

\paragraph{Cover $\mc R_c \doteq \Brac{\mb z: \frac{1}{2} \norm{\mb x}{} \norm{\mb z}{} \leq \abs{\mb x^*\mb z} \leq \frac{99}{100} \norm{\mb x}{} \norm{\mb z}{} ,\; \norm{\mb z}{} \geq \frac{11}{20} \norm{\mb x}{} } $: } We show this region is covered by $\mc R_2^{\mb z}$ and $\mc R_2^{\mb h'}$. First, for any $\mb z \in \mc R_c$, when $\norm{\mb z}{} \ge \sqrt{\tfrac{1983}{1990}}\norm{\mb x}{}$, 
\begin{align*}
\frac{501}{500} \norm{\mb x}{}^2 \norm{\mb z}{}^2 + \abs{\mb x^*\mb z}^2 \leq \frac{1983}{1000} \norm{\mb x}{}^2 \norm{\mb z}{}^2 \le \frac{199}{100} \norm{\mb z}{}^4, 
\end{align*}
implying that $\mc R_c \cap \set{\mb z: \norm{\mb z}{} \ge \sqrt{\tfrac{1983}{1990}}\norm{\mb x}{}} \subset \mc R_2^{\mb z}$. Next we suppose $\norm{\mb z}{} = \lambda \norm{\mb x}{}$ and $\abs{\mb x^* \mb z} = \eta \norm{\mb x}{} \norm{\mb z}{}$, where $\lambda \in [\tfrac{11}{20}, \sqrt{\tfrac{1984}{1990}}]$ and $\eta \in [\tfrac{1}{2}, \tfrac{99}{100}]$, and show the rest of $\mc R_c$ is covered by $\mc R_2^{\mb h'}$. To this end, it is enough to verify that 
\begin{align*}
	2\paren{\norm{\mb x}{}^2 - \norm{\mb z}{}^2 }\abs{\mb x^*\mb z} +2 \norm{\mb z}{}^4 - \norm{\mb x}{}^2 \norm{\mb z}{}^2 - \abs{\mb x^*\mb z}^2 - \frac{1}{250} \norm{\mb x}{}^2\norm{\mb z}{} \sqrt{\norm{\mb x}{}^2 + \norm{\mb z}{}^2 - 2\abs{\mb x^* \mb z}} \geq 0
\end{align*}
over this subregion. Writing the left as a function of $\lambda, \eta$ and eliminating $\norm{\mb x}{}$ and $\norm{\mb z}{}$, it is enough to show 
\begin{align*}
h(\lambda, \eta) \doteq 2(1-\lambda^2) \eta + 2 \lambda^3 - \lambda - \eta^2 \lambda - \frac{1}{250} \sqrt{1 + \lambda^2 - 2\eta \lambda} \ge 0,  
\end{align*}
which is implied by 
\begin{align*}
p(\lambda, \eta) \doteq 2(1-\lambda^2) \eta + 2 \lambda^3 - \lambda - \eta^2 \lambda \ge \frac{49}{10000}, 
\end{align*}
as $\frac{1}{250} \sqrt{1 + \lambda^2 - 2\eta \lambda} < 49/10000$. Let $\mb H_p$ be the Hessian matrix of this bivariate function, it is easy to verify that $\mathrm{det}(\mb H_p) = -4 (\eta + \lambda)^2 - 36 \lambda^2 < 0$ for all valid $(\lambda, \eta)$. Thus, the minimizer must occur on the boundary. For any fixed $\lambda$, $2(1-\lambda^2) \eta - \eta^2 \lambda$ is minimized at either $\eta = 99/100$ or $\eta = 1/2$. When $\eta = 99/100$, $p$ is minimized at $\lambda = (4 \cdot 0.99 + \sqrt{40 \cdot 0.99^2 + 24})/12 <\sqrt{1984/1990}$, giving $p \ge 0.019$; when $\eta = 1/2$, $p$ is minimized when $\lambda = (4\cdot 0.5 + \sqrt{40 \cdot 0.5^2 + 24}/12) = (2+\sqrt{34})/12 $, giving $p \ge 0.3$. Overall, $p \ge 0.019 > 49/10000$, as desired. 

\paragraph{Cover $\mc R_d \doteq \Brac{\mb z: \frac{99}{100} \norm{\mb x}{} \norm{\mb z}{} \leq \abs{\mb x^*\mb z } \leq \norm{\mb x}{} \norm{\mb z}{} ,\; \norm{\mb z}{} \geq \frac{11}{20} \norm{\mb x}{} }  $: } We show that this region is covered by $\mc R_2^{\mb z}$, $\mc R_3$, and $\mc R_2^{\mb h'}$ together. First, for any $\mb z \in \mc R_d$, when $\norm{\mb z}{} \ge \sqrt{\tfrac{1001}{995}} \norm{\mb x}{}$, 
\begin{align*}
	\frac{501}{500} \norm{\mb x}{}^2 \norm{\mb z}{}^2+ \abs{\mb x^*\mb z}^2 \leq \frac{1001}{500}\norm{\mb x}{}^2 \norm{\mb z}{}^2 \leq \frac{199}{100} \norm{\mb z}{}^4. 
\end{align*}
So $\mc R_d \cap \set{\mb z: \norm{\mb z}{} \ge \sqrt{\tfrac{1001}{995}} \norm{\mb x}{}} \subset \mc R_2^{\mb z}$. Next, we show that any $\mb z \in \mc R_d$ with $\norm{\mb z}{} \le 24/25 \cdot \norm{\mb x}{}$ is contained in $\mc R_2^{\mb h'}$. Similar to the above argument for $\mc R_c$, it is enough to show 
\begin{align*}
p(\lambda, \eta) \doteq 2(1-\lambda^2) \eta + 2\lambda^3 - \lambda - \eta^2 \lambda \ge 0.00185, 
\end{align*} 
as $\frac{1}{250} \sqrt{1 + \lambda^2 - 2\eta \lambda} < 0.00185$ in this case. Since the Hessian is again always indefinite, we check the optimal value for $\eta = 99/100$ and $\eta = 1$ and do the comparison. It can be verified $p \ge 0.00627 > 0.00185$ in this case. So $\mc R_d \cap \set{\mb z: \norm{\mb z}{} \le \frac{24}{25} \norm{\mb x}{}} \subset \mc R_2^{\mb h'}$. Finally, we consider the case $\tfrac{23}{25} \norm{\mb x}{} \le \norm{\mb z}{} \le \sqrt{\tfrac{1005}{995}} \norm{\mb x}{}$. A $\lambda, \eta$ argument as above leads to 
\begin{align*}
\norm{\mb h(\mb z)}{}^2 = \norm{\mb x}{}^2 + \norm{\mb z}{}^2 - 2 \abs{\mb x^* \mb z}  < \frac{1}{7} \norm{\mb x}{}^2, 
\end{align*} 
implying that $\mc R_d \cap \set{\mb z: \tfrac{23}{25} \norm{\mb x}{} \le \norm{\mb z}{} \le \sqrt{\tfrac{1005}{995}} \norm{\mb x}{}} \subset \mc R_3$. 

In summary, now we obtain that $\Cp^n = \mc R_a \cup \mc R_b \cup \mc R_c \cup \mc R_d \subset \mc R_1 \cup \mc R_2^{\mb z} \cup \mc R_2^{\mb h'} \cup \mc R_3$. Observe that $\mc R_{\mb h'}$ is only used to cover $\mc R_c \cup \mc R_d$, which is in turn a subset of $\set{\mb z: \norm{\mb z}{} \ge 11\norm{\mb x}{}/20}$. Thus, $\Cp^n = \mc R_1 \cup \mc R_2^{\mb z} \cup (\mc R_2^{\mb h'} \cap \set{\mb z: \norm{\mb z}{} \ge 11\norm{\mb x}{}/20}) \cup \mc R_3$. Moreover, by the definition of $\mc R_3$, 
\begin{align*}
& \mc R_1 \cup \mc R_2^{\mb z} \cup (\mc R_2^{\mb h'} \cap \set{\mb z: \norm{\mb z}{} \ge 11\norm{\mb x}{}/20}) \cup \mc R_3 \\
=\;  & \mc R_1 \cup \mc R_2^{\mb z} \cup (\mc R_2^{\mb h'} \cap \set{\mb z: \norm{\mb z}{} \ge 11\norm{\mb x}{}/20} \cap \mc R_3^c)  \cup \mc R_3 \\
\subset\; & \mc R_1 \cup \mc R_2^{\mb z} \cup \mc R_2^{\mb h} \cup \mc R_3 \subset \Cp^n, 
\end{align*}
implying the claimed coverage. 
\end{proof}


\section{Proofs of Technical Results for Trust-Region Algorithm}\label{app:algorithm}

\subsection{Auxiliary Lemmas}
\begin{lemma} \label{lem:gauss_concen_diff}
When $m \ge Cn$ for a sufficiently large $C$, it holds with probability at least $1 - c_a \exp(-c_b m)$ that 
\begin{align*}
\frac{1}{m}\sum_{k=1}^m \abs{\abs{\mb a_k^* \mb z}^2 - \abs{\mb a_k^* \mb w}^2} \le \frac{3}{2} \norm{\mb z - \mb w}{} (\norm{\mb z}{} + \norm{\mb w}{})
\end{align*}
for all $\mb z, \mb w \in \Cp^n$. Here $C$, $c_a$, $c_b$ are positive absolute constants. 
\end{lemma}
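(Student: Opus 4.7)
The plan is to reduce everything to a single fact: for complex Gaussian measurements, when $m \ge Cn$ the empirical covariance is close to identity in operator norm. Concretely, standard sub-Gaussian concentration (Wishart-type) gives that with probability at least $1 - c_a\exp(-c_b m)$,
\begin{align*}
\norm{\frac{1}{m}\sum_{k=1}^m \mb a_k \mb a_k^* - \mb I}{} \le \frac{1}{4},
\end{align*}
so that $\frac{1}{m}\sum_k \abs{\mb a_k^* \mb v}^2 \le \frac{5}{4} \norm{\mb v}{}^2$ uniformly for all $\mb v \in \Cp^n$. Once this is in hand, the desired uniform Lipschitz-type bound follows deterministically from elementary inequalities.

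First I would factor pointwise via the reverse triangle inequality:
\begin{align*}
\abs{\abs{\mb a_k^* \mb z}^2 - \abs{\mb a_k^* \mb w}^2} = \bigl(\abs{\mb a_k^* \mb z} + \abs{\mb a_k^* \mb w}\bigr)\bigl|\,\abs{\mb a_k^* \mb z} - \abs{\mb a_k^* \mb w}\,\bigr| \le \bigl(\abs{\mb a_k^* \mb z} + \abs{\mb a_k^* \mb w}\bigr)\,\abs{\mb a_k^*(\mb z - \mb w)}.
\end{align*}
Then by Cauchy--Schwarz applied to the sum over $k$,
\begin{align*}
\frac{1}{m}\sum_{k=1}^m \abs{\abs{\mb a_k^* \mb z}^2 - \abs{\mb a_k^* \mb w}^2} \le \sqrt{\frac{1}{m}\sum_{k=1}^m \abs{\mb a_k^*(\mb z-\mb w)}^2}\cdot \sqrt{\frac{1}{m}\sum_{k=1}^m \bigl(\abs{\mb a_k^* \mb z} + \abs{\mb a_k^* \mb w}\bigr)^2}.
\end{align*}
The first factor is at most $\sqrt{5/4}\,\norm{\mb z-\mb w}{}$ by the operator-norm estimate. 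For the second, expand the square and apply Cauchy--Schwarz once more (at the sum level) to the cross term $2\abs{\mb a_k^*\mb z}\abs{\mb a_k^* \mb w}$, obtaining
\begin{align*}
\frac{1}{m}\sum_{k=1}^m \bigl(\abs{\mb a_k^* \mb z} + \abs{\mb a_k^* \mb w}\bigr)^2 \le \frac{5}{4}\norm{\mb z}{}^2 + \frac{5}{4}\norm{\mb w}{}^2 + 2\cdot \frac{5}{4}\norm{\mb z}{}\norm{\mb w}{} = \frac{5}{4}\bigl(\norm{\mb z}{}+\norm{\mb w}{}\bigr)^2.
\end{align*}
Multiplying the two square roots yields $\tfrac{5}{4}\norm{\mb z-\mb w}{}(\norm{\mb z}{}+\norm{\mb w}{}) \le \tfrac{3}{2}\norm{\mb z-\mb w}{}(\norm{\mb z}{}+\norm{\mb w}{})$, which is uniform over all $\mb z, \mb w$ (no discretization needed because the stochastic content is absorbed into the single operator-norm event).

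There is no real obstacle here: the only delicate point is squeezing out the constant $3/2$ rather than something like $\sqrt{2}(1+\delta)$. That is why I expand $(|\mb a_k^*\mb z|+|\mb a_k^*\mb w|)^2$ explicitly and Cauchy--Schwarz the cross term, which produces a clean $(\norm{\mb z}{}+\norm{\mb w}{})^2$ with the same prefactor $(1+\delta)$ as from operator-norm concentration, instead of the looser $2(\norm{\mb z}{}^2+\norm{\mb w}{}^2)$ that would arise from $(a+b)^2 \le 2(a^2+b^2)$. Choosing $\delta = 1/4$ then comfortably delivers the advertised $3/2$, and any standard concentration inequality for $\frac{1}{m}\sum \mb a_k \mb a_k^*$ (e.g.\ a net-plus-Bernstein or a direct Wishart deviation bound) supplies the required probability with the stated sample size.
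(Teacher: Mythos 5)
Your proof is correct, and it takes a genuinely different route from the paper's. The paper proves this lemma by invoking Lemma~3.1 of Cand\`es--Li (the PhaseLift RIP-type bound), which gives $\frac{1}{m}\sum_k \bigl|\abs{\mb a_k^*\mb z}^2 - \abs{\mb a_k^*\mb w}^2\bigr| \le \frac{3}{2\sqrt 2}\norm{\mb z\mb z^* - \mb w\mb w^*}{\ast}$, and then controls the nuclear norm of the rank-two matrix $\mb z\mb z^* - \mb w\mb w^*$ by $\sqrt 2\,\norm{\mb z-\mb w}{}(\norm{\mb z}{}+\norm{\mb w}{})$. You instead factor the difference of squares pointwise, apply the reverse triangle inequality to get $\abs{\mb a_k^*(\mb z-\mb w)}$, and then use Cauchy--Schwarz twice so that the only stochastic input is the operator-norm concentration of the empirical covariance $\frac{1}{m}\sum_k \mb a_k\mb a_k^*$ around $\mb I$ --- a fact the paper already uses elsewhere (and which follows from its Lemma on the spectrum of complex Gaussian matrices). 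What your approach buys is self-containedness and elementarity: you avoid importing the external PhaseLift lemma (whose own proof requires a covering argument over Hermitian matrices), the uniformity over all $\mb z,\mb w$ comes for free from a single operator-norm event, and you even obtain the slightly sharper constant $5/4$ in place of $3/2$. The paper's route buys brevity by citation. Both yield the stated $m \ge Cn$ sample size and $1 - c_a\exp(-c_b m)$ failure probability.
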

\begin{proof}
Lemma 3.1 in~\cite{candes2013phaselift} has shown that when $m \ge C_1 n$, it holds with probability at least $1 - c_2\exp(-c_3m)$ that 
\begin{align*}
\frac{1}{m}\sum_{k=1}^m \abs{\abs{\mb a_k^* \mb z}^2 - \abs{\mb a_k^* \mb w}^2} \le \frac{3}{2\sqrt{2}}\norm{\mb z \mb z^* - \mb w \mb w^*}{\ast}
\end{align*}
for all $\mb z$ and $\mb w$, where $\norm{\cdot}{\ast}$ is the nuclear norm that sums up singular values. The claims follows from 
\begin{align*}
\norm{\mb z \mb z^* - \mb w \mb w^*}{\ast} \le \sqrt{2} \norm{\mb z \mb z^* - \mb w \mb w^*}{} \le \sqrt{2}\norm{\mb z - \mb w}{} (\norm{\mb z}{} + \norm{\mb w}{}), 
\end{align*}
completing the proof. 
\end{proof}

\begin{lemma}\label{lem:hessian-conc-x}
When $m \geq C n \log n$, with probability at least $1 - c_a m^{-1} - c_b\exp\paren{-c_c m/\log m}$, 
	\begin{align*}
		\norm{ \nabla^2 f(\mb x \e^{\im \psi })- \bb E\brac{ \nabla^2 f(\mb x \e^{\im \psi })  }  }{} \leq \frac{1}{100} \norm{\mb x}{}^2
	\end{align*}
	for all $\psi \in [0, 2\pi)$. Here $C$, $c_a$ to $c_c$ are positive absolute constants. 
\end{lemma}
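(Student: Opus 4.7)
The plan is to exploit the fact that when $\mb z = \mb x \e^{\im \psi}$, the dependence of $\nabla^2 f(\mb z)$ on $\psi$ reduces to scalar factors of unit modulus, so that uniformity over $\psi$ is essentially automatic once we have concentration at a single point. Substituting $\mb z = \mb x \e^{\im \psi}$ into \eqref{eqn:hessian-finite}, we see that $\abs{\mb a_k^* \mb z}^2 = \abs{\mb a_k^* \mb x}^2 = y_k^2$, so $2\abs{\mb a_k^* \mb z}^2 - y_k^2 = y_k^2$, and $(\mb a_k^* \mb z)^2 = \e^{2\im \psi}(\mb a_k^* \mb x)^2$. Hence
\begin{align*}
\nabla^2 f(\mb x \e^{\im \psi}) = \frac{1}{m} \sum_{k=1}^m
\begin{bmatrix}
\abs{\mb a_k^* \mb x}^2 \mb a_k \mb a_k^* & \e^{2\im \psi} (\mb a_k^* \mb x)^2 \mb a_k \mb a_k^\top \\
\e^{-2\im \psi} (\mb x^* \mb a_k)^2 \ol{\mb a_k} \mb a_k^* & \abs{\mb a_k^* \mb x}^2 \ol{\mb a_k} \mb a_k^\top
\end{bmatrix}.
\end{align*}
The corresponding expectation, computed via Lemma~\ref{lem:expect-func} at $\mb z = \mb x \e^{\im \psi}$, has the matching block structure with diagonal blocks $\mb x \mb x^* + \norm{\mb x}{}^2 \mb I$ and $\ol{\mb x} \mb x^\top + \norm{\mb x}{}^2 \mb I$, and off-diagonal blocks $2 \e^{2\im \psi} \mb x \mb x^\top$ and $2 \e^{-2\im \psi} \ol{\mb x} \mb x^*$.

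Next, I would invoke Lemma~\ref{lem:op_norm_1} with $\mb v = \mb x$ and $\delta = 1/400$. Provided $m \geq C n \log n$ with a large enough absolute constant $C$, this yields, with probability at least $1 - c_a m^{-1} - c_b \exp(-c_c m/\log m)$, the two operator-norm bounds
\begin{align*}
\norm{\frac{1}{m} \sum_{k=1}^m \abs{\mb a_k^* \mb x}^2 \mb a_k \mb a_k^* - \paren{\mb x \mb x^* + \norm{\mb x}{}^2 \mb I}}{} & \leq \frac{1}{400} \norm{\mb x}{}^2, \\
\norm{\frac{1}{m} \sum_{k=1}^m (\mb a_k^* \mb x)^2 \mb a_k \mb a_k^\top - 2 \mb x \mb x^\top}{} & \leq \frac{1}{400} \norm{\mb x}{}^2.
\end{align*}
The analogous bounds for the conjugate versions of these sums (i.e., the $(2,2)$ and $(2,1)$ blocks in the decomposition above) follow immediately, since conjugation and transposition preserve the operator norm.

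Finally, I would combine these four block-wise bounds. The $\psi$-dependence enters only through the unit-modulus scalars $\e^{\pm 2\im \psi}$ multiplying entire blocks, and so does not inflate the block operator norms. Writing $\nabla^2 f(\mb x \e^{\im \psi}) - \expect{\nabla^2 f(\mb x \e^{\im \psi})}$ as a $2 \times 2$ block matrix and bounding its spectral norm by the sum of the four block operator norms, each of which is at most $\tfrac{1}{400} \norm{\mb x}{}^2$, gives an overall bound of $\tfrac{4}{400}\norm{\mb x}{}^2 = \tfrac{1}{100}\norm{\mb x}{}^2$. The bound holds simultaneously for all $\psi \in [0, 2\pi)$ on the same high-probability event. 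There is no real obstacle here; the only point worth noting is the observation that the $\psi$-uniformity is free, so a single application of Lemma~\ref{lem:op_norm_1} at $\mb v = \mb x$ suffices (no covering argument over $\psi$ is needed).
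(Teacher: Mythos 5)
Your proposal is correct and follows essentially the same route as the paper: evaluate the Hessian at $\mb z = \mb x\e^{\im\psi}$, observe that $\psi$ enters only through unit-modulus phases, and apply Lemma~\ref{lem:op_norm_1} with $\mb v = \mb x$ and a small constant $\delta$. The only cosmetic difference is the bookkeeping of constants: the paper groups the two diagonal blocks and the two off-diagonal blocks (each group contributing a single $\tfrac{1}{200}\norm{\mb x}{}^2$, since a block-diagonal or block-anti-diagonal matrix has norm equal to the maximum of its blocks), whereas you sum all four block norms with $\delta = 1/400$ each; both yield the stated $\tfrac{1}{100}\norm{\mb x}{}^2$.
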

\begin{proof}
By Lemma~\ref{lem:op_norm_1}, we have that 
\begin{align*}
& \norm{\nabla^2 f(\mb x \e^{\im \psi }) - \bb E\brac{ \nabla^2 f(\mb x \e^{\im \psi })  }  }{} \\
\le\; & \norm{\frac{1}{m}\sum_{k=1}^m \abs{\mb a_k^* \mb x}^2 \mb a_k \mb a_k - \paren{\norm{\mb x}{}^2 \mb I + \mb x \mb x^*}}{} + \norm{\frac{1}{m}\sum_{k=1}^m  (\mb a_k^* \mb x)^2 \mb a_k \mb a_k^\top \e^{\im 2\psi} - 2\mb x \mb x^\top \e^{\im 2\psi}}{} \\
\le\; & \frac{1}{200} \norm{\mb x}{}^2  + \frac{1}{200}\norm{\mb x}{}^2 \le \frac{1}{100} \norm{\mb x}{}^2 
\end{align*}
holds w.h.p. when $m \ge C_1 n \log n$ for a sufficiently large $C_1$. 
\end{proof}

\subsection{Proof of Lemma~\ref{lem:lipschitz}} \label{pf:lem:lipschitz}
\begin{proof}
For any $\mb z, \mb z' \in \Gamma'$, we have
\begin{align*}
\abs{f(\mb z) - f(\mb z')} 
& = \frac{1}{2m}\abs{\sum_{k=1}^m \abs{\mb a_k^* \mb z}^4 - \abs{\mb a_k^* \mb z'}^4 - 2\sum_{k=1}^m \abs{\mb a_k^* \mb x}^2 (\abs{\mb a_k^* \mb z}^2  - \abs{\mb a_k^* \mb z'}^2) } \\
& \le \frac{1}{2m}\sum_{k=1}^m (\abs{\mb a_k^* \mb z}^2 + \abs{\mb a_k^* \mb z'}^2) \abs{\abs{\mb a_k^* \mb z}^2 - \abs{\mb a_k^* \mb z'}^2} + \frac{1}{m}\sum_{k=1}^m \abs{\mb a_k^* \mb x}^2 \abs{\abs{\mb a_k^* \mb z}^2 - \abs{\mb a_k^* \mb z'}^2} \\
& \le 4R_1^2 \norm{\mb A}{\ell^1 \to \ell^2}^2 \cdot \frac{3}{2} \cdot 4R_1 \norm{\mb z - \mb z'}{} +  2\norm{\mb A}{\ell^1 \to \ell^2}^2 \norm{\mb x}{}^2 \cdot \frac{3}{2} \cdot 4R_1 \norm{\mb z - \mb z'}{} \\
& \le (24 R_1^3 \norm{\mb A}{\ell^1 \to \ell^2}^2 + 12 \norm{\mb A}{\ell^1 \to \ell^2}^2 \norm{\mb x}{}^2 R_1) \norm{\mb z - \mb z'}{}, 
\end{align*}
where in the third line we invoked results of Lemma~\ref{lem:gauss_concen_diff}, and hence the derived inequality holds w.h.p. when $m \ge C_1 n$. Similarly, for the gradient, 
\begin{align*}
\MoveEqLeft \norm{\nabla f(\mb z) - \nabla f(\mb z')}{} \\
=\; &  \frac{\sqrt{2}}{m} \norm{\sum_{k=1}^m \paren{ \abs{\mb a_k^*\mb z}^2 - \abs{\mb a_k^* \mb x}^2 }\mb a_k \mb a_k^* \mb z - \sum_{k=1}^m \paren{ \abs{\mb a_k^*\mb z'}^2 - \abs{\mb a_k^* \mb x}^2 }\mb a_k \mb a_k^* \mb z' }{} \\
\le\; & \frac{\sqrt{2}}{m} \sum_{k=1}^m \norm{(\abs{\mb a_k^* \mb z}^2 - \abs{\mb a_k^* \mb z'}^2 ) \mb a_k \mb a_k^* \mb z}{} + \sqrt{2}  \norm{\frac{1}{m}\sum_{k=1}^m \mb a_k \mb a_k^* \abs{\mb a_k^* \mb z'}^2}{} \norm{\mb z - \mb z'}{}\\
& \quad + \sqrt{2}  \norm{\frac{1}{m}\sum_{k=1}^m \mb a_k \mb a_k^* \abs{\mb a_k^* \mb x}^2}{} \norm{\mb z - \mb z'}{}  \\
\le\; & \sqrt{2} \norm{\mb A}{\ell^1 \to \ell^2}^2 \cdot 2 R_1 \cdot \frac{3}{2} \cdot 4R_1\norm{\mb z - \mb z'}{} + (8\sqrt{2} \norm{\mb A}{\ell^1 \to \ell^2}^2 R_1^2 + 2\sqrt{2} \norm{\mb A}{\ell^1 \to \ell^2}^2 \norm{\mb x}{}^2) \norm{\mb z - \mb z'}{} \\
\le\; &   (20\sqrt{2} \norm{\mb A}{\ell^1 \to \ell^2}^2 R_1^2 + 2\sqrt{2} \norm{\mb A}{\ell^1 \to \ell^2}^2 \norm{\mb x}{}^2) \norm{\mb z - \mb z'}{}, 
\end{align*}
where from the second to the third inequality we used the fact $\norm{\frac{1}{m}\sum_{k=1}^m \mb a_k \mb a_k^*}{} \le 2$ with probability at least $1 - \exp(-c_2m)$. Similarly for the Hessian, 
\begin{align*}
\MoveEqLeft \norm{\nabla^2 f(\mb z) - \nabla^2 f(\mb z')}{} \\
=\; &  \sup_{\norm{\mb w}{} =1} 
\abs{\frac{1}{2}\begin{bmatrix}
\mb w \\
\ol{\mb w}
\end{bmatrix}^* 
\paren{\nabla^2 f(\mb z) - \nabla^2 f(\mb z')}
\begin{bmatrix}
\mb w \\
\ol{\mb w}
\end{bmatrix}} \\
\le \; &  \sup_{\norm{\mb w}{} = 1} 2\norm{\frac{1}{m}\sum_{k=1}^m (\abs{\mb a_k^* \mb z}^2 - \abs{\mb a_k^* \mb z'}^2) \abs{\mb a_k^* \mb w}^2}{} + \norm{\frac{1}{m}\sum_{k=1}^m \Re ((\mb a_k^* \mb z)^2 - (\mb a_k^* \mb z')^2) (\mb w^* \mb a_k)^2}{} \\
\le\; & 2 \norm{\mb A}{\ell^1 \to \ell^2}^2 \cdot \frac{3}{2} \cdot 4R_1 \norm{\mb z - \mb z'}{} +  \norm{\mb A}{\ell^1 \to \ell^2}^2 \cdot 4R_1 \cdot \norm{\mb z - \mb z'}{} \cdot 2 \\
\le\; & 16 \norm{\mb A}{\ell^1 \to \ell^2}^2 R_1 \norm{\mb z - \mb z'}{}, 
\end{align*}
where to obtain the third inequality we used that $\frac{1}{m} \norm{\mb A^*}{}^2 \le 2$ with probability at least $1 - \exp(-c_3m)$ when $m \ge C_4n$ for a sufficiently large constant $C_4$. 

Since $R_0 \le 10 \norm{\mb x}{}$ with probability at least $1 - \exp(-c_5 m)$ when $m \ge C_6n$, by definition of $R_1$, we have $R_1 \le 30 (n \log m)^{1/2}\norm{\mb x}{}$ w.h.p.. Substituting this estimate into the above bounds yields the claimed results. 
\end{proof}

\subsection{Proof of Lemma~\ref{lem:hessian-lower-upper}} \label{pf:lem:hessian-lower-upper}
\begin{proof}
For the upper bound, we have that for all $\mb z \in \mc R_3'$, 
\begin{align*}
	\norm{\mb H(\mb z) }{} \le \norm{\nabla^2 f(\mb z) }{} &\le \norm{\nabla^2 f(\mb x e^{\im \phi(\mb z) }) }{} + L_h \norm{\mb h(\mb z) }{} \\
		&\le  \norm{ \nabla^2 f(\mb x e^{\im \phi(\mb z) }) - \bb E \brac{ \nabla^2 f(\mb x e^{\im \phi(\mb z) })} }{} + \norm{ \bb E \brac{ \nabla^2 f(\mb x e^{\im \phi(\mb z) })} }{} + \frac{1}{10} \norm{\mb x}{}^2\\
		& \le \frac{1}{100} \norm{\mb x}{}^2 + 4\norm{\mb x}{}^2 + \frac{1}{10} \norm{\mb x}{}^2 \le \frac{9}{2} \norm{\mb x}{}^2, 
\end{align*}
where to obtain the third line we applied Lemma~\ref{lem:hessian-conc-x}. To show the lower bound for all $\mb z \in \mc R_3'$, it is equivalent to show that 
\begin{align*}
\frac{1}{2} 
\begin{bmatrix}
\mb w \\
\ol{\mb w}
\end{bmatrix}^* 
\nabla^2 f(\mb z)
\begin{bmatrix}
\mb w \\
\ol{\mb w}
\end{bmatrix} 
\ge m_H, \quad \forall\; \norm{\mb w}{} = 1 \; \text{with}\; \Im(\mb w^* \mb z) = 0, \; \text{and} \; \forall\; \mb z \in \mc R_3'. 
\end{align*}
 By Lemma \ref{lem:lipschitz} and Lemma \ref{lem:hessian-conc-x}, w.h.p., we have
\begin{align*}
    \frac{1}{2} \begin{bmatrix}\mb w \\ \ol{\mb w}\end{bmatrix}^* \nabla^2 f(\mb z) \begin{bmatrix}\mb w \\ \ol{\mb w}\end{bmatrix} 
    & \ge  \frac{1}{2} \begin{bmatrix}\mb w \\ \ol{\mb w}\end{bmatrix}^* \nabla^2 f(\mb x e^{\im \phi(\mb z)} ) \begin{bmatrix}\mb w \\ \ol{\mb w}\end{bmatrix} - L_h \norm{\mb h(\mb z) }{} \norm{\mb w}{}^2  \\
		& \ge \frac{1}{2} \begin{bmatrix}\mb w \\ \ol{\mb w}\end{bmatrix}^* \bb E\brac{ \nabla^2 f(\mb x e^{\im \phi(\mb z)} ) } \begin{bmatrix}\mb w \\ \ol{\mb w}\end{bmatrix} - \left(\frac{1}{10} + \frac{1}{100} \right)\norm{\mb x}{}^2 \\
		& = \left(1- \frac{1}{100} - \frac{1}{10}\right) \norm{\mb x}{}^2 + \abs{\mb w^*\mb x}^2 +2 \Re\paren{ (\mb w^*\mb x \e^{\im \phi(\mb z) })^2 } \\
		& \ge \frac{89}{100} \norm{\mb x}{}^2 + \Re\paren{ (\mb w^*\mb x \e^{\im \phi(\mb z) })^2 }. 
\end{align*}
Since $\Im\paren{\mb w^*\mb z }= 0$, we have $\Re\paren{(\mb w^*\mb z)^2 } = \abs{\mb w^*\mb z }^2$. Thus, 
\begin{align*}
		\Re\paren{ (\mb w^*\mb x \e^{\im \phi(\mb z) })^2  } 
		& =  \Re\paren{ (\mb w^*\mb z - \mb w^* \mb h(\mb z) )^2 } \\
		& =  \abs{\mb w^*\mb z }^2 + \Re\paren{( \mb w^*\mb h  )^2  } -2 \Re\paren{ (\mb w^*\mb h(\mb z) ) (\mb w^*\mb z)  }  \\
		& \ge \abs{\mb w^*\mb z }^2 - \norm{ \mb w }{}^2 \norm{ \mb h(\mb z) }{}^2 - 2 \norm{\mb w}{}^2 \norm{\mb h(\mb z)}{} \norm{\mb z}{} \\
	 & \ge - \frac{1}{100 L_h^2} \norm{\mb x}{}^4 - \frac{2}{10L_h} \norm{\mb x}{}^2 \left(\norm{\mb x}{} + \frac{1}{10L_h} \norm{\mb x}{}^2 \right) \\
	 & \ge -\frac{1}{100} \norm{\mb x}{}^2, 
\end{align*}
where we obtained the last inequality based on the fact that $L_h \doteq 480(n\log m)^{1/2} \norm{\mb A}{\ell^1 \to \ell^2}^2 \norm{\mb x}{} \ge 150 \norm{\mb x}{}$ whenever $\norm{\mb A}{\ell^1 \to \ell^2}^2 \ge 1$; this holds w.h.p. when $m \ge C_1n$ for large enough constant $C_1$. Together we obtain 
\begin{align*}
\frac{1}{2} \begin{bmatrix}\mb w \\ \ol{\mb w}\end{bmatrix}^* \nabla^2 f(\mb z) \begin{bmatrix}\mb w \\ \ol{\mb w}\end{bmatrix} 
\ge \frac{89}{100} \norm{\mb x}{}^2 -\frac{1}{100} \norm{\mb x}{}^2 \ge \frac{22}{25}\norm{\mb x}{}^2, 
\end{align*}
as desired. 
\end{proof}

\subsection{Proof of Lemma~\ref{lem:func-decent}} \label{pf:lem:func-decent}
\begin{proof}
In view of Lemma \ref{lem:Taylor-approx}, we have 
\begin{align*}
f(\mb z + \mb \delta_\star) &\le \widehat{f}(\mb \delta_\star; \mb z) + \tfrac{1}{3} L_h \Delta^3 \\
 &\le \widehat{f}(\mb \delta; \mb z ) + \tfrac{1}{3} L_h \Delta^3 \\
 &\le f(\mb z + \mb \delta) + \tfrac{2}{3} L_h \Delta^3 \\
 &\le f(\mb z) - d + \tfrac{2}{3} L_h \Delta^3,
 \end{align*}
	as desired.
\end{proof}

\subsection{Proof of Proposition~\ref{prop:negative-curvature}} \label{pf:prop:negative-curvature}
\begin{proof}
In view of Proposition~\ref{prop:nega-curv}, consider direction $\mb \delta \doteq \mb x \e^{\im \phi(\mb z)} /\norm{\mb x}{}$. Obviously, vectors of the form $t \sigma \mb \delta$ are feasible for~\eqref{eqn:trm-subproblem} for any $t \in [0, \Delta]$ and $\sigma \doteq -\sign([\mb \delta^*, \ol{\mb \delta}^*] \nabla f(\mb z^{(r)}))$. 
 By Lemma~\ref{lem:Taylor-integral-form}, we obtain 
\begin{align*}
	f(\mb z^{(r)}+ t\sigma \mb \delta) 
	\;&=\; f(\mb z^{(r)}) +t\sigma \begin{bmatrix}\mb \delta \\ \ol{\mb \delta}	\end{bmatrix}^*\nabla f(\mb z^{(r)}) +t^2 \int_0^1 (1-s) \begin{bmatrix}\mb \delta \\ \ol{\mb \delta}	\end{bmatrix}^* \nabla^2 f(\mb z^{(r)} + \sigma st \mb \delta )\begin{bmatrix}\mb \delta \\ \ol{\mb \delta}	\end{bmatrix}\; ds \nonumber \\
	& \leq \; f(\mb z^{(r)}) + \frac{t^2}{2} \begin{bmatrix}\mb \delta \\ \ol{\mb \delta}	\end{bmatrix}^* \nabla^2 f(\mb z^{(r)})\begin{bmatrix}\mb \delta \\ \ol{\mb \delta}	\end{bmatrix} \nonumber \\
	&+ t^2 \int_0^1 (1-s) \begin{bmatrix}\mb \delta \\ \ol{\mb \delta}	\end{bmatrix}^* \brac{\nabla^2 f(\mb z^{(r)}+\sigma st \mb \delta) -\nabla^2 f(\mb z^{(r)})  }\begin{bmatrix}\mb \delta \\ \ol{\mb \delta}	\end{bmatrix}\; ds \nonumber \\
	& \leq \; f(\mb z^{(r)}) + \frac{t^2}{2} \begin{bmatrix}\mb \delta \\ \ol{\mb \delta}	\end{bmatrix}^* \nabla^2 f(\mb z^{(r)})\begin{bmatrix}\mb \delta \\ \ol{\mb \delta}	\end{bmatrix} + \frac{L_h}{3} t^3.   
\end{align*}
Thus, we have
\begin{align*}
	f(\mb z^{(r)}+ t\sigma \mb \delta) -f(\mb z^{(r)}) \leq  - \frac{1}{200}t^2 \norm{\mb x}{}^2 + \frac{L_h}{3}t^3.
\end{align*}
Taking $t = \Delta$ and applying Lemma~\ref{lem:func-decent}, we have 
\begin{align*}
f(\mb z^{(r+1)}) - f(\mb z^{(r)}) \le  - \frac{1}{200}\Delta^2 \norm{\mb x}{}^2 + \frac{L_h}{3}\Delta^3 + \frac{2}{3} L_h \Delta^3 \le - \frac{1}{200}\Delta^2\norm{\mb x}{}^2 + L_h \Delta^3 \le -\frac{1}{400} \norm{\mb x}{}^2 \Delta^2, 
\end{align*}
where we obtain the very last inequality using the assumption that $\Delta \le \norm{\mb x}{}^2/(400L_h)$, completing the proof. 
\end{proof}

\subsection{Proof of Proposition~\ref{prop:large-gradient}} \label{pf:prop:large-gradient} 
\begin{proof}
We take 
\begin{align*}
\mb \delta = 
\begin{cases}
- \mb z^{(r)}/\norm{\mb z^{(r)}}{} & \; \mb z^{(r)} \in \mc R_2^{\mb z} \\
-\mb h(\mb z^{(r)})/\norm{\mb h(\mb z^{(r)})}{} & \; \mb z^{(r)} \in \mc R_2^{\mb h}
\end{cases}. 
\end{align*}
Obviously vectors of the form $t\mb \delta$ is feasible for~\eqref{eqn:trm-subproblem} for any $t \in [0, \Delta]$. By Lemma \ref{lem:Taylor-integral-form}, we have
	\begin{align*}
		f(\mb z^{(r)}+ t\mb \delta) \;&=\; f(\mb z^{(r)}) + t \int_0^1 \begin{bmatrix}\mb \delta \\ \ol{\mb \delta}\end{bmatrix}^* \nabla f(\mb z^{(r)}+ st \mb \delta) \; ds \nonumber \\
		&=\; f(\mb z^{(r)}) + t \begin{bmatrix}\mb \delta \\ \ol{\mb \delta}\end{bmatrix}^* \nabla f(\mb x^{(r)}) + t \int_0^1 \begin{bmatrix}\mb \delta \\ \ol{\mb \delta}\end{bmatrix}^* \brac{\nabla f(\mb z^{(r)}+ st \mb \delta) -\nabla f(\mb z^{(r)}) }  \; ds \nonumber \\
		&\leq\; f(\mb z^{(r)}) + t \begin{bmatrix}\mb \delta \\ \ol{\mb \delta}\end{bmatrix}^* \nabla f(\mb z^{(r)})+ t^2 L_g.
	\end{align*}
By Proposition \ref{prop:grad-region-z} and Proposition \ref{prop:grad-region-zx}, we have 
	\begin{align*}
		f(\mb z^{(r)}+ t\mb \delta) - f(\mb z^{(r)}) \leq  - \frac{1}{1000} t \norm{\mb x}{}^2 \|\mb z^{(r)}\| + t^2L_g.
	\end{align*} 
Since $\set{\mb z: \norm{\mb z}{} \le \norm{\mb x}{}/2} \subset \mc R_1$, $\mb z^{(r)}$ of interest here satisfies $\|\mb z^{(r)}\| \ge \norm{\mb x}{}/2$. Thus, 
\begin{align*}
	f(\mb z^{(r)}+ t\mb \delta) - f(\mb z^{(r)}) \leq - \frac{1}{2000} t \norm{\mb x}{}^3 + t^2L_g. 
\end{align*}
Combining the above with Lemma \ref{lem:func-decent}, we obtain 
\begin{align*}
f(\mb z^{(r+1)}) - f(\mb z^{(r)}) \le  - \frac{1}{2000} \Delta \norm{\mb x}{}^3 + \Delta^2L_g + \frac{2}{3}L_h \Delta^3 \le -\frac{1}{4000} \Delta \norm{\mb x}{}^3, 
\end{align*}
provided 
\begin{align*}
\Delta \le \min \set{\frac{\norm{\mb x}{}^3}{8000L_g}, \sqrt{\frac{3\norm{\mb x}{}^3}{16000L_h}}}, 
\end{align*}
as desired.  
\end{proof}

\subsection{Proof of Proposition~\ref{prop:func-value-decrease-R-3}} \label{pf:prop:func-value-decrease-R-3}
\begin{proof}
By Proposition~\ref{prop:str_cvx} and the integral form of Taylor's theorem in Lemma~\ref{lem:Taylor-integral-form}, we have that for any $\mb g$ satisfying $\Im(\mb g^* \mb x) = 0$ and $\norm{\mb g}{} = 1$ and any $t \in [0, \norm{\mb x}{}/\sqrt{7}]$, 
\begin{align*}
f(\mb x + t\mb g) 
& = f(\mb x) + t
\begin{bmatrix}
\mb g \\
\ol{\mb g}
\end{bmatrix}^* \nabla f(\mb x) 
+ t^2 \int_0^1 (1-s) 
\begin{bmatrix}
\mb g \\
\ol{\mb g}
\end{bmatrix}^*
\nabla^2 f(\mb x + st\mb g) 
\begin{bmatrix}
\mb g \\
\ol{\mb g}
\end{bmatrix}\; 
ds \\
& \ge f(\mb x) + t
\begin{bmatrix}
\mb g \\
\ol{\mb g}
\end{bmatrix}^* \nabla f(\mb x) 
+ \frac{1}{8} \norm{\mb x}{}^2 t^2. 
\end{align*}
Similarly, we have 
\begin{align*}
f(\mb x) \ge f(\mb x + t\mb g) -t \begin{bmatrix}
\mb g \\
\ol{\mb g}
\end{bmatrix}^* \nabla f(\mb x + t\mb g) 
+ \frac{1}{8} \norm{\mb x}{}^2 t^2. 
\end{align*}
Combining the above two inequalities, we obtain 
\begin{align*}
t \begin{bmatrix}
\mb g \\
\ol{\mb g}
\end{bmatrix}^*
(\nabla f(\mb x + t\mb g) - \nabla f(\mb x)) \ge \frac{1}{4} \norm{\mb x}{}^2 t^2 \Longrightarrow  \begin{bmatrix}
\mb g \\
\ol{\mb g}
\end{bmatrix}^* \nabla f(\mb x + t\mb g) \ge \frac{1}{4} \norm{\mb x}{}^2 t \ge \frac{1}{40L_h} \norm{\mb x}{}^4, 
\end{align*}
where to obtain the very last bound we have used the fact $\min_{\mb z \in \mc R_3\setminus \mc R_3'} \norm{\mb h(\mb z)}{} \ge \norm{\mb x}{}^2/(10L_h)$ due to~\eqref{eq:r3p_def}. This implies that for all $\mb z \in \mc R_3\setminus \mc R_3'$,  
\begin{align} \label{eq:grad_bound_r3d}
\begin{bmatrix}
\mb h(\mb z) \\
\ol{\mb h(\mb z)}
\end{bmatrix}^* \nabla f(\mb z) \ge \frac{1}{40L_h} \norm{\mb x}{}^4. 
\end{align}
The rest arguments are very similar to that of Proposition~\ref{prop:large-gradient}. Take $\mb \delta = -\mb h(\mb z^{(r)}) /\norm{\mb h(\mb z^{(r)})}{}$ and it can checked vectors of the form $t\mb \delta$ for $t \in [0, \Delta]$ are feasible for~\eqref{eqn:trm-subproblem}. By Lemma \ref{lem:Taylor-integral-form}, we have
	\begin{align*}
		f(\mb z^{(r)}+ t\mb \delta) \;&=\; f(\mb z^{(r)}) + t \int_0^1 \begin{bmatrix}\mb \delta \\ \ol{\mb \delta}\end{bmatrix}^* \nabla f(\mb z^{(r)}+ st \mb \delta) \; ds \nonumber \\
		&= f(\mb z^{(r)}) + t \begin{bmatrix}\mb \delta \\ \ol{\mb \delta}\end{bmatrix}^* \nabla f(\mb x^{(r)}) + t \int_0^1 \begin{bmatrix}\mb \delta \\ \ol{\mb \delta}\end{bmatrix}^* \brac{\nabla f(\mb z^{(r)}+ st \mb \delta) -\nabla f(\mb z^{(r)}) }  \; ds \nonumber \\
		&\leq f(\mb z^{(r)}) + t \begin{bmatrix}\mb \delta \\ \ol{\mb \delta}\end{bmatrix}^* \nabla f(\mb z^{(r)})+ t^2 L_g \\
		& \le f(\mb z^{(r)}) - \frac{1}{40L_h} t\norm{\mb x}{}^4 + t^2 L_g, 
	\end{align*}
where to obtain the last line we have used~\eqref{eq:grad_bound_r3d}. Combining the above with Lemma \ref{lem:func-decent}, we obtain 
\begin{align*}
f(\mb z^{(r+1)}) - f(\mb z^{(r)}) \le - \frac{1}{40L_h} \Delta \norm{\mb x}{}^4 + \Delta^2 L_g + \frac{2}{3}L_h\Delta^3 \le -\frac{1}{80L_h} \Delta \norm{\mb x}{}^4, 
\end{align*}
provided 
\begin{align*}
\Delta \le \min \set{\frac{\norm{\mb x}{}^4}{160L_h L_g}, \sqrt{\frac{3}{320}} \frac{\norm{\mb x}{}^2}{L_h}}, 
\end{align*}
as desired.  
\end{proof}

\subsection{Proof of Proposition~\ref{prop:fix-decrease-R-3'}} \label{pf:prop:fix-decrease-R-3'}
\begin{proof}
If we identify $\Cp^n$ with $\R^{2n}$, it can be easily verified that the orthoprojectors of a vector $\mb w$ onto $\mb z$ and its orthogonal complement are
\begin{align*}
\mc P_{\mb z}(\mb w) = \frac{\Re(\mb z^* \mb w) \mb z}{\norm{\mb z}{}^2}, \quad \text{and} \quad \mc P_{\mb z^\perp}(\mb w) = \mb w - \frac{\Re(\mb z^* \mb w) \mb z}{\norm{\mb z}{}^2}. 
\end{align*}
Now at any point $\mb z^{(r)} \in \mc R_3'$, consider a feasible direction of the form $\mb \delta \doteq -t\mc P_{(\im \mb z^{(r)})^\perp} \nabla_{\mb z^{(r)}} f(\mb z^{(r)})$ ($0 \le t \le \Delta/\|\mc P_{(\im \mb z^{(r)})^\perp} \nabla_{\mb z^{(r)}} f(\mb z^{(r)})\|$) to the trust-region subproblem~\eqref{eqn:trm-subproblem}. The local quadratic approximation obeys 
\begin{align*}
	\wh{f}(\mb \delta; \mb z^{(r)}) 
	& = f(\mb z^{(r)}) + 
	\begin{bmatrix}
	\mb \delta \\
	\ol{\mb \delta}
	\end{bmatrix}^* 
	\nabla f(\mb z^{(r)}) + \frac{1}{2} 	\begin{bmatrix}
	\mb \delta \\
	\ol{\mb \delta}
	\end{bmatrix}^* \nabla^2 f(\mb z^{(r)})
		\begin{bmatrix}
	\mb \delta \\
	\ol{\mb \delta}
	\end{bmatrix} \\
	& \le  f(\mb z^{(r)}) - 2t \norm{ \mc P_{(\im \mb z^{(r)})^\perp} \nabla_{\mb z^{(r)}} f(\mb z^{(r)}) }{}^2 + t^2 M_H \norm{ \mc P_{(\im \mb z^{(r)})^\perp} \nabla_{\mb z^{(r)}} f(\mb z^{(r)}) }{}^2 \\
	& = f(\mb z^{(r)}) - 2t \paren{1- \frac{M_H}{2}t } \norm{ \mc P_{(\im \mb z^{(r)})^\perp} \nabla_{\mb z^{(r)}} f(\mb z^{(r)}) }{}^2,
\end{align*}
where $M_H$ is as defined in Lemma~\ref{lem:hessian-lower-upper}. Taking $t =\min\{ M_H^{-1}, \Delta/\|\mc P_{(\im \mb z^{(r)})^\perp} \nabla_{\mb z^{(r)}} f(\mb z^{(r)})\|\}$, we have 
\begin{align*}
	\wh{f}(\mb \delta; \mb z^{(r)})  - f(\mb z^{(r)}) \le -\min\{ M_H^{-1}, \Delta/\|\mc P_{(\im \mb z^{(r)})^\perp} \nabla_{\mb z^{(r)}} f(\mb z^{(r)})\|\} \norm{ \mc P_{(\im \mb z^{(r)})^\perp} \nabla_{\mb z^{(r)}} f(\mb z^{(r)}) }{}^2. 
\end{align*}
Let $\mb U$ be an orthogonal (in geometric sense) basis for the space $\set{\mb w: \Im(\mb w^* \mb z^{(r)}) = 0}$. In view of the transformed gradient and Hessian in~\eqref{eqn:def-g-H}, it is easy to see 
\begin{align*}
\norm{ \mc P_{(\im \mb z^{(r)})^\perp} \nabla_{\mb z^{(r)}} f(\mb z^{(r)}) }{} = \frac{1}{\sqrt{2}} \norm{\mb g(\mb z^{(r)})}{}, 
\end{align*}
where $\mb g(\mb z^{(r)})$ is the transformed gradient. To lower bound $\norm{ \mc P_{(\im \mb z^{(r)})^\perp} \nabla_{\mb z^{(r)}} f(\mb z^{(r)}) }{}$, recall the step is constrained, we have
\begin{align*}
\Delta \le \norm{\mb H^{-1}(\mb z^{(r)}) \mb g(\mb z^{(r)})}{}  \le \norm{\mb H^{-1}(\mb z^{(r)})}{} \norm{\mb g(\mb z^{(r)})}{} \le \frac{1}{\lambda_{\min}(\mb H(\mb z^{(r)}))} \norm{\mb g(\mb z^{(r)})}{}. 
\end{align*}
By Lemma \ref{lem:hessian-lower-upper}, $\lambda_{\min}(\mb H(\mb z^{(r)})) \geq m_H$. Thus, 
\begin{align*}
\norm{\mb g(\mb z^{(r)})}{} \ge m_H\Delta.  
\end{align*}
Hence we have 
\begin{align*}
\wh{f}(\mb \delta; \mb z^{(r)})  - f(\mb z^{(r)}) \le -\min\set{\frac{m_H^2 \Delta^2}{2M_H}, \frac{\Delta^2 m_H}{\sqrt{2}}} \le -\frac{m_H^2 \Delta^2}{2M_H},  
\end{align*}
where the last simplification is due to that $M_H \ge m_H$. By Lemma \ref{lem:Taylor-approx}, we have
\begin{align*}
	f(\mb z^{(r)}+ \mb \delta) - f(\mb z^{(r)}) \leq -\frac{m_H^2 \Delta^2}{2M_H} + \frac{L_h}{3} \Delta^3.
\end{align*}
Therefore, for $\mb z^{(r+1)} = \mb z^{(r)} + \mb \delta_\star$, Lemma \ref{lem:func-decent} implies that
\begin{align*}
	f(\mb z^{(r+1)}) - f(\mb z^{(r)}) \leq -\frac{m_H^2 \Delta^2}{2M_H} + L_h \Delta^3.
\end{align*}
The claimed result follows provided $\Delta \le m_H^2/(4M_H L_h)$, completing the proof. 
\end{proof}

\subsection{Proof of Proposition~\ref{prop:grad-quad-conv}} \label{pf:prop:grad-quad-conv}

Before proceeding, we note one important fact that is useful below. For any $\mb z$, we have 
\begin{align*}
\mc P_{\im \mb z} \nabla_{\mb z} f(\mb z) = \frac{\Re((\im \mb z)^* \nabla_{\mb z} f(\mb z))}{\norm{\mb z}{}^2} \im \mb z = \mb 0. 
\end{align*}
Thus, if $\mb U(\mb z)$ is an (geometrically) orthonormal basis constructed for the space $\set{\mb w: \Im(\mb w^* \mb z) = 0}$ (as defined around~\eqref{eqn:def-g-H}), it is easy to verify that 
\begin{align}
\begin{bmatrix}
\mb U \\
\ol{\mb U}
\end{bmatrix}
\begin{bmatrix}
\mb U \\
\ol{\mb U}
\end{bmatrix}^*
\nabla f(\mb z) = 2\nabla f(\mb z). 
\end{align}

We next prove Proposition~\ref{prop:grad-quad-conv}.

\begin{proof}
Throughout the proof, we write $\mb g^{(r)}$, $\mb H^{(r)}$ and $\mb U^{(r)}$ short for $\mb g(\mb z^{(r)})$, $\mb H(\mb z^{(r)})$ and $\mb U(\mb z^{(r)})$, respectively. Given an orthonormal basis $\mb U^{(r)}$ for $\set{\mb w: \Im(\mb w^* \mb z^{(r)}) = 0}$, the unconstrained optimality condition of the trust region method implies that
\begin{align*}
\mb H^{(r)} \mb \xi_\star + \mb g^{(r)} = \mb 0 \Longleftrightarrow 
\begin{bmatrix}
\mb U^{(r)} \\
\ol{\mb U^{(r)}}
\end{bmatrix}^* 
\nabla^2 f(\mb z^{(r)})
\begin{bmatrix}
\mb U^{(r)} \\
\ol{\mb U^{(r)}}
\end{bmatrix}
\mb \xi_\star + \begin{bmatrix}
\mb U^{(r)} \\
\ol{\mb U^{(r)}}
\end{bmatrix}^* \nabla f(\mb z^{(r)}) = \mb  0. 
\end{align*}
Thus, we have 
{\small 
\begin{align*}
& \|\nabla f(\mb z^{(r+1)})\| \\
=\; & \frac{1}{2}\norm{\begin{bmatrix}
\mb U^{(r+1)} \\
\ol{\mb U^{(r+1)}}
\end{bmatrix}
\begin{bmatrix}
\mb U^{(r+1)} \\
\ol{\mb U^{(r+1)}}
\end{bmatrix}^* 
\nabla f(\mb z^{(r+1)})
}{} \\
=\; &  
\frac{1}{2}\norm{\begin{bmatrix}
\mb U^{(r+1)} \\
\ol{\mb U^{(r+1)}}
\end{bmatrix}
\begin{bmatrix}
\mb U^{(r+1)} \\
\ol{\mb U^{(r+1)}}
\end{bmatrix}^* 
\nabla f(\mb z^{(r+1)}) - 
\begin{bmatrix}
\mb U^{(r)} \\
\ol{\mb U^{(r)}}
\end{bmatrix}
\begin{bmatrix}
\mb U^{(r)} \\
\ol{\mb U^{(r)}}
\end{bmatrix}^* 
\paren{
\nabla^2 f(\mb z^{(r)})
\begin{bmatrix}
\mb U^{(r)} \\
\ol{\mb U^{(r)}}
\end{bmatrix}
\mb \xi_\star + \nabla f(\mb z^{(r)})}}{}  \\
\le\; &  
\frac{1}{2}\norm{
\begin{bmatrix}
\mb U^{(r+1)} \\
\ol{\mb U^{(r+1)}}
\end{bmatrix}
\begin{bmatrix}
\mb U^{(r+1)} \\
\ol{\mb U^{(r+1)}}
\end{bmatrix}^* \brac{\nabla f(\mb z^{(r+1)}) - \nabla f(\mb z^{(r)}) - \nabla^2 f(\mb z^{(r)})
\begin{bmatrix}
\mb U^{(r)} \\
\ol{\mb U^{(r)}}
\end{bmatrix}
\mb \xi_\star}}{}\\
& \quad + 
\frac{1}{2}
\norm{
\paren{
\begin{bmatrix}
\mb U^{(r+1)} \\
\ol{\mb U^{(r+1)}}
\end{bmatrix}
\begin{bmatrix}
\mb U^{(r+1)} \\
\ol{\mb U^{(r+1)}}
\end{bmatrix}^* - 
\begin{bmatrix}
\mb U^{(r)} \\
\ol{\mb U^{(r)}}
\end{bmatrix}
\begin{bmatrix}
\mb U^{(r)} \\
\ol{\mb U^{(r)}}
\end{bmatrix}^*} 
\paren{\nabla^2 f(\mb z^{(r)})
\begin{bmatrix}
\mb U^{(r)} \\
\ol{\mb U^{(r)}}
\end{bmatrix}
\mb \xi_\star + \nabla f(\mb z^{(r)})}}{} \\
\le\; & 
\norm{\nabla f(\mb z^{(r+1)}) - \nabla f(\mb z^{(r)}) - \nabla^2 f(\mb z^{(r)})
\begin{bmatrix}
\mb U^{(r)} \\
\ol{\mb U^{(r)}}
\end{bmatrix}
\mb \xi_\star}{} \\
& \quad + 
\frac{1}{2}\norm{\begin{bmatrix}
\mb U^{(r+1)} \\
\ol{\mb U^{(r+1)}}
\end{bmatrix}
\begin{bmatrix}
\mb U^{(r+1)} \\
\ol{\mb U^{(r+1)}}
\end{bmatrix}^* - 
\begin{bmatrix}
\mb U^{(r)} \\
\ol{\mb U^{(r)}}
\end{bmatrix}
\begin{bmatrix}
\mb U^{(r)} \\
\ol{\mb U^{(r)}}
\end{bmatrix}^*}{} 
\norm{\nabla^2 f(\mb z^{(r)})
\begin{bmatrix}
\mb U^{(r)} \\
\ol{\mb U^{(r)}}
\end{bmatrix}
\mb \xi_\star + \nabla f(\mb z^{(r)})}{}. 
\end{align*}
}
By Taylor's theorem and Lipschitz property in Lemma~\ref{lem:lipschitz}, we have 
\begin{align}
& \norm{\nabla f(\mb z^{(r+1)}) - \nabla f(\mb z^{(r)}) - \nabla^2 f(\mb z^{(r)})
\begin{bmatrix}
\mb U^{(r)} \\
\ol{\mb U^{(r)}}
\end{bmatrix}
\mb \xi_\star}{}  \nonumber \\
= & \norm{\int_0^1 \brac{\nabla^2 f(\mb z^{(r)} + t \begin{bmatrix}
\mb U^{(r)} \\
\ol{\mb U^{(r)}}
\end{bmatrix}
\mb \xi_\star) - \nabla^2 f(\mb z^{(r)})} \begin{bmatrix}
\mb U^{(r)} \\
\ol{\mb U^{(r)}}
\end{bmatrix} \mb \xi_\star\; dt}{} \nonumber\\
\le & \norm{\mb \xi_\star}{} \int_0^1 \norm{\nabla^2 f(\mb z^{(r)} + t \begin{bmatrix}
\mb U^{(r)} \\
\ol{\mb U^{(r)}}
\end{bmatrix}
\mb \xi_\star) - \nabla^2 f(\mb z^{(r)})}{} \; dt \nonumber \\
\le & \frac{1}{2}L_h \norm{\mb \xi_\star}{}^2. 
\end{align}
Moreover, 
\begin{align*}
\norm{\nabla f(\mb z^{(r)})}{} 
& = \frac{1}{\sqrt{2}} \norm{\begin{bmatrix}
\mb U^{(r)} \\
\ol{\mb U^{(r)}}
\end{bmatrix}^* \nabla f(\mb z^{(r)})}{}  \\
& = \frac{1}{\sqrt{2}} \norm{-\begin{bmatrix}
\mb U^{(r)} \\
\ol{\mb U^{(r)}}
\end{bmatrix}^* 
\nabla^2 f(\mb z^{(r)})
\begin{bmatrix}
\mb U^{(r)} \\
\ol{\mb U^{(r)}}
\end{bmatrix}
\mb \xi_\star}{} \le \sqrt{2} \norm{\nabla^2 f(\mb z^{(r)})}{} \norm{\mb \xi_\star}{}, 
\end{align*}
where to obtain the second equality we have used the optimality condition discussed at start of the proof. Thus, using the result above, we obtain
\begin{align}
\norm{\nabla^2 f(\mb z^{(r)})
\begin{bmatrix}
\mb U^{(r)} \\
\ol{\mb U^{(r)}}
\end{bmatrix}
\mb \xi_\star + \nabla f(\mb z^{(r)})}{} \le 2\sqrt{2} \norm{\nabla^2 f(\mb z^{(r)})}{} \norm{\mb \xi_\star}{}. 
\end{align}
On the other hand, 
\begin{align*}
& \norm{\begin{bmatrix}
\mb U^{(r+1)} \\
\ol{\mb U^{(r+1)}}
\end{bmatrix}
\begin{bmatrix}
\mb U^{(r+1)} \\
\ol{\mb U^{(r+1)}}
\end{bmatrix}^* - 
\begin{bmatrix}
\mb U^{(r)} \\
\ol{\mb U^{(r)}}
\end{bmatrix}
\begin{bmatrix}
\mb U^{(r)} \\
\ol{\mb U^{(r)}}
\end{bmatrix}^*}{} \\
\le\; & \norm{\mb U^{(r+1)} (\mb U^{(r+1)})^* - \mb U^{(r)} (\mb U^{(r)})^*}{} + \norm{\mb U^{(r+1)} (\mb U^{(r+1)})^\top - \mb U^{(r)} (\mb U^{(r)})^\top}{}. 
\end{align*}
Write $\mb U^{(r+1)} = \mb U^{(r+1)}_\Re + \im \mb U^{(r+1)}_\Im$, where $\mb U^{(r+1)}_\Re$ and $\mb U^{(r+1)}_\Im$ collect respectively entrywise real and imaginary parts of $\mb U^{(r+1)}$. It is not difficult to verify that $\mb V^{(r+1)} \doteq [\mb U^{(r+1)}_{\Re}; \mb U^{(r+1)}_{\Im}] \in \R^{2n \times (2n-1)}$ is an orthonormal matrix. We also define $\mb V^{(r)}$ accordingly. Thus, 
\begin{align*}
\norm{\mb U^{(r+1)} (\mb U^{(r+1)})^* - \mb U^{(r)} (\mb U^{(r)})^*}{} 
& = \norm{[\mb I, \im \mb I] \paren{\mb V^{(r+1)} (\mb V^{(r+1)})^\top - \mb V^{(r)} (\mb V^{(r)})^\top} [\mb I, -\im \mb I]^\top }{} \\
& \le 2 \norm{\mb V^{(r+1)} (\mb V^{(r+1)})^\top - \mb V^{(r)} (\mb V^{(r)})^\top}{}\\
& \le 2\sqrt{2} \norm{\mb V^{(r+1)} (\mb V^{(r+1)})^\top - \mb V^{(r)} (\mb V^{(r)})^\top}{R}, 
\end{align*}
where from the second to the third line we translate the complex operator norm to the real operator norm. Similarly, we also get 
\begin{align*}
\norm{\mb U^{(r+1)} (\mb U^{(r+1)})^\top - \mb U^{(r)} (\mb U^{(r)})^\top}{} \le 2\sqrt{2} \norm{\mb V^{(r+1)} (\mb V^{(r+1)})^\top - \mb V^{(r)} (\mb V^{(r)})^\top}{R}. 
\end{align*}
Since $\im \mb z^{(r)}$ is the normal vector of the space generated by $\mb U^{(r)}$, $[-\mb z^{(r)}_{\Im}; \mb z^{(r)}_{\Re}]$ is the corresponding normal vector of $\mb V^{(r)}$. By Lemma~\ref{lem:sp_angle_norm}, the largest principal angle $\theta_1$ between the subspaces designated by $\mb V^{(r+1)}$ and $\mb V^{(r)}$ are the angle between their normal vectors $\mb a \doteq [-\mb z^{(r)}_{\Im}; \mb z^{(r)}_{\Re}]$ and $\mb b \doteq [-\mb z^{(r+1)}_{\Im}; \mb z^{(r+1)}_{\Re}]$. Here we have decomposed $\mb z^{(r+1)}$ and $\mb z^{(r)}$ into real and imaginary parts. Similarly we define $\mb c \doteq [-(\mb \delta_\star)_{\Im}; (\mb \delta_\star)_{\Re}]$. By the law of cosines, 
\begin{align*}
\cos \theta_1 = \frac{\norm{\mb a}{}^2 + \norm{\mb b}{}^2 - \norm{\mb c}{}^2}{2\norm{\mb a}{} \norm{\mb b}{}} \ge 1 - \frac{\norm{\mb c}{}^2}{2\norm{\mb a}{} \norm{\mb b}{}} = 1 - \frac{\norm{\mb \xi_\star}{}^2}{2\norm{\mb z^{(r)}}{} \norm{\mb z^{(r+1)}}{}}. 
\end{align*}
Since $\norm{\mb z^{(r)}}{} \ge \min_{\mb z \in \mc R_3} \norm{\mb z}{} \ge (1-1/\sqrt{7})\norm{\mb x}{} \ge 3\norm{\mb x}{}/5$, and $\norm{\mb z^{(r+1)}}{} \ge \norm{\mb z^{(r)}}{} - \Delta \ge \norm{\mb x}{}/2$ provided
\begin{align*}
\Delta \le \norm{\mb x}{}/10, 
\end{align*} 
we obtain that 
\begin{align*}
\cos \theta_1 \ge 1 - \frac{5}{3 \norm{\mb x}{}^2}\norm{\mb \xi_\star}{}^2. 
\end{align*}
Thus, by Lemma~\ref{lem:sp_angle_norm} again, 
\begin{multline}
\norm{\mb V^{(r+1)} (\mb V^{(r+1)})^\top - \mb V^{(r)} (\mb V^{(r)})^\top}{R} = \sqrt{1-\cos^2 \theta_1} \\ \le \sqrt{\frac{10}{3\norm{\mb x}{}^2} \norm{\mb \delta_\star}{}^2 + \frac{25}{9\norm{\mb x}{}^4} \norm{\mb \xi_\star}{}^4} \le \frac{2}{\norm{\mb x}{}} \norm{\mb \xi_\star}{}, 
\end{multline}
where we used the assumption $\Delta \le \norm{\mb x}{}/10$ again to obtain the last inequality. 

Collecting the above results, we obtain 
\begin{align}
\norm{\nabla f(\mb z^{(r+1)})}{} \le \paren{\frac{1}{2}L_h + \frac{16}{\norm{\mb x}{}} M_H }\norm{\mb \xi_\star}{}^2. 
\end{align}
Invoking the optimality condition again, we obtain 
\begin{align}
 \norm{\mb \xi_\star}{}^2 = \norm{(\mb H^{(r)})^{-1} \mb g^{(r)}}{}^2 \le \frac{1}{m_H^2} \norm{\mb g^{(r)}}{}^2 = \frac{2}{m_H^2} \norm{\nabla f(\mb z^{(r)})}{}^2. 
\end{align}
Here $(\mb H^{(r)})^{-1}$ is well-defined because Lemma \ref{lem:hessian-lower-upper} shows that $\norm{\mb H^{(r)}}{}\geq m_H$ for all $\mb z^{(r)} \in \mc R_3'$. Combining the last two estimates, we complete the proof. 
\end{proof}

\subsection{Proof of Proposition~\ref{prop:unconstraint-steps}} \label{pf:prop:unconstraint-steps}
\begin{proof}
Throughout the proof, we write $\mb g^{(r)}$, $\mb H^{(r)}$ and $\mb U^{(r)}$ short for $\mb g(\mb z^{(r)})$, $\mb H(\mb z^{(r)})$ and $\mb U(\mb z^{(r)})$, respectively.

We first show $\mb z^{(r+1)}$ stays in $\mc R_3'$. From proof of Proposition~\ref{prop:func-value-decrease-R-3}, we know that for all $\mb z \in \mc R_3$, the following estimate holds: 
\begin{align*}
\norm{\nabla f(\mb z)}{} \ge \frac{1}{4\sqrt{2}} \norm{\mb x}{}^2 \norm{\mb h(\mb z)}{}. 
\end{align*}
From Proposition~\ref{prop:grad-quad-conv}, we know that 
\begin{align*}
\|\nabla f(\mb z^{(r+1)})\| \le \frac{1}{m_H^2} \left(L_h + \frac{32}{\norm{\mb x}{}} M_H \right) \| \nabla f(\mb z^{(r)})\|^2
\end{align*}
provided $\Delta \le \norm{\mb x}{}/10$. Moreover,
\begin{align*}
\| \nabla f(\mb z^{(r)})\|^2 = \frac{1}{2} \norm{\mb g^{(r)}}{}^2 \le M_H^2 \norm{(\mb H^{(r)})^{-1} \mb g^{(r)}}{}^2 \le M_H^2 \Delta^2,  
\end{align*}
where the last inequality followed because step $r$ is unconstrained. Combining the above estimates, we obtain that 
\begin{align*}
\|\nabla f(\mb z^{(r+1)})\| \le \frac{1}{m_H^2} \left(L_h + \frac{32}{\norm{\mb x}{}} M_H \right) M_H^2 \Delta^2. 
\end{align*}
Thus, 
\begin{align*}
\norm{\mb h(\mb z^{(r+1)})}{} \le \frac{4\sqrt{2}}{\norm{\mb x}{}^2} \|\nabla f(\mb z^{(r+1)})\| \le \frac{4\sqrt{2}}{m_H^2 \norm{\mb x}{}^2} \paren{L_h + \frac{32}{\norm{\mb x}{}} M_H} M_H^2 \Delta^2. 
\end{align*}
So, provided 
\begin{align*}
\frac{4\sqrt{2}}{m_H^2 \norm{\mb x}{}^2} \left(L_h + \frac{32}{\norm{\mb x}{}} M_H\right) M_H^2 \Delta^2 \le \frac{1}{10L_h} \norm{\mb x}{}^2, 
\end{align*}
$\mb z^{(r+1)}$ stays in $\mc R_3'$. 

Next we show the next step will also be an unconstrained step when $\Delta$ is sufficiently small. We have 
\begin{align*}
	& \| (\mb H^{(r+1)})^{-1} \mb g^{(r+1)}\| \\
	\le\; & \frac{1}{m_H} \|\mb g^{(r+1)}\| = \frac{\sqrt{2}}{m_H} \|\nabla f(\mb z^{(r+1)})\| \\
	\le\; & \frac{\sqrt{2}}{m_H^3} \paren{L_h + \frac{32}{\norm{\mb x}{}} M_H)}\|\nabla f(\mb z^{(r)})\|^2 = \frac{1}{\sqrt{2}m_H^3} \left(L_h + \frac{32}{\norm{\mb x}{}} M_H\right)\| \mb g^{(r)}\|^2 \\
	\le\; & \frac{M_H^2}{\sqrt{2}m_H^3} \left(L_h + \frac{32}{\norm{\mb x}{}} M_H \right) \|(\mb H^{(r)})^{-1} \mb g^{(r)}\|^2 \le \frac{M_H^2}{\sqrt{2}m_H^3} \left(L_h + \frac{32}{\norm{\mb x}{}} M_H \right) \Delta^2, 
\end{align*}
where we again applied results of Proposition~\ref{prop:grad-quad-conv} to obtain the third line, and applied the optimality condition to obtain the fourth line. Thus, whenever 
\begin{align*}
\frac{M_H^2}{\sqrt{2}m_H^3} \left(L_h + \frac{32}{\norm{\mb x}{}} M_H \right) \Delta < 1, 
\end{align*}
the transformed trust-region subproblem has its minimizer $\mb \xi^{(r+1)}$ with $\|\mb \xi^{(r+1)}\| < \Delta$. This implies the minimizer $\mb \delta^{(r+1)}$ to the original trust-region subproblem satisfies $\mb \delta^{(r+1)} < \Delta$, as $\|\mb \delta^{r+1}\| = \|\mb \xi^{(r+1)}\|$. Thus, under the above condition the $(r+1)$-th step is also unconstrained. 

Repeating the above arguments for all future steps implies that all future steps will be constrained within $\mc R_3'$. 

We next provide an explicit estimate for the rate of convergence in terms of distance of the iterate to the target set $\mc X$. Again by Proposition~\ref{prop:grad-quad-conv}, 
\begin{align*}
\|\nabla f(\mb z^{(r + r')})\| 
& \le m_H^2 \left(L_h + \frac{32}{\norm{\mb x}{}} M_H \right)^{-1} \paren{\frac{1}{m_H^2} \left(L_h + \frac{32}{\norm{\mb x}{}} M_H \right) \norm{\nabla f(\mb z^{(r)})}{}}^{2^{r'}} \\
& \le m_H^2 \left(L_h + \frac{32}{\norm{\mb x}{}} M_H\right)^{-1} \paren{\frac{1}{\sqrt{2}m_H^2} \left(L_h + \frac{32}{\norm{\mb x}{}} M_H\right) \norm{\mb g^{(r)}}{}}^{2^{r'}} \\
& \le m_H^2 \left(L_h + \frac{32}{\norm{\mb x}{}} M_H \right)^{-1} \paren{\frac{M_H}{\sqrt{2}m_H^2} \left(L_h + \frac{32}{\norm{\mb x}{}} M_H \right) \Delta}^{2^{r'}}. 
\end{align*}
Thus, provided 
\begin{align*}
\frac{M_H}{\sqrt{2}m_H^2} \left(L_h + \frac{32}{\norm{\mb x}{}} M_H\right) \Delta \le \frac{1}{2}, 
\end{align*}
we have 
\begin{align*}
\norm{\mb h(\mb z^{(r + r')})}{} \le \frac{4\sqrt{2}}{\norm{\mb x}{}^2} \|\nabla f(\mb z^{(r + r')})\| \le \frac{4\sqrt{2} m_H^2}{\norm{\mb x}{}^2} \left(L_h + \frac{32}{\norm{\mb x}{}} M_H\right)^{-1} 2^{-2^{r'}}, 
\end{align*}
as claimed. 
\end{proof}

\noindent {\bf Acknowledgement.} This work was partially supported by funding from the Gordon and Betty Moore Foundation, the Alfred P.\ Sloan Foundation, and the grants ONR N00014-13-1-0492, NSF CCF 1527809, and NSF IIS 1546411. We thank Nicolas Boumal for helpful discussion related to the Manopt package. We thank Mahdi Soltanolkotabi for pointing us to his early result on the local convexity around the target set for GPR in $\R^n$. We also thank Yonina Eldar, Kishore Jaganathan, Xiaodong Li for helpful feedback on a prior version of this paper. We also thank the anonymous reviewers for their careful reading the paper, and for constructive comments which have helped us to substantially improve the presentation. 

\appendices
\section{Basic Tools and Results}

\begin{lemma}[Even Moments of Complex Gaussian]
For $a \sim \mc {CN}(1)$, it holds that 
\begin{align*}
\expect{\abs{a}^{2p}} = p! \quad \forall\; p \in \N. 
\end{align*}
\end{lemma}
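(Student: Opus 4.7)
The plan is to reduce the computation to a standard moment identity for the exponential distribution. By the definition in~\eqref{eqn:complex-gaussian}, I write $a = (X + \im Y)/\sqrt{2}$ with $X, Y \sim \mc N(0, 1)$ independent, so that $\abs{a}^2 = (X^2 + Y^2)/2$. Since $X^2 + Y^2$ is chi-squared with two degrees of freedom, the variable $Z \doteq \abs{a}^2$ follows an $\mathrm{Exp}(1)$ distribution; this can be verified directly by polar-coordinate change of variables on the joint density of $(X,Y)$.

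Once this reduction is in place, the target moment becomes
\begin{align*}
\expect{\abs{a}^{2p}} = \expect{Z^p} = \integral{0}{\infty}{z^p \e^{-z}}{dz} = \Gamma(p+1) = p!,
\end{align*}
using the definition of the gamma function at positive integers. Alternatively, one can avoid the distributional identification entirely and prove the result by induction on $p$ using integration by parts on $\int_0^\infty z^p \e^{-z}\,dz$ with base case $\expect{1} = 1$.

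There is no genuine obstacle here; the lemma is purely computational, and the only point requiring slight care is the factor of $1/\sqrt{2}$ in the definition of $\mc{CN}(1)$, which is exactly what makes $\abs{a}^2$ a unit-rate exponential (rather than, say, $\mathrm{Exp}(1/2)$). Given the earlier excerpt's use of this lemma (e.g., via the bound $\expect{\norm{\mb a}{}^8}$ in the proof of Proposition~\ref{prop:grad-region-zx}), the statement is intended simply as a reference identity, and a two- or three-line proof suffices.
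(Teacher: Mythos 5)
Your proof is correct and follows essentially the same route as the paper's: both reduce $\abs{a}^{2p}$ to the $p$-th moment of a (scaled) $\chi^2(2)$ variable, the only cosmetic difference being that you identify $\abs{a}^2$ directly as $\mathrm{Exp}(1)$ and evaluate the gamma integral, while the paper quotes the moment formula $\bb E_{z\sim\chi^2(2)}[z^p]=2^p p!$ and cancels the $2^{-p}$ scaling. Your remark about the $1/\sqrt{2}$ normalization being exactly what yields a unit-rate exponential is the right point of care.
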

\begin{proof}
Write $a = x + \im y$, then $x, y \sim_{i.i.d.} \mc N(0, 1/2)$. Thus, 
\begin{align*}
\expect{\abs{a}^{2p}} = \bb E_{x, y} \brac{\paren{x^2 + y^2}^p} = \frac{1}{2^p}\bb E_{z \sim \chi^2(2)} \brac{z^p} = \frac{1}{2^p} 2^p p! = p!,   
\end{align*}
as claimed. 
\end{proof}

\begin{lemma}[Integral Form of Taylor's Theorem]\label{lem:Taylor-integral-form}
Consider any continuous function $f(\mb z): \Cp^n \mapsto \R$ with continuous first- and second-order Wirtinger derivatives. For any $\mb \delta \in \bb C^n$ and scalar $t\in \R$, we have
	\begin{align*}
		f(\mb z+t\mb \delta ) &= f(\mb z ) + t \int_0^1 \begin{bmatrix}	\mb \delta \\ \ol{\mb \delta}\end{bmatrix}^* \nabla f(\mb z+ s t \mb \delta )  \; ds, \\
		f(\mb z+t\mb \delta ) &= f(\mb z ) + t \begin{bmatrix}\mb \delta \\ \ol{\mb \delta}\end{bmatrix}^* \nabla f(\mb z ) + t^2 \int_0^1 (1-s)\begin{bmatrix}\mb \delta \\ \ol{\mb \delta}\end{bmatrix}^* \nabla^2 f(\mb z+ st \mb \delta ) \begin{bmatrix}\mb \delta \\ \ol{\mb \delta}\end{bmatrix}\; ds.
	\end{align*}
\end{lemma}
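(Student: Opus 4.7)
The plan is to reduce the statement to the classical integral form of Taylor's theorem for a scalar function of a real variable, applied to the one-variable restriction $g(t) \doteq f(\mb z + t\mb \delta)$. Since $f$ has continuous first- and second-order Wirtinger derivatives, viewing $f$ as a real-valued function on $\R^{2n}$ via the canonical identification $\mb z \leftrightarrow [\Re(\mb z); \Im(\mb z)]$, $g$ is a real-valued $C^2$ function of the single real variable $t$. The classical result then gives
\begin{align*}
g(t) &= g(0) + t \int_0^1 g'(st)\, ds,\\
g(t) &= g(0) + t g'(0) + t^2 \int_0^1 (1-s) g''(st)\, ds.
\end{align*}

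The crux is therefore to identify $g'(t)$ and $g''(t)$ with the Wirtinger expressions that appear in the lemma. First I would write $\mb z = \mb u + \im \mb v$ and $\mb \delta = \mb \delta_u + \im \mb \delta_v$, and use the standard real chain rule to get
\begin{align*}
g'(t) = \frac{\partial f}{\partial \mb u}(\mb z + t\mb \delta)\, \mb \delta_u + \frac{\partial f}{\partial \mb v}(\mb z + t\mb \delta)\, \mb \delta_v.
\end{align*}
Then I would plug in the definitions
$\frac{\partial f}{\partial \mb z} = \tfrac{1}{2}(\frac{\partial f}{\partial \mb u} - \im \frac{\partial f}{\partial \mb v})$ and $\frac{\partial f}{\partial \ol{\mb z}} = \tfrac{1}{2}(\frac{\partial f}{\partial \mb u} + \im \frac{\partial f}{\partial \mb v})$, and verify by direct algebra that
\begin{align*}
g'(t) = \frac{\partial f}{\partial \mb z}(\mb z + t\mb \delta)\, \mb \delta + \frac{\partial f}{\partial \ol{\mb z}}(\mb z + t\mb \delta)\, \ol{\mb \delta}
= \begin{bmatrix}\mb \delta \\ \ol{\mb \delta}\end{bmatrix}^{*} \nabla f(\mb z + t\mb \delta),
\end{align*}
where the last equality uses the definition $\nabla f = [\partial f/\partial \mb z,\ \partial f/\partial \ol{\mb z}]^{*}$ together with the fact that $g'(t)$ is real-valued, so it equals its conjugate.

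Next I would differentiate once more. Applying the same chain-rule identity to each scalar entry of the row $[\partial f/\partial \mb z,\ \partial f/\partial \ol{\mb z}]$, and again converting to Wirtinger derivatives using the $(\mb z, \ol{\mb z})$ formulas, one gets
\begin{align*}
g''(t) = \begin{bmatrix}\mb \delta \\ \ol{\mb \delta}\end{bmatrix}^{*} \nabla^{2} f(\mb z + t\mb \delta) \begin{bmatrix}\mb \delta \\ \ol{\mb \delta}\end{bmatrix},
\end{align*}
where $\nabla^2 f$ is the $2n \times 2n$ Wirtinger Hessian block matrix defined in \eqref{eqn:wirtinger}. This step is the only part that requires care: one must keep track of which entries of the Hessian come from differentiating $\partial f/\partial \mb z$ versus $\partial f/\partial \ol{\mb z}$ with respect to $\mb z$ versus $\ol{\mb z}$, and check that the four blocks reassemble to give the quadratic form above when sandwiched between the conjugate-paired vector $[\mb \delta;\ol{\mb \delta}]$. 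Once $g'$ and $g''$ are identified, substituting into the classical real-variable formulas immediately yields the two claimed identities.

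The main obstacle, if any, is purely bookkeeping around the Wirtinger conventions; there is no analytic difficulty, since continuity of the derivatives is assumed. No probabilistic or high-dimensional tools are needed, and the result holds for any $\mb z, \mb \delta, t$ in the domain of $C^2$-regularity.
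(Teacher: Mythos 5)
Your proposal is correct and follows essentially the same route as the paper: both reduce to the one-real-variable restriction along the segment $\mb z + t\mb\delta$ and obtain the second identity from the scalar first-order formula (the paper derives it by the fundamental theorem of calculus plus integration by parts, which is exactly the proof of the classical scalar Taylor remainder you invoke as a black box). The only difference is that you explicitly carry out the chain-rule bookkeeping identifying $g'$ and $g''$ with the Wirtinger gradient and Hessian quadratic forms, a step the paper's proof takes for granted.
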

\begin{proof}
Since $f$ is continuous differentiable, by the fundamental theorem of calculus, 
	\begin{align*}
		f(\mb z + t\mb \delta ) = f(\mb z ) +\int_0^t \begin{bmatrix}\mb \delta \\ \ol{\mb \delta}\end{bmatrix}^* \nabla f(\mb z + \tau \mb \delta ) \; d\tau. 
	\end{align*}
	Moreover, by integral by part, we obtain
	\begin{align*}
		f(\mb z+t \mb \delta ) &= f(\mb z) + \left.\brac{(\tau-t) \begin{bmatrix}\mb \delta \\ \ol{\mb \delta}\end{bmatrix} ^* \nabla f(\mb z + \tau \mb \delta )   }\right|_{0}^t - \int_{0}^t (\tau-t) \; d \brac{\begin{bmatrix}\mb \delta \\ \ol{\mb \delta}\end{bmatrix}^* \nabla f(\mb z + \tau \mb \delta )} \nonumber \\
		& = f(\mb x) + t  \begin{bmatrix}\mb \delta \\ \ol{\mb \delta}\end{bmatrix}^* \nabla f(\mb z) +\int_{0}^t (t-\tau ) \begin{bmatrix}\mb \delta \\ \ol{\mb \delta}\end{bmatrix}^* \nabla^2 f(\mb z + \tau \mb \delta ) \begin{bmatrix}\mb \delta \\ \ol{\mb \delta}\end{bmatrix}  d\tau. 
	\end{align*}
Change of variable $\tau = st (0 \le s \le 1)$ gives the claimed result. 
\end{proof}

\begin{lemma}[Error of Quadratic Approximation] \label{lem:Taylor-approx}
Consider any continuous function $f(\mb z): \Cp^n \mapsto \R$ with continuous first- and second-order Wirtinger derivatives. Suppose its Hessian $\nabla^2 f(\mb z)$ is $L_h$-Lipschitz. Then the second-order approximation 
	\begin{align*}
		\wh{f}(\mb \delta; \mb z) = f(\mb z) + \begin{bmatrix}\mb \delta \\ \ol{\mb \delta}	\end{bmatrix}^* \nabla f(\mb z) + \frac{1}{2}\begin{bmatrix}\mb \delta \\ \ol{\mb \delta}	\end{bmatrix}^* \nabla^2 f(\mb z) \begin{bmatrix}\mb \delta \\ \ol{\mb \delta}	\end{bmatrix}
	\end{align*}
	around each point $\mb z$ obeys 
	\begin{align*}
		\abs{f(\mb z+ \mb \delta ) - \wh{f}(\mb \delta; \mb z)} \leq \frac{1}{3} L_h \norm{\mb \delta }{}^3.
	\end{align*}	
\end{lemma}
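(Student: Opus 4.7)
The plan is to apply the integral form of the second-order Taylor expansion (Lemma~\ref{lem:Taylor-integral-form} with $t=1$) to $f(\mb z + \mb \delta)$ and then compare against $\wh f(\mb \delta;\mb z)$. Writing
\begin{align*}
f(\mb z + \mb \delta) - \wh f(\mb \delta;\mb z) = \int_0^1 (1-s)\begin{bmatrix}\mb \delta \\ \ol{\mb \delta}\end{bmatrix}^* \bigl[\nabla^2 f(\mb z + s\mb \delta) - \nabla^2 f(\mb z)\bigr]\begin{bmatrix}\mb \delta \\ \ol{\mb \delta}\end{bmatrix}\,ds,
\end{align*}
the task reduces to bounding the integrand pointwise in $s \in [0,1]$.

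Next I would use the $L_h$-Lipschitz hypothesis on $\nabla^2 f$ to conclude $\|\nabla^2 f(\mb z + s\mb \delta) - \nabla^2 f(\mb z)\| \le L_h s \|\mb \delta\|$, combined with the fact that the conjugate-paired vector $[\mb \delta;\ol{\mb \delta}]$ has squared norm $2\|\mb \delta\|^2$. This yields an integrand bounded in absolute value by $2 L_h s \|\mb \delta\|^3 (1-s)$. Then the elementary integral $\int_0^1 s(1-s)\,ds = 1/6$ gives the constant $1/3$, producing exactly $\tfrac{1}{3} L_h \|\mb \delta\|^3$.

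There is essentially no obstacle here — it is a direct application of Taylor's theorem with integral remainder plus Cauchy–Schwarz and the Lipschitz bound. The only minor subtlety is keeping track of the factor $\sqrt{2}$ coming from the conjugate-paired representation $[\mb \delta;\ol{\mb \delta}]$ used throughout the Wirtinger formalism, which is what converts the naive bound $L_h s \|\mb \delta\|^3$ into $2 L_h s \|\mb \delta\|^3$ before integration. Everything else is a routine calculation that does not require any of the problem-specific structure of $f$ in~\eqref{eqn:finite-f}.
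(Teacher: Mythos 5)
Your proposal is correct and follows essentially the same route as the paper's proof: both expand $f(\mb z+\mb \delta)-\wh f(\mb \delta;\mb z)$ via the integral remainder of Lemma~\ref{lem:Taylor-integral-form}, bound the quadratic form by $2\norm{\mb \delta}{}^2$ times the operator-norm deviation of the Hessian, apply the Lipschitz bound $L_h s\norm{\mb \delta}{}$, and evaluate $\int_0^1 s(1-s)\,ds = 1/6$ to obtain the constant $1/3$. The factor of $2$ from the conjugate-paired vector that you flag is exactly the step the paper also makes explicit.
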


\begin{proof}
By integral form of Taylor's theorem in Lemma~\ref{lem:Taylor-integral-form}, 
	\begin{align*}
		\abs{f(\mb z+ \mb \delta ) - \wh{f}(\mb \delta; \mb z )}
		&= \;\abs{ \int_0^1 (1-\tau)  \begin{bmatrix}\mb \delta \\ \ol{\mb \delta}	\end{bmatrix}^*  \brac{\nabla^2 f(\mb x+ \tau\mb \delta )- \nabla^2 f(\mb x) }\begin{bmatrix}\mb \delta \\ \ol{\mb \delta}	\end{bmatrix}  \; d\tau } \\
		&\leq\; 2\norm{\mb \delta}{}^2 \int_0^1 (1-\tau)\norm{\nabla^2 f(\mb x+ \tau\mb \delta )- \nabla^2 f(\mb x) }{} \;d\tau \\
		&\leq \; 2L_h\norm{\mb \delta}{}^3 \int_0^1 (1-\tau) \tau \;d\tau = \frac{L_h}{3}\norm{\mb \delta}{}^3, 
	\end{align*} 
	as desired.
\end{proof}


\begin{lemma}[Spectrum of Complex Gaussian Matrices] \label{lem:spec_cn}
Let $\mb X$ be an $n_1 \times n_2$ ($n_1 > n_2$) matrices with i.i.d. $\mc{CN}$ entries. Then, 
\begin{align*}
\sqrt{n_1} - \sqrt{n_2} \le  \expect{\sigma_{\min}(\mb X)} \le  \expect{\sigma_{\max}(\mb X)} \le  \sqrt{n_1} + \sqrt{n_2}. 
\end{align*}
Moreover, for each $t \ge 0$, it holds with probability at least $1 - 2\exp\paren{-t^2}$ that 
\begin{align*}
\sqrt{n_1} - \sqrt{n_2} - t \le \sigma_{\min}(\mb X) \le \sigma_{\max}(\mb X) \le \sqrt{n_1} + \sqrt{n_2} + t. 
\end{align*}
\end{lemma}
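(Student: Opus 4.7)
This is a classical bound on the extremal singular values of an i.i.d.\ complex Gaussian matrix, and my plan splits the proof into the in-expectation estimate and the Gaussian concentration tail bound, with the latter being the cleaner and more elementary part.

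For the concentration inequality, I would parametrize $\mb X$ by its $2 n_1 n_2$ underlying independent standard real Gaussians via $X_{jk} = (A_{jk} + \im B_{jk})/\sqrt 2$, so that $(A_{jk}, B_{jk})_{j,k} \in \R^{2 n_1 n_2}$ is a standard real Gaussian vector. Both $\sigma_{\min}(\mb X)$ and $\sigma_{\max}(\mb X)$ are $1$-Lipschitz in the Frobenius norm of $\mb X$ (by standard Weyl-type perturbation bounds for singular values), and the linear map $(A,B)\mapsto \mb X$ scales Euclidean distance by $1/\sqrt 2$. Hence each of $\sigma_{\min},\sigma_{\max}$ is a $1/\sqrt 2$-Lipschitz function of the standard Gaussian vector $(A,B)$, and the standard Gaussian Lipschitz concentration inequality delivers
\begin{align*}
\prob{|\sigma - \expect{\sigma}| > t} \;\le\; 2\exp\paren{-t^2},
\end{align*}
for each of $\sigma_{\min},\sigma_{\max}$, which combined with the in-expectation bounds yields the tail statement.

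For the in-expectation bounds, I would invoke Gordon's Gaussian comparison inequality applied to the centered real-valued Gaussian process
\begin{align*}
Z_{\bm{\alpha},\bm{\beta}} \;\doteq\; \Re(\bm{\alpha}^* \mb X \bm{\beta}) \qquad \text{on} \qquad \bb{CS}^{n_1-1} \times \bb{CS}^{n_2-1}.
\end{align*}
Unitary invariance of $\mb X$ gives $\expect{Z_{\bm{\alpha},\bm{\beta}}^2} = 1/2$ on this index set, while $\sigma_{\max}(\mb X) = \max_{\bm{\alpha},\bm{\beta}} Z$ and $\sigma_{\min}(\mb X) = \min_{\bm{\beta}} \max_{\bm{\alpha}} Z$ (using $n_1 > n_2$). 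The comparison process would be a suitably rescaled decoupled process of the form $W_{\bm{\alpha}, \bm{\beta}} = \Re(\bm{\alpha}^* \mb g) + \Re(\mb h^* \bm{\beta})$ for independent $\mb g \sim \mc{CN}(n_1)$ and $\mb h \sim \mc{CN}(n_2)$, chosen so that variances and the relevant Sudakov--Fernique / Gordon increment inequalities hold. After verifying those inequalities, the bounds reduce to estimating $\expect{\max_{\bm{\alpha}} \Re(\bm{\alpha}^* \mb g)} = \expect{\|\mb g\|} \le \sqrt{\expect{\|\mb g\|^2}} = \sqrt{n_1}$ and its $\mb h$ counterpart $\le \sqrt{n_2}$, both by Jensen's inequality.

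The main obstacle is verifying the covariance hypotheses in the complex setting. A direct computation gives $\expect{Z_{\bm{\alpha},\bm{\beta}} Z_{\bm{\alpha}',\bm{\beta}'}} = \tfrac{1}{2}\Re((\bm{\alpha}^* \bm{\alpha}')(\bm{\beta}'^* \bm{\beta}))$, whose structure mixes real and imaginary parts of the two complex inner products, whereas the $W$-covariance depends only on $\Re(\bm{\alpha}^* \bm{\alpha}')$ and $\Re(\bm{\beta}^* \bm{\beta}')$ separately; the naive Sudakov--Fernique increment inequality does \emph{not} hold uniformly over pairs of complex unit vectors without further modification. Bridging this gap typically requires augmenting $W$ with an auxiliary independent Gaussian coordinate to bring the increment inequalities into the right order, or alternatively working directly with the $2n_1 \times 2n_2$ real representation $\mb X_\R$ (with $\sigma_i(\mb X_\R) = \sigma_i(\mb X)$, each with multiplicity two) and carefully tracking the structured covariance there. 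Since this lemma is a classical result appearing in standard references on non-asymptotic random matrix theory, I would sketch the above argument and cite the standard source rather than re-derive it in full.
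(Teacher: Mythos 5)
The paper itself offers no proof of Lemma~\ref{lem:spec_cn}: it is stated in the appendix of basic tools as a standard fact from non-asymptotic random matrix theory, alongside other results quoted from the literature. So there is no argument of the authors' to compare yours against; what matters is whether your reconstruction is sound, and it essentially is. The concentration half is complete and correct: singular values are $1$-Lipschitz in the Frobenius norm, the map $(A,B)\mapsto (A+\im B)/\sqrt 2$ contracts Euclidean distance by $1/\sqrt 2$, and Gaussian Lipschitz concentration with $L = 1/\sqrt 2$ gives tails $\exp(-t^2)$. One bookkeeping remark: to land exactly on the stated failure probability $2\exp(-t^2)$ you should use the two \emph{one-sided} deviation inequalities (upper tail for $\sigma_{\max}$, lower tail for $\sigma_{\min}$) and a union bound; applying the two-sided bound to each singular value would give $4\exp(-t^2)$.

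For the expectation bounds, your covariance computation $\expect{Z_{\bm\alpha,\bm\beta}Z_{\bm\alpha',\bm\beta'}} = \tfrac12\Re\paren{(\bm\alpha^*\bm\alpha')(\bm\beta'^*\bm\beta)}$ is correct, and you have put your finger on the genuine crux: the naive decoupled comparison process fails the Sudakov--Fernique increment condition over complex spheres (e.g.\ with $\bm\alpha^*\bm\alpha' = \bm\beta'^*\bm\beta = \e^{\im\pi/3}$ one gets $\Re(a)+\Re(b)-\Re(ab) = 3/2 > 1$, so $\expect{(Z-Z')^2} > \expect{(W-W')^2}$), which is precisely why the real-case proof does not transfer verbatim. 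Your proposed remedies (an auxiliary Gaussian coordinate, or the structured $2n_1\times 2n_2$ real representation) are the standard ones, and deferring the full verification to a reference is consistent with how the paper treats this lemma. The only caveat is that, as written, the expectation half remains a sketch rather than a proof; if you wanted a self-contained derivation you would need to actually carry out one of those two repairs.
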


\begin{lemma}[Hoeffding-type Inequality, Proposition 5.10 of~\cite{vershynin2012introduction}]\label{lem:hoeffding_type}
	Let $X_1,\cdots, X_N$ be independent centered sub-Gaussian random variables, and let $K = \max_i \norm{X_i}{\psi_2} $, where the sub-Gaussian norm
	\begin{align}
		\norm{ X_i }{\psi_2 } \doteq \sup_{p\geq 1} p^{-1/2} \paren{ \bb E\brac{\abs{X}^p } }^{1/p}.
	\end{align}
	Then for every $\mb b = \brac{b_1; \cdots; b_N } \in \bb C^N$ and every $t\geq 0$, we have
	\begin{align}
		\bb P\paren{ \abs{ \sum_{k=1}^N b_k X_k } \geq t } \leq e \exp\paren{ - \frac{ct^2}{K^2 \norm{\mb b}{2}^2  } }. 
	\end{align}
Here $c$ is a universal constant. 
\end{lemma}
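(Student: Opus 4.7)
The plan is to follow the standard Chernoff route for sub-Gaussian sums, adapted to handle complex coefficients $b_k$. Recall the equivalence between moment control and MGF control for centered sub-Gaussian random variables: if $\|X_i\|_{\psi_2} \le K$ and $\mathbb{E} X_i = 0$, then there is an absolute constant $c_1 > 0$ with
\begin{equation*}
\mathbb{E}\bigl[\exp(\lambda X_i)\bigr] \;\le\; \exp\!\bigl(c_1 \lambda^2 K^2\bigr) \qquad \text{for all } \lambda \in \mathbb{R}.
\end{equation*}
This is standard (e.g.\ Lemma 5.5 of~\cite{vershynin2012introduction}) and would be invoked as a black box.

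Next I would reduce the complex problem to two real ones. Write $b_k = \alpha_k + \im \beta_k$ with $\alpha_k, \beta_k \in \mathbb{R}$ and observe that
\begin{equation*}
\Re\!\sum_{k=1}^N b_k X_k \;=\; \sum_{k=1}^N \alpha_k X_k, \qquad \Im\!\sum_{k=1}^N b_k X_k \;=\; \sum_{k=1}^N \beta_k X_k,
\end{equation*}
so that $|\sum_k b_k X_k| \ge t$ forces at least one of these real parts to have magnitude $\ge t/\sqrt{2}$. Applying the complex $\to$ real reduction incurs only a factor of $2$ in the probability bound and a factor of $1/2$ in the exponent, which can be absorbed into the constants.

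For each real part, a standard Chernoff argument applies. By Markov's inequality and independence,
\begin{equation*}
\mathbb{P}\!\left[\sum_{k=1}^N \alpha_k X_k \ge s\right] \;\le\; e^{-\lambda s}\prod_{k=1}^N \mathbb{E}\bigl[\exp(\lambda \alpha_k X_k)\bigr] \;\le\; \exp\!\Bigl(-\lambda s + c_1 \lambda^2 K^2 \sum_k \alpha_k^2\Bigr).
\end{equation*}
Optimizing over $\lambda > 0$ yields the one-sided bound $\exp\bigl(-s^2/(4 c_1 K^2 \|\boldsymbol{\alpha}\|_2^2)\bigr)$; the lower tail follows by applying the same inequality to $-\alpha_k$, and a union bound over both tails and both real/imaginary parts yields the asserted form with some absolute constant $c$ inside the exponent, together with the leading factor $\e$ (which one obtains by being a bit loose when combining the four tail events and absorbing small polynomial factors).

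I do not expect any real obstacle here: the only subtle point is bookkeeping the constants through the complex-to-real reduction and the union bound so the final bound matches the stated form (with a single $\e$ prefactor and a single absolute constant $c$). Since the lemma is explicitly quoted as Proposition 5.10 of~\cite{vershynin2012introduction}, in the actual write-up I would simply cite that reference rather than reproduce the calculation.
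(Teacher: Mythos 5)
Your argument is correct and is the standard Chernoff-bound proof of this result; the paper itself supplies no proof and simply cites Proposition 5.10 of~\cite{vershynin2012introduction}, whose proof proceeds exactly as you describe (MGF control via the sub-Gaussian norm, Markov plus optimization in $\lambda$, and a complex-to-real reduction with the factor of $4$ from the union bound absorbed into the constant $c$ and the prefactor $\e$). Citing the reference, as you propose, is the appropriate write-up.
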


\begin{lemma}[Bernstein-type Inequality, Proposition 5.17 of~\cite{vershynin2012introduction}]\label{lem:bernstein_type}
	Let $X_1,\cdots, X_N$ be independent centered sub-exponential random variables, and let $K = \max_i \norm{X_i}{\psi_1} $, where the sub-exponential norm
	\begin{align}
		\norm{ X_i }{\psi_1} \doteq \sup_{p\geq 1} p^{-1} \paren{ \bb E\brac{\abs{X}^p } }^{1/p}.
	\end{align}
	Then for every $\mb b = \brac{b_1; \cdots; b_N } \in \bb C^N$ and every $t\geq 0$, we have
	\begin{align}
		\bb P\paren{ \abs{ \sum_{k=1}^N b_k X_k } \geq t } \leq 2 \exp\paren{ -c \min\paren{\frac{t^2}{K^2\norm{\mb b}{2}^2}, \frac{t}{K\norm{\mb b}{\infty}}} }. 
	\end{align}
Here $c$ is a universal constant. 
\end{lemma}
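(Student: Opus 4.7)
The plan is to follow the classical Chernoff/MGF route for sub-exponential variables, with the mild wrinkle that the $b_k$ are complex. First I would reduce to the real scalar case by writing $b_k = b_k^R + \im b_k^I$ and noting that $|\sum b_k X_k| \geq t$ forces either $|\sum b_k^R X_k| \geq t/\sqrt{2}$ or $|\sum b_k^I X_k| \geq t/\sqrt{2}$; since $\|b^R\|_2, \|b^I\|_2 \leq \|b\|_2$ and $\|b^R\|_\infty, \|b^I\|_\infty \leq \|b\|_\infty$, a union bound reduces the problem to real coefficients, at the cost of absorbable constants. A further symmetrization argument (bound $\bb P(|\cdot| \geq t)$ by $2 \bb P(\cdot \geq t)$) reduces to one-sided tail bounds.

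Next I would establish the key MGF estimate: for any centered $X$ with $\|X\|_{\psi_1} \leq K$, there is a universal $c_0 > 0$ such that $\bb E[\exp(\lambda X)] \leq \exp(C_0 K^2 \lambda^2)$ for all $|\lambda| \leq c_0/K$. The proof is a Taylor expansion $\bb E[e^{\lambda X}] = 1 + \lambda \bb E[X] + \sum_{p \geq 2} \lambda^p \bb E[X^p]/p!$; using $\bb E[X] = 0$ and the sub-exponential moment bound $\bb E[|X|^p]^{1/p} \leq K p$ (which follows from the definition of $\|\cdot\|_{\psi_1}$), each tail term is controlled by $(\lambda K p)^p / p! \leq (eK\lambda)^p$ via Stirling, and the geometric sum converges when $|\lambda| \leq c_0/K$ to yield the claimed Gaussian-like MGF.

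Then I would apply Markov's inequality in exponential form: for any $\lambda > 0$,
\begin{align*}
\bb P\!\paren{\textstyle\sum_k b_k X_k \geq t} \leq e^{-\lambda t} \prod_{k=1}^N \bb E\!\brac{\exp(\lambda b_k X_k)}.
\end{align*}
Provided $|\lambda b_k| \leq c_0/K$ for every $k$, which is guaranteed by the single constraint $|\lambda| \leq c_0/(K\|b\|_\infty)$, each factor is at most $\exp(C_0 K^2 \lambda^2 b_k^2)$, so the product is at most $\exp(C_0 K^2 \lambda^2 \|b\|_2^2)$. This yields the exponent $-\lambda t + C_0 K^2 \lambda^2 \|b\|_2^2$, which I will minimize subject to $\lambda \in (0, c_0/(K\|b\|_\infty)]$.

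The final step, which is the only real ``work,'' is the two-regime optimization. The unconstrained minimizer is $\lambda^\star = t/(2 C_0 K^2 \|b\|_2^2)$. If $\lambda^\star$ lies inside the admissible interval, i.e.\ $t \leq 2 C_0 c_0 K \|b\|_2^2/\|b\|_\infty$, plugging in gives the sub-Gaussian exponent $-t^2/(4 C_0 K^2 \|b\|_2^2)$; otherwise I choose $\lambda = c_0/(K\|b\|_\infty)$ at the boundary, giving exponent $-c_0 t/(K\|b\|_\infty) + c_0^2 C_0 \|b\|_2^2/\|b\|_\infty^2$, which in this regime is dominated by a constant multiple of $-t/(K\|b\|_\infty)$. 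Combining the two regimes produces the $\min$ inside the exponent, and absorbing factors of $2$ from the complex-to-real and one-sided-to-two-sided reductions yields the stated constants. The main obstacle will be managing the sub-exponential MGF estimate carefully enough to obtain a clean quadratic-in-$\lambda$ bound on a fixed neighborhood of $0$; everything else is a mechanical Chernoff optimization.
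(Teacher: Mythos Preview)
The paper does not prove this lemma; it is stated in the appendix as a citation of Proposition~5.17 in Vershynin's notes, with no accompanying argument. Your proposal is correct and is essentially the standard proof found there: the MGF bound via Taylor expansion and the moment growth $\bb E[|X|^p]^{1/p}\le Kp$, followed by the Chernoff method and the two-regime optimization over $\lambda$. The complex-coefficient reduction you add is a clean way to cover the $\mb b\in\bb C^N$ case and costs only absolute constants, so nothing is lost.
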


\begin{lemma}[Subgaussian Lower Tail for Nonnegative RV's, Problem 2.9 of~\cite{boucheron2013concentration}] \label{lem:subgauss_nonneg}
Let $X_1$, $\dots$, $X_N$ be i.i.d. copies of the nonnegative random variable $X$ with finite second moment. Then it holds that 
\begin{align*}
\prob{\frac{1}{N} \sum_{i=1}^N \paren{X_i - \expect{X_i}} < -t} \le \exp\paren{-\frac{Nt^2}{2\sigma^2}}
\end{align*}
for any $t > 0$, where $\sigma^2 = \expect{X^2}$. 
\end{lemma}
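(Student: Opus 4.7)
The plan is to prove this via the Cramér--Chernoff (exponential Markov) method, exploiting nonnegativity of $X$ through a sharp one-sided bound on the moment generating function of $-X$.

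First, I would rewrite the event as $\{\sum_i(\mathbb{E}[X_i] - X_i) > Nt\}$ and apply the exponential Markov inequality: for any $\lambda > 0$,
\begin{align*}
\prob{\tfrac{1}{N}\sum_{i=1}^N(X_i - \expect{X_i}) < -t} \le e^{-\lambda N t}\bigl(\expect{e^{\lambda(\expect{X} - X)}}\bigr)^N = e^{-\lambda N t}\, e^{\lambda N \expect{X}}\bigl(\expect{e^{-\lambda X}}\bigr)^N.
\end{align*}

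The key step is controlling $\expect{e^{-\lambda X}}$ using nonnegativity. Since $\lambda X \ge 0$, the elementary inequality $e^{-y} \le 1 - y + y^2/2$ valid for $y \ge 0$ (easily verified since its derivative $-e^{-y} - (-1 + y) = -e^{-y} + 1 - y \le 0$ for $y \ge 0$, with equality at $y = 0$) yields
\begin{align*}
\expect{e^{-\lambda X}} \le 1 - \lambda\expect{X} + \tfrac{\lambda^2}{2}\expect{X^2} \le \exp\bigl(-\lambda\expect{X} + \tfrac{\lambda^2}{2}\sigma^2\bigr),
\end{align*}
where the second inequality uses $1 + u \le e^u$. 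Substituting back, the $\expect{X}$ terms cancel and we obtain
\begin{align*}
\prob{\tfrac{1}{N}\sum_{i=1}^N(X_i - \expect{X_i}) < -t} \le \exp\bigl(-\lambda N t + \tfrac{N \lambda^2 \sigma^2}{2}\bigr).
\end{align*}

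Finally, I would optimize over $\lambda > 0$, with optimum at $\lambda^\star = t/\sigma^2$, giving the claimed bound $\exp(-Nt^2/(2\sigma^2))$. There is no real obstacle here; the only subtlety worth highlighting is that the variance proxy is $\expect{X^2}$ rather than $\mathrm{Var}(X)$ (which is smaller), and this weaker bound is what nonnegativity of $X$ buys us: the truncation $e^{-\lambda X} \le 1 - \lambda X + \lambda^2 X^2/2$ relies on $\lambda X \ge 0$, so the argument is genuinely one-sided and has no symmetric upper-tail counterpart without further assumptions (e.g., boundedness, sub-exponential tails).
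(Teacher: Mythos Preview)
Your proof is correct and follows essentially the same route as the paper: both apply the Cram\'er--Chernoff method, bound $\expect{e^{-\lambda X}}$ via the inequality $e^{-y}\le 1-y+y^2/2$ for $y\ge 0$ (equivalently $e^u\le 1+u+u^2/2$ for $u\le 0$), convert back with $1+u\le e^u$ (equivalently $\log u\le u-1$), and optimize over $\lambda$. The only difference is cosmetic ordering of these two elementary inequalities.
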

\begin{proof}
For any $\lambda > 0$, we have 
\begin{align*}
\log \expect{\e^{-\lambda(X - \expect{X})}} = \lambda \expect{X} +\log \expect{\e^{-\lambda X}} \le \lambda \expect{X} + \expect{\e^{-\lambda X}} - 1, 
\end{align*}
where the last inequality holds thanks to $\log u \le u -1$ for all $u > 0$. Moreover, using the fact $\e^u \le 1 + u + u^2/2$ for all $u \le 0$, we obtain 
\begin{align*}
\log \expect{\e^{-\lambda(X - \expect{X})}} \le \frac{1}{2} \lambda^2 \expect{X^2} \Longleftrightarrow \expect{\e^{-\lambda(X - \expect{X})}}  \le \exp\paren{\frac{1}{2} \lambda^2 \expect{X^2}}. 
\end{align*} 
Thus, by the usual exponential transform trick, we obtain that for any $t > 0$, 
\begin{align*}
\prob{\sum_{i=1}^N (X_i - \expect{X_i}) < -t} \le \exp\paren{-\lambda t + N\lambda^2 \expect{X^2}/2}. 
\end{align*}
Taking $\lambda = t/(N\sigma^2)$ and making change of variable for $t$ give the claimed result. 
\end{proof}

\begin{lemma}[Moment-Control Bernstein's Inequality for Random Variables] \label{lem:mc_bernstein_scalar}
Let $X_1, \dots, X_p$ be i.i.d. copies of a real-valued random variable $X$ Suppose that there exist some positive number $R$ and $\sigma_X^2$ such that
\begin{align*}
\expect{X^2} \le \sigma_X^2, \quad \text{and} \quad 
\expect{\abs{X}^m} \leq \frac{m!}{2} \sigma_X^2  R^{m-2}, \; \; \text{for all integers $m \ge 3$}.
\end{align*} 
Let $S \doteq \frac{1}{p}\sum_{k=1}^p X_k$, then for ... , it holds  that 
\begin{align*}
\prob{\abs{S - \expect{S}} \ge t} \leq 2\exp\left(-\frac{pt^2}{2\sigma_X^2 + 2Rt}\right).   
\end{align*}
\end{lemma}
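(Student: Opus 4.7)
The plan is the standard Chernoff/moment-generating-function route for Bernstein-type inequalities, adapted to the absolute-moment hypothesis stated here.

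First I would center the summands. Define $Y_k \doteq X_k - \mathbb{E} X_k$; since $S - \mathbb{E}S = \tfrac{1}{p}\sum_{k=1}^p Y_k$, it suffices to bound $\mathbb{P}(|\sum_{k=1}^p Y_k| \ge pt)$. The variance bound transfers directly, $\mathbb{E} Y_k^2 = \mathrm{Var}(X_k) \le \mathbb{E} X_k^2 \le \sigma_X^2$. For higher moments, from $|\mathbb{E} X_k| \le (\mathbb{E} X_k^2)^{1/2} \le \sigma_X$ and convexity, $\mathbb{E}|Y_k|^m \le 2^{m-1}(\mathbb{E}|X_k|^m + |\mathbb{E} X_k|^m) \le 2^m \mathbb{E}|X_k|^m$, so the hypothesis yields $\mathbb{E}|Y_k|^m \le \tfrac{m!}{2}\,(4\sigma_X^2)(2R)^{m-2}$ for every integer $m\ge 3$. (Up to the harmless rescaling $\sigma_X^2 \leftarrow 4\sigma_X^2$, $R \leftarrow 2R$, we may therefore work as if $X_k$ were already centered; I will suppress these absolute constants in the sketch below.)

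Second I would estimate the moment generating function of $Y_k$ for $|\lambda| R < 1$. Expanding the exponential, using $\mathbb{E} Y_k = 0$, the second-moment bound, and the higher-moment bound, gives
\begin{align*}
\mathbb{E}\, e^{\lambda Y_k} \;=\; 1 + \sum_{m \ge 2} \frac{\lambda^m}{m!}\,\mathbb{E} Y_k^m
\;\le\; 1 + \frac{\lambda^2 \sigma_X^2}{2} + \frac{\sigma_X^2}{2}\sum_{m \ge 3}|\lambda|^m R^{m-2}
\;=\; 1 + \frac{\lambda^2 \sigma_X^2 / 2}{1 - R|\lambda|},
\end{align*}
and then the standard inequality $1+u\le e^u$ yields $\mathbb{E}\, e^{\lambda Y_k} \le \exp\!\bigl(\tfrac{\lambda^2 \sigma_X^2/2}{1-R|\lambda|}\bigr)$.

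Third I would run the Chernoff bound. By independence,
\begin{align*}
\mathbb{P}\!\left(\sum_{k=1}^p Y_k \ge pt\right) \;\le\; e^{-\lambda p t}\,\prod_{k=1}^p \mathbb{E}\,e^{\lambda Y_k} \;\le\; \exp\!\left( -\lambda p t + \frac{p\lambda^2 \sigma_X^2/2}{1-R\lambda}\right)
\end{align*}
for every $\lambda \in (0,1/R)$. Choosing $\lambda = t/(\sigma_X^2 + Rt) \in (0,1/R)$ makes the exponent equal to $-\tfrac{pt^2}{2(\sigma_X^2 + Rt)}$, giving the one-sided tail bound. Applying the same argument to $-Y_k$ (which satisfies identical moment bounds) and taking a union bound yields the factor of $2$ and the two-sided estimate claimed.

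No step in this proof is really an obstacle; the only subtlety is the bookkeeping in the first step, where the passage from the absolute-moment hypothesis on $X$ to a centered moment bound on $Y$ enlarges the constants by a small multiplicative factor that can be absorbed into $\sigma_X^2$ and $R$. Provided one is willing to track constants, the stated form $2\exp(-pt^2/(2\sigma_X^2 + 2Rt))$ follows with the explicit choice of $\lambda$ above.
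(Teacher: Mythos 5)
The paper states this lemma as a standard tool in its appendix and gives no proof of it, so there is nothing to compare against line by line; judging your argument on its own merits, the overall route (Chernoff bound, control of the moment generating function via the geometric series generated by the moment hypothesis, and the optimal choice $\lambda = t/(\sigma_X^2+Rt)$) is the right one, and every displayed computation is correct. The gap is in your first step and in your closing sentence. After passing to $Y_k = X_k - \expect{X_k}$ you can only certify the Bernstein moment condition with the inflated parameters $4\sigma_X^2$ and $2R$; running the remainder of your argument with those parameters yields
\begin{align*}
\prob{\abs{S - \expect{S}} \ge t} \le 2\exp\paren{-\frac{pt^2}{8\sigma_X^2 + 4Rt}},
\end{align*}
which is strictly weaker than the stated bound. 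Your final claim that the stated form follows ``provided one is willing to track constants'' is therefore exactly backwards: tracking the constants is what shows it does \emph{not} follow by this route. (For the paper's only use of the lemma, in the proof of Proposition~\ref{prop:grad-region-zx}, the weaker constants would still suffice because the final probability there carries an unspecified constant; but the lemma as stated is not what your argument proves.)

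The standard way to obtain the stated constants is to avoid centering the moments altogether. Using $\log u \le u - 1$,
\begin{align*}
\log \expect{\e^{\lambda (X - \expect{X})}} = \log\expect{\e^{\lambda X}} - \lambda \expect{X} \le \expect{\e^{\lambda X} - 1 - \lambda X} = \expect{\phi(\lambda X)}, \qquad \phi(u) \doteq \e^u - u - 1.
\end{align*}
Since $\phi(u) \le u^2/2$ for $u \le 0$ and $\phi(u) = \sum_{m\ge 2} u^m/m!$, for $\lambda > 0$ one has $\expect{\phi(\lambda X)} \le \tfrac{\lambda^2}{2}\expect{X^2} + \sum_{m\ge 3}\tfrac{\lambda^m}{m!}\expect{\max(X,0)^m}$, and the hypothesis on the raw moments of $X$ (rather than of $Y$) gives $\log\expect{\e^{\lambda(X-\expect{X})}} \le \lambda^2\sigma_X^2/(2(1-\lambda R))$ for $0 < \lambda < 1/R$, with the \emph{original} $\sigma_X^2$ and $R$. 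Your Chernoff step and choice of $\lambda$ then produce $\exp(-pt^2/(2\sigma_X^2+2Rt))$ for the upper tail, and applying the same argument to $-X$, whose moments satisfy identical hypotheses, yields the factor $2$. With this single modification your proof is complete and gives the lemma exactly as stated.
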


\begin{lemma}[Angles Between Two Subspaces]  \label{lem:sp_angle_norm}
Consider two linear subspaces $\mc U$, $\mc V$ of dimension $k$ in $\R^n$ ($k \in [n]$) spanned by orthonormal bases $\mb U$ and $\mb V$, respectively. Suppose $\pi/2 \ge \theta_1 \ge \theta_2 \dots \ge \theta_k \ge 0$ are the principal angles between $\mc U$ and $\mc V$. Then it holds that \\
i) $\min_{\mb Q \in O_k} \norm{\mb U - \mb V \mb Q}{} \le \sqrt{2-2\cos \theta_1}$; \\
ii) $\sin \theta_1 = \norm{\mb U\mb U^* - \mb V\mb V^*}{}$;\\
iii) Let $\mc U^\perp$ and $\mc V^\perp$ be the orthogonal complement of $\mc U$ and $\mc V$, respectively. Then $\theta_1(\mc U, \mc V) = \theta_1(\mc U^\perp, \mc V^\perp)$. 
\end{lemma}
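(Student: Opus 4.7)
The unifying tool will be the singular value decomposition of $\mb U^*\mb V \in \R^{k\times k}$, whose singular values are, by definition of principal angles, $\cos\theta_k \ge \cos\theta_{k-1} \ge \cdots \ge \cos\theta_1 \ge 0$. Write $\mb U^* \mb V = \mb W \mb \Sigma \mb Z^*$ with $\mb W, \mb Z \in O_k$ and $\mb \Sigma = \diag(\cos\theta_k,\ldots,\cos\theta_1)$, and set $\wt{\mb U} = \mb U \mb W$, $\wt{\mb V} = \mb V \mb Z$. Then $\wt{\mb U}^* \wt{\mb V} = \mb \Sigma$, so the columns $\mb u_i$ of $\wt{\mb U}$ and $\mb v_j$ of $\wt{\mb V}$ satisfy $\mb u_i^* \mb v_j = \cos\theta_{k-i+1}\,\delta_{ij}$. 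This will be the workhorse for all three claims.

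For (i), I will take $\mb Q = \mb Z \mb W^* \in O_k$. A direct computation gives
\begin{align*}
\norm{\mb U - \mb V \mb Q}{}^2 \;=\; \sup_{\|\mb a\|=1} \paren{2 - 2\,\mb a^* \mb Q^* \mb V^* \mb U \mb a} \;=\; \sup_{\|\mb a\|=1} \paren{2 - 2\,\mb a^* \mb W \mb \Sigma \mb W^* \mb a} \;=\; 2 - 2\cos\theta_1,
\end{align*}
using that the smallest entry of $\mb \Sigma$ is $\cos\theta_1$ (since $\theta_1$ is the largest angle). For (ii), writing $\mb U \mb U^* = \wt{\mb U}\wt{\mb U}^* = \sum_i \mb u_i \mb u_i^*$ and likewise $\mb V\mb V^* = \sum_i \mb v_i \mb v_i^*$, the mutual orthogonality relations $\mb u_i^* \mb u_j = \mb v_i^* \mb v_j = \delta_{ij}$ and $\mb u_i^* \mb v_j = 0$ for $i\neq j$ show that the 2-dimensional subspaces $\mc W_i \doteq \mathrm{span}(\mb u_i,\mb v_i)$ are mutually orthogonal; hence
\begin{align*}
\mb U\mb U^* - \mb V\mb V^* \;=\; \bigoplus_{i=1}^k \paren{\mb u_i \mb u_i^* - \mb v_i \mb v_i^*}
\end{align*}
decomposes as an orthogonal sum. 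On each $\mc W_i$, the angle between $\mb u_i$ and $\mb v_i$ is $\theta_{k-i+1}$, so a direct $2\times 2$ eigen-computation gives eigenvalues $\pm\sin\theta_{k-i+1}$. The operator norm is therefore the maximum, $\sin\theta_1$. Part (iii) is then immediate: $\mb P_{\mc U^\perp} - \mb P_{\mc V^\perp} = \mb P_{\mc V} - \mb P_{\mc U}$, so $\norm{\mb P_{\mc U^\perp} - \mb P_{\mc V^\perp}}{} = \norm{\mb P_{\mc U} - \mb P_{\mc V}}{}$; applying (ii) in both directions and using monotonicity of $\sin$ on $[0,\pi/2]$ yields $\theta_1(\mc U^\perp,\mc V^\perp) = \theta_1(\mc U,\mc V)$.

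The only subtle point, and the one I would verify most carefully, is the claim in (ii) that the 2-dimensional subspaces $\mc W_i$ are mutually orthogonal and that the decomposition of $\mb U\mb U^* - \mb V\mb V^*$ is genuinely block-diagonal with respect to them; degenerate cases where $\theta_{k-i+1} = 0$ collapse $\mc W_i$ to a line, but then $\mb u_i \mb u_i^* - \mb v_i \mb v_i^* = 0$ and the block contributes nothing. Everything else reduces to bookkeeping with the SVD of $\mb U^* \mb V$.
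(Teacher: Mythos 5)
Your proposal is correct, and it is more self-contained than the paper's proof, though built on the same underlying object (the simultaneous diagonalization of the two bases). For part (i) the paper invokes the CS decomposition, reduces to canonical bases, and splits into the cases $2k \le n$ and $2k > n$; your choice $\mb Q = \mb Z \mb W^*$ from the SVD $\mb U^*\mb V = \mb W \mb \Sigma \mb Z^*$ (the Procrustes rotation) gives $(\mb U - \mb V\mb Q)^*(\mb U - \mb V\mb Q) = 2\mb I - 2\mb W\mb\Sigma\mb W^*$ directly, so the bound $2 - 2\cos\theta_1$ falls out with no case analysis — a genuine simplification. For part (ii) the paper simply cites Stewart--Sun; you actually prove it, via the observation that the planes $\mc W_i = \mathrm{span}(\mb u_i, \mb v_i)$ are mutually orthogonal, that $\mb U\mb U^* - \mb V\mb V^*$ vanishes on $(\bigoplus_i \mc W_i)^\perp$, and that each $2\times 2$ block is traceless with determinant $-\sin^2\theta_{k-i+1}$, hence has eigenvalues $\pm\sin\theta_{k-i+1}$. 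The one point you flagged — degenerate planes when $\theta_{k-i+1}=0$ — is handled correctly since the corresponding block vanishes, and it is worth also noting explicitly that both projectors annihilate the complement of $\bigoplus_i \mc W_i$, so no spectrum is lost there. Part (iii) is the same argument as the paper's: $\mb P_{\mc U^\perp} - \mb P_{\mc V^\perp} = -(\mb P_{\mc U} - \mb P_{\mc V})$ combined with (ii) and injectivity of $\sin$ on $[0,\pi/2]$. What your route buys is a fully elementary, citation-free proof of all three claims from a single SVD; what the paper's route buys is brevity by offloading (ii) to a standard reference.
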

\begin{proof}
Proof to i) is similar to that of II. Theorem 4.11 in~\cite{stewart1990matrix}. For $2k \le n$, w.l.o.g., we can assume $\mb U$ and $\mb V$ are the canonical bases for $\mc U$ and $\mc V$, respectively. Then 
\begin{align*}
\min_{\mb Q \in O_k} \norm{
\begin{bmatrix}
\mb I - \mb \Gamma \mb Q \\
- \mb \Sigma \mb Q \\
\mb 0
\end{bmatrix}
}{} \le 
\norm{
\begin{bmatrix}
\mb I - \mb \Gamma  \\
- \mb \Sigma \\
\mb 0
\end{bmatrix}
}{} \le \norm{
\begin{bmatrix}
\mb I - \mb \Gamma  \\
- \mb \Sigma
\end{bmatrix}
}{}. 
\end{align*}
Now by definition 
\begin{align*}
\norm{
\begin{bmatrix}
\mb I - \mb \Gamma  \\
- \mb \Sigma
\end{bmatrix}
}{}^2 
& = \max_{\norm{\mb x}{} = 1} \norm{\begin{bmatrix}
\mb I - \mb \Gamma  \\
- \mb \Sigma
\end{bmatrix} \mb x}{}^2 = \max_{\norm{\mb x}{} = 1} \sum_{i=1}^k (1 - \cos  \theta_i)^2 x_i^2 + \sin^2\theta_i x_i^2 \\
& = \max_{\norm{\mb x}{} = 1} \sum_{i=1}^k (2-2\cos \theta_i) x_i^2 \le 2- 2\cos \theta_1. 
\end{align*}
Note that the upper bound is achieved by taking $\mb x = \mb e_1$. When $2k > n$, by the results from CS decomposition (see, e.g., I Theorem 5.2 of~\cite{stewart1990matrix}). 
\begin{align*}
\min_{\mb Q \in O_k} \norm{
\begin{bmatrix}
\mb I & \mb 0 \\
\mb 0 & \mb I \\
\mb 0 & \mb 0
\end{bmatrix}
 - 
\begin{bmatrix}
\mb \Gamma & \mb 0 \\
\mb 0 & \mb I \\
\mb \Sigma & \mb 0
\end{bmatrix} 
}{} \le \norm{
\begin{bmatrix}
\mb I - \mb \Gamma  \\
- \mb \Sigma
\end{bmatrix}
}{}, 
\end{align*}
and the same argument then carries through. To prove ii), note the fact that $\sin \theta_1 = \norm{\mb U \mb U^* - \mb V \mb V^*}{}$ (see, e.g., Theorem 4.5 and Corollary 4.6 of~\cite{stewart1990matrix}). Obviously one also has 
\begin{align*}
\sin \theta_1 = \norm{\mb U \mb U^* - \mb V \mb V^*}{} = \norm{(\mb I - \mb U \mb U^*) - (\mb I - \mb V \mb V^*)}{}, 
\end{align*}
while $\mb I - \mb U \mb U^*$ and $\mb I - \mb V \mb V^*$ are projectors onto $\mc U^\perp$ and $\mc V^\perp$, respectively. This completes the proof. 
\end{proof}
\label{app:tools}

{\small
\bibliographystyle{amsalpha}
\bibliography{../pr,../ncvx}
}

\end{document}